  \providecommand\BibTeX{{%
    Bib\TeX}}}
\definecolor{BrickRed}{rgb}{0.8, 0.25, 0.33}  
\definecolor{RoyalBlue}{rgb}{0.25, 0.41, 0.88}  
\theoremstyle{plain}
\newmdenv[
  topline=true,
  bottomline=true,
  rightline=true,
  leftline=true,
  linecolor=black,
  linewidth=0.8pt,
  backgroundcolor=white,
  skipabove=10pt,
  skipbelow=10pt
]{boxedtheorem}
\newtheorem{reptheorem}{Theorem}
\newtheorem{relemma}{Lemma}
\newtheorem{property}{\bf Property}
\newtheorem{reproperty}{\bf Property}
\newtheorem{theorem}{\bf{Theorem}}
\newtheorem{definition}{\bf{Definition}}
\newtheorem{lemma}{\bf{Lemma}}
\newtheorem{reproposition}{\bf{Proposition}}
\newtheorem{proposition}{\bf{Proposition}}
\newcommand{\DEL}[1]{\iffalse #1 \fi}
\newcommand{\rd}{\color{black}}
\newcommand{\rev}{\color{black}}
\newcommand{\bl}{\color{black}}
\newcommand{\squishlist}{
\begin{list}{$\bullet$}
  { \setlength{\itemsep}{0pt}
     \setlength{\parsep}{0pt}
     \setlength{\topsep}{0pt}
     \setlength{\partopsep}{0pt}
     \setlength{\leftmargin}{0em}
     \setlength{\labelwidth}{0em}
     \setlength{\labelsep}{0.2em} } }
\title{\textsc{PAnDA}: Rethinking Metric Differential Privacy Optimization at Scale with Anchor-Based Approximation}
\author{Ruiyao Liu}
\affiliation{
  \institution{University of North Texas}
  \city{Denton, Texas}
  \country{USA}}
\email{RuiyaoLiu@my.unt.edu}
\author{Chenxi Qiu}\authornote{Chenxi Qiu is the corresponding author.}
\affiliation{
  \institution{University of North Texas}
  \city{Denton, Texas}
  \country{USA}}
\email{chenxi.qiu@unt.edu}
\begin{abstract}


\emph{Metric Differential Privacy (mDP)} extends the local differential privacy (LDP) framework to metric spaces, enabling more nuanced privacy protection for data such as geo-locations. However, existing mDP optimization methods, particularly those based on linear programming (LP), face scalability challenges due to the quadratic growth in decision variables. 

In this paper, we propose \textit{\underline{P}erturbation via \underline{An}chor-based \underline{D}istributed \underline{A}pproximation (\textsc{PAnDA})}, a scalable two-phase framework for optimizing metric differential privacy (mDP). To reduce computational overhead, \textsc{PAnDA} allows each user to select a small set of anchor records, enabling the server to solve a compact linear program over a reduced domain. We introduce three anchor selection strategies, \emph{exponential decay (\textsc{PAnDA}-e)}, \emph{power-law decay (\textsc{PAnDA}-p)}, and \emph{logistic decay (\textsc{PAnDA}-l)}, and establish theoretical guarantees under a relaxed privacy notion called \emph{probabilistic mDP (PmDP)}. Experiments on real-world geo-location datasets demonstrate that \textsc{PAnDA} scales to secret domains with up to 5,000 records, two times larger than prior LP-based methods, while providing theoretical guarantees for both privacy and utility. \looseness = -1

\end{abstract}
\keywords{Metric differential privacy, linear programming, data perturbation}
\newcommand{\BibTeX}{\rm B\kern-.05em{\sc i\kern-.025em b}\kern-.08em\TeX}
\begin{document}


\pagestyle{fancy}
\fancyhead{}


\maketitle 


\section{Introduction}
\label{sec:intro}

\emph{Local Differential Privacy (LDP)}~\cite{Duchi-FOCS2013} has emerged as a preferred paradigm for privacy-preserving data collection, especially in scenarios where users do not fully trust centralized aggregators. By ensuring that each user's reported data is statistically indistinguishable across all possible inputs, LDP offers strong and provable privacy guarantees. However, LDP requires a uniform level of indistinguishability across all input pairs, which limits its applicability in contexts that demand more nuanced privacy control. For instance, in location-based services (LBSs), the objective is often to obscure a user’s exact location within a certain geographic range \cite{ImolaUAI2022}. Standard LDP fails to capture such contextual nuances, for example, it does not differentiate between a user being within 1 kilometer or 100 kilometers of a reference point, even though these cases may warrant vastly different levels of perturbation. Moreover, the indiscriminate protection of all input pairs as equally sensitive often results in excessive noise, severely degrading the utility of the perturbed data for downstream tasks~\cite{Qardaji-CCS2013}.


To address these limitations, \textbf{metric Differential Privacy (mDP)} \cite{Chatzikokolakis-PETS2013} was introduced as a generalization of LDP that enables more nuanced levels of indistinguishability between inputs. Instead of applying a uniform privacy guarantee, mDP incorporates a distance metric to modulate the strength of protection: inputs that are close under the metric must remain highly indistinguishable, while those that are farther apart may be more readily distinguished. This distance-aware relaxation enables privacy mechanisms to inject less noise, thereby improving utility while still offering meaningful privacy guarantees. This enhancement broadens the flexibility and applicability of LDP across various data domains, including geo-location perturbation in LBSs \cite{Andres-CCS2013, Yu-NDSS2017, Shokri-PoPETs2015, Chatzikokolakis-PoPETs2015} and text perturbation in natural language processing (NLP) \cite{ImolaUAI2022,Carvalho2021TEMHU}.

\vspace{0.05in}
\noindent \textbf{Related work.} Compared to traditional LDP, optimizing for mDP introduces additional complexity due to its non-uniform privacy constraints and the diverse utility loss under perturbation. Specifically, mDP requires varying privacy guarantees between any two records, and utility loss can depend heavily not only on the magnitude but also the direction of the perturbation \cite{Qiu-TMC2022}. While predefined noise distribution mechanisms such as the \emph{Laplace mechanism} \cite{Andres-CCS2013} and the \emph{exponential mechanism (EM)} \cite{Carvalho2021TEMHU} can satisfy mDP, they typically generate noise based on the perturbation magnitude. This approach often overlooks directional variations in utility loss across the output space, leading to suboptimal utility performance.

To better account for utility sensitivity, recent research on mDP has increasingly focused on \emph{optimization-based mechanisms}, particularly those using \emph{linear programming (LP)}. These approaches discretize both the input (secret) domain $\mathcal{X}$ and the output (perturbed) domain $\mathcal{Y}$ into finite sets, allowing for explicit modeling of utility loss for each possible perturbation~\cite{ImolaUAI2022}. The perturbation mechanism is then optimized by solving an LP that minimizes the expected utility loss while satisfying mDP constraints for all neighboring record pairs~\cite{Bordenabe-CCS2014}. This formulation requires optimizing a probability distribution over perturbed outputs for each real input, resulting in $|\mathcal{X}||\mathcal{Y}|$ decision variables. However, this scalability poses a major computational bottleneck, for instance, optimizing the perturbation distributions for thousands of discrete locations in a small geographic region can lead to millions of LP variables, resulting in prohibitively high computational overhead~\cite{Qiu-TMC2022}. \looseness = -1

\begin{figure}[t]
\centering
\hspace{0.00in}
\begin{minipage}{0.45\textwidth}
  \subfigure{
\includegraphics[width=1.00\textwidth]{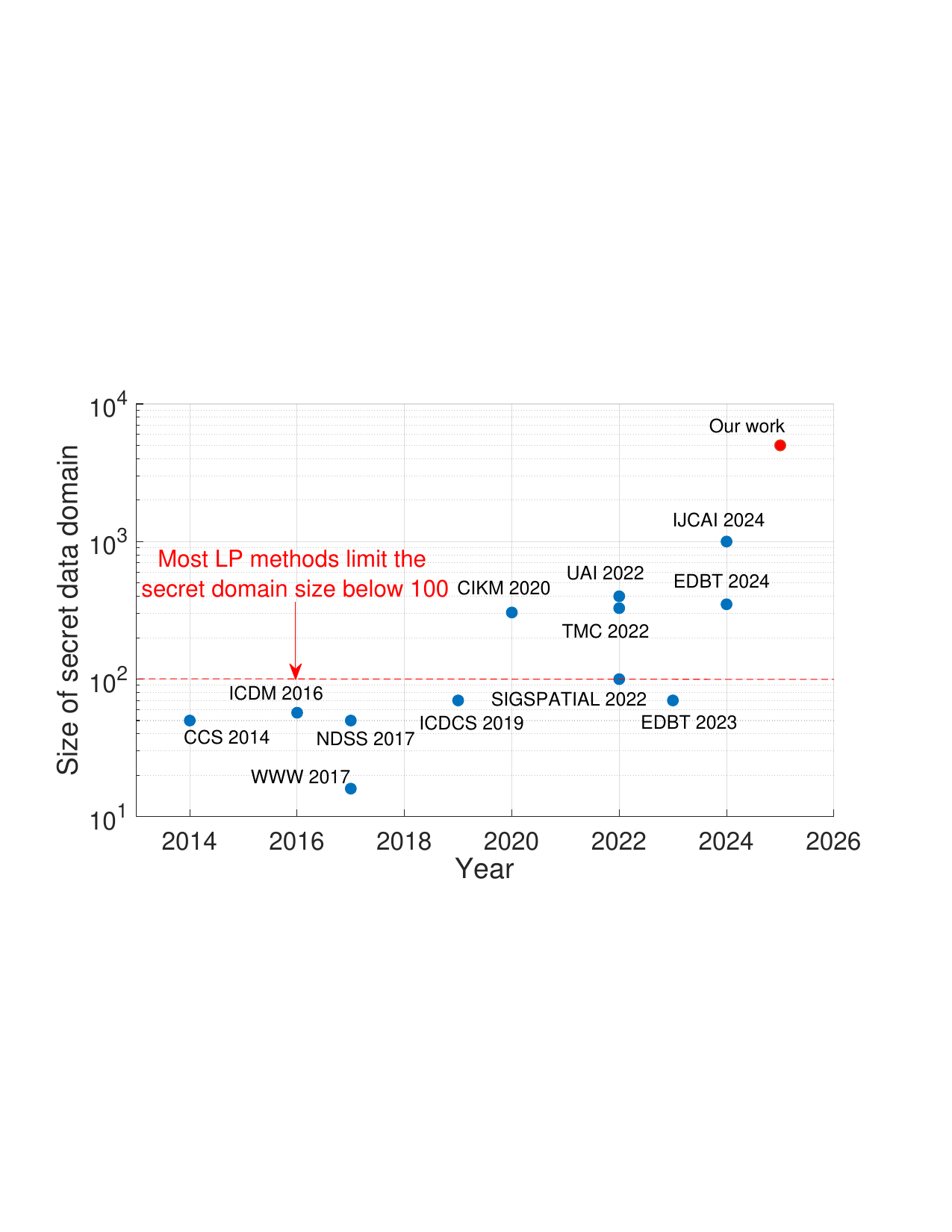}}
\vspace{-0.25in}
\end{minipage}
\caption{Scalability comparison of related works and ours. 
\vspace{0.00in}
\newline \small CCS 2014 \cite{Bordenabe-CCS2014}, ICDM 2016 \cite{Wang-CIDM2016}, WWW 2017 \cite{Wang-WWW2017}, NDSS 2017 \cite{Yu-NDSS2017}, ICDCS 2019 \cite{Qiu-ICDCS2019}, CIKM 2020 \cite{Qiu-CIKM2020}, TMC 2022 \cite{Qiu-TMC2022}, SIGSPATIAL 2022 \cite{Qiu-SIGSPATIAL2022}, UAI 2022 \cite{ImolaUAI2022}, EDBT 2023 \cite{Pappachan-EDBT2023}, EDBT 2024 \cite{Qiu-EDBT2024}, IJCAI 2024 \cite{Qiu-IJCAI2024}.}
\label{fig:relatedwork}
\vspace{-0.20in}
\end{figure}


Due to such high computational overhead, as shown in Fig.\ref{fig:relatedwork}, most existing LP-based approaches \cite{Shokri-CCS2012, Bordenabe-CCS2014, Wang-CIDM2016, Wang-WWW2017, Yu-NDSS2017, Qiu-ICDCS2019, Qiu-SIGSPATIAL2022} restrict the size of the secret data domain $\mathcal{X}$ to no more than 100 records. Although more recent methods have extended the scalability of LP-based optimization by using hierarchical index structures~\cite{AhujaEDBT2019}, optimization decomposition techniques~\cite{Qiu-CIKM2020, Qiu-TMC2022, Qiu-IJCAI2024}, or combining EM with LP~\cite{ImolaUAI2022}, the size of $\mathcal{X}$ remains limited to approximately 1,000 records. This constraint makes LP-based approaches impractical in moderate-dimensional settings and inadequate for real-world datasets that evolve or expand over time~\cite{Feyisetan-WDSM2020}.


\vspace{-0.05in}
\subsection*{Our Work}

\textbf{Contribution 1: Anchor-Based Approximation in Place of Full-Domain LP.}  
To address the identified research gap, this paper proposes a new framework named  
\textit{\underline{P}erturbation via \underline{An}chor-based \underline{D}istributed \underline{A}pproximation (\textsc{PAnDA})}.  
Instead of solving an LP over the full secret data domain $\mathcal{X}$, \textsc{PAnDA} reduces the problem size by allowing users to select a small set of representative \emph{anchor records}. This significantly reduces the LP complexity while preserving utility. The \textsc{PAnDA} framework proceeds in two phases:  
\begin{itemize}[left=1.2em, labelsep=0.5em] 
    \item [\textbf{Ph-I:}] Each user probabilistically selects and uploads a set of anchor records to the server. The server then optimizes a lightweight LP problem that only covers the anchor records from all the users. 
    \item [\textbf{Ph-II:}] Each user downloads the optimized anchor perturbation matrix, from which the user obtains a \emph{surrogate perturbation vector} {\rev to perform} data perturbation.
\end{itemize}
Moreover, \textsc{PAnDA} adopts a relaxed privacy definition called $(\epsilon, \delta)$-\emph{Probabilistic mDP (PmDP)}, which allows a small failure probability $\delta$ in satisfying the mDP constraints. This relaxation reduces the need to account for rare worst-case record pairs, thereby mitigating the high computation overhead of enforcing strict mDP constraints for all pairs of records. 

However, this two-phase perturbation optimization design introduces two key challenges:  

First, since data disclosure occurs in both phases---during anchor selection and record perturbation---each step potentially leaks information about a user's true data. Hence, a critical challenge is to ensure that the \emph{accumulated privacy cost from both phases remains within a specified budget $\epsilon$ under the mDP framework (\textbf{Challenge 1})}. \looseness = -1

Second, \textsc{PAnDA} approximates each user's perturbation vector using a surrogate vector constructed from a small set of anchor records. While this approximation significantly reduces computational complexity, it also introduces error that may lead to violations of mDP guarantees. Thus, the second challenge is to \emph{rigorously quantify and bound the potential mDP violations resulting from the use of surrogate perturbation vectors (\textbf{Challenge 2})}. \looseness = -1

\vspace{0.03in}
\noindent \textbf{Contribution 2: Two-Phase Perturbation Design with Theoretical Privacy and Utility Guarantees.} To address \textbf{Challenge 1}, that is, controlling the cumulative privacy cost across the two phases of \textsc{PAnDA}, we present \textbf{Theorem~\ref{thm:composition}}, which establishes that the \emph{sequential composition property} of mDP holds under the PmDP framework. This theoretical result enables us to split the total privacy budget $\epsilon$ between the two phases of the algorithm. Then, in the phase \textbf{Ph-I}, we introduce three probabilistic anchor selection methods, \emph{exponential decay}, \emph{power-law decay}, and \emph{logistic}, of which the bounded posterior leakage is given in \textbf{Proposition \ref{prop:posteriorbound}}. In the phase \textbf{Ph-II}, the anchor perturbation matrix is optimized to ensure that the additional privacy cost of this phase does not exceed the remaining budget. \looseness = -1

To address \textbf{Challenge 2}, i.e., bounding the mDP violation ratio introduced by surrogate perturbation vectors, \textsc{PAnDA} incorporates a privacy \emph{safety margin} $\xi$ into the mDP constraints applied to anchor pairs. This margin tightens the privacy budget during optimization, serving as a buffer to account for discrepancies between users’ true records and their surrogates. By reserving this margin in the optimization process, \textsc{PAnDA} ensures that the resulting mechanism satisfies $(\epsilon, \delta)$-PmDP, or satisfies $\epsilon$-mDP with high probability $1-\delta$, as formally established in \textbf{Proposition~\ref{prop:samplebudget}}.

Additionally, to evaluate the effectiveness of \textsc{PAnDA}’s optimization under this approximation, we derive in \textbf{Proposition~\ref{prop:ULbound}} a lower bound on the expected utility loss, to assess how closely \textsc{PAnDA} approaches the optimal utility loss.


\vspace{0.03in}
\noindent \textbf{Contribution 3: Empirical Validation of Efficiency and Privacy-Utility Tradeoff.} We conducted an extensive empirical study to evaluate the performance of \textsc{PAnDA}, comparing it against several state-of-the-art perturbation algorithms, including predefined noise distribution mechanism EM \cite{Chatzikokolakis-PoPETs2015}, LP-based methods \cite{Bordenabe-CCS2014, Qiu-TMC2022}, and hybrid methods combining LP and EM \cite{ImolaUAI2022, Chatzikokolakis-PETS2017}, across real-world map datasets from Rome, Italy, London, UK, and new york city US \cite{openstreetmap}.  \looseness = -1

The experimental results show that \textsc{PAnDA} can optimize a secret dataset with at least {\bl 5,000} records, achieving a capacity \emph{three times of the LP-based benchmarks \cite{Bordenabe-CCS2014, Qiu-IJCAI2024}, while maintaining comparable utility loss relative to the optimal solution (i.e., the approximation ratio is {\bl1.3649} on average, where a ratio
close to 1 indicates a near-optimal solution). 
Moreover, \textsc{PAnDA} achieves higher utility compared to methods applicable to large-scale data domain, e.g., \textsc{PAnDA} attains {\bl76.40\%} lower utility loss compared to \cite{Chatzikokolakis-PoPETs2015, Chatzikokolakis-PETS2017}}. Additionally, even with a small anchor set (e.g., {\bl14.81\%} of $\mathcal{X}$ on average), \textsc{PAnDA} maintains mDP guarantees with high probability, i.e., the mDP violation rate of \textsc{PAnDA} is below {\bl $10^{-7}$} across different tested scenarios.

The remainder of the paper is organized as follows: Section \ref{sec:preliminary} introduces the preliminaries of mDP optimization. Section \ref{sec:method} describes the algorithm design and Section \ref{sec:performance} evaluates the algorithm's performance. Section \ref{sec:related} presents the related work. 
Section \ref{sec:conclude} makes a conclusion.

\vspace{-0.00in}
\section{Preliminaries}
\label{sec:preliminary}
\vspace{-0.00in}

In this section, we introduce the preliminary knowledge of \emph{mDP}, the \emph{LP formulation}, and the \emph{remote computation framework}. 
The main notations used throughout this paper are listed in \textbf{Table \ref{tab:notation} in Appendix}. \looseness = -1

\vspace{0.03in}
\noindent \textbf{mDP}. In its original definition, LDP requires data perturbation to maintain a uniform level of \emph{indistinguishability} for all possible inputs \cite{Duchi-FOCS2013}. 
\emph{mDP} extends the LDP concept by considering distances between any two records within a general metric space. 

Definition \ref{def:mDP} formally defines $\epsilon$-mDP. 
\begin{definition}[$\epsilon$-mDP]
\label{def:mDP}
Let $\mathcal{X}$ denote the secret data domain, equipped with a distance function $d_{\mathcal{X}}: \mathcal{X} \times \mathcal{X} \rightarrow \mathbb{R}_{\geq 0}$, abbreviated as $d$. For any two records $x, x' \in \mathcal{X}$, let $d_{x,x'} = d(x, x')$ denote their distance.

A randomized mechanism $\mathcal{M}: \mathcal{X} \rightarrow \mathcal{Y}$ satisfies $\epsilon$-mDP if  
for all $x, x' \in \mathcal{X}$,
\vspace{-0.12in}
\begin{equation}
\label{eq:mDP}
\frac{\Pr[\mathcal{M}(x) \in \mathcal{Y}']}{\Pr[\mathcal{M}(x') \in \mathcal{Y}']} \leq e^{\epsilon d_{x,x'}}, \forall \mathcal{Y}' \subseteq \mathcal{Y}
\end{equation}
Here, $\epsilon > 0$ is the privacy budget that controls the level of indistinguishability based on the distance between inputs.
\end{definition}

For simplicity, in what follows we use $\mathcal{M}(x) \stackrel{\epsilon}{\approx} \mathcal{M}(x')$ to represent that Eq. (\ref{eq:mDP}) is satisfied. 

Intuitively, mDP ensures that small changes in the input $x$ of the perturbation method $\mathcal{M}$ result in bounded changes in the distribution of the output $\mathcal{M}(x)$, thus providing privacy guarantees in the corresponding distance metric space. A lower $\epsilon$ signifies a tighter bound, meaning less information can be revealed about $\mathcal{M}(x)$. 

\DEL{
\vspace{0.05in}
\noindent \textbf{Threat model (posterior leakage)}. Like the existing works \cite{Andres-CCS2013}, we assume that both perturbed data $\mathcal{M}(X)$ and perturbation function $\mathcal{M}$ are known by attackers. An attacker can use Bayes' formula \cite{Yu-NDSS2017} to derive the \emph{posterior} of the secret data $X$, i.e., $P\left(X = x | \mathcal{M}(X) = y\right)$, $\forall x \in \mathcal{X}$. In this case, the information disclosure caused by the perturbed data $\mathcal{M}(X) = y$ can be quantified by the posterior leakage  \cite{Kifer-PODS2012} (Definition \ref{def:PL}): \looseness = -1
\begin{definition}
\label{def:PL}
(Posterior leakage (PL)) Given the perturbation function $\mathcal{M}$, the PL of any two records $x_n, x_m\in \mathcal{X}$ is defined by 
\vspace{-0.00in}
\normalsize
\begin{eqnarray}
\label{eq:PL}
\mathrm{PL}\left((x_n, x_m); \mathcal{M}\right) =  \underbrace{\frac{\Pr\left(X = x_n | \mathcal{M}(X) = y\right)}{\Pr\left(X = x_m| \mathcal{M}(X) = y\right)}}_{\mbox{posterior ratio}}\left\slash\underbrace{\frac{\Pr\left(X=x_n\right)}{\Pr\left(X=x_m\right)}}_{\mbox{prior ratio}}\right. 
\end{eqnarray}
\normalsize
\end{definition}
We require 
\begin{equation}
\label{eq:PLbound}
e^{-\epsilon d_{n, m}} \leq \mathrm{PL}\left((x_n, x_m); \mathcal{M}\right) \leq e^{\epsilon d_{n, m}} 
\end{equation}

Intuitively, the prior ratio and the posterior ratio in Eq. (\ref{eq:PL}) reflects the record $X$'s probabilities of being $x_n$ and $x_m$ \textbf{before and after the observation of the perturbed data $\mathcal{M}(X) =y$}. If 
$\mathrm{PL}\left((x_n, x_m); \mathcal{M}\right)$ has a lower value, it implies that the attacker can obtain less additional information of $X$ by observing $\mathcal{M}(X) =y$, therefore achieving a higher privacy level. 

As a countermeasure of the Bayesian inference attacks, the perturbation function $\mathcal{M}$ is designed to enforce the posterior leakage between any $x_n$ and $x_m$ to be bounded: \looseness = -1
\begin{equation}
\label{eq:PLcriterion}
\mathrm{PL}\left((x_n, x_m); \mathcal{M}\right) \leq \epsilon d_{x_n,x_m}, \forall x_n, x_m\in \mathcal{X}.   
\end{equation}
As proved by \cite{Kifer-TDS2014}, meeting mDP as defined in Eq. (\ref{eq:mDP}) is \textbf{equivalent} to the PL bound in Eq. (\ref{eq:PLcriterion}).}

\vspace{0.05in}
\noindent \textbf{LP formulation}. Like \cite{ImolaUAI2022, Qiu-TMC2022}, we consider the case where both $\mathcal{X}$ and $\mathcal{Y}$ are finite sets. In this case, the perturbation function $\mathcal{M}$ can be represented as the \emph{perturbation matrix} $\mathbf{Z}_{\mathcal{X}} = \{z_{x,y}\}_{(x,y) \in \mathcal{X} \times \mathcal{Y}}$, where each entry $z_{x,y}$ denotes the probability of selecting $y \in \mathcal{Y}$ as the perturbed output given the original record $x \in \mathcal{X}$. We define the \emph{perturbation vector} of $x$ as $\mathbf{z}_x = [z_{x,y}]_{y\in \mathcal{Y}}$, which represents the probability distribution over perturbed outputs for a given input $x$. Consequently, the mDP constraints formulated in Eq. (\ref{eq:mDP}) can be written as the following linear constraints: For each pair of records $x, x' \in \mathcal{X}$ 
\vspace{-0.06in}
\begin{equation}
\label{eq:mDPdiscrete}
z_{x,y} - e^{\epsilon d_{x,x'}} z_{x',y} \leq 0, ~\forall y \in \mathcal{Y}. 
\vspace{-0.02in}
\end{equation}
Additionally, the sum probability of perturbed record $y \in \mathcal{Y}$ for each real record $x$ should be equal to 1, i.e., 
\vspace{-0.00in}
\begin{equation}
\label{eq:um}
\textstyle
\sum_{y \in \mathcal{Y}}z_{x,y} = 1,~ \forall x \in \mathcal{X}.
\end{equation}
We use $c_{x,y}$ to represent the \emph{data utility loss} of the downstream decision-making caused by the perturbed record $y$ when the real record is $x$. Therefore, the \emph{loss function} of $\mathbf{Z}_{\mathcal{X}}$, measuring the expected utility loss caused by the perturbation matrix $\mathbf{Z}_{\mathcal{X}}$, can be defined as 
\vspace{-0.03in}
\begin{equation}
\textstyle 
\mathcal{L}(\mathbf{Z}_{\mathcal{X}}) = \sum_{(x,y) \in \mathcal{X}\times \mathcal{Y}}  p_x c_{x,y} z_{x,y},
\end{equation}
where $p_x = \Pr\left(X = x\right)$ is the prior probability of the real record being $x$. The goal of \emph{perturbation matrix optimization} is to minimize $\mathcal{L}(\mathbf{Z}_{\mathcal{X}})$ while satisfying both the mDP (Eq. (\ref{eq:mDPdiscrete})) and the probability unit measure (Eq. (\ref{eq:um})), which can be formulated as the LP problem: 
\begin{equation}
\label{eq:PPO}
\min ~\mathcal{L}(\mathbf{Z}_{\mathcal{X}}) ~\mathrm{s.t.} ~\mbox{Eq. (\ref{eq:mDPdiscrete})(\ref{eq:um})) are satisfied}, 
\end{equation}
where each entry $z_{x,y}$ in $\mathbf{Z}_{\mathcal{X}}$ satisfies $0 \leq z_{x,y} \leq 1$.

Although the LP problem formulated in Eq. (\ref{eq:PPO}) can be solved using classical LP algorithms such as the simplex method \cite{Linear&Nonlinear}, it is hampered by high computational costs. The time complexity of an LP problem depends on the number of decision variables and the number of linear constraints \cite{Linear&Nonlinear}. In Eq. (\ref{eq:PPO}), the decision matrix $\mathbf{Z}_{\mathcal{X}}$ encompasses $O(|\mathcal{X}||\mathcal{Y}|)$ decision variables, where it must adhere to the mDP constraints for every pair of neighboring locations in $\mathcal{X}$, resulting in $O(|\mathcal{X}|^2|\mathcal{Y}|)$ linear constraints.


\begin{figure}[t]
\centering
\hspace{0.00in}
\begin{minipage}{0.45\textwidth}
  \subfigure{
\includegraphics[width=1.00\textwidth]{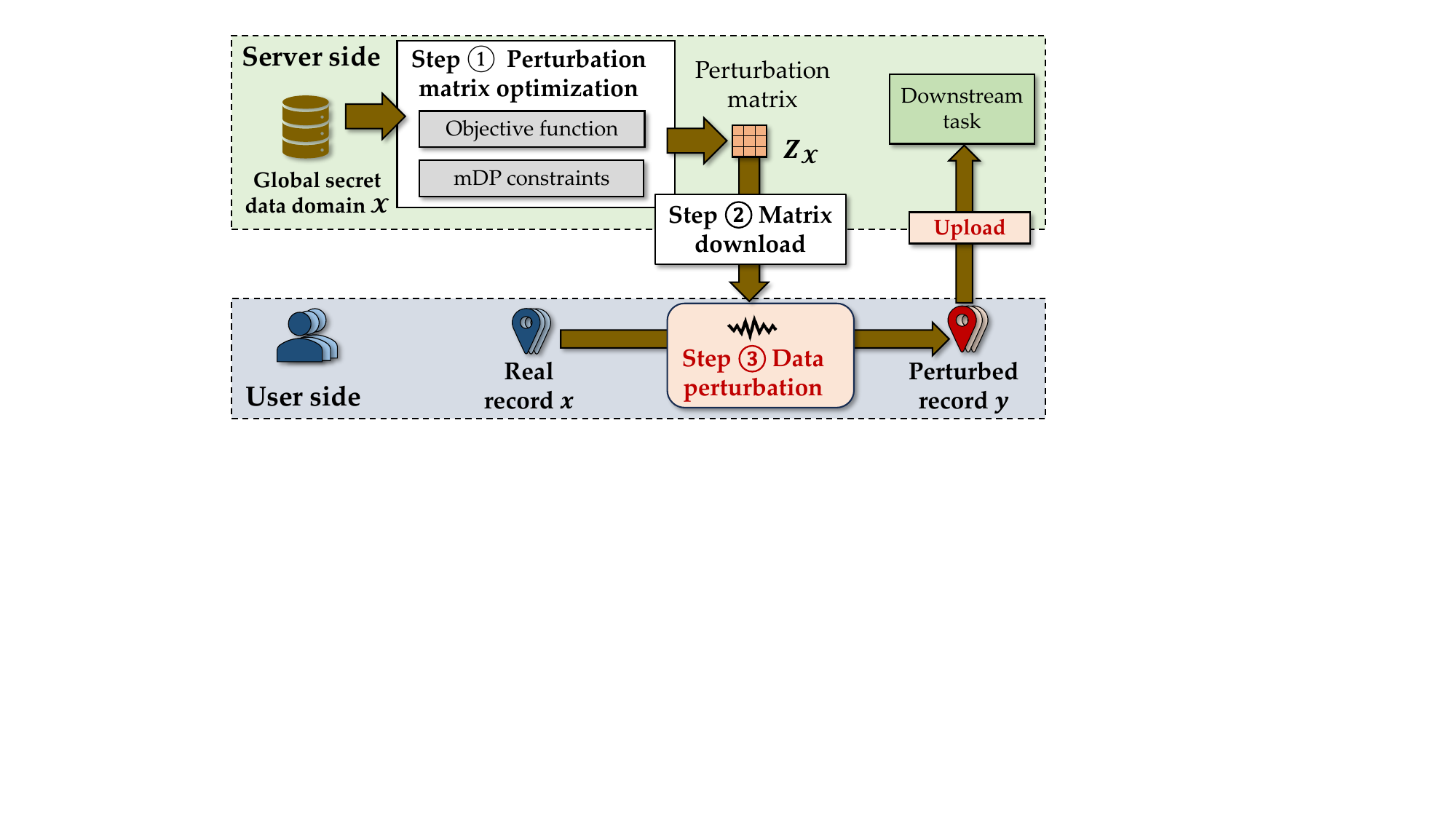}}
\vspace{-0.18in}
\end{minipage}
\caption{Remote perturbation optimization framework.}
\label{fig:remoteopt}
\vspace{-0.18in}
\end{figure}

\vspace{0.03in}
\noindent \textbf{Remote computing framework}. Due to the high computation overhead of LP, as illustrated in Fig. \ref{fig:remoteopt}, many prior works \cite{Wang-WWW2017, Qiu-TMC2022} adopt a remote computing framework, where the server first optimizes the perturbation matrix $\mathbf{Z}_{\mathcal{X}}$ (\textbf{Step \textcircled{1}}), after which users download $\mathbf{Z}_{\mathcal{X}}$ from the server (\textbf{Step \textcircled{2}}) and use it to perturb their records before reporting them to the server (\textbf{Step \textcircled{3}}). Importantly, although the server generates the matrix $\mathbf{Z}_{\mathcal{X}}$, users' exact records remain hidden from the server \cite{Wang-WWW2017}. Matrix $\mathbf{Z}_{\mathcal{X}}$ is designed to satisfy mDP, ensuring that even if an attacker can obtain users' perturbed records, it is challenging to distinguish their exact records from other records as the mDP constraints are satisfied. \looseness = -1

While servers have higher computation capability, the high computational requirements make LP-based geo-obfuscation infeasible for applications involving a large-scale of secret data domain. As shown in Fig. \ref{fig:relatedwork}, existing LP-based methods typically limit the size of secret data domain to up to 1,000 records, i.e., $|\mathcal{X}| \leq 1,000$.



\vspace{-0.05in}
\section{Methodology}
\vspace{-0.00in}
\label{sec:method}
In this section, we present \textsc{PAnDA}. Section~\ref{subsec:motivation} discusses the motivations for our design, followed by an overview of the framework in Section~\ref{subsec:framework}. Sections~\ref{subsec:PhaseI} and~\ref{subsec:PhaseII} then detail \textsc{PAnDA}’s two phases, respectively.

\vspace{-0.05in}
\subsection{Motivations}
\label{subsec:motivation}
\vspace{-0.00in}
We consider a scenario in which a set of participating users $1, ..., N$ need to report their records to a server. We use $x_n$ to represent the secret record index held by each user $n$ ($n = 1,..., N$). 

As shown in Fig.~\ref{fig:remoteopt}, previous works such as \cite{Bordenabe-CCS2014,Wang-WWW2017,Qiu-PETS2025} follow a common approach: the server first computes a global perturbation matrix $\mathbf{Z}_{\mathcal{X}}$, which is then downloaded by each user. To perturb their private data $x_n$, user $n$ only needs the corresponding row $\mathbf{z}_{x_n}$ from this matrix. However, it is not feasible for a user to ask the server to optimize only their own row $\mathbf{z}_{x_n}$, due to two reasons:  
(1) the mDP constraints couple $\mathbf{z}_{x_n}$ with the perturbation vectors of other records, making isolated optimization invalid;  
(2) the server does not know the user’s true record $x_n$, and therefore cannot evaluate the utility loss for different perturbation choices.  
As a result, the server needs to solve a large LP that includes all possible real records in $\mathcal{X}$, leading to substantial computational overhead.

\subsubsection{Motivation 1: Anchor-Based Approximation in Place of Full Secret Data Domain $\mathcal{X}$.} 
{\rev Although} the perturbation vectors are coupled by the mDP constraints, this dependency weakens with distance. According to the definition of mDP in Eq.~(\ref{eq:mDP}), the further apart two secret records are, the less influence they have on each other due to diminishing privacy constraints. For instance, if two users' locations differ by 100 kilometers, the constraint requiring their perturbation distributions to be similar is nearly negligible. Consequently, the perturbation vector of one user exerts minimal influence on the optimization of the other’s perturbation vector.

Leveraging this observation, \textsc{PAnDA} avoids the need to optimize the full perturbation matrix $\mathbf{Z}_{\mathcal{X}}$. Instead, it introduces an \emph{anchor perturbation matrix} for each user. Specifically, for each user $n$ ($n = 1, \dots, N$), \textsc{PAnDA} optimizes a smaller matrix $\mathbf{Z}_{\mathcal{A}_n}$ that only includes the perturbation vectors of a subset of \emph{anchor records} $\mathcal{A}_n \subseteq \mathcal{X}$. These anchor records are selected to capture the most relevant constraints for user $n$. Across all users, the total number of anchor records, $|\bigcup_n \mathcal{A}_n|$, is chosen to be much smaller than the total number of real records, i.e., $|\bigcup_n \mathcal{A}_n| \ll |\mathcal{X}|$. This drastically reduces the size of the resulting LPs and leads to significant computational savings.

Intuitively, the anchor sets $\mathcal{A}_1, \dots, \mathcal{A}_N$ should satisfy the following two key properties:
\vspace{-0.02in}
\begin{itemize}[left=0.3em, labelsep=0.5em] 
    \item[\textbf{(a)}] \textbf{Accuracy of approximate matrices:} Each submatrix $\mathbf{Z}_{\mathcal{A}_n}$ should closely approximate the perturbation vectors that are highly relevant to the perturbation vector $\mathbf{z}_{x_n}$, enabling user $n$ to maintain the required level of indistinguishability from other records while effectively reducing the utility loss introduced by perturbation. \textit{Considering that the anchor set $\mathcal{A}_n$ is relevant to the true record $x_n$, $\mathcal{A}_n$ should be constructed locally by user $n$ based on $x_n$.}
    \item[\textbf{(b)}] \textbf{Bounded privacy leakage caused by selected anchor sets:} The anchor sets $\mathcal{A}_n$ and $\mathcal{A}_m$ across different users should collectively ensure strong indistinguishability between users’ true records $x_n$ and $x_m$, thereby preserving privacy from the server's perspective. \textit{This requires anchor records to be selected in a probabilistic manner} (as formally justified in \textbf{Proposition~\ref{prop:wsmall}}), where the selection probability reflects each anchor record’s influence on the optimization of $x_n$’s perturbation vector.
\end{itemize}
\vspace{-0.10in}
\subsubsection{Probablistic mDP instead of worst-case mDP.} The original definition of mDP (in \textbf{Definition \ref{def:mDP}}) considers the adversarial worst-case scenario, evaluating the maximum divergence between  
\newline $\Pr[\mathcal{M}(x_n) \in \mathcal{Y}']$ and $\Pr[\mathcal{M}(x_m) \in \mathcal{Y}']$ for any pair of candidate inputs $x_n$ and $x_m$. Although this worst-case guarantee ensures strong privacy protection, it introduces significant computational overhead~\cite{Xiao-CRYPTO2023}, as the LP formulation needs to include all input pairs in the domain $\mathcal{X}$, leading to $O(|\mathcal{X}|^2|\mathcal{Y}|)$ mDP constraints. 

To balance privacy and practicality, we introduce a relaxed variant of mDP, called \textbf{Probabilistic mDP (PmDP)} in \textbf{Definition \ref{def:approxmDP}}. 
This formulation tolerates a small probability $\delta$ that the privacy guarantee may not hold, thereby reducing the need to optimize for rare worst-case input pairs 
and enabling more efficient perturbation optimization. 
\vspace{-0.02in}
\begin{definition}[$(\epsilon,\delta)$-PmDP]
\label{def:approxmDP}
A randomized mechanism $\mathcal{M}: \mathcal{X} \rightarrow \mathcal{Y}$ satisfies $(\epsilon,\delta)$-PmDP if, for all $x, x' \in \mathcal{X}$,
\begin{equation}
\label{eq:PmDP}
\Pr\left[\mathcal{M}(x) \stackrel{\epsilon}{\approx} \mathcal{M}(x')\right] \geq 1 - \delta.
\end{equation}
Here, the parameter $\delta \in [0, 1)$ represents the allowable \emph{failure probability} for the mDP constraint $\mathcal{M}(x) \stackrel{\epsilon}{\approx} \mathcal{M}(x')$. 
That is, with probability at least $1 - \delta$, the mechanism guarantees $\epsilon$-indistinguishability between outputs for $x$ and $x'$; with probability at most $\delta$, this guarantee may not hold.
\end{definition}
To simplify notation, we write $\mathcal{M}(x) \stackrel{(\epsilon, \delta)}{\sim} \mathcal{M}(x')$ when Eq.~(\ref{eq:PmDP}) is satisfied. 
When $\delta = 0$, PmDP recovers the original (pure) mDP:
\begin{equation}
\mathcal{M}(x) \stackrel{\epsilon}{\approx} \mathcal{M}(x') \quad \equiv \quad \mathcal{M}(x) \stackrel{(\epsilon, 0)}{\sim} \mathcal{M}(x').
\end{equation}
In practice, $\delta$ can be set to reflect the level of tolerance for rare but potentially high-utility data disclosures that slightly violate the worst-case mDP condition. 

\vspace{0.03in}
{\rev 
\noindent \textbf{Remark.} Notably, PmDP (\textbf{Definition \ref{def:approxmDP}}) is related to approximate DP \cite{Dwork-STOC2009}, they are distinct in how they relax strict privacy guarantees. Approximate DP, denoted by $(\epsilon, \delta)$-DP,  requires that for all neighboring datasets $x$ and  $x'$ and all measurable sets $\mathcal{Y}' \subseteq \mathcal{Y}$, $\Pr[\mathcal{M}(x) \in \mathcal{Y}'] \leq e^{\epsilon} \cdot \Pr[\mathcal{M}(x') \in \mathcal{Y}'] + \delta$, meaning that the output distributions of a mechanism are close up to a multiplicative factor $e^\epsilon$ with an additive slack $\delta$. Differently, PmDP requires that for any pair of inputs $ x, x' $, the mechanism $\mathcal{M}$ satisfies, i.e., $\Pr\left[\mathcal{M}(x) \stackrel{\epsilon}{\approx} \mathcal{M}(x')\right] \geq 1 - \delta$, where $\mathcal{M}(x) \stackrel{\epsilon}{\approx} \mathcal{M}(x')$ means that for all measurable $\mathcal{Y}' \subseteq \mathcal{Y}$, $\Pr[\mathcal{M}(x) \in \mathcal{Y}'] \leq e^{\epsilon d_{x, x'}} \Pr[\mathcal{M}(x') \in \mathcal{Y}']$. This means that the $\epsilon$-indistinguishability condition must hold with high probability over the randomness of the mechanism $\mathcal{M}$.}

\begin{figure}[t]
\centering
\hspace{0.00in}
\begin{minipage}{0.46\textwidth}
  \subfigure{
\includegraphics[width=1.00\textwidth]{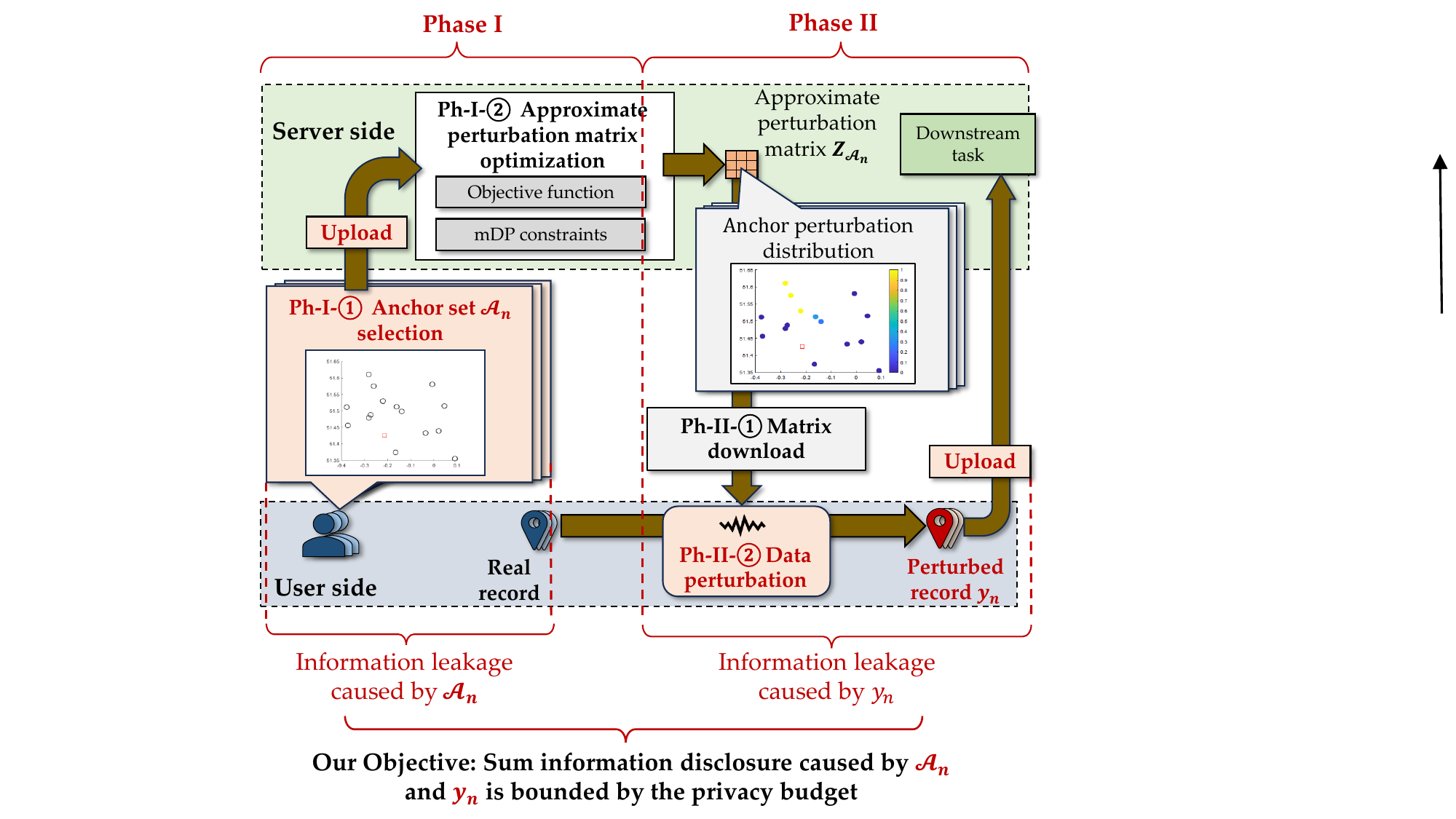}}
\vspace{-0.25in}
\end{minipage}
\caption{Entire \textsc{PAnDA} procedure in two phases.}
\label{fig:framework}
\vspace{-0.15in}
\end{figure}

\subsection{Overview of \textsc{PAnDA}}
\label{subsec:framework}
Guided by the \textbf{properties (a)} and \textbf{(b)}, we design the \textsc{PAnDA} procedure in two phases, as shown in Fig.~\ref{fig:framework}:
\begin{itemize}[left=1.2em, labelsep=0.5em] 
\item [\textbf{Ph-I}:] Each user $n$ applies a randomized mechanism $\mathcal{M}_{\mathrm{I}}: \mathcal{X} \rightarrow 2^{\mathcal{X}}$ to select a set of anchor records $\mathcal{A}_{n} \subseteq \mathcal{X}$ based on their true record $x_n$ and uploads $\mathcal{A}_{n}$ to the server (Step Ph-I-\textcircled{1}). The server then optimizes the perturbation probabilities for the union of all uploaded anchor sets $\mathcal{A}_{[1, N]} = \bigcup_{n=1}^N \mathcal{A}_{n}$ (Step Ph-I-\textcircled{2}), producing the \emph{joint approximate  perturbation matrix} $\mathbf{Z}_{\mathcal{A}_{[1, N]}} = \left\{z_{x,y}\right\}_{(x,y) \in \mathcal{A}_{[1, N]} \times \mathcal{Y}}$. {\rev Here, we assume an honest-but-curious model in which all users follow the protocol correctly when selecting their anchor records.} 
\item [\textbf{Ph-II}:] Each user $n$ downloads the anchor perturbation matrix $\mathbf{Z}_{\mathcal{A}_n} = \left\{z_{x,y}\right\}_{(x,y) \in \mathcal{A}_n \times \mathcal{Y}}$,
corresponding to their anchor set $\mathcal{A}_n$ (Step Ph-II-\textcircled{1}), and uses it to perturb their true record to obtain $y_n$, which is then reported to the server (Step Ph-II-\textcircled{2}).

Notably, a user’s true record $x_n$ may not be included in their anchor set ($x_n \notin \mathcal{A}_n$). In this case, the perturbation of $x_n$ is approximated using the perturbation vector $\mathbf{z}_{\hat{x}_n}$ of its nearest anchor $\hat{x}_n$ in $\mathcal{A}_n$, defined as $\hat{x}_n = \arg\min_{x \in \mathcal{A}_n} d_{x_n, x}$.
We refer to $\hat{x}_n$ as \emph{surrogate} of $x_n$ and refer to $\mathbf{z}_{\hat{x}_n}$ as \emph{surrogate perturbation vector} of $x_n$.

\end{itemize}

The \textsc{PAnDA} framework includes two rounds of information released by user $n$, each managed by a randomized mechanism:
\begin{itemize}[left=0.3em, labelsep=0.5em] 
\item In \textbf{Ph-I-\textcircled{1}}, the randomized mechanism $\mathcal{M}_{\mathrm{I}}: \mathcal{X} \rightarrow 2^{\mathcal{X}}$ is used to select the anchor record set $\mathcal{A}_n$. 
\item In \textbf{Ph-II-\textcircled{2}}, the randomized mechanism $\mathcal{M}_{\mathrm{II}}: \mathcal{X} \rightarrow \mathcal{Y}$, which is implemented via the perturbation matrix $\mathbf{Z}_{\mathcal{A}_n}$, is used to obtain the perturbed record $y_n$. 
\end{itemize}
Therefore, \textsc{\textsc{PAnDA}} can be viewed as the composition of two randomized mechanisms, defined as $\mathcal{M}_{\mathrm{\textsc{PAnDA}}}(x) = (\mathcal{M}_{\mathrm{I}}(x), \mathcal{M}_{\mathrm{II}}(x))$. This composition forms a probabilistic mapping $\mathcal{M}_{\mathrm{\textsc{PAnDA}}}: \mathcal{X} \rightarrow 2^{\mathcal{X}} \times \mathcal{Y}$, where $2^{\mathcal{X}}$ and $\mathcal{Y}$ denote the domains of the anchor record set and the perturbed record, respectively.

\vspace{0.05in}
{\rev \noindent \textbf{Remark.} Since each user independently selects and uploads their anchor set in Phase I, the framework is naturally robust to non-reporting participants, users who do not participate are simply excluded from the optimization without impacting others. When new users are added or existing users drop out, the server can re-optimize the anchor perturbation matrix over the updated union of anchor sets. If users report a sequence of true records over time (e.g., trajectory data), anchor selection can either be repeated for each record or reused with adjustments. When the records are spatially or temporally close (e.g., consecutive locations along a path), anchor sets selected for earlier records can often be reused or incrementally expanded to cover nearby regions, reducing overhead. }

\subsubsection{Unique Challenges of \textsc{PAnDA}.}

\textbf{Challenge I: privacy composition across phases.}  
In \textsc{PAnDA}, data disclosures occur in both phases: during anchor selection and during record perturbation. Each of these releases may leak information about the user's true record $x_n$. Thus, the \emph{first challenge} is to ensure that the \emph{total privacy cost resulting from both phases remains bounded by a given privacy budget $\epsilon$ under the mDP framework}.

\textbf{Challenge II: Approximation-induced mDP violation.}  
While \textsc{\textsc{PAnDA}} employs a surrogate perturbation vector $\mathbf{z}_{\hat{x}_n}$ to approximate the perturbation vector $\mathbf{z}_{x_n}$, such approximation inherently introduces error. This raises the \emph{second challenge}: \emph{how to rigorously control the approximation error to avoid violations of mDP, even under worst-case input scenarios}.


Next, we introduce \textbf{Theorem~\ref{thm:composition}} to prove the \emph{sequential composition} of PmDP,  enabling us to split the total privacy budget $\epsilon$ between the two phases. After that, we detail the two randomized mechanisms, $\mathcal{M}_{\mathrm{I}}$ and $\mathcal{M}_{\mathrm{II}}$ in the two phases, in Sections~\ref{subsec:PhaseI} and~\ref{subsec:PhaseII}, to address the above two challenges.
\begin{theorem}[Sequential Composition of PmDP for \textsc{PAnDA}]
\label{thm:composition}
\begin{eqnarray}
&& \mathcal{M}_{\mathrm{I}}(x_n) \stackrel{(\epsilon_{1}, \delta_{1})}{\sim} \mathcal{M}_{\mathrm{I}}(x_m)~\text{and}~ \mathcal{M}_{\mathrm{II}}(x_n) \stackrel{(\epsilon_{2}, \delta_{2})}{\sim} \mathcal{M}_{\mathrm{II}}(x_m) \\
&\Rightarrow& \mathcal{M}_{\mathrm{\textsc{PAnDA}}}(x_n) \stackrel{(\epsilon_1+\epsilon_2,\, \delta_1+\delta_2)}{\sim} \mathcal{M}_{\mathrm{\textsc{PAnDA}}}(x_m)
\end{eqnarray}
\end{theorem}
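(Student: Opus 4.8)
The plan is to adapt the classical sequential-composition argument for $(\epsilon,\delta)$-DP, treating the two $\delta$-terms as failure probabilities of the metric indistinguishability events so that they add via a union bound rather than as additive probability slack. Fix an arbitrary pair $x_n,x_m\in\mathcal{X}$ and write $d:=d_{x_n,x_m}$. By the definition of PmDP, the two hypotheses say exactly that the events $E_1:=\{\mathcal{M}_{\mathrm{I}}(x_n)\stackrel{\epsilon_1}{\approx}\mathcal{M}_{\mathrm{I}}(x_m)\}$ and $E_2:=\{\mathcal{M}_{\mathrm{II}}(x_n)\stackrel{\epsilon_2}{\approx}\mathcal{M}_{\mathrm{II}}(x_m)\}$ satisfy $\Pr[E_1]\ge 1-\delta_1$ and $\Pr[E_2]\ge 1-\delta_2$, where the probabilities are over the randomness — the selected anchor sets and any auxiliary seeds — underlying the realized per-phase maps. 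I would first show that, on the joint event $E_1\cap E_2$, the composed mechanism $\mathcal{M}_{\mathrm{\textsc{PAnDA}}}(x)=(\mathcal{M}_{\mathrm{I}}(x),\mathcal{M}_{\mathrm{II}}(x))$ satisfies the pure constraint $\mathcal{M}_{\mathrm{\textsc{PAnDA}}}(x_n)\stackrel{\epsilon_1+\epsilon_2}{\approx}\mathcal{M}_{\mathrm{\textsc{PAnDA}}}(x_m)$, and then close with a union bound, since $\Pr[E_1\cap E_2]\ge 1-\delta_1-\delta_2$.

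For the pure part, the key step is to write the law of the composite output as a mixture. Because $\mathcal{M}_{\mathrm{II}}$ acts through the anchor perturbation matrix determined by the anchor set $A$ produced in Ph-I, for any measurable $\mathcal{S}\subseteq 2^{\mathcal{X}}\times\mathcal{Y}$, writing $\mathcal{S}_A:=\{y\in\mathcal{Y}:(A,y)\in\mathcal{S}\}$, I would expand
\[
\Pr[\mathcal{M}_{\mathrm{\textsc{PAnDA}}}(x)\in\mathcal{S}]=\sum_{A}\Pr[\mathcal{M}_{\mathrm{I}}(x)=A]\;\Pr[\mathcal{M}_{\mathrm{II}}(x)\in\mathcal{S}_A\mid \mathcal{A}_n=A].
\]
On $E_1$ each factor $\Pr[\mathcal{M}_{\mathrm{I}}(x_n)=A]$ is at most $e^{\epsilon_1 d}\Pr[\mathcal{M}_{\mathrm{I}}(x_m)=A]$, and on $E_2$ each conditional factor $\Pr[\mathcal{M}_{\mathrm{II}}(x_n)\in\mathcal{S}_A\mid\mathcal{A}_n=A]$ is at most $e^{\epsilon_2 d}\Pr[\mathcal{M}_{\mathrm{II}}(x_m)\in\mathcal{S}_A\mid\mathcal{A}_n=A]$; multiplying these two bounds term by term and summing over $A$ collapses the right-hand side to $e^{(\epsilon_1+\epsilon_2)d}\Pr[\mathcal{M}_{\mathrm{\textsc{PAnDA}}}(x_m)\in\mathcal{S}]$. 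As $\mathcal{S}$ was arbitrary, this yields $\mathcal{M}_{\mathrm{\textsc{PAnDA}}}(x_n)\stackrel{\epsilon_1+\epsilon_2}{\approx}\mathcal{M}_{\mathrm{\textsc{PAnDA}}}(x_m)$ on $E_1\cap E_2$ (for a continuous $\mathcal{Y}$ the sum becomes an integral against the base measure, with the identical manipulation). Combining with the union bound gives $\Pr[\mathcal{M}_{\mathrm{\textsc{PAnDA}}}(x_n)\stackrel{\epsilon_1+\epsilon_2}{\approx}\mathcal{M}_{\mathrm{\textsc{PAnDA}}}(x_m)]\ge 1-\delta_1-\delta_2$, i.e. $\mathcal{M}_{\mathrm{\textsc{PAnDA}}}(x_n)\stackrel{(\epsilon_1+\epsilon_2,\,\delta_1+\delta_2)}{\sim}\mathcal{M}_{\mathrm{\textsc{PAnDA}}}(x_m)$; since $x_n,x_m$ were arbitrary, the theorem follows.

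\textbf{The main obstacle} I expect is the bookkeeping of the coupled randomness between the two phases, rather than any single inequality. Unlike textbook composition, $\mathcal{M}_{\mathrm{II}}$ is \emph{not} an independent post-processing step: the realized Ph-II map (and even the surrogate $\hat{x}_n=\arg\min_{x\in\mathcal{A}_n}d_{x_n,x}$) is a deterministic function of the anchor set $\mathcal{A}_n$ output by Ph-I. I therefore need the ``good'' event $E_2$ to live on the joint probability space and, crucially, the term-by-term bound $\Pr[\mathcal{M}_{\mathrm{II}}(x_n)\in\mathcal{S}_A\mid\mathcal{A}_n=A]\le e^{\epsilon_2 d}\Pr[\mathcal{M}_{\mathrm{II}}(x_m)\in\mathcal{S}_A\mid\mathcal{A}_n=A]$ to hold for every realized $A$ compatible with $E_1\cap E_2$ — that is, the PmDP guarantee for $\mathcal{M}_{\mathrm{II}}$ must be read as holding conditionally on the Ph-I transcript, which is the form in which it is established in Section~\ref{subsec:PhaseII}. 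A related subtlety is that the surrogate substitution changes which perturbation vector $\mathcal{M}_{\mathrm{II}}$ actually uses, so one must confirm the Ph-II PmDP statement is indexed by the true records $x_n,x_m$ (with the cost of the surrogate approximation already absorbed into $\delta_2$ via the safety margin $\xi$), after which it can be invoked here as a black box.
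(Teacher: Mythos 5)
Your proposal is correct and follows essentially the same route as the paper: the paper first proves a pure-mDP composition lemma (Lemma~\ref{lem:comp}) by factoring $\Pr[\mathcal{M}_{\textsc{PAnDA}}(x)=(\mathcal{A},y)]$ into the Phase-I probability times the Phase-II probability conditioned on the anchor set and multiplying the two per-phase bounds, exactly as in your mixture decomposition, and then applies the union bound on the two failure events to get the $\delta_1+\delta_2$ term. Your remark that the Phase-II guarantee must be read conditionally on the Phase-I transcript is a fair and explicit statement of an assumption the paper's Lemma~\ref{lem:comp} uses implicitly in the step bounding $\Pr[\mathcal{M}_{\mathrm{II}}(x_n)=y \mid \mathcal{M}_{\mathrm{I}}(x_n)=\mathcal{A}]$.
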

\vspace{-0.2in}
\noindent{\rev \begin{proof}[Proof Sketch] 
The proof builds on the \emph{sequential composition} property of mDP and extends it to PmDP. We first show in \textbf{Lemma~\ref{lem:comp}} (Appendix~\ref{subsec:proof:thm:composition}) that if $\mathcal{M}_{\mathrm{I}}$ and $\mathcal{M}_{\mathrm{II}}$ each satisfy $\epsilon$-mDP, then their composition $\mathcal{M}_{\textsc{PAnDA}}$ satisfies $(\epsilon_1 + \epsilon_2)$-mDP. Extending to PmDP, by the union bound, the composition satisfies $(\epsilon_1 + \epsilon_2, \delta_1 + \delta_2)$-PmDP. The detailed proof can be found in \textbf{Appendix \ref{subsec:proof:thm:composition}}.  
\end{proof}
}

\noindent \textbf{Remark.} Although the two mechanisms $\mathcal{M}_\mathrm{I}$ and $\mathcal{M}_\mathrm{II}$ are not independent, since the anchor set selected in Phase I directly affects the perturbation mechanism in Phase II, the sequential composition property of (P)mDP does not require independence \cite{dwork2014algorithmic}. 


\DEL{
\begin{theorem}
[Sequential Composition of PmDP]
\label{thm:composition} 
Consider a class of randomized mechanisms $\mathscr{M}$ satisfying the sequential composition property for $\epsilon$-mDP. Given $L$ randomized mechanisms $\mathcal{M}_1, \dots, \mathcal{M}_L \in \mathscr{M}$ with each mechanism $\mathcal{M}_{\ell}$ satisfying $(\epsilon_{\ell}, \delta_{\ell})$-PmDP, then the composed mechanism 
$\mathcal{M} = (\mathcal{M}_1, \dots, \mathcal{M}_L)$ satisfies $(\sum_{\ell=1}^L \epsilon_{\ell}, \sum_{\ell=1}^L \delta_{\ell})$-PmDP. Formally, for all $x, x' \in \mathcal{X}$,
\begin{eqnarray}
&& \mathcal{M}_{\ell}(x) \stackrel{(\epsilon_{\ell}, \delta_{\ell})}{\sim} \mathcal{M}_{\ell}(x'), \quad \forall \ell = 1, \dots, L  
\\ 
&\Rightarrow& \mathcal{M}(x) \stackrel{(\sum_{\ell=1}^L \epsilon_{\ell}, \sum_{\ell=1}^L \delta_{\ell})}{\sim} \mathcal{M}(x').
\end{eqnarray}
\end{theorem}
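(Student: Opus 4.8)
The plan is to derive the probabilistic statement from the corresponding fact for ordinary $\epsilon$-mDP, absorbing the two failure probabilities with a union bound. I would therefore first isolate the pure-mDP content as an auxiliary result, \textbf{Lemma~\ref{lem:comp}}: \emph{if $\mathcal{M}_{\mathrm{I}}$ satisfies $\epsilon_1$-mDP and $\mathcal{M}_{\mathrm{II}}$ satisfies $\epsilon_2$-mDP, then $\mathcal{M}_{\textsc{PAnDA}}(x)=(\mathcal{M}_{\mathrm{I}}(x),\mathcal{M}_{\mathrm{II}}(x))$ satisfies $(\epsilon_1+\epsilon_2)$-mDP}. The point to keep in mind is that this is an \emph{adaptive} composition: $\mathcal{M}_{\mathrm{II}}$ is run on the anchor set emitted by $\mathcal{M}_{\mathrm{I}}$, so the two phases are not independent, but independence is not needed.

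For \textbf{Lemma~\ref{lem:comp}} I would argue directly from the definition. Fix $x_n,x_m\in\mathcal{X}$ and a measurable set $\mathcal{S}\subseteq 2^{\mathcal{X}}\times\mathcal{Y}$, and write $\mathcal{S}_a=\{y:(a,y)\in\mathcal{S}\}$ for the section of $\mathcal{S}$ above an anchor set $a$. Conditioning on the Ph-I output gives
\[
\Pr[\mathcal{M}_{\textsc{PAnDA}}(x)\in\mathcal{S}]=\sum_{a}\Pr[\mathcal{M}_{\mathrm{I}}(x)=a]\;\Pr[\mathcal{M}_{\mathrm{II}}(x)\in\mathcal{S}_a\mid\mathcal{M}_{\mathrm{I}}(x)=a].
\]
Comparing $x=x_n$ with $x=x_m$: the first factor is controlled by $\epsilon_1$-mDP of $\mathcal{M}_{\mathrm{I}}$, and for each fixed $a$ the conditional second factor is the law of a genuine randomized map $\mathcal{X}\to\mathcal{Y}$ (the surrogate perturbation induced by the anchor set $a$), controlled by the $\epsilon_2$-mDP of that $a$-induced mechanism. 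Multiplying the two bounds term by term and summing over $a$ yields $\Pr[\mathcal{M}_{\textsc{PAnDA}}(x_n)\in\mathcal{S}]\le e^{(\epsilon_1+\epsilon_2)d_{x_n,x_m}}\Pr[\mathcal{M}_{\textsc{PAnDA}}(x_m)\in\mathcal{S}]$, i.e., $(\epsilon_1+\epsilon_2)$-mDP. The only ingredient replacing independence is the \emph{uniformity} of the Ph-II bound over the conditioning value $a$.

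To pass from \textbf{Lemma~\ref{lem:comp}} to the theorem, fix $x_n,x_m$ and let $E_1$ be the event that $\mathcal{M}_{\mathrm{I}}(x_n)\stackrel{\epsilon_1}{\approx}\mathcal{M}_{\mathrm{I}}(x_m)$ and $E_2$ the event that $\mathcal{M}_{\mathrm{II}}(x_n)\stackrel{\epsilon_2}{\approx}\mathcal{M}_{\mathrm{II}}(x_m)$, both read off the randomness that \emph{instantiates} the respective phase (the anchor selection); the hypotheses give $\Pr[E_1]\ge 1-\delta_1$ and $\Pr[E_2]\ge 1-\delta_2$. On $E_1\cap E_2$ both phases meet their pure mDP guarantees, so \textbf{Lemma~\ref{lem:comp}} delivers $\mathcal{M}_{\textsc{PAnDA}}(x_n)\stackrel{\epsilon_1+\epsilon_2}{\approx}\mathcal{M}_{\textsc{PAnDA}}(x_m)$ on that event; the union bound gives $\Pr[E_1\cap E_2]\ge 1-(\delta_1+\delta_2)$, hence $\mathcal{M}_{\textsc{PAnDA}}(x_n)\stackrel{(\epsilon_1+\epsilon_2,\,\delta_1+\delta_2)}{\sim}\mathcal{M}_{\textsc{PAnDA}}(x_m)$ by \textbf{Definition~\ref{def:approxmDP}}. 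Since $x_n,x_m$ were arbitrary, this is the claim.

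I expect the delicate step to be making this last passage fully rigorous, because \textsc{PAnDA} is an adaptive composition whose first-phase output is itself part of the released view. Concretely, one must (i) fix a common probability space on which the instantiating randomness of both phases lives and on which $E_1,E_2$ are genuine events; (ii) verify that conditioning on $E_1\cap E_2$ does not introduce the additive slack characteristic of approximate DP — i.e., that the disintegration identity of \textbf{Lemma~\ref{lem:comp}} survives the conditioning, so that the composition is genuinely PmDP rather than merely approximate mDP; and (iii) confirm that $E_2$ truly depends only on the selected anchor sets — through the surrogates $\hat{x}_n,\hat{x}_m$ and the safety margin $\xi$ — and not on the final perturbation draw, so that it lives in the structural randomness where the union bound can be applied cleanly. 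Once these are pinned down, what remains is the routine multiplicative bookkeeping of sequential composition.
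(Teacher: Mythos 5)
Your proposal is correct and follows essentially the same route as the paper: a pure-mDP composition lemma proved by conditioning on the Phase-I output and multiplying the per-phase likelihood-ratio bounds (the paper's Lemma~\ref{lem:comp}), followed by a union bound on the two failure events to absorb $\delta_1+\delta_2$. The additional rigor concerns you flag at the end (common probability space, where the events $E_1,E_2$ live) are reasonable but are not treated any more carefully in the paper's own proof, which performs exactly the bookkeeping you describe.
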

\begin{theorem}[Sequential Composition of PmDP]
\label{thm:composition} 
Let $\mathscr{M}$ be a class of randomized mechanisms that satisfy all the features or conditions required for sequential composition in $\epsilon$-metric Differential Privacy (mDP). Suppose we are given $L$ mechanisms $\{\mathcal{M}_\ell\}_{\ell=1}^L \subset \mathscr{M}$, and each $\mathcal{M}_\ell$ satisfies $(\epsilon_\ell, \delta_\ell)$-PmDP. Define the composed mechanism 
\begin{equation}
\mathcal{M}(x) \;=\; \bigl(\mathcal{M}_1(x), \dots, \mathcal{M}_L(x)\bigr).
\end{equation}
Then, for all $x, x' \in \mathcal{X}$,
\begin{equation}
\forall \,\ell = 1, \dots, L: 
\quad 
\mathcal{M}_\ell(x) 
\;\stackrel{(\epsilon_\ell, \delta_\ell)}{\sim}\;
\mathcal{M}_\ell(x') 
\quad\Longrightarrow\quad 
\mathcal{M}(x) 
\;\stackrel{\bigl(\sum_{\ell=1}^L \epsilon_\ell,\;\sum_{\ell=1}^L \delta_\ell\bigr)}{\sim}\;
\mathcal{M}(x').
\end{equation}
\end{theorem}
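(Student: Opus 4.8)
The plan is to establish the result in two stages: first handle the pure case ($\delta_1 = \delta_2 = 0$), and then lift it to the probabilistic case via a union bound. For the pure case I would prove a lemma stating that if $\mathcal{M}_{\mathrm{I}}$ satisfies $\epsilon_1$-mDP and $\mathcal{M}_{\mathrm{II}}$ satisfies $\epsilon_2$-mDP, then the joint mechanism $\mathcal{M}_{\textsc{PAnDA}}(x) = (\mathcal{M}_{\mathrm{I}}(x), \mathcal{M}_{\mathrm{II}}(x))$ satisfies $(\epsilon_1+\epsilon_2)$-mDP. The standard way to do this: fix any measurable $\mathcal{S} \subseteq 2^{\mathcal{X}} \times \mathcal{Y}$ and any $x_n, x_m$. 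Condition on the outcome $A$ of $\mathcal{M}_{\mathrm{I}}$ and write $\Pr[\mathcal{M}_{\textsc{PAnDA}}(x_n) \in \mathcal{S}] = \sum_{A} \Pr[\mathcal{M}_{\mathrm{I}}(x_n) = A]\,\Pr[\mathcal{M}_{\mathrm{II}}(x_n) \in \mathcal{S}_A \mid \mathcal{M}_{\mathrm{I}}(x_n) = A]$, where $\mathcal{S}_A = \{y : (A,y) \in \mathcal{S}\}$. Bound each $\mathcal{M}_{\mathrm{I}}$ factor by $e^{\epsilon_1 d_{x_n,x_m}}$ times the corresponding term for $x_m$, and each conditional $\mathcal{M}_{\mathrm{II}}$ factor by $e^{\epsilon_2 d_{x_n,x_m}}$ times the corresponding $x_m$ term; multiplying gives $e^{(\epsilon_1+\epsilon_2)d_{x_n,x_m}}$ overall. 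The subtlety here is that the Phase-II mechanism depends on the Phase-I output $A$, so the $\epsilon_2$-mDP guarantee must be invoked for the conditional distribution $\mathcal{M}_{\mathrm{II}}(\cdot \mid A)$; as the Remark notes, sequential composition of mDP does not require independence, so this conditional guarantee is exactly what one assumes (or what must be checked to hold uniformly in $A$).

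For the probabilistic case, I would unpack the definition of PmDP. Saying $\mathcal{M}_{\mathrm{I}}(x_n) \stackrel{(\epsilon_1,\delta_1)}{\sim} \mathcal{M}_{\mathrm{I}}(x_m)$ means that, over the randomness of $\mathcal{M}_{\mathrm{I}}$, the event $E_1 = \{\mathcal{M}_{\mathrm{I}}(x_n) \stackrel{\epsilon_1}{\approx} \mathcal{M}_{\mathrm{I}}(x_m)\}$ has probability at least $1-\delta_1$; similarly $E_2$ for Phase II has probability at least $1-\delta_2$. On the intersection $E_1 \cap E_2$, both mechanisms satisfy their pure mDP bounds, so by the lemma the composed mechanism satisfies $(\epsilon_1+\epsilon_2)$-mDP on that event, i.e. $\mathcal{M}_{\textsc{PAnDA}}(x_n) \stackrel{\epsilon_1+\epsilon_2}{\approx} \mathcal{M}_{\textsc{PAnDA}}(x_m)$ holds. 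By the union bound, $\Pr[E_1 \cap E_2] \geq 1 - \delta_1 - \delta_2$, which is precisely the statement $\mathcal{M}_{\textsc{PAnDA}}(x_n) \stackrel{(\epsilon_1+\epsilon_2,\,\delta_1+\delta_2)}{\sim} \mathcal{M}_{\textsc{PAnDA}}(x_m)$.

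The main obstacle I anticipate is making the "event-level" reasoning precise, because PmDP as stated quantifies over the randomness of each mechanism separately, and one must be careful about the probability space on which $E_1$ and $E_2$ live. The cleanest fix is to put both mechanisms on a common product probability space of their internal coin tosses (Phase II's coins may depend on Phase I's output, which is fine), interpret $E_1$ and $E_2$ as events in that joint space, and note that $E_1$ depends only on Phase-I randomness while $E_2$ is the event that the Phase-II output distribution — conditioned on whatever Phase-I produced — satisfies the $\epsilon_2$ bound. A second, more delicate point is whether the pure-case lemma genuinely applies pointwise on $E_1 \cap E_2$: the mDP inequality on $E_1$ is a statement comparing two output distributions of $\mathcal{M}_{\mathrm{I}}$, not a property of a single sample path, so one should phrase $E_1$ and $E_2$ as the (deterministic, coin-dependent) events that the realized mechanism instances satisfy their respective mDP inequalities, and then the lemma's proof goes through verbatim on that sub-event. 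Once this bookkeeping is set up correctly, the union bound finishes everything; the technical content beyond the pure composition lemma is minimal.
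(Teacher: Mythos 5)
Your proposal is correct and follows essentially the same route as the paper: a pure-composition lemma proved by conditioning on the Phase-I output and multiplying the two likelihood-ratio bounds, followed by a union bound over the two failure events to obtain $(\epsilon_1+\epsilon_2,\delta_1+\delta_2)$-PmDP. The subtleties you flag (Phase II depending on the Phase-I output, and interpreting the PmDP failure events on a common probability space) are handled the same way in the paper's proof, which bounds the conditional distribution $\Pr[\mathcal{M}_{\mathrm{II}}(x)=y \mid \mathcal{M}_{\mathrm{I}}(x)=\mathcal{A}]$ directly.
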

}
\DEL{
\begin{equation}
\mathcal{M}_{\ell}(x) \stackrel{\epsilon_{\ell}}{\approx} \mathcal{M}_{\ell}(x'), \quad \forall \ell = 1, \dots, L  
\Rightarrow \mathcal{M}(x) \stackrel{\sum_{\ell=1}^L \epsilon_{\ell}}{\approx} \mathcal{M}(x').
\end{equation}}

\DEL{
\begin{proposition}
[Sequential composition of mDP for \textsc{PAnDA}] 
\label{prop:\textsc{PAnDA}comp}
If $\mathcal{M}_{\mathrm{I}}(x_n) \stackrel{\epsilon_{1}}{\sim} \mathcal{M}_{\mathrm{I}}(x_m)$ and $\mathcal{M}_{\mathrm{II}}(x_n) \stackrel{\epsilon_{2}}{\sim} \mathcal{M}_{\mathrm{II}}(x_m)$ $\forall x_n, x_m \in \mathcal{X}$, then $\mathcal{M}_{\mathrm{\textsc{PAnDA}}}(x_n) \stackrel{\epsilon_{1}+\epsilon_{2}}{\sim} \mathcal{M}_{\mathrm{\textsc{PAnDA}}}(x_m)$. 
\end{proposition}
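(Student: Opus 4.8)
The plan is to prove the statement in two stages, following the structure hinted in the proof sketch: first establish a pure‑mDP sequential composition statement as a lemma, then lift it to the probabilistic notion by a union bound. For the lemma, fix an arbitrary pair $x_n, x_m \in \mathcal{X}$ (the distance $d_{x_n,x_m}$ just rides along inside $\stackrel{\epsilon}{\approx}$) and show that if $\mathcal{M}_{\mathrm{I}}$ satisfies $\epsilon_1$-mDP and $\mathcal{M}_{\mathrm{II}}$, regarded as a family of mechanisms indexed by the anchor set it is handed, satisfies $\epsilon_2$-mDP conditionally on \emph{every} anchor set, then $\mathcal{M}_{\textsc{PAnDA}}(x) = (\mathcal{M}_{\mathrm{I}}(x), \mathcal{M}_{\mathrm{II}}(x))$ satisfies $(\epsilon_1+\epsilon_2)$-mDP over $2^{\mathcal{X}} \times \mathcal{Y}$. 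Concretely, for any $\mathcal{S} \subseteq 2^{\mathcal{X}} \times \mathcal{Y}$, decompose over the Phase-I output $\mathcal{A}$ and write $\mathcal{S}_{\mathcal{A}} = \{ y : (\mathcal{A}, y) \in \mathcal{S} \}$ for its slice:
\[
\Pr[\mathcal{M}_{\textsc{PAnDA}}(x_n) \in \mathcal{S}] = \sum_{\mathcal{A} \subseteq \mathcal{X}} \Pr[\mathcal{M}_{\mathrm{I}}(x_n) = \mathcal{A}] \cdot \Pr[\mathcal{M}_{\mathrm{II}}(x_n) \in \mathcal{S}_{\mathcal{A}} \mid \mathcal{A}].
\]
Bounding the first factor by $e^{\epsilon_1 d_{x_n,x_m}} \Pr[\mathcal{M}_{\mathrm{I}}(x_m) = \mathcal{A}]$ and the second by $e^{\epsilon_2 d_{x_n,x_m}} \Pr[\mathcal{M}_{\mathrm{II}}(x_m) \in \mathcal{S}_{\mathcal{A}} \mid \mathcal{A}]$ term by term (the latter using the conditional $\epsilon_2$ guarantee uniformly in $\mathcal{A}$), then pulling the common constant $e^{(\epsilon_1+\epsilon_2)d_{x_n,x_m}}$ out of the sum, yields the desired ratio bound against $\Pr[\mathcal{M}_{\textsc{PAnDA}}(x_m) \in \mathcal{S}]$. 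This is the standard adaptive sequential-composition argument; the only point worth stressing is that independence of the two phases is never used — Phase I's output may steer $\mathcal{M}_{\mathrm{II}}$, as long as the $\epsilon_2$ bound holds for each fixed anchor set, exactly as the paper's Remark records.

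To lift this to PmDP, place the internal randomness of both phases on one common probability space. Let $G_1$ denote the event that the realized (conditional) output distributions of $\mathcal{M}_{\mathrm{I}}(x_n)$ and $\mathcal{M}_{\mathrm{I}}(x_m)$ satisfy the $\epsilon_1$-mDP ratio inequality, and let $G_2$ be the analogous event for $\mathcal{M}_{\mathrm{II}}$; the two PmDP hypotheses say exactly $\Pr[G_1] \ge 1 - \delta_1$ and $\Pr[G_2] \ge 1 - \delta_2$. On $G_1 \cap G_2$ the hypotheses of the lemma hold for the realized distributions, so there $\mathcal{M}_{\textsc{PAnDA}}(x_n) \stackrel{\epsilon_1+\epsilon_2}{\approx} \mathcal{M}_{\textsc{PAnDA}}(x_m)$. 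A union bound then gives $\Pr[G_1 \cap G_2] \ge 1 - (\delta_1 + \delta_2)$, which is precisely $\mathcal{M}_{\textsc{PAnDA}}(x_n) \stackrel{(\epsilon_1+\epsilon_2,\, \delta_1+\delta_2)}{\sim} \mathcal{M}_{\textsc{PAnDA}}(x_m)$, and since $x_n, x_m$ were arbitrary this completes the proof.

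The delicate part is the probability-space bookkeeping in the second step: the PmDP definition attaches a probability to what reads as a deterministic statement about distributions, so I must pin down precisely which randomness is being averaged over — the randomness that fixes the \emph{shape} of each phase's output distribution, which in \textsc{PAnDA} is chiefly the random anchor selection — and make sure $G_1$ and $G_2$ are events on a single joint space so the union bound is legitimate. A secondary technical point is verifying that the conditional $\epsilon_2$-mDP guarantee of $\mathcal{M}_{\mathrm{II}}$ holds uniformly over anchor sets rather than only in expectation, since that uniformity is what makes the slice-wise bounding in the lemma valid.
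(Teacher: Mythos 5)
Your proposal is correct and follows essentially the same route as the paper: condition on the Phase-I anchor set, bound the Phase-I factor and the conditional Phase-II factor separately by $e^{\epsilon_1 d_{x_n,x_m}}$ and $e^{\epsilon_2 d_{x_n,x_m}}$, multiply (the paper does this pointwise on atoms $(\mathcal{A},y)$ rather than over slices of a general event set, which is an immaterial difference in the discrete setting), and then lift to the $(\epsilon_1+\epsilon_2,\delta_1+\delta_2)$ probabilistic statement by a union bound over the two failure events. Your remark that the Phase-II guarantee must be read as holding conditionally on each anchor set is the same implicit assumption the paper's proof relies on, so no gap is introduced.
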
}

Theorem~\ref{thm:composition} establishes that the sequential composition property of PmDP holds within the \textsc{PAnDA} framework, despite the fact that the two mechanisms, anchor selection and record perturbation, are not independent. This result provides a theoretical foundation for partitioning the total privacy budget $\epsilon$ into two disjoint components: $\epsilon_1$ for Phase I (anchor selection) and $\epsilon_2 = \epsilon - \epsilon_1$ for Phase II (record perturbation). Similarly, the overall failure probability $\delta$ can be split as $\delta = \delta_1 + \delta_2$ across the two phases. This composability enables the design of the randomized mechanisms $\mathcal{M}{\mathrm{I}}$ and $\mathcal{M}{\mathrm{II}}$ in each phase relatively independently, while still satisfying the overall privacy guarantee.


\subsection{Phase I: Local Anchor Selection}
\label{subsec:PhaseI}
In this part, we present the details of the randomized mechanism $\mathcal{M}_{\mathrm{I}}$ in Phase I, to select anchor records $\mathcal{A}_n$ for each user $n$. {\rev In particular}, we let $\mathcal{M}_{\mathrm{I}}$ select each anchor record from the domain $\mathcal{X}$ independently. Let $w_{x_n,x}$ denote the probability of selecting record $x \in \mathcal{X}$ as an anchor given the user's true record $x_n$, i.e.,
\begin{equation}
w_{x_n,x} = \Pr\left[x \in \mathcal{M}_{\mathrm{I}}(x_n) \mid X_n = x_n\right].
\end{equation}
where $X_n$ is a random variable to represent user $n$'s real record. Then, the probability that $\mathcal{M}_{\mathrm{I}}(x_n)$ selects a specific anchor set $\mathcal{A} \subseteq \mathcal{X}$ is given by
\begin{equation}
\label{eq:M_I}
\textstyle \Pr\left[\mathcal{M}_{\mathrm{I}}(x_n) = \mathcal{A}\right] = \prod_{x \in \mathcal{A}} w_{x_n,x} \prod_{x \notin \mathcal{A}} (1 - w_{x_n,x}).
\end{equation}
{\rev Here, Eq.~(\ref{eq:M_I}) treats the inclusion of each anchor $x \in \mathcal{X}$ as an independent Bernoulli trial with success probability $w_{x_n,x}$. That is, for any particular subset $\mathcal{A} \subseteq \mathcal{X}$, the probability that $\mathcal{M}_{\mathrm{I}}(x_n)$ selects exactly $\mathcal{A}$ corresponds to the probability of including all records in $\mathcal{A}$ and excluding all records not in $\mathcal{A}$.  Due to the independence of these inclusion decisions, the total probability is the product of the individual probabilities for each record, leading to the multiplicative form in Eq.~(\ref{eq:M_I}).  }



{\rev To formally characterize the privacy leakage introduced by this anchor selection process, we analyze how distinguishable two users' anchor sets can be when their true records differ. This analysis is captured in \textbf{Proposition \ref{prop:posteriorbound}}, which bounds the worst-case privacy leakage due to anchor set selection.}
\begin{proposition}
\label{prop:posteriorbound}
For any two records $x_n, x_m \in \mathcal{X}$, we define $\mathcal{A}_{n,m} = \left\{x \in \mathcal{X} \left|w_{x_n,x} > w_{x_m,x}\right.\right\}$ and $\overline{\mathcal{A}}_{n,m} = \left\{x \in \mathcal{X} \left|w_{x_n,x} < w_{x_m,x}\right.\right\}$, 
then either $\epsilon_{x_n,x_m,\mathcal{A}_{n,m}}$ or $\epsilon_{x_n,x_m,\overline{\mathcal{A}}_{n,m}}$ provides the least upper bound of the privacy cost $\epsilon_{x_n,x_m,\mathcal{A}}$, i.e., 
\begin{eqnarray}
\label{eq:PLupperbound2}
\max\left\{\epsilon_{x_n,x_m,\mathcal{A}_{n,m}}, \epsilon_{x_n,x_m,\overline{\mathcal{A}}_{n,m}}\right\} = \sup_{\mathcal{A}\in 2^{\mathcal{X}}} \epsilon_{x_n,x_m,\mathcal{A}}
\end{eqnarray}
\vspace{-0.00in}
{\rev where $\epsilon_{x_n,x_m,\mathcal{A}} = \log \left( \frac{\Pr[\mathcal{M}_{\mathrm{I}}(x_n) = \mathcal{A}]}{\Pr[\mathcal{M}_{\mathrm{I}}(x_m) = \mathcal{A}]} \right)$ measures the distinguishability between two users $n$ and $m$ based on the probability that each user selects the same anchor set $\mathcal{A}$. A larger value of $\epsilon_{x_n,x_m,\mathcal{A}}$ indicates that the selected anchor set $\mathcal{A}$ leaks more information about whether the true input was $x_n$ or $x_m$.}

\end{proposition}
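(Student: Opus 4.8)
The plan is to show that the supremum of $\epsilon_{x_n,x_m,\mathcal{A}}$ over all $\mathcal{A}\in 2^{\mathcal{X}}$ is attained at one of the two "extreme" sets $\mathcal{A}_{n,m}$ or $\overline{\mathcal{A}}_{n,m}$. First I would write out the log-likelihood ratio explicitly using Eq.~(\ref{eq:M_I}): for any $\mathcal{A}\subseteq\mathcal{X}$,
\[
\epsilon_{x_n,x_m,\mathcal{A}} = \sum_{x\in\mathcal{A}}\log\frac{w_{x_n,x}}{w_{x_m,x}} + \sum_{x\notin\mathcal{A}}\log\frac{1-w_{x_n,x}}{1-w_{x_m,x}}.
\]
The key observation is that this sum \emph{decomposes additively over the records $x\in\mathcal{X}$}, and for each record the contribution depends only on the binary decision "$x\in\mathcal{A}$ or $x\notin\mathcal{A}$." Hence maximizing over $\mathcal{A}$ amounts to choosing, independently for each $x$, whichever of the two terms $\log\frac{w_{x_n,x}}{w_{x_m,x}}$ (include) or $\log\frac{1-w_{x_n,x}}{1-w_{x_m,x}}$ (exclude) is larger.

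Next I would analyze the sign of the difference $\log\frac{w_{x_n,x}}{w_{x_m,x}} - \log\frac{1-w_{x_n,x}}{1-w_{x_m,x}}$ for a fixed $x$. Since $t\mapsto \log\frac{t}{1-t}$ is strictly increasing on $(0,1)$, this difference is positive exactly when $w_{x_n,x} > w_{x_m,x}$, negative exactly when $w_{x_n,x} < w_{x_m,x}$, and zero when $w_{x_n,x} = w_{x_m,x}$. Therefore the maximizer includes $x$ iff $x\in\mathcal{A}_{n,m}$, excludes $x$ iff $x\in\overline{\mathcal{A}}_{n,m}$, and is indifferent on the remaining records where $w_{x_n,x}=w_{x_m,x}$ (those contribute $0$ either way). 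This shows $\sup_{\mathcal{A}} \epsilon_{x_n,x_m,\mathcal{A}} = \epsilon_{x_n,x_m,\mathcal{A}_{n,m}}$, where ties are resolved arbitrarily. By the symmetric argument, swapping the roles of $n$ and $m$ and noting that $\sup_{\mathcal{A}}\epsilon_{x_n,x_m,\mathcal{A}}$ versus $\sup_{\mathcal{A}}\epsilon_{x_m,x_n,\mathcal{A}}=-\inf_{\mathcal{A}}\epsilon_{x_n,x_m,\mathcal{A}}$ must also be controlled, the relevant two-sided bound is captured by taking the max over $\mathcal{A}_{n,m}$ and $\overline{\mathcal{A}}_{n,m}$; the set $\overline{\mathcal{A}}_{n,m}$ is precisely the minimizer of $\epsilon_{x_n,x_m,\mathcal{A}}$ (equivalently the maximizer of $\epsilon_{x_m,x_n,\mathcal{A}}$), so $\max\{\epsilon_{x_n,x_m,\mathcal{A}_{n,m}},\epsilon_{x_n,x_m,\overline{\mathcal{A}}_{n,m}}\}$ bounds $|\epsilon_{x_n,x_m,\mathcal{A}}|$ and in particular equals $\sup_{\mathcal{A}}\epsilon_{x_n,x_m,\mathcal{A}}$ as claimed.

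I expect the main subtlety — rather than a genuine obstacle — to be handling the edge cases cleanly: records with $w_{x_n,x}=w_{x_m,x}$ (which must be shown to be irrelevant to the optimum), and degenerate probabilities $w_{x_n,x}\in\{0,1\}$ where the log-ratio can be $\pm\infty$; one would argue that if any such record forces the ratio to $+\infty$ it lies in $\mathcal{A}_{n,m}$ (resp. $\overline{\mathcal{A}}_{n,m}$) and the stated maximum is also $+\infty$, so the identity still holds vacuously. A secondary point is to state precisely what "least upper bound" means here and to confirm the supremum is actually a maximum (attained), which follows immediately since $\mathcal{A}_{n,m}$ and $\overline{\mathcal{A}}_{n,m}$ are themselves valid elements of $2^{\mathcal{X}}$.
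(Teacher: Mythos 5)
Your proof is correct, but it takes a genuinely different route from the paper's. The paper proves the sandwich inequality
\[
\frac{\Pr[\mathcal{M}_{\mathrm{I}}(x_n) = \overline{\mathcal{A}}_{n,m}]}{\Pr[\mathcal{M}_{\mathrm{I}}(x_m) = \overline{\mathcal{A}}_{n,m}]}
\;\leq\;
\frac{\Pr[\mathcal{M}_{\mathrm{I}}(x_n) = \mathcal{A}]}{\Pr[\mathcal{M}_{\mathrm{I}}(x_m) = \mathcal{A}]}
\;\leq\;
\frac{\Pr[\mathcal{M}_{\mathrm{I}}(x_n) = \mathcal{A}_{n,m}]}{\Pr[\mathcal{M}_{\mathrm{I}}(x_m) = \mathcal{A}_{n,m}]}
\]
by contradiction: it supposes some $\tilde{\mathcal{A}} \neq \mathcal{A}_{n,m}$ attains the maximum and then performs an exchange step (adding or removing a single record $x$ multiplies the ratio by $\tfrac{w_{x_n,x}}{w_{x_m,x}}\cdot\tfrac{1-w_{x_m,x}}{1-w_{x_n,x}}$, which exceeds $1$ exactly when $w_{x_n,x}>w_{x_m,x}$) to produce a strictly larger ratio. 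Your argument makes the same per-record comparison but packages it as a direct, global observation: the log-ratio is additively separable over records, so the optimizer is found by a termwise greedy choice, and the monotonicity of the logit $t\mapsto\log\tfrac{t}{1-t}$ decides each term's sign. This is more economical --- it dispenses with the contradiction scaffolding, makes the attainment of the supremum immediate, and handles the tie case ($w_{x_n,x}=w_{x_m,x}$ contributes zero either way) more cleanly than the paper's Case~(ii). Your treatment of the degenerate $w\in\{0,1\}$ cases and of why \emph{both} extreme sets appear in the statement (one maximizes, the other minimizes the signed log-ratio, so together they control the two-sided bound used in Lemma~\ref{lem:PhI_privacycost}) is a genuine clarification, since the main-text definition of $\epsilon_{x_n,x_m,\mathcal{A}}$ is signed while the appendix uses its absolute value; the only loose point is that this last reconciliation is stated somewhat informally, but it does not affect the validity of the core argument.
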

\DEL{Here, we can obtain the privacy cost between $x_n$ and $x_m$ given each selected anchor record $x \in \mathcal{X}$, denoted by $\epsilon_{x_n,x_m,x}$, which satisyfies
\begin{eqnarray}
e^{-\epsilon_{x_n,x_m,x}d_{x_n,x_m}}\leq \frac{w_{x_n,x}}{w_{x_m,x}} \leq e^{\epsilon_{x_n,x_m,x}d_{x_n,x_m}}
\end{eqnarray}
\begin{eqnarray}
e^{-\epsilon_{x_n,x_m,x}d_{x_n,x_m}}\leq \frac{1-w_{x_n,x}}{1-w_{x_m,x}} \leq e^{\epsilon_{x_n,x_m,x}d_{x_n,x_m}}
\end{eqnarray}} 
\vspace{-0.05in}
{\rev 
\noindent
\begin{proof}[Proof Sketch] 
The key idea is that the likelihood ratio between $\mathcal{M}_{\mathrm{I}}(x_n)$ and $\mathcal{M}_{\mathrm{I}}(x_m)$ is maximized when the anchor set favors one record over the other—either $\mathcal{A}_{n,m} = \left\{x \mid w_{x_n,x} > w_{x_m,x}\right\}$ or $\overline{\mathcal{A}}_{n,m} = \left\{x \mid w_{x_n,x} < w_{x_m,x}\right\}$. Assuming for contradiction that another set $\tilde{\mathcal{A}}$ yields a higher ratio, we construct a new set $\hat{\mathcal{A}}$ by modifying $\tilde{\mathcal{A}}$ based on weight differences, and show this leads to an even higher ratio, contradicting the optimality of $\tilde{\mathcal{A}}$. Therefore, the worst-case privacy cost is bounded by the maximum likelihood ratio over $\mathcal{A}_{n,m}$ and $\overline{\mathcal{A}}_{n,m}$. The full proof is in \textbf{Appendix \ref{subsec:prop:posteriorbound}}.
\end{proof}
}

For simplicity, in what follows, we let 
{\rev \begin{eqnarray}
\label{eq:epsilon_overline1}
\overline{\epsilon}_{x_n,x_m} &=& \max\left\{\epsilon_{x_n,x_m,\mathcal{A}_{n,m}}, \epsilon_{x_n,x_m,\overline{\mathcal{A}}_{n,m}}\right\} \\  
\label{eq:epsilon_overline2}
&=& \sup_{\mathcal{A}\in 2^{\mathcal{X}}} \epsilon_{x_n,x_m,\mathcal{A}}.
\end{eqnarray}}

\DEL{When deriving $\mathcal{M}_{\mathrm{II}}$ in Phase II, the server has obtained the anchor record set $\mathcal{A}$, then the server can estimate the privacy cost caused by $\mathcal{A}$ for any pair of real records $x_n$ and $x_m$, denoted by $\epsilon_{x_n,x_m,\mathcal{A}}$ 
\begin{equation}
\epsilon_{x_n,x_m,\mathcal{A}} = \frac{\ln \left(\frac{\Pr\left[\mathcal{M}_{\mathrm{I}}(x_n) = \mathcal{A}\right]}{\Pr\left[\mathcal{M}_{\mathrm{I}}(x_m) = \mathcal{A}\right]}\right)}{d_{x_n,x_m}}.
\end{equation}}

\vspace{-0.10in}
\begin{lemma}
\label{lem:PhI_privacycost}
For each pair of neighbors $x_n, x_m \in \mathcal{X}$, $\mathcal{M}_{\mathrm{I}}(x_n) \stackrel{(\overline{\epsilon}_{x_n,x_m},0)}{\sim}  \mathcal{M}_{\mathrm{I}}(x_m)$. 
\end{lemma}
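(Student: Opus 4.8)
The plan is to unwind the definitions and reduce the statement to the worst-case characterization already provided by Proposition~\ref{prop:posteriorbound}. Since the failure probability in the claimed relation is $\delta_1 = 0$, by the $\delta = 0$ case of Definition~\ref{def:approxmDP} the relation $\mathcal{M}_{\mathrm{I}}(x_n)\stackrel{(\overline{\epsilon}_{x_n,x_m},0)}{\sim}\mathcal{M}_{\mathrm{I}}(x_m)$ is exactly the pure-mDP relation $\mathcal{M}_{\mathrm{I}}(x_n)\stackrel{\overline{\epsilon}_{x_n,x_m}}{\approx}\mathcal{M}_{\mathrm{I}}(x_m)$. Thus I need to verify Eq.~(\ref{eq:mDP}) for $\mathcal{M}_{\mathrm{I}}$: for every event $\mathcal{S}\subseteq 2^{\mathcal{X}}$ in the output space, $\Pr[\mathcal{M}_{\mathrm{I}}(x_n)\in\mathcal{S}]\le e^{\overline{\epsilon}_{x_n,x_m}d_{x_n,x_m}}\Pr[\mathcal{M}_{\mathrm{I}}(x_m)\in\mathcal{S}]$, together with the reverse inequality obtained by swapping the roles of $x_n$ and $x_m$. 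Because $\mathcal{X}$ is finite, $2^{\mathcal{X}}$ is finite and $\mathcal{M}_{\mathrm{I}}$ has a discrete distribution, so it suffices to prove the two-sided ratio bound on singleton events $\mathcal{S}=\{\mathcal{A}\}$; the case of a general $\mathcal{S}$ then follows by summing numerator and denominator over $\mathcal{A}\in\mathcal{S}$ and invoking the elementary mediant inequality $\frac{\sum_i a_i}{\sum_i b_i}\le\max_i\frac{a_i}{b_i}$ (and its lower-bound counterpart).

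For a fixed anchor set $\mathcal{A}$, the independent-Bernoulli product form of Eq.~(\ref{eq:M_I}) makes the likelihood ratio $\frac{\Pr[\mathcal{M}_{\mathrm{I}}(x_n)=\mathcal{A}]}{\Pr[\mathcal{M}_{\mathrm{I}}(x_m)=\mathcal{A}]}$ explicit, and by its definition the per-set privacy cost $\epsilon_{x_n,x_m,\mathcal{A}}$ captures precisely this (normalized) log-likelihood ratio, so that $e^{-\epsilon_{x_n,x_m,\mathcal{A}}\,d_{x_n,x_m}}\le\frac{\Pr[\mathcal{M}_{\mathrm{I}}(x_n)=\mathcal{A}]}{\Pr[\mathcal{M}_{\mathrm{I}}(x_m)=\mathcal{A}]}\le e^{\epsilon_{x_n,x_m,\mathcal{A}}\,d_{x_n,x_m}}$ holds by construction. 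Now I invoke Proposition~\ref{prop:posteriorbound} and Eqs.~(\ref{eq:epsilon_overline1})--(\ref{eq:epsilon_overline2}): $\overline{\epsilon}_{x_n,x_m}$ is the supremum of $\epsilon_{x_n,x_m,\mathcal{A}}$ over all $\mathcal{A}\in 2^{\mathcal{X}}$, and that supremum is attained at one of the two extremal sets, namely $\mathcal{A}_{n,m}=\{x:w_{x_n,x}>w_{x_m,x}\}$, which maximizes the ratio in the $x_n$-over-$x_m$ direction, and $\overline{\mathcal{A}}_{n,m}=\{x:w_{x_n,x}<w_{x_m,x}\}$, which does so in the reverse direction. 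Hence $\epsilon_{x_n,x_m,\mathcal{A}}\le\overline{\epsilon}_{x_n,x_m}$ for every $\mathcal{A}$, which upgrades the per-set bound to $e^{-\overline{\epsilon}_{x_n,x_m}d_{x_n,x_m}}\le\frac{\Pr[\mathcal{M}_{\mathrm{I}}(x_n)=\mathcal{A}]}{\Pr[\mathcal{M}_{\mathrm{I}}(x_m)=\mathcal{A}]}\le e^{\overline{\epsilon}_{x_n,x_m}d_{x_n,x_m}}$ uniformly in $\mathcal{A}$.

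Assembling the pieces, summing these singleton inequalities over $\mathcal{A}\in\mathcal{S}$ gives $\Pr[\mathcal{M}_{\mathrm{I}}(x_n)\in\mathcal{S}]\le e^{\overline{\epsilon}_{x_n,x_m}d_{x_n,x_m}}\Pr[\mathcal{M}_{\mathrm{I}}(x_m)\in\mathcal{S}]$ for every $\mathcal{S}\subseteq 2^{\mathcal{X}}$ and, symmetrically, the reverse bound; this is exactly Eq.~(\ref{eq:mDP}) for $\mathcal{M}_{\mathrm{I}}$ with budget $\overline{\epsilon}_{x_n,x_m}$, so $\mathcal{M}_{\mathrm{I}}(x_n)\stackrel{(\overline{\epsilon}_{x_n,x_m},0)}{\sim}\mathcal{M}_{\mathrm{I}}(x_m)$ by Definition~\ref{def:approxmDP}, and the degenerate case $x_n=x_m$ is immediate since the two output distributions then coincide. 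I do not expect any real difficulty here once Proposition~\ref{prop:posteriorbound} is in hand---the argument is essentially bookkeeping---and the only two points that need care are the passage from singleton events to arbitrary subsets of the output space $2^{\mathcal{X}}$ (handled by finiteness together with the mediant inequality) and ensuring that the single scalar $\overline{\epsilon}_{x_n,x_m}$ dominates the likelihood ratio in \emph{both} directions simultaneously, which is precisely why Proposition~\ref{prop:posteriorbound} records the two extremal sets $\mathcal{A}_{n,m}$ and $\overline{\mathcal{A}}_{n,m}$ rather than a single one.
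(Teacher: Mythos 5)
Your proof is correct and follows essentially the same route as the paper's: both reduce the claim to the observation that $\epsilon_{x_n,x_m,\mathcal{A}} \leq \overline{\epsilon}_{x_n,x_m}$ for every anchor set $\mathcal{A}$ (via Proposition~\ref{prop:posteriorbound}), which bounds the two-sided likelihood ratio uniformly and hence yields pure mDP, i.e., $(\overline{\epsilon}_{x_n,x_m},0)$-PmDP. The only difference is that you explicitly justify the passage from singleton outputs $\{\mathcal{A}\}$ to arbitrary events via the mediant inequality, a standard step the paper's proof leaves implicit.
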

\vspace{-0.10in}
{\rev \begin{proof}[Proof Sketch]
We show that $\mathcal{M}_{\mathrm{I}}$ satisfies $(\overline{\epsilon}_{x_n,x_m}, 0)$-PmDP for any $x_n, x_m \in \mathcal{X}$ by bounding the log-likelihood ratio over all anchor sets $\mathcal{A} \subseteq \mathcal{X}$. Since $\overline{\epsilon}_{x_n,x_m}$ is defined as the worst-case privacy leakage, the output distributions differ by at most a factor of $e^{\overline{\epsilon}_{x_n,x_m} d_{x_n,x_m}}$, ensuring pure PmDP with $\delta = 0$.
The full proof is in \textbf{Appendix~\ref{subsec:proof:lem:PhI_privacycost}}.
\end{proof}}

\vspace{-0.05in}
\subsubsection{Heuristic anchor selection algorithms.} Before designing the heuristic anchor selection algorithms, we first introduce two \emph{design criteria} to follow.  
\begin{proposition}
\label{prop:wsmall}
To ensure that $\mathcal{M}_{\mathrm{I}}(x_n) \stackrel{\epsilon}{\approx} \mathcal{M}_{\mathrm{I}}(x_m)$ for any finite $\epsilon > 0$, it is necessary that if $w_{x_n,x} = 1$, then $w_{x_m,x} = 1$ must also hold.
\end{proposition}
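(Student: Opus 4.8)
The plan is to prove the contrapositive: assuming there is a record $x \in \mathcal{X}$ with $w_{x_n,x} = 1$ but $w_{x_m,x} < 1$, I will exhibit a single output event that makes the likelihood ratio between $\mathcal{M}_{\mathrm{I}}(x_n)$ and $\mathcal{M}_{\mathrm{I}}(x_m)$ infinite, so that no finite $\epsilon$ can satisfy $\mathcal{M}_{\mathrm{I}}(x_n) \stackrel{\epsilon}{\approx} \mathcal{M}_{\mathrm{I}}(x_m)$. The witness is the event $\mathcal{E} = \{\mathcal{A} \subseteq \mathcal{X} : x \notin \mathcal{A}\}$, i.e.\ the collection of anchor sets that omit $x$, viewed as a subset of the output space $2^{\mathcal{X}}$ of $\mathcal{M}_{\mathrm{I}}$.

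First I would compute the two probabilities of $\mathcal{E}$ using the product form of Eq.~(\ref{eq:M_I}), equivalently the Bernoulli-trial description of $\mathcal{M}_{\mathrm{I}}$ in which each record $x'$ is included independently with probability $w_{x_n,x'}$ (resp.\ $w_{x_m,x'}$). Since inclusion of $x$ is a Bernoulli trial, $\Pr[\mathcal{M}_{\mathrm{I}}(x_n) \in \mathcal{E}] = \Pr[x \notin \mathcal{M}_{\mathrm{I}}(x_n)] = 1 - w_{x_n,x} = 0$ (every term in $\sum_{\mathcal{A} \not\ni x} \Pr[\mathcal{M}_{\mathrm{I}}(x_n) = \mathcal{A}]$ carries the vanishing factor $1 - w_{x_n,x}$), while symmetrically $\Pr[\mathcal{M}_{\mathrm{I}}(x_m) \in \mathcal{E}] = 1 - w_{x_m,x} > 0$ by the assumption $w_{x_m,x} < 1$.

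Next I would feed these into Definition~\ref{def:mDP}. Because $\epsilon$-mDP, and hence the relation $\mathcal{M}_{\mathrm{I}}(x_n) \stackrel{\epsilon}{\approx} \mathcal{M}_{\mathrm{I}}(x_m)$, is required for every ordered pair of inputs (and $d$ is symmetric), it must in particular hold for the pair $(x_m, x_n)$ with $\mathcal{Y}' = \mathcal{E}$, i.e.\ $\Pr[\mathcal{M}_{\mathrm{I}}(x_m) \in \mathcal{E}] \le e^{\epsilon d_{x_n,x_m}}\,\Pr[\mathcal{M}_{\mathrm{I}}(x_n) \in \mathcal{E}]$. But the right-hand side is $e^{\epsilon d_{x_n,x_m}} \cdot 0 = 0$ while the left-hand side is strictly positive, so the inequality fails for every finite $\epsilon$ (here I use that $d_{x_n,x_m}$ is finite, which holds since $d : \mathcal{X}\times\mathcal{X} \to \mathbb{R}_{\ge 0}$ is real-valued). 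This contradiction shows the assumption $w_{x_m,x} < 1$ is untenable, hence $w_{x_m,x} = 1$, which is exactly the claim. I do not anticipate a real obstacle; the only points needing care are (i) invoking the independence/product structure of $\mathcal{M}_{\mathrm{I}}$ to turn ``$x$ is always selected'' into ``$\mathcal{E}$ has probability zero,'' and (ii) noting that the mDP relation is an all-ordered-pairs (hence two-sided) requirement, so that the infinite ratio above is genuinely a violation.
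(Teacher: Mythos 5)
Your proof is correct and follows essentially the same route as the paper's: both derive a contradiction by exhibiting an output event that has probability zero under $\mathcal{M}_{\mathrm{I}}(x_n)$ (because $x$ is always selected) but strictly positive probability under $\mathcal{M}_{\mathrm{I}}(x_m)$, so the mDP ratio is unbounded for every finite $\epsilon$. The only cosmetic difference is that the paper picks a single anchor set $\mathcal{A}$ with $x \notin \mathcal{A}$ and $\Pr[\mathcal{M}_{\mathrm{I}}(x_m)=\mathcal{A}]>0$, whereas you use the aggregate event $\{\mathcal{A} : x \notin \mathcal{A}\}$, which neatly avoids having to argue that such a specific set exists.
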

{\rev \begin{proof}[Proof Sketch]
If an anchor $x$ is always selected by $x_n$ but not by $x_m$, then $\mathcal{M}_{\mathrm{I}}(x_n)$ and $\mathcal{M}_{\mathrm{I}}(x_m)$ assign zero and nonzero probabilities to sets excluding $x$, leading to an infinite likelihood ratio and violating mDP. To ensure bounded privacy loss, any always-selected anchor must be shared by both.
See \textbf{Appendix~\ref{subsec:proof:prop:wsmall}} for the full proof. \looseness = -1
\end{proof}}

Proposition~\ref{prop:wsmall} indicates that if a true location $x_n$ selects an anchor $x$ with probability 1 (i.e., $w_{x_n,x} = 1$), then to maintain a bounded privacy cost under mDP, every other record $x_m \in \mathcal{X}$ must also select $x$ with probability 1 (i.e., $w_{x_m,x} = 1$). This requirement contradicts \textbf{Property (a)} of anchor selection, which emphasizes that anchors in $\mathcal{A}_n$ should be locally relevant to the user's true record $x_n$. Forcing all users to deterministically select the same anchor disregards both locality and utility considerations. 

Therefore, \emph{\textbf{Criterion 1}, suggested by Proposition~\ref{prop:wsmall}, is to require each anchor to be selected in a probabilistic manner---i.e., $w_{x_n,x} < 1$ for all $x \in \mathcal{X}$, including the case $x = x_n$.}

\begin{table}[t]
\caption{Anchor record selection methods. }
\vspace{-0.10in}
\label{Tb:anchorrecordselection}
\centering
\normalsize 
\small 
\begin{tabular}{l|l}
\toprule
Selection method                  & Probability density function $h(d_{x_n,x})$ \\
\hline
\hline
Exponential decay           & $h(d_{x_n,x}) = \alpha e^{-\lambda d_{x_n, x}}$ 
\\ 
Modified power law decay          & $h(d_{x_n,x}) = 
   \frac{\alpha}{1+d_{x_n,x}^{\lambda}}$  \\ 
Logistic function          & $h(d_{x_n,x}) = 
   \frac{2\alpha}{1+e^{\lambda d_{x_n,x}}}$  \\ 
\hline
\end{tabular}
\normalsize
\vspace{-0.10in}
\end{table}

\begin{figure}[t]
\centering
\hspace{0.00in}
\begin{minipage}{0.49\textwidth}
  \subfigure[Exponential decay]{
\includegraphics[width=0.32\textwidth, height = 0.10\textheight]{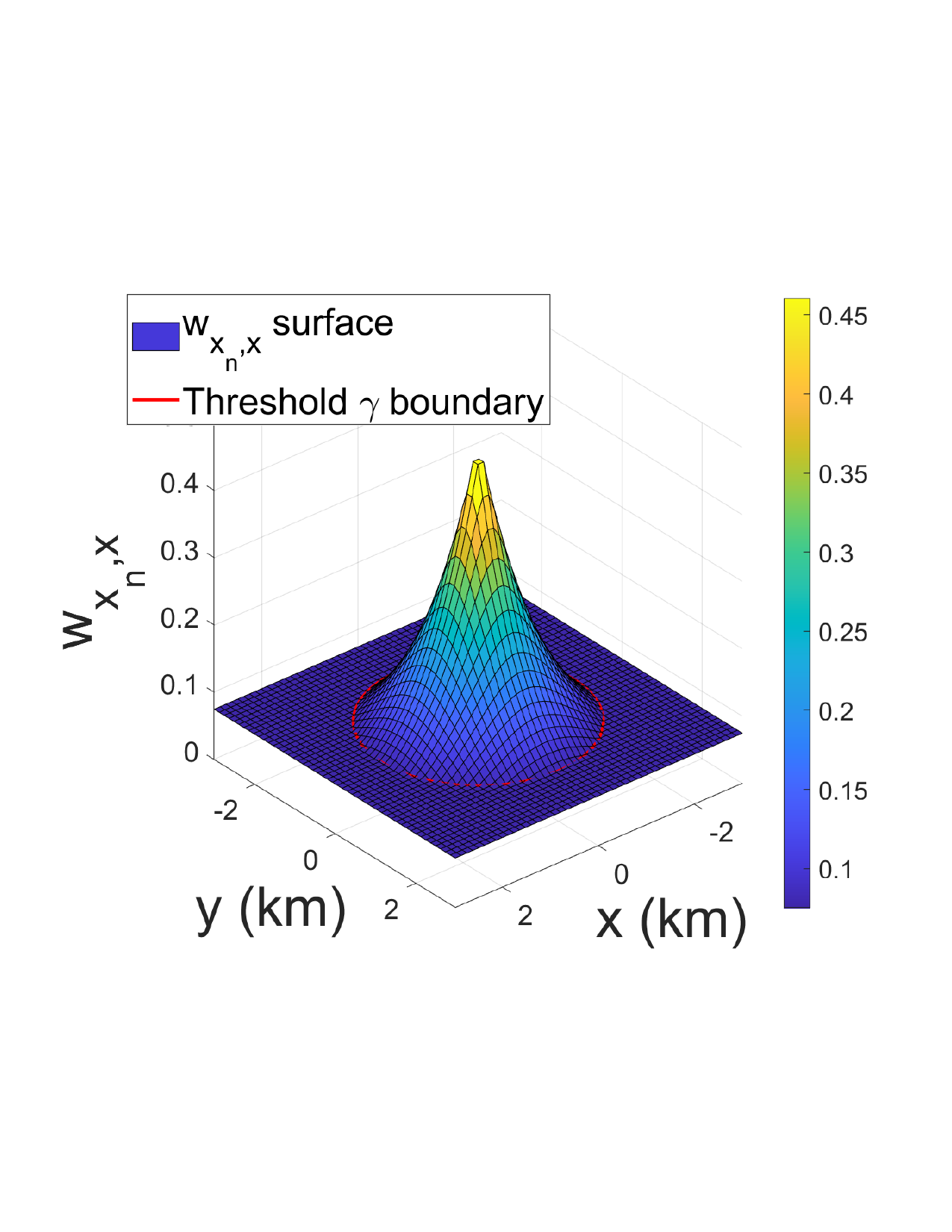}}
  \subfigure[Power law decay]{
\includegraphics[width=0.32\textwidth, height = 0.10\textheight]{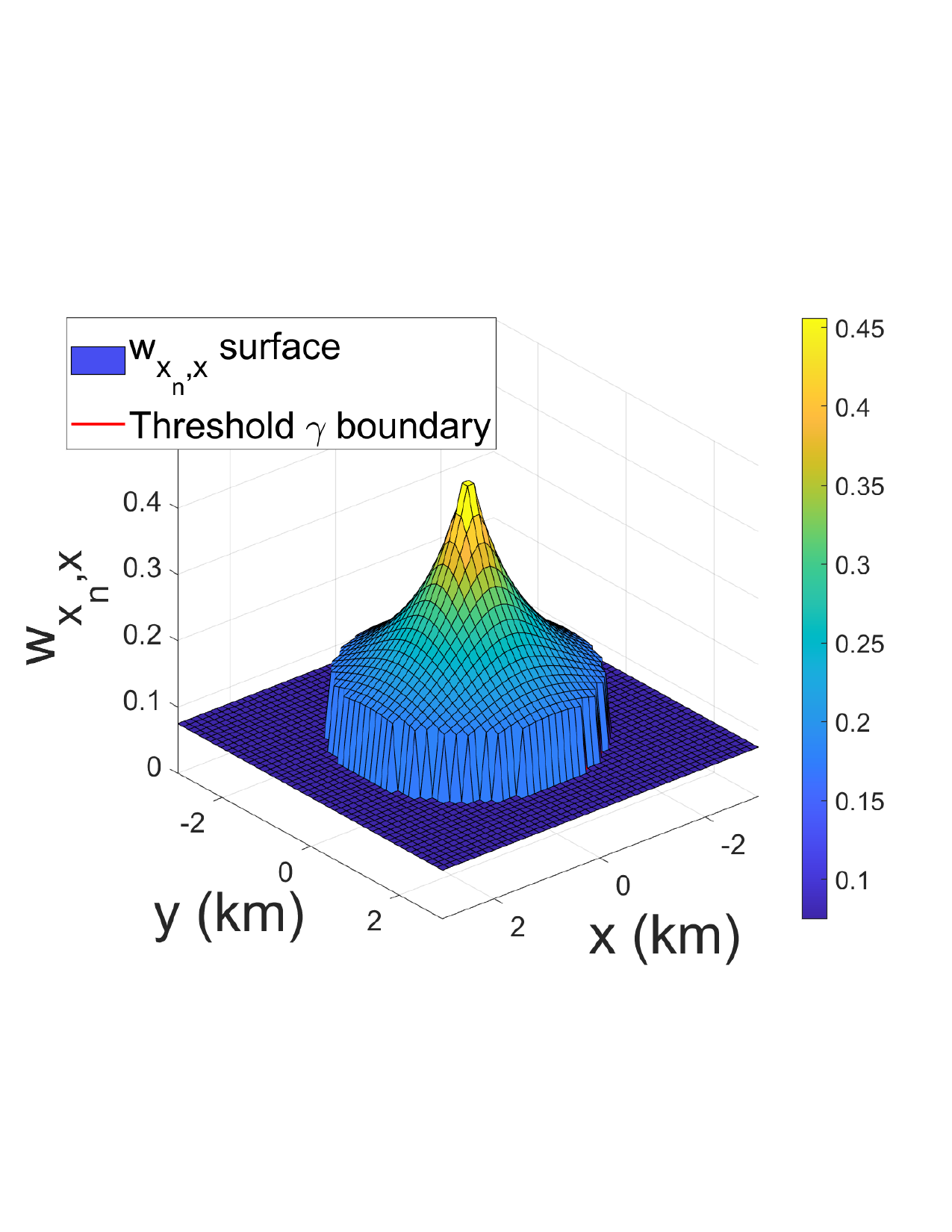}}
  \subfigure[Logistic]{
\includegraphics[width=0.32\textwidth, height = 0.10\textheight]{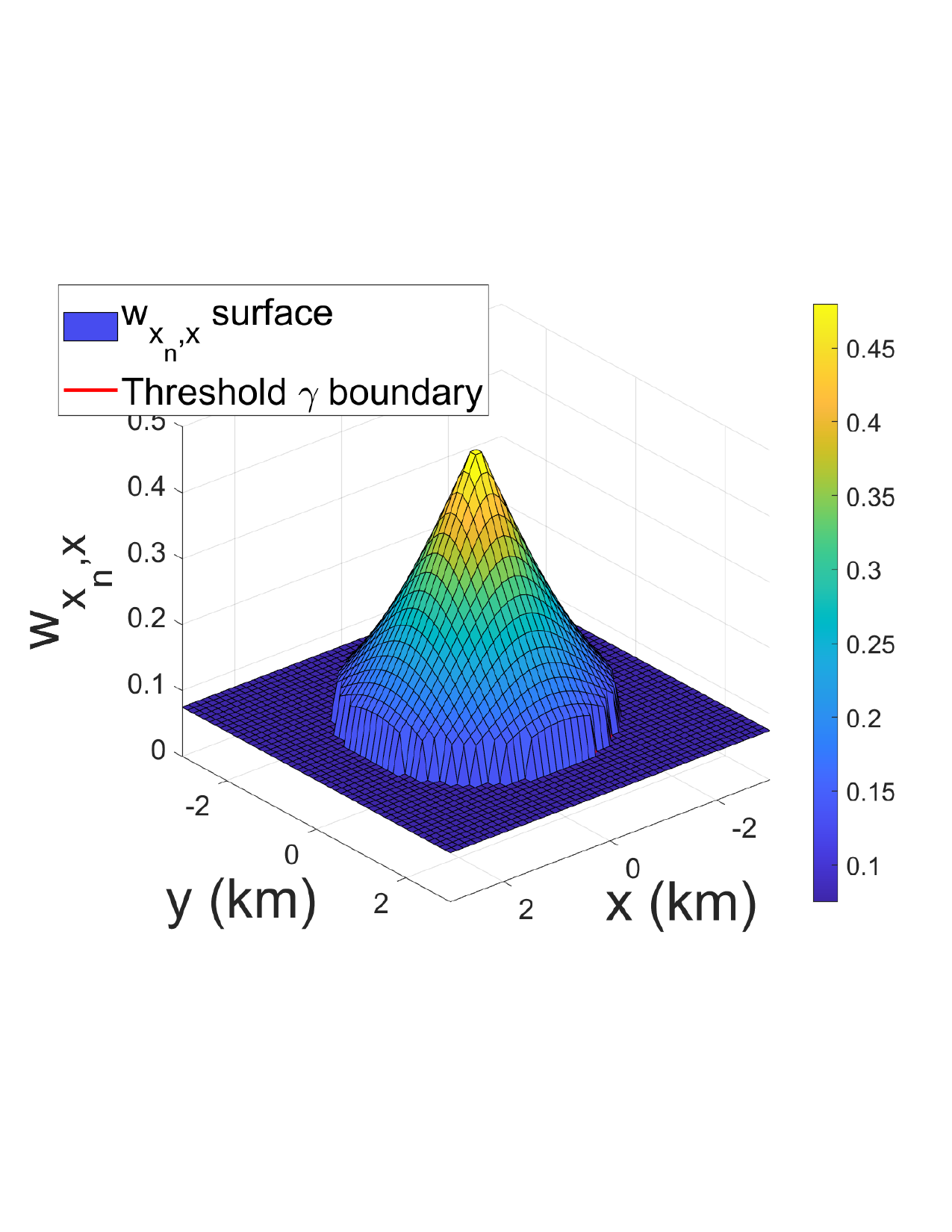}}
\vspace{-0.15in}
\end{minipage}
\caption{Examples of anchor record selection methods {\rev ($\alpha = 0.95$, $\lambda = 0.5$km$^{-1}$, and $\gamma = 2.0$km)}.}
\label{fig:surface}
\vspace{-0.15in}
\end{figure}

\emph{\textbf{Criterion 2} requires each anchor's selection probabiltiy is based on its distance to the true record $x_n$}, for two main reasons:

First, under mDP, the strength of the privacy constraint between two records $x_n$ and $x_m$ depends on their distance $d_{x_n,x_m}$. The closer the records, the tighter the constraint. To approximate the full mDP constraint space using a reduced anchor set, it is important to include nearby records, as they contribute the most significant constraints. 
    
Second, the utility loss experienced by a user depends on how their true record $x_n$ is perturbed. When the perturbation matrix is constructed over a set of anchors, utility loss is typically estimated using proxy distances (e.g., between a true location and the anchor center). If anchors are selected far from $x_n$, these utility estimates may poorly reflect actual user loss. Distance-based selection ensures that utility evaluations remain relevant and accurate.


To achieve the above two criteria, we define the selection probability of each candidate anchor $x \in \mathcal{X}$ as a function of its distance to the user's true record $x_n$: $w_{x_n,x} = h(d_{x_n,x})$, where $h(\cdot)$ is a non-increasing function. To prevent selection probabilities for distant records from becoming too small (and thus reducing diversity in anchor sets), we introduce a distance threshold $\gamma$, and define the final selection rule as:
\begin{equation}
\label{eq:w_exp}
w_{x_n,x} = 
\begin{cases}
h(d_{x_n,x}), & \text{if } 0 < d_{x_n,x} < \gamma \\
h(\gamma), & \text{if } d_{x_n,x} \geq \gamma
\end{cases}
\end{equation}
That is, when the distance $d_{x_n,x}$ exceeds the threshold $\gamma$, the selection probability is clipped to a minimum value $h(\gamma)$, ensuring non-negligible selection probability for even distant anchors.

We instantiate $h(d)$ using three widely used decay functions, \emph{exponential decay}, \emph{modified power law decay}, and \emph{logistic function}, as described in Table \ref{Tb:anchorrecordselection}, each of which allows us to control the selection sharpness through a decay parameter $\lambda > 0$, and ensures probabilistic selection with a scaling parameter $\alpha < 1$. 

Intuitively, \emph{exponential decay} strongly favors nearby records and rapidly suppresses the selection probability for distant ones, resulting in highly localized anchor sets. \emph{power-law decay} decays more slowly and assigns non-negligible probabilities to moderately distant anchors, thus promoting more diverse anchor coverage. \emph{logistic decay} adopts a sigmoid form, offering a smoother and more gradual decay than the exponential function while still preserving locality. These three methods provide flexible control over the trade-off between local precision and global coverage, with the decay parameter $\lambda$ governing the sharpness of the decay.




\DEL{
After deriving $w_{n,1}, ..., w_{n,K}$ using Eq. (\ref{eq:w_exp}), we scale each $w_{x_n,x_\ell}$ with $d_{x_n,x_\ell}>\eta$ by $\frac{\Gamma - |\mathcal{N}_n|}{\sum_{x_\ell \notin \mathcal{N}_n} w_{x_n,x_\ell}}$, i.e., 
\begin{equation}
\label{eq:w_scale}
\overline{w}_{n,l} = \frac{\Gamma - |\mathcal{N}_n|}{\sum_{k} w_{x_n,x_\ell}} w_{x_n,x_\ell},
\end{equation}
and select each record $x_\ell$ with probability $\overline{w}_{n,l}$, so that the expected number of anchor records of each real record $x_n$ is $\Gamma$.}

\vspace{-0.05in}
\subsubsection{Discussion: Connection to Coarse-Grained Methods.} Coarse-grained data perturbation methods \cite{Bordenabe-CCS2014, Wang-WWW2017} that approximate user locations to fixed grid centers can be viewed as a special case of anchor-based approximation. In these methods, each location is deterministically mapped to the center of a predefined grid cell, which effectively serves as a surrogate anchor. However, unlike \textsc{PAnDA}’s probabilistic anchor selection, these surrogates are chosen in a deterministic and server-defined manner, independent of the user’s local data.

As a result, such approaches do not require a two-phase framework. Since anchor (grid center) assignment is fixed and globally known, there is no need for a Phase~I anchor selection process. The entire mechanism reduces to a single-phase perturbation strategy performed directly over the predefined anchors, at the cost of reduced flexibility, higher mDP violation ratio, and higher utility loss, which is demonstrated in the experimental results in Tables \ref{Tb:exp:ULscalability}\&\ref{Tb:exp:ULscalability_realdata} and Fig. \ref{fig:mDPfailture}(a)(b)(c).

\DEL{
\begin{figure}[t]
\centering
\hspace{0.00in}
\begin{minipage}{0.42\textwidth}
  \subfigure{
\includegraphics[width=1.00\textwidth]{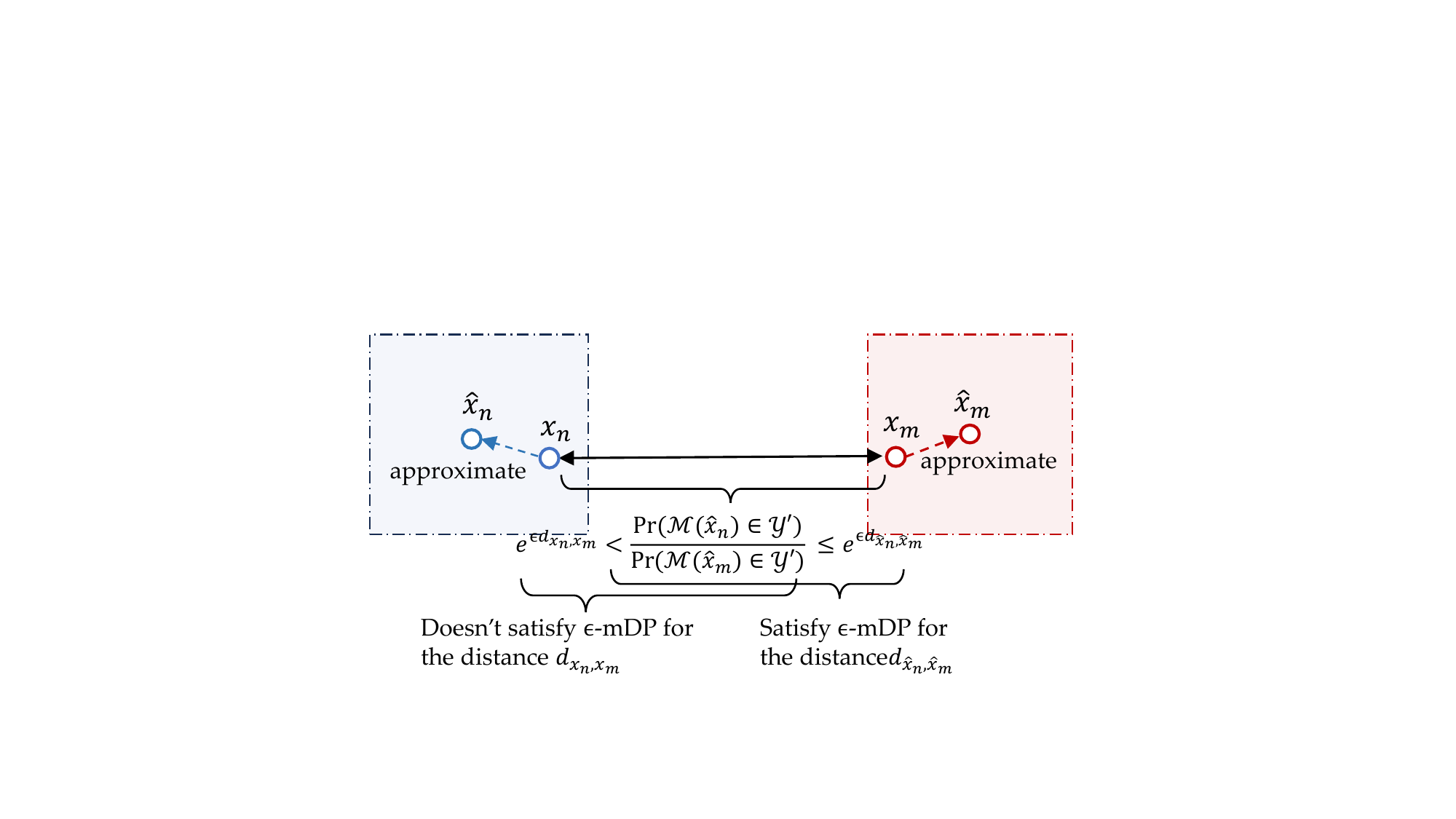}}
\vspace{-0.10in}
\end{minipage}
\caption{mDP violation of coarse-grained data perturbation.}
\label{fig:DPlimit}
\vspace{-0.10in}
\end{figure}}

\vspace{-0.05in}
\subsection{Phase II: Anchor Perturbation {\rev Optimization}}
\label{subsec:PhaseII}

\begin{figure}[t]
\centering
\hspace{0.00in}
\begin{minipage}{0.48\textwidth}
  \subfigure[$d_{\hat{x}_n, \hat{x}_m} \leq \frac{(\epsilon - \overline{\epsilon}_{x_n,x_m}) d_{x_n,x_m}}{\epsilon - \overline{\epsilon}_{\hat{x}_n,\hat{x}_m}}$: the surrogate perturbation vectors $\mathbf{z}_{\hat{x}_n}$ and $\mathbf{z}_{\hat{x}_m}$ satisfy tighter mDP constraints for $x_n$ and $x_m$]{
\includegraphics[width=0.48\textwidth]{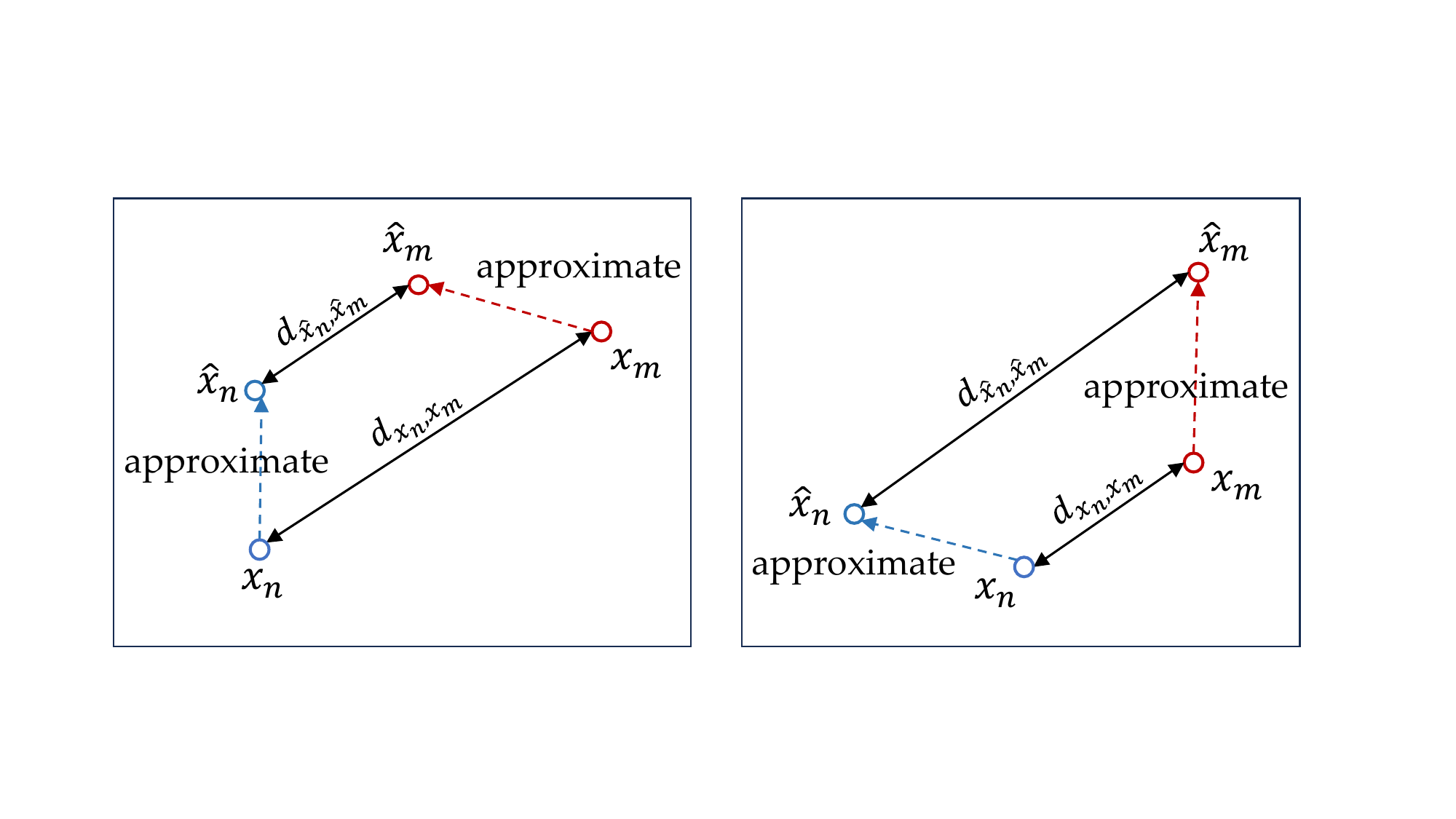}}
\hspace{0.05in}
  \subfigure[$d_{\hat{x}_n, \hat{x}_m}>\frac{(\epsilon - \overline{\epsilon}_{x_n,x_m}) d_{x_n,x_m}}{\epsilon - \overline{\epsilon}_{\hat{x}_n,\hat{x}_m}}$: the surrogate perturbation vectors $\mathbf{z}_{\hat{x}_n}$ and $\mathbf{z}_{\hat{x}_m}$ satisfy {\rev looser} mDP constraints for $x_n$ and $x_m$]{
\includegraphics[width=0.46\textwidth]{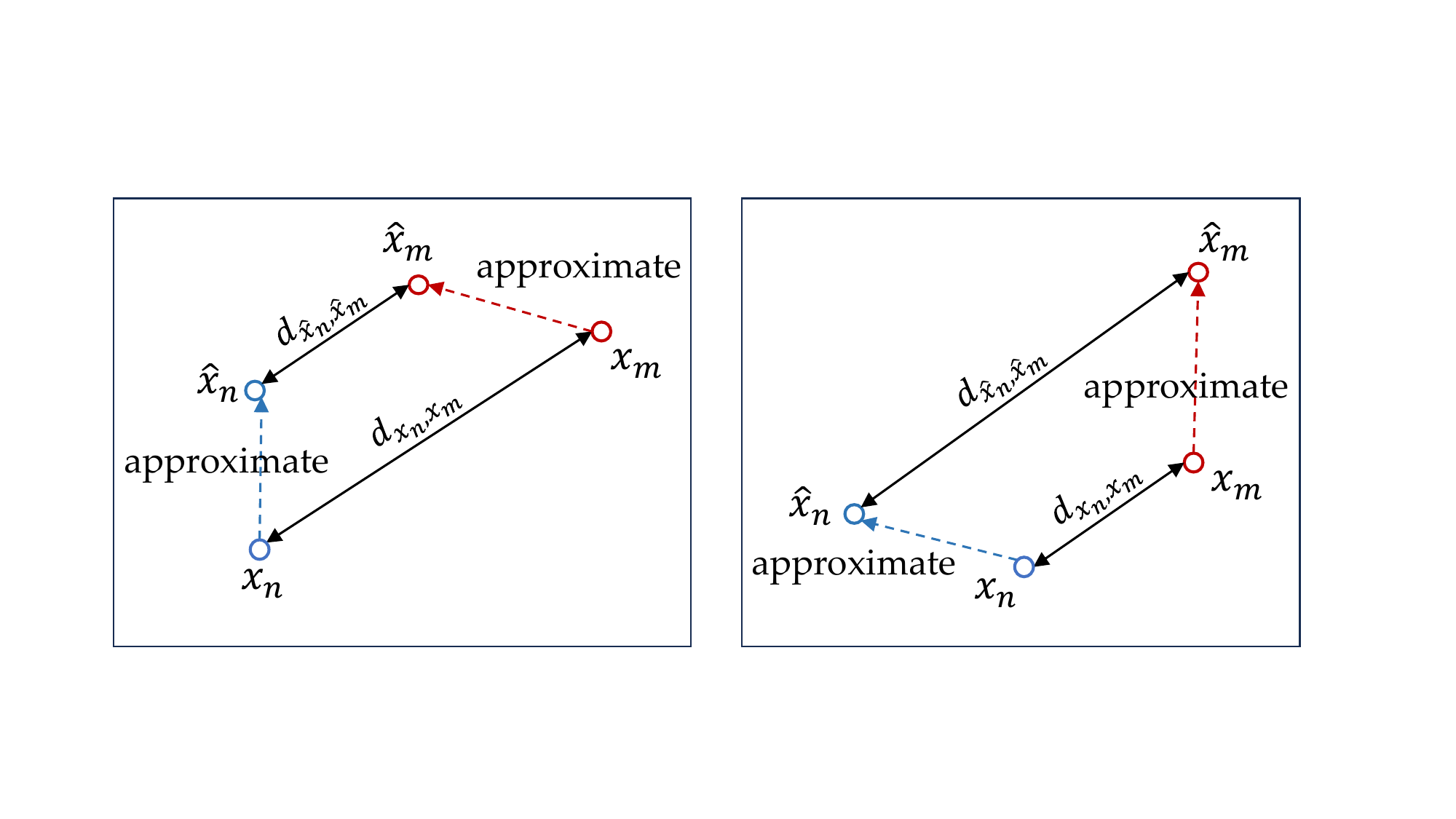}}
\vspace{-0.15in}
\end{minipage}
\caption{When the distance $d_{\hat{x}_n, \hat{x}_m}$ between surrogates is different from the real distance $d_{x_n, x_m}$.}
\label{fig:anchorviolate}
\vspace{-0.25in}
\end{figure}

In this part, we present the details of the randomized mechanism $\mathcal{M}_{\mathrm{II}}$ in Phase II, whose objective is to optimize the anchor perturbation for each user while ensuring that the privacy cost of the perturbed record does not exceed the remaining privacy budget.

Recall that after Phase I, each user $n$ (where $n = 1, \ldots, N$) uploads their corresponding anchor record $\mathcal{A}_{n}$ to the server, while keeping their private record $x_n$ hidden. Upon receiving $\mathcal{A}_{[1, N]} = \bigcup_{n=1}^N \mathcal{A}_{n}$ from the $N$ users, the server optimizes the joint anchor perturbation matrix $\mathbf{Z}_{\mathcal{A}_{[1, N]}} = \{ z_{x,y} \}_{(x,y)\in \mathcal{A}_{[1, N]}\times \mathcal{Y}}$, which is a submatrix of the full perturbation matrix $\mathbf{Z}_{\mathcal{X}}$ containing only the rows corresponding to the anchor records. Each user then downloads their corresponding submatrix $\mathbf{Z}_{\mathcal{A}_n}$, identifies the nearest anchor $\hat{x}_n \in \mathcal{A}_n$ to their true record $x_n$, that is, $\hat{x}_n = \arg\min_{x \in \mathcal{A}_n} d_{x_n, x}$, and approximates $\mathbf{z}_{x_n}$ by $\mathbf{z}_{\hat{x}_n}$. Here, we refer to the nearest anchor $\hat{x}_n$ as the \emph{surrogate} of the true record $x_n$.

\begin{proposition}
\label{prop:M2budget}
According to the sequential composition property of PmDP (\textbf{Theorem~\ref{thm:composition}}), to ensure
$\mathcal{M}_{\mathrm{\textsc{PAnDA}}}(x_n) \stackrel{(\epsilon,\delta)}{\sim} \mathcal{M}_{\mathrm{\textsc{PAnDA}}}(x_m)$, $\forall x_n, x_m \in \mathcal{X}$, 
it suffices to enforce
$\mathcal{M}_{\mathrm{II}}(x_n) \stackrel{(\epsilon - \overline{\epsilon}_{x_n,x_m},\delta)}{\sim} \mathcal{M}_{\mathrm{II}}(x_m)$, $\forall x_n, x_m \in \mathcal{X}$. 
\end{proposition}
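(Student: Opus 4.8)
The plan is to obtain the result as a direct, pairwise application of the sequential composition of PmDP (Theorem~\ref{thm:composition}), with the Phase~I budget pinned down by Lemma~\ref{lem:PhI_privacycost} and the leftover budget allotted to Phase~II. Concretely, fix an arbitrary pair $x_n, x_m \in \mathcal{X}$. By Lemma~\ref{lem:PhI_privacycost}, the anchor-selection mechanism satisfies $\mathcal{M}_{\mathrm{I}}(x_n) \stackrel{(\overline{\epsilon}_{x_n,x_m},\,0)}{\sim} \mathcal{M}_{\mathrm{I}}(x_m)$, so Phase~I consumes privacy budget $\epsilon_1 = \overline{\epsilon}_{x_n,x_m}$ with failure probability $\delta_1 = 0$ for this pair. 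By the hypothesis of the proposition, $\mathcal{M}_{\mathrm{II}}(x_n) \stackrel{(\epsilon - \overline{\epsilon}_{x_n,x_m},\,\delta)}{\sim} \mathcal{M}_{\mathrm{II}}(x_m)$, i.e., Phase~II uses $\epsilon_2 = \epsilon - \overline{\epsilon}_{x_n,x_m}$ and $\delta_2 = \delta$.

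The second step is to invoke Theorem~\ref{thm:composition} with $(\epsilon_1,\delta_1) = (\overline{\epsilon}_{x_n,x_m},\,0)$ and $(\epsilon_2,\delta_2) = (\epsilon - \overline{\epsilon}_{x_n,x_m},\,\delta)$. Since $\mathcal{M}_{\mathrm{\textsc{PAnDA}}} = (\mathcal{M}_{\mathrm{I}}, \mathcal{M}_{\mathrm{II}})$, the theorem yields $\mathcal{M}_{\mathrm{\textsc{PAnDA}}}(x_n) \stackrel{(\epsilon_1+\epsilon_2,\,\delta_1+\delta_2)}{\sim} \mathcal{M}_{\mathrm{\textsc{PAnDA}}}(x_m)$, and $\epsilon_1 + \epsilon_2 = \overline{\epsilon}_{x_n,x_m} + (\epsilon - \overline{\epsilon}_{x_n,x_m}) = \epsilon$ while $\delta_1 + \delta_2 = 0 + \delta = \delta$. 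Hence $\mathcal{M}_{\mathrm{\textsc{PAnDA}}}(x_n) \stackrel{(\epsilon,\delta)}{\sim} \mathcal{M}_{\mathrm{\textsc{PAnDA}}}(x_m)$. Because $x_n, x_m$ were arbitrary, the guarantee holds for every pair, which is exactly the claim.

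The only point requiring care --- rather than a genuine obstacle --- is that $\overline{\epsilon}_{x_n,x_m}$ is pair-dependent, so the split $\epsilon_1 + \epsilon_2 = \epsilon$ must be read per pair; the assignment $\epsilon_2 = \epsilon - \overline{\epsilon}_{x_n,x_m}$ is precisely what makes this split hold with equality for each $(x_n,x_m)$. One must also ensure the Phase~II constraint is meaningful, i.e., $\epsilon - \overline{\epsilon}_{x_n,x_m} \ge 0$; this is guaranteed by the probabilistic anchor-selection design (Criterion~1, cf.\ Proposition~\ref{prop:wsmall}), which keeps every $\overline{\epsilon}_{x_n,x_m}$ finite and, via the choice of the decay parameters $\alpha < 1$ and $\lambda$, bounded below $\epsilon$ --- if Phase~I alone exhausted the budget, no assignment to Phase~II could repair it. Finally, we note that Theorem~\ref{thm:composition} does not presuppose independence of $\mathcal{M}_{\mathrm{I}}$ and $\mathcal{M}_{\mathrm{II}}$, so the dependence of the anchor perturbation matrix $\mathbf{Z}_{\mathcal{A}_n}$ on the Phase~I output $\mathcal{A}_n$ poses no difficulty here.
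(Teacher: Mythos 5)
Your proof is correct and follows essentially the same route as the paper: invoke Lemma~\ref{lem:PhI_privacycost} to fix the Phase~I cost at $(\overline{\epsilon}_{x_n,x_m},0)$, take the Phase~II hypothesis $(\epsilon-\overline{\epsilon}_{x_n,x_m},\delta)$, and apply Theorem~\ref{thm:composition} pairwise so the budgets sum to $(\epsilon,\delta)$. Your additional remarks on the pair-dependence of the split and the need for $\epsilon-\overline{\epsilon}_{x_n,x_m}\ge 0$ are sensible but go beyond what the paper's proof states.
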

\vspace{-0.10in}
{\rev 
\begin{proof}[Proof Sketch]
This proposition applies sequential composition of probabilistic PmDP to \textsc{PAnDA}'s two-phase structure. 
The total budget $\epsilon$ is split between $\mathcal{M}_{\mathrm{I}}$ and $\mathcal{M}_{\mathrm{II}}$: 
from Lemma~\ref{lem:PhI_privacycost}, $\mathcal{M}_{\mathrm{I}}$ costs at most $\overline{\epsilon}_{x_n,x_m}$ (with failure $0$), 
so the remaining $\epsilon - \overline{\epsilon}_{x_n,x_m}$ (with failure $\delta$) is assigned to $\mathcal{M}_{\mathrm{II}}$. 
By Theorem~\ref{thm:composition}, the overall mechanism $\mathcal{M}_{\textsc{PAnDA}}$ satisfies $(\epsilon,\delta)$-PmDP. 
The full proof appears in \textbf{Appendix~\ref{subsec:proof:prop:M2budget}}.
\end{proof}}
\vspace{-0.10in}
Based on Proposition~\ref{prop:M2budget}, the remaining privacy budget available in Phase II for any pair $(x_n, x_m)$ is $\epsilon - \overline{\epsilon}_{x_n,x_m}$, and we allow a failure probability of $\delta$.

\subsubsection{Analysis of mDP Violation Due to Surrogate Perturbation.} Note that for each pairs of users $n$ and $m$, the server does not know which anchor serves as the surrogate for the users' true records $x_n$ and $x_m$. Therefore, it enforces the mDP constraints for each pair of anchor records $(x, x') \in \mathcal{A}_{n} \times \mathcal{A}_{m}$. 

Note that if the server allocates a privacy leakage bound of $(\epsilon - \overline{\epsilon}_{x,x'})d_{x,x'}$ to each anchor pair $x$ and $x'$ based on their distance, it might cause potential mDP violation as the approximated distance value $d_{x,x'}$ can be different from the real distance between $x_n$ and $x_m$. 
\DEL{as shown in Fig. \ref{fig:anchorviolate}(a), if $d_{\hat{x}_n, \hat{x}_m} \leq \frac{(\epsilon - \overline{\epsilon}_{x_n,x_m}) d_{x_n,x_m}}{\epsilon - \overline{\epsilon}_{\hat{x}_n,\hat{x}_m}}$
we can derive that 
\begin{equation}
\underbrace{\overline{\epsilon}_{x_n,x_m} d_{x_n,x_m}}_{\text{\footnotesize Max privacy cost in Phase I}} + \underbrace{(\epsilon - \overline{\epsilon}_{\hat{x}_n,\hat{x}_m}) d_{\hat{x}_n,\hat{x}_m}}_{\text{\footnotesize Privacy cost in Phase II}} \leq
\underbrace{\epsilon d_{x_n,x_m}}_{\text{\footnotesize Privacy budget}}.
\end{equation}
\begin{equation}
\label{eq:mDPPh-II-1}
z_{x,y} - e^{(\epsilon - \overline{\epsilon}_{x,x'}) d_{x,x'}} z_{x',y} \leq 0,
\end{equation}
then the total privacy cost between the surrogates $\hat{x}_n$ and $\hat{x}_m$ is possibly different from the privacy cost between the real records $x_n$ and $x_m$, as illustrated by Fig. \ref{fig:anchorviolate}. }
Specifically, as shown in Fig. \ref{fig:anchorviolate}(b), the surrogate perturbation vectors applied to
the true records $x_n$ and $x_m$ may exceed the overall privacy budget when $d_{\hat{x}_n,\hat{x}_m} > \frac{(\epsilon - \epsilon_{x_n,x_m,\mathcal{A}}) d_{x_n,x_m}}{\epsilon - \overline{\epsilon}_{\hat{x}_n,\hat{x}_m}}$, {\rev {because}
\begin{equation}
\overline{\epsilon}_{x_n,x_m} \geq \epsilon_{x_n,x_m,\mathcal{A}} \quad \text{(by the definition of $\overline{\epsilon}_{x_n,x_m}$ in Eq.~(\ref{eq:epsilon_overline1})(\ref{eq:epsilon_overline2}))},
\end{equation}
and we have the following derivation (note that $\overline{\epsilon}_{\hat{x}_n,\hat{x}_m} < \epsilon$):
\begin{align}
& d_{\hat{x}_n,\hat{x}_m} > \frac{(\epsilon - \epsilon_{x_n,x_m,\mathcal{A}}) d_{x_n,x_m}}{\epsilon - \overline{\epsilon}_{\hat{x}_n,\hat{x}_m}} \\
\Rightarrow\quad & \epsilon_{x_n,x_m,\mathcal{A}} d_{x_n,x_m} + d_{\hat{x}_n,\hat{x}_m} \left( \epsilon - \overline{\epsilon}_{\hat{x}_n,\hat{x}_m} \right) > \epsilon d_{x_n,x_m} \\
\Rightarrow\quad & \underbrace{\overline{\epsilon}_{x_n,x_m}  d_{x_n,x_m}}_{\text{\scriptsize Max privacy cost in Phase I}} + \underbrace{(\epsilon - \overline{\epsilon}_{\hat{x}_n,\hat{x}_m})  d_{\hat{x}_n,\hat{x}_m}}_{\text{\scriptsize Privacy cost in Phase II}} > \underbrace{\epsilon  d_{x_n,x_m}}_{\text{\scriptsize Total privacy budget}}.
\end{align}
}
\vspace{-0.15in}

\subsubsection{Privacy Safety Margin $\xi$ To Guarantee mDP Success Ratio.} 
\label{subsubsec:safetymargin}
To address this issue, we introduce a \emph{privacy safety margin} $\xi_{x_n, x_m}$ for each pair of anchors $(x, x') \in \mathcal{A}_n \times \mathcal{A}_m$, allowing the server to enforce the following constraint:
\begin{equation}
\label{eq:mDPPh-II-1-revised}
z_{x,y} -  e^{(\epsilon - \overline{\epsilon}_{x, x'}) d_{x, x'} - \xi_{x_n,x_m}}z_{x',y} \leq 0.
\end{equation}
This ensures that, as long as the distance between a pair of anchors $(x, x')$, where the true surrogates $(\hat{x}_n, \hat{x}_m)$ are guaranteed to be included in $\mathcal{A}_n$ and $\mathcal{A}_m$, satisfies:
\begin{equation}
d_{x,x'} \leq \frac{(\epsilon - \overline{\epsilon}_{x_n,x_m}) d_{x_n,x_m} + \xi_{x_n,x_m}}{\epsilon - \overline{\epsilon}_{x,x'}},
\end{equation}
the $\epsilon$-mDP constraint is satisfied:
\begin{equation}
\label{eq:mDP_xi}
\underbrace{\overline{\epsilon}_{x_n,x_m} d_{x_n,x_m}}_{\text{\footnotesize Max privacy cost in Phase I}} + 
\underbrace{(\epsilon - \overline{\epsilon}_{x,x'}) d_{x,x'} - \xi_{x_n,x_m}}_{\text{\footnotesize Privacy cost in Phase II}} 
\leq 
\underbrace{\epsilon d_{x_n,x_m}}_{\text{\footnotesize Privacy budget}}.
\end{equation}

Ideally, the safety margin $\xi_{x_n, x_m}$ only needs to apply to the true surrogates $(\hat{x}_n, \hat{x}_m)$. However, since the server does not observe the surrogates directly and only has access to the anchor sets $\mathcal{A}_n$ and $\mathcal{A}_m$, it must apply the margin to anchor pairs instead. While the surrogates are included in these anchor sets by design, strictly enforcing Eq.~(\ref{eq:mDP_xi}) for \emph{all} pairs $(x, x') \in \mathcal{A}_n \times \mathcal{A}_m$ would require a conservatively large safety margin $\xi_{x_n, x_m}$, which could significantly reduce the remaining privacy budget, i.e., $(\epsilon - \overline{\epsilon}_{x,x'}) d_{x,x'} - \xi_{x_n, x_m}$, available for optimizing the perturbation probabilities $z_{x,y}$ and $z_{x',y}$. Fortunately, since the mDP definition allows for a small failure probability $\delta$, it is sufficient to enforce the constraint for only a subset of anchor pairs with high probability, thereby allowing more budget to be allocated toward utility optimization.

In the following, we analyze how to choose $\xi_{x_n, x_m}$ to ensure compliance with $(\epsilon, \delta)$-PmDP. 

\DEL{
\begin{eqnarray}
&& \underbrace{e^{\epsilon_{x_n,x_m,\mathcal{A}}d_{x_n,x_m}}}_{\mbox{\footnotesize Bound in Phase-I}}  \times \underbrace{e^{(\epsilon-\overline{\epsilon}_{\hat{x}_n,\hat{x}_m}) d_{\hat{x}_n,\hat{x}_m}-\xi}}_{\mbox{\footnotesize Bound in Phase-II}} \\
&\leq& e^{\epsilon_{x_n,x_m,\mathcal{A}}d_{x_n,x_m}}  \times e^{(\epsilon-\overline{\epsilon}_{x_n,x_m}) d_{x_n,x_m}}\\ 
&\leq& e^{\epsilon d_{x_n,x_m}}.
\end{eqnarray}}

\subsubsection{Safety Margin Searching Algorithm.} 
First, for each pair of records $x_n, x_m \in \mathcal{X}$, we compute the failure probability of the mDP constraints between $x_n$ and $x_m$ given a safety margin value $\xi$. Specifically, we define $\mathcal{H}_{x_n,x_m}(\xi)$ as the set of anchor record pairs $(x, x')$ that satisfy the condition:
 $d_{x,x'} \leq \frac{(\epsilon - \overline{\epsilon}_{x_n,x_m})d_{x_n,x_m}+\xi}{\epsilon-\overline{\epsilon}_{x,x'}}$
\DEL{
\begin{eqnarray}
&& \mathcal{H}_{x_n,x_m}\left(\xi\right) \\ 
&=& \left\{\left(\hat{x}_{n}, \hat{x}_{m}\right) \in \mathcal{X}^2\left|\begin{array}{ll}&(\epsilon-\overline{\epsilon}_{\hat{x}_n,\hat{x}_m}) d_{\hat{x}_n,\hat{x}_m}-\xi \\ 
    \leq & (\epsilon - \overline{\epsilon}_{x_n,x_m}) d_{x_n,x_m}\end{array}\right.\right\},
\end{eqnarray}}
\begin{eqnarray}
\label{eq:H}
&& \mathcal{H}_{x_n,x_m}\left(\xi\right) \\ 
&=& \left\{\left(x, x'\right) \in \mathcal{X}^2\left|\begin{array}{l}d_{x,x'} \leq \frac{(\epsilon - \overline{\epsilon}_{x_n,x_m})d_{x_n,x_m}+\xi}{\epsilon-\overline{\epsilon}_{x,x'}}
 \\  
 d_{x_n,x} < \gamma,  d_{x_m,x'} < \gamma
\end{array}\right.\right\}
\end{eqnarray}
in which each $\left(x, x'\right)$ is \emph{sufficient} to achieve $\epsilon$-mDP between $\left(x_{n}, x_{m}\right)$ (as demonstrated by Eq. (\ref{eq:mDP_xi})). 
\begin{proposition}
\label{prop:failuremDPprob}
For each pair of real records $x_n$ and $x_m$, the probability of success is lower bounded by 
\begin{eqnarray}
\textstyle h_{x_n,x_m}(\xi) = \sum_{\left(\hat{x}_{n},\hat{x}_{m}\right) \in \mathcal{H}_{x_n,x_m}(\xi)}v_{x_{n}, \hat{x}_{n}} v_{x_{m}, \hat{x}_{m}}
\end{eqnarray}
where each $v_{x_{q}, \hat{x}_{q}} = w_{x_q,\hat{x}_q} \prod_{x \in \mathcal{U}_{x_q,\hat{x}_q}}(1-w_{x_q,x}) (q = n, m)$ represents the probability that $\hat{x}_{q}$ is $x_{q}$'s surrogate and $$\mathcal{U}_{x_{q},\hat{x}_q} = \left\{x\in \mathcal{X}| d_{x_{q},x} < d_{x_{q},\hat{x}_q}\right\}$$ represents the set of records that is closer to $x_{q}$ compared to $\hat{x}_q$.
\end{proposition}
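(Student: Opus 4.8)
The plan is to lower-bound the success probability by the probability of a \emph{favorable surrogate pair}, and to evaluate the latter exactly using the independence structure of $\mathcal{M}_{\mathrm{I}}$. Fix the true pair $(x_n,x_m)$ and call a realization of the anchor sets $(\mathcal{A}_n,\mathcal{A}_m)$ \emph{successful} if it makes the $\epsilon$-mDP condition $\mathcal{M}_{\textsc{PAnDA}}(x_n)\stackrel{\epsilon}{\approx}\mathcal{M}_{\textsc{PAnDA}}(x_m)$ hold; we want a lower bound on its probability. The first step is to decompose over the surrogate pair $(\hat{x}_n,\hat{x}_m)$ that the realization induces, where $\hat{x}_q=\arg\min_{x\in\mathcal{A}_q}d_{x_q,x}$; for distinct values of $\hat{x}_q$ these events partition the sample space (assuming the distances from $x_q$ are distinct, or with a fixed tie-break, so that the nearest anchor is well-defined whenever $\mathcal{A}_q\neq\emptyset$). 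By Eq.~(\ref{eq:M_I}), $\mathcal{M}_{\mathrm{I}}(x_q)$ includes each $x\in\mathcal{X}$ in $\mathcal{A}_q$ as an independent Bernoulli trial with success probability $w_{x_q,x}$, and the event ``$\hat{x}_q$ is $x_q$'s surrogate'' is exactly ``$\hat{x}_q\in\mathcal{A}_q$ and every $x$ with $d_{x_q,x}<d_{x_q,\hat{x}_q}$ is excluded'' (records farther than $\hat{x}_q$ are unconstrained). By independence this has probability $w_{x_q,\hat{x}_q}\prod_{x\in\mathcal{U}_{x_q,\hat{x}_q}}(1-w_{x_q,x})=v_{x_q,\hat{x}_q}$, and since the anchor selections of users $n$ and $m$ are independent, the joint surrogate event has probability $v_{x_n,\hat{x}_n}\,v_{x_m,\hat{x}_m}$.

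The second step, which is the heart of the argument, is to show that whenever the realized surrogate pair lies in $\mathcal{H}_{x_n,x_m}(\xi)$, the realization is successful. If $(\hat{x}_n,\hat{x}_m)\in\mathcal{H}_{x_n,x_m}(\xi)$, then by Eq.~(\ref{eq:H}) we have $d_{\hat{x}_n,\hat{x}_m}\leq\frac{(\epsilon-\overline{\epsilon}_{x_n,x_m})d_{x_n,x_m}+\xi}{\epsilon-\overline{\epsilon}_{\hat{x}_n,\hat{x}_m}}$, equivalently $(\epsilon-\overline{\epsilon}_{\hat{x}_n,\hat{x}_m})d_{\hat{x}_n,\hat{x}_m}-\xi\leq(\epsilon-\overline{\epsilon}_{x_n,x_m})d_{x_n,x_m}$. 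The server-enforced constraint Eq.~(\ref{eq:mDPPh-II-1-revised}) applied to the anchor pair $(\hat{x}_n,\hat{x}_m)$ gives $z_{\hat{x}_n,y}\leq e^{(\epsilon-\overline{\epsilon}_{\hat{x}_n,\hat{x}_m})d_{\hat{x}_n,\hat{x}_m}-\xi}z_{\hat{x}_m,y}$ for all $y$; since user $n$ perturbs $x_n$ with $\mathbf{z}_{\hat{x}_n}$ and user $m$ perturbs $x_m$ with $\mathbf{z}_{\hat{x}_m}$, the displayed inequality together with Eq.~(\ref{eq:mDP_xi}) yields $\mathcal{M}_{\mathrm{II}}(x_n)\stackrel{\epsilon-\overline{\epsilon}_{x_n,x_m}}{\approx}\mathcal{M}_{\mathrm{II}}(x_m)$. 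Combining this with Lemma~\ref{lem:PhI_privacycost} (Phase~I costs at most $\overline{\epsilon}_{x_n,x_m}$, with failure $0$) through the sequential composition of Theorem~\ref{thm:composition} and the budget split of Proposition~\ref{prop:M2budget} gives $\mathcal{M}_{\textsc{PAnDA}}(x_n)\stackrel{\epsilon}{\approx}\mathcal{M}_{\textsc{PAnDA}}(x_m)$, i.e.\ success.

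It then remains to assemble the bound. The set of successful realizations contains the union over $(\hat{x}_n,\hat{x}_m)\in\mathcal{H}_{x_n,x_m}(\xi)$ of the corresponding surrogate-pair events, which are pairwise disjoint; monotonicity and additivity of probability give $\Pr[\text{success}]\geq\sum_{(\hat{x}_n,\hat{x}_m)\in\mathcal{H}_{x_n,x_m}(\xi)}v_{x_n,\hat{x}_n}v_{x_m,\hat{x}_m}=h_{x_n,x_m}(\xi)$. The bound is an inequality rather than an equality because we discard realizations whose surrogate pair falls outside $\mathcal{H}_{x_n,x_m}(\xi)$ — for instance because $d_{x_n,\hat{x}_n}\geq\gamma$, or because the distance condition fails — yet whose optimized perturbation vectors still happen to satisfy the mDP constraint, as well as the (small-probability) event $\mathcal{A}_q=\emptyset$ in which no surrogate exists.

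The main obstacle is the second step: reconciling what the server can actually enforce with what must be guaranteed for the unknown true pair. The server sees only the anchor sets and imposes Eq.~(\ref{eq:mDPPh-II-1-revised}) on anchor pairs using the visible distances $d_{x,x'}$ and the margin $\xi$, whereas the target is $\epsilon$-mDP between $x_n$ and $x_m$ at their true distance $d_{x_n,x_m}$; bridging this gap requires exactly the budget bookkeeping of Proposition~\ref{prop:M2budget} (splitting $\epsilon$ into the Phase~I part $\overline{\epsilon}_{x_n,x_m}$ and the Phase~II part $\epsilon-\overline{\epsilon}_{x_n,x_m}$) together with the distance-mismatch inequality Eq.~(\ref{eq:mDP_xi}) that the definition of $\mathcal{H}_{x_n,x_m}(\xi)$ encodes. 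The remaining technical points — well-definedness of ``nearest anchor'' under ties and the handling of empty anchor sets — only shrink the left-hand side of the final inequality, which is consistent with the claimed lower bound and therefore harmless.
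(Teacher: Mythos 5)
Your proposal is correct and follows essentially the same route as the paper's proof: decompose over the induced surrogate pair, use the independent-Bernoulli structure of $\mathcal{M}_{\mathrm{I}}$ to compute each surrogate-pair event's probability as $v_{x_n,\hat{x}_n}v_{x_m,\hat{x}_m}$, and sum over pairs in $\mathcal{H}_{x_n,x_m}(\xi)$. You supply more detail than the paper on why membership in $\mathcal{H}_{x_n,x_m}(\xi)$ guarantees success (via Eq.~(\ref{eq:mDPPh-II-1-revised}), Eq.~(\ref{eq:mDP_xi}), and the composition results) and on disjointness and edge cases, but these are elaborations of the same argument rather than a different one.
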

\vspace{-0.10in}
{\rev 
\begin{proof}[Proof Sketch]
Privacy holds if the chosen surrogates $(\hat{x}_n,\hat{x}_m)$ lie in $\mathcal{H}_{x_n,x_m}(\xi)$, which enforces tightened constraints. 
Their selection probability is the product of being included in the anchor set and not superseded by a closer record. 
Summing over valid pairs gives the lower bound $h_{x_n,x_m}(\xi)$, showing that larger $\xi$ increases the likelihood of satisfying mDP. 
The full proof is in \textbf{Appendix~\ref{subsec:proof:prop:failuremDPprob}}.
\end{proof}}

\DEL{
{\bl When mDP is satisfied:  
\begin{eqnarray}
&& \underbrace{e^{\epsilon_{x_n,x_m,\mathcal{A}}d_{x_n,x_m}}}_{\mbox{\footnotesize Bound in Phase-I}}  \times \underbrace{e^{(\epsilon-\overline{\epsilon}_{\hat{x}_n,\hat{x}_m}) d_{\hat{x}_n,\hat{x}_m}-\xi}}_{\mbox{\footnotesize Bound in Phase-II}} \leq e^{\epsilon d_{x_n,x_m}} \\
&\Leftrightarrow& \epsilon_{x_n,x_m,\mathcal{A}}d_{x_n,x_m} + (\epsilon-\epsilon_{\hat{n},\hat{m},\mathcal{A}}) d_{\hat{x}_n,\hat{x}_m}-\xi \leq \epsilon d_{x_n,x_m} \\
&\Leftrightarrow& \epsilon d_{\hat{x}_n,\hat{x}_m} -\epsilon_{\hat{n},\hat{m},\mathcal{A}}d_{\hat{x}_n,\hat{x}_m} -\xi \leq \epsilon d_{x_n,x_m} - \epsilon_{x_n,x_m,\mathcal{A}}d_{x_n,x_m} \\
&\Leftrightarrow& \epsilon (d_{\hat{x}_n,\hat{x}_m} - d_{x_n,x_m})-\xi \leq  \epsilon_{\hat{n},\hat{m},\mathcal{A}}d_{\hat{x}_n,\hat{x}_m} - \epsilon_{x_n,x_m,\mathcal{A}}d_{x_n,x_m} \\
&\Leftrightarrow& \epsilon (d_{\hat{x}_n,\hat{x}_m} - d_{x_n,x_m})-\xi \\
&\leq&  \sum_{x_\ell \in \mathcal{A}} \ln \left(\frac{w_{\hat{n},l}}{w_{\hat{m},l}}\right)+\sum_{x_\ell \notin \mathcal{A}} \ln \left(\frac{1-w_{\hat{n},l}}{1-w_{\hat{m},l}}\right) \\
&-&  \sum_{x_\ell \in \mathcal{A}} \ln \left(\frac{w_{x_n,x_\ell}}{w_{x_m,x_\ell}}\right) - \sum_{x_\ell \notin \mathcal{A}} \ln \left(\frac{1-w_{x_n,x_\ell}}{1-w_{x_m,x_\ell}}\right) \\
&=& \sum_{x_\ell \in \mathcal{A}} \ln \left(\frac{w_{\hat{n},l}w_{x_m,x_\ell}}{w_{\hat{m},l}w_{x_n,x_\ell}}\right)+\sum_{x_\ell \notin \mathcal{A}} \ln \left(\frac{(1-w_{\hat{n},l})(1-w_{x_m,x_\ell})}{(1-w_{\hat{m},l})(1-w_{x_n,x_\ell})}\right)
\end{eqnarray}}}

\vspace{-0.10in}
\begin{property}
\label{prop:xi}
(1) $h_{x_n,x_m}(\xi)$ is \emph{monotonically non-decreasing} in $\xi$, and (2) by setting  $\xi_{x_n, x_m} = h_{x_n,x_m}^{-1}(1-\delta)$
we can guarantee $(\epsilon, \delta)$-PmDP for $(x_n, x_m)$. 
\end{property}
\vspace{-0.10in}
{\rev \begin{proof}[Proof Sketch] 
(1) The success probability $h_{x_n,x_m}(\xi)$ is monotonically non-decreasing since the relaxed set $\mathcal{H}_{x_n,x_m}(\xi)$ grows with $\xi$, and $h_{x_n,x_m}(\xi)$ sums non-negative terms. 
(2) Choosing the smallest $\xi_{x_n,x_m}$ with $h_{x_n,x_m}(\xi_{x_n,x_m}) = 1 - \delta$ guarantees the surrogate pair satisfies the $\epsilon$-mDP constraint with probability at least $1 - \delta$, yielding $(\epsilon,\delta)$-PmDP. 
The full proof is in \textbf{Appendix~\ref{subsec:proof:prop:xi}}.
\end{proof}}
\vspace{-0.10in}
According to Proposition \ref{prop:xi}, when the server formulates the mDP constraints between each anchor pair $(x, x') \in \mathcal{A}_{n} \times \mathcal{A}_{m}$, we can reserve the safety margin $\xi_{x_n, x_m} = h_{x_n,x_m}^{-1}(1-\delta)$ to ensure $(\epsilon, \delta)$-PmDP for $(x_n, x_m)$. Specifically, we have:
\begin{equation}
\frac{z_{\hat{x}_{n},y}}{z_{\hat{x}_{m},y}} \leq e^{(\epsilon-\overline{\epsilon}_{\hat{x}_{n},\hat{x}_{m}}) d_{\hat{x}_{n},\hat{x}_{m}} - \overbrace{h_{x_n,x_m}^{-1}(1-\delta)}^{\mbox{\footnotesize safety margin } \xi_{x_n, x_m}}}.
\end{equation}
{\rev However, computing $h^{-1}_{x_n,x_m}(1 - \delta)$ requires knowledge of $x_n$ and $x_m$, which are unknown to the server. To address this, when estimating the safety margin for each pair $(x, x')$, the server identifies the nearest $\Gamma$ records\footnote{In our experiments, the parameter $\Gamma$ is set in the range of {\rev 5} to 20.} for both $x$ and $x'$, denoted by $\mathcal{S}_x$ and $\mathcal{S}_{x'}$, respectively. 

The value of $\Gamma$ is determined based on the distance threshold $\gamma$ used during anchor selection (Eq.~(\ref{eq:w_exp})), beyond which the probability of selecting an anchor becomes negligible. For each anchor $x$, we choose the $\Gamma$ closest records (adjusted for local density) that could have selected $x$ within this range. This ensures that the estimated safety margin conservatively covers the user's true location with high probability. These neighborhoods thus serve as effective proxies, allowing the server to compute a worst-case safety margin without requiring access to actual user records.} 

Given $\mathcal{S}_x$ and $\mathcal{S}_{x'}$ for each $(x, x') \in \mathcal{A}_n\times \mathcal{A}_m$, the safety margin $\xi_{x_n, x_m}$ is then approximated as
\begin{equation}
\label{eq:estxi}
\textstyle \hat{\xi}_{x,x'} = \sup_{(\tilde{x}, \tilde{x}') \in \mathcal{S}_{x} \times \mathcal{S}_{x'}} h_{\tilde{x}, \tilde{x}'}^{-1}(1-\delta_{x, x'}),
\end{equation} 
where $\delta_{x, x'}$ is the adjusted violation ratio due to approximation:
\begin{equation}
\delta_{x, x'} = 1 - \frac{1 - \delta}{\Pr\left[(x_n,x_m) \in \mathcal{S}_{x} \times \mathcal{S}_{x'}|x\in \mathcal{A}_n, x'\in \mathcal{A}_m\right] }.
\end{equation}
Note that to ensure the violation ratio $\delta_{x, x'} \in [0, 1]$, $\Gamma$ should be sufficiently large, so that $\Pr\left[(x_n,x_m) \in \mathcal{S}_{x} \times \mathcal{S}_{x'}|x\in \mathcal{A}_n, x'\in \mathcal{A}_m\right]  > 1-\delta$. Here, $\Pr\left[(x_n,x_m) \in \mathcal{S}_{x} \times \mathcal{S}_{x'}|x\in \mathcal{A}_n, x'\in \mathcal{A}_m\right] $ can be calculated using Bayes' formula (details are given in Eq. (\ref{eq:PPP1})--Eq. (\ref{eq:PPP2})) in Appendix.

\vspace{-0.05in}
\begin{proposition}
\label{prop:samplebudget}
By applying the privacy budget $(\epsilon-\overline{\epsilon}_{x,x'}) d_{x,x'}-\hat{\xi}_{x,x'}$, it can guarantee $(\epsilon, \delta)$-PmDP for $(x_n, x_m)$.
\end{proposition}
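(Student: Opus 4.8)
The plan is to first reduce the claim to a statement about the Phase-II mechanism alone, and then to close the argument with the composition machinery already in place. By Proposition~\ref{prop:M2budget} (which in turn rests on Lemma~\ref{lem:PhI_privacycost} and Theorem~\ref{thm:composition}), it suffices to show that enforcing the relaxed anchor constraint \eqref{eq:mDPPh-II-1-revised} with the per-pair margin $\hat{\xi}_{x,x'}$ of \eqref{eq:estxi} yields $\mathcal{M}_{\mathrm{II}}(x_n) \stackrel{(\epsilon-\overline{\epsilon}_{x_n,x_m},\,\delta)}{\sim} \mathcal{M}_{\mathrm{II}}(x_m)$; then the overall mechanism satisfies $(\epsilon,\delta)$-PmDP for $(x_n,x_m)$. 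So the real work is to prove that, over the randomness of anchor selection in Phase~I, the realized surrogate pair $(\hat{x}_n,\hat{x}_m)$ meets the $\epsilon$-mDP inequality for $(x_n,x_m)$ with probability at least $1-\delta$ — even though the server never sees $x_n,x_m$ and so calibrates its margin per anchor pair from the proxy neighborhoods $\mathcal{S}_x,\mathcal{S}_{x'}$, rather than from the ideal (but unknowable) value $\xi_{x_n,x_m}=h_{x_n,x_m}^{-1}(1-\delta)$ furnished by Property~\ref{prop:xi}.

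The first key step I would carry out is a domination comparison between the realized margin and the ideal one. On the event that the true records lie in the proxy neighborhoods of their realized surrogates, i.e. $(x_n,x_m)\in\mathcal{S}_{\hat{x}_n}\times\mathcal{S}_{\hat{x}_m}$, I claim the margin actually in force for that pair satisfies $\hat{\xi}_{\hat{x}_n,\hat{x}_m}\geq\xi_{x_n,x_m}$. This follows from three observations: (i) $\delta_{x,x'}\leq\delta$, since the conditional probability in the denominator of its definition is at most $1$, so $1-\delta_{x,x'}\geq 1-\delta$; (ii) $h_{x_n,x_m}^{-1}$ is non-decreasing, by Property~\ref{prop:xi}(1); and (iii) the supremum in \eqref{eq:estxi} ranges over $\mathcal{S}_{\hat{x}_n}\times\mathcal{S}_{\hat{x}_m}$, which on this event contains $(x_n,x_m)$, so $\hat{\xi}_{\hat{x}_n,\hat{x}_m}\geq h_{x_n,x_m}^{-1}(1-\delta_{\hat{x}_n,\hat{x}_m})\geq h_{x_n,x_m}^{-1}(1-\delta)=\xi_{x_n,x_m}$. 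Since the feasibility set $\mathcal{H}_{x_n,x_m}(\cdot)$ of \eqref{eq:H} is monotone in its argument, a realized surrogate pair that would be ``safe'' under the ideal margin is still safe under the larger realized margin; combined with the derivation leading to \eqref{eq:mDP_xi}, this shows that whenever $(\hat{x}_n,\hat{x}_m)\in\mathcal{H}_{x_n,x_m}(\xi_{x_n,x_m})$ \emph{and} $(x_n,x_m)\in\mathcal{S}_{\hat{x}_n}\times\mathcal{S}_{\hat{x}_m}$, the enforced constraint \eqref{eq:mDPPh-II-1-revised} implies $z_{\hat{x}_n,y}/z_{\hat{x}_m,y}\leq e^{(\epsilon-\overline{\epsilon}_{x_n,x_m})d_{x_n,x_m}}$, i.e. the required Phase-II $\epsilon$-mDP inequality.

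The second step is the probability accounting. I would condition on the realized surrogate pair $(\hat{x}_n,\hat{x}_m)=(a,b)$, which by the independence of the Bernoulli inclusions in \eqref{eq:M_I} occurs with probability $v_{x_n,a}\,v_{x_m,b}$ — exactly the quantity appearing in Proposition~\ref{prop:failuremDPprob}. There are two disjoint ways the Phase-II inequality can fail: the surrogate pair is too far apart even for the ideal margin, $(a,b)\notin\mathcal{H}_{x_n,x_m}(\xi_{x_n,x_m})$, whose total mass is at most $1-h_{x_n,x_m}(\xi_{x_n,x_m})=\delta$ by Proposition~\ref{prop:failuremDPprob} and the choice $\xi_{x_n,x_m}=h_{x_n,x_m}^{-1}(1-\delta)$; or the true records escape the proxy neighborhoods, $(x_n,x_m)\notin\mathcal{S}_a\times\mathcal{S}_b$, which is precisely the event the inflated tolerance $\delta_{a,b}$ was designed to absorb, since $1-\delta_{a,b}=(1-\delta)\big/\Pr[(x_n,x_m)\in\mathcal{S}_a\times\mathcal{S}_b\mid a\in\mathcal{A}_n,b\in\mathcal{A}_m]$. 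Lower-bounding the success indicator by $\mathbf{1}[(a,b)\in\mathcal{H}_{x_n,x_m}(\xi_{x_n,x_m})]\,\mathbf{1}[(x_n,x_m)\in\mathcal{S}_a\times\mathcal{S}_b]$, summing $v_{x_n,a}v_{x_m,b}$ over $(a,b)$, and using the calibration of $\delta_{a,b}$ to cancel the neighborhood-escape factor, the total success probability comes out at least $1-\delta$. This gives $\mathcal{M}_{\mathrm{II}}(x_n)\stackrel{(\epsilon-\overline{\epsilon}_{x_n,x_m},\,\delta)}{\sim}\mathcal{M}_{\mathrm{II}}(x_m)$, and Proposition~\ref{prop:M2budget} then delivers $(\epsilon,\delta)$-PmDP for $(x_n,x_m)$.

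I expect the main obstacle to be the probability bookkeeping in this last step: because $\hat{\xi}_{x,x'}$ is itself a function of the randomly realized anchor/surrogate pair, the ``mDP-failure'' event is not a sub-event of a fixed set $\mathcal{H}_{x_n,x_m}(\xi)$, so Property~\ref{prop:xi} cannot be quoted verbatim — one must condition on the surrogate pair first and disentangle two slightly different conditionings, namely inclusion in the anchor set ($x\in\mathcal{A}_n$, used to define $\delta_{x,x'}$) versus being the nearest anchor ($x=\hat{x}_n$, which governs the margin actually in force), and verify that the calibration $1-\delta_{x,x'}=(1-\delta)\big/\Pr[\cdots]$ still compensates exactly. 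A secondary, more routine point is that the whole argument implicitly assumes $\Gamma$ is chosen large enough that $\delta_{x,x'}\in[0,1]$, equivalently $\Pr[(x_n,x_m)\in\mathcal{S}_x\times\mathcal{S}_{x'}\mid x\in\mathcal{A}_n,x'\in\mathcal{A}_m]>1-\delta$, which I would state as a standing hypothesis, together with the mild fact that the threshold conditions $d_{x_n,x}<\gamma$ and $d_{x_m,x'}<\gamma$ in \eqref{eq:H} hold for surrogates that can be selected with non-negligible probability under \eqref{eq:w_exp}.
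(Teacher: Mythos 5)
Your reduction to Phase~II via Proposition~\ref{prop:M2budget} and your first key step are exactly the paper's opening move: its proof of Proposition~\ref{prop:samplebudget} begins from the implication $(x_n,x_m) \in \mathcal{S}_{x} \times \mathcal{S}_{x'} \Rightarrow \hat{\xi}_{x,x'} \geq \xi_{x_n,x_m}$, which you in fact justify in more detail than the paper does (monotonicity of $h^{-1}_{x_n,x_m}$ plus $\delta_{x,x'}\leq\delta$). The gap is in your second step, and it is precisely the one you flagged as the likely obstacle. Your proposed lower bound
\begin{equation*}
\Pr[\text{success}] \;\geq\; \sum_{(a,b)} v_{x_n,a}\,v_{x_m,b}\,\mathbf{1}\!\left[(a,b)\in\mathcal{H}_{x_n,x_m}(\xi_{x_n,x_m})\right]\cdot\mathbf{1}\!\left[(x_n,x_m)\in\mathcal{S}_a\times\mathcal{S}_b\right]
\end{equation*}
cannot exceed $\sum_{(a,b)\in\mathcal{H}_{x_n,x_m}(\xi_{x_n,x_m})} v_{x_n,a}v_{x_m,b} = h_{x_n,x_m}(\xi_{x_n,x_m})$, which the choice $\xi_{x_n,x_m}=h^{-1}_{x_n,x_m}(1-\delta)$ pins at (essentially) $1-\delta$; intersecting with the neighborhood event only removes mass, so the expression is generically strictly below $1-\delta$. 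No ``cancellation'' by $\delta_{a,b}$ can restore it, because $\delta_{a,b}$ does not appear in it: you discarded it in the domination step by immediately relaxing $h^{-1}_{x_n,x_m}(1-\delta_{a,b})$ down to $h^{-1}_{x_n,x_m}(1-\delta)=\xi_{x_n,x_m}$.

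The slack you threw away is exactly what pays for the neighborhood-escape probability, and the paper's accounting is multiplicative rather than additive. On the event $E=\{(x_n,x_m)\in\mathcal{S}_x\times\mathcal{S}_{x'}\}$ one retains the stronger inequality $\hat{\xi}_{x,x'}\geq h^{-1}_{x_n,x_m}(1-\delta_{x,x'})$, so the conditional success probability is at least $1-\delta_{x,x'}=(1-\delta)/\Pr[E]$, i.e.\ strictly \emph{more} than $1-\delta$; then
\begin{equation*}
\Pr[\text{success}] \;\geq\; \Pr\!\left[\text{success}\,\middle|\,\hat{\xi}_{x,x'}\geq\xi_{x_n,x_m}\right]\cdot\Pr\!\left[\hat{\xi}_{x,x'}\geq\xi_{x_n,x_m}\right] \;\geq\; \frac{1-\delta}{\Pr[E]}\cdot\Pr[E] \;=\; 1-\delta,
\end{equation*}
using $\Pr[\hat{\xi}_{x,x'}\geq\xi_{x_n,x_m}]\geq\Pr[E]$ with $\Pr[E]$ computed by Bayes' rule as in Eq.~(\ref{eq:PPP1})--(\ref{eq:PPP2}). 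Your standing hypotheses---that $\Gamma$ is large enough to make $\delta_{x,x'}\in[0,1]$, and the distinction between conditioning on $x\in\mathcal{A}_n$ versus $x=\hat{x}_n$---are the right concerns (the paper's own proof is also informal on the latter), but the repair you actually need is to carry the inflated tolerance $1-\delta_{x,x'}$, not $1-\delta$, into the final probability accounting.
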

\vspace{-0.05in}
{\rev 
\begin{proof}[Proof Sketch]
This proposition shows that using a conservative privacy budget, based on the threshold $\hat{\xi}_{x,x'}$, ensures the mechanism satisfies $(\epsilon, \delta)$-PmDP. By clustering records into $\mathcal{S}_x$ and $\mathcal{S}_{x'}$, and ensuring $\hat{\xi}_{x,x'} \geq \xi_{x_n,x_m}$ for all $(x_n, x_m) \in \mathcal{S}_x \times \mathcal{S}_{x'}$, the applied budget guarantees a success probability of at least $1 - \delta_{x,x'}$. Properly aggregating these local guarantees yields global $(\epsilon, \delta)$-PmDP.
The detailed proof can be found in \textbf{Appendix \ref{subsec:proof:prop:samplebudget}}.
\end{proof}}

\DEL{
Therefore, by enforcing $\sum_{\left(\hat{x}_{n}, \hat{x}_{m}\right)\in \mathcal{H}_{x_n,x_m}}w_{x_n,\hat{x}_n} w_{x_m,\hat{x}_m} \geq 1-\delta$, we can guarantee the success ratio is not lower than $1-\delta$, or  
\begin{eqnarray}
\sum_{\left(\hat{x}_{n}, \hat{x}_{m}\right)\notin \mathcal{H}_{x_n,x_m}}\alpha h\left(d_{n,\hat{n}}\right) \alpha  h\left(d_{m,\hat{m}}\right) \leq \delta 
\end{eqnarray}
{\bl Remark: we need to pay attention as $\alpha$ might be lower bounded ... }. }

\DEL{
\begin{proposition}
\label{prop:adjmDP}
When determining the perturbation matrix $\mathbf{Z}_{\mathcal{X}}$, to achieve mDP constraints (in Eq. (\ref{eq:mDPdiscrete})), it is sufficient to achieve the following constraints: 
\begin{eqnarray}
\frac{z_{x,y}}{z_{m,k}} \leq e^{\epsilon d_{\mathcal{N}_i, \mathcal{N}_j} - \ln \beta_{\mathcal{N}_i, \mathcal{N}_j}}.  
\end{eqnarray}
\end{proposition}}

Next, we introduce the algorithm to calculate $\hat{\xi}_{x,x'}$ for each pair of anchors $(x,x') \in \mathcal{A}_n\times \mathcal{A}_m$, where the detailed pseudo code can be found in \textbf{Algorithms \ref{al:recompute} and \ref{al:xisearch} in Section \ref{sec:algorithmdetail} in Appendix}. 

For each pair of true records $x_n$ and $x_m$, we first pre-compute a set of \emph{candidate safety margins}: $\Upsilon_{x_n,x_m} = \left\{ \Delta_{x,x'} \mid (x, x') \in \mathcal{X}^2 \right\}$, where each $\Delta_{x,x'} = (\epsilon - \overline{\epsilon}_{x,x'}) d_{x,x'} - (\epsilon - \overline{\epsilon}_{x_n,x_m}) d_{x_n,x_m}$ quantifies the additional margin required to align the anchor-based constraint with the original mDP guarantee. 
We sort $\Upsilon_{x_n,x_m}$ in ascending order, and without loss of generality, we represent the sorted elements by $
\Delta^1 \leq ... \leq \Delta^{|\mathcal{X}|^2}$. We then search the smallest $\Delta^{\ell}$ such that the cumulative probability of satisfying the mDP constraint exceeds the desired confidence level: 
$1 - \delta \leq h_{x_n,x_m}(\Delta^{\ell})$.

\DEL{The selected $\xi_{x_n,x_m}$ is then used to enforce the refined mDP constraint:
\begin{equation}
\frac{z_{\hat{x}_n, y}}{z_{\hat{x}_m, y}} \leq \exp\left( (\epsilon - \overline{\epsilon}_{\hat{x}_n,\hat{x}_m}) d_{\hat{x}_n,\hat{x}_m} - \xi_{x_n,x_m} \right).
\end{equation}

For each pair $(x, x') \in \mathcal{X}^2$, we define the candidate safety margin $\Delta_{x,x'}$
\begin{equation}
    \Delta_{x,x'} = (\epsilon - \overline{\epsilon}_{x,x'}) d_{x,x'} - (\epsilon - \overline{\epsilon}_{x_n,x_m}) d_{x_n,x_m}.
\end{equation}
such that if $\xi \geq \Delta_{x,x'}$, the pair $(x, x') \in \mathcal{H}_{x_n,x_m}(\xi)$ (according to the definition of $\mathcal{H}_{x_n,x_m}(\xi)$ in Eq.~\eqref{eq:H}). We then define the candidate safety margin set $\Upsilon_{x_n,x_m} = \left\{ \Delta_{x,x'} \;\middle|\; (x, x') \in \mathcal{X}^2 \right\}$ and sort its elements. Without loss of generality, we represent the sorted elements by $\Delta^1 \leq ... \leq \Delta^{|\mathcal{X}|^2}$. The goal is find $\Delta^{\ell} \in \Upsilon_{x_n,x_m}$ such that
\begin{equation}
    1-\delta \leq h_{x_n,x_m}(\Delta^{\ell})~\mbox{and}~ 1-\delta > h_{x_n,x_m}(\Delta^{\ell-1})
\end{equation}}
Since $h_{x_n,x_m}(\Delta^1), \ldots, h_{x_n,x_m}(\Delta^{|\mathcal{X}|^2})$ are sorted in increasing order, we employ a linear search to identify the first $\Delta^{\ell}$ that satisfies the target condition (the rationale for preferring linear search over binary search is provided in the Appendix). Specifically, we iterate through the sorted set $\Upsilon_{x_n,x_m}$, sequentially evaluating $h_{x_n,x_m}(\Delta^{\ell})$ until a valid $\Delta^{\ell}$ is found. The overall time complexity of the algorithm is $O(K^4 \log K) + O(\Gamma^2 K^4)$, where $K = |\mathcal{X}|$ is the number records in $\mathcal{X}$. A detailed time complexity analysis presented  in \textbf{Section \ref{sec:algorithmdetail} in Appendix}.


\subsubsection{LP Formulation of Anchor Perturbation Optimization.} 
After collecting the anchor sets $\mathcal{A}_1, ..., \mathcal{A}_N$, the server calculates the safety margin $\hat{\xi}_{x, x'}$ for each pair of $(x,x')\in \mathcal{A}_n \times \mathcal{A}_m$ ($1\leq n, m \leq N$), and enforces the following mDP constraints: 
\begin{eqnarray}
\label{eq:PAnDA_mDP_LP}
z_{x,y} - e^{(\epsilon-\overline{\epsilon}_{n,m}) d_{x, x'}-\hat{\xi}_{x, x'}} z_{x',y} \leq 0,~\forall (x,x')\in \mathcal{A}_n \times \mathcal{A}_m
\end{eqnarray}
We define the utility loss function of $\mathbf{Z}_{\mathcal{A}_{[1, N]}}$ by
\vspace{-0.05in}
\begin{equation}
\textstyle 
\mathcal{L}\left(\mathbf{Z}_{\mathcal{A}_{[1, N]}}\right) = \sum_{(x,y) \in \mathcal{A}_{{[1, N]}}\times \mathcal{Y}}  p_x c_{x,y} z_{x,y}.
\end{equation}

\begin{definition}
(\underline{An}chor \underline{P}erturbation \underline{O}ptimization (\textsc{AnPO}) problem) The \textsc{AnPO} problem can be formulated as the following LP problem: 
\begin{eqnarray}
\label{eq:LPPOobj}
\min && \mathcal{L}\left(\mathbf{Z}_{\mathcal{A}_{[1, N]}}\right) \\ 
\mathrm{s.t.} && \sum_{y_k \in \mathcal{Y}}z_{x,y} = 1,~ \forall x \in \mathcal{A}_{{[1, N]}}, \\ \label{eq:LPPOconstr}
&& \mbox{Eq. (\ref{eq:PAnDA_mDP_LP}) is satisfied $\forall (x, x') \in \mathcal{A}_n \times \mathcal{A}_m$}.
\end{eqnarray}  
\end{definition}
Compared to the original secret dataset $\mathcal{X}$, the anchor record set $\mathcal{A}_{{[1, N]}}$ has smaller size, therefore decreasing the complexity of mDP optimization. 

{\rev \noindent\textbf{Discussion: Missing values in the perturbation matrix}. In the PAnDA framework, only a subset of the full perturbation matrix, corresponding to the selected anchor records, is optimized. Notably, most of the missing values are not needed by users, as users only require perturbation vectors for their local anchor sets. However, when the true record of a user does not belong to their anchor set, the surrogate perturbation vector may cause the server to underestimate indistinguishability across users. This problem has been addressed using the safety margins (Section~\ref{subsubsec:safetymargin}), which ensures privacy constraints are met even under approximation.}

\vspace{0.03in}
{\rev \noindent\textbf{Benders decomposition to improve computation efficiency}. To further improve the computational efficiency of \textsc{AnPO}, we adopt Benders’ decomposition \cite{Qiu-IJCAI2024}. Specifically, we divide the anchor set $\mathcal{A}_{{[1, N]}}$ into several smaller groups based on how closely records are connected under the mDP constraints. With this partitioning, the optimization proceeds in two stages. In Stage 1, a \emph{master program} optimizes the perturbation for records located near the boundaries between groups (referred to as \emph{boundary records}). In Stage 2, a set of \emph{subproblems} handle the optimization for records located well within each group (referred to as \emph{internal records}). This process is iterative: the master program first proposes perturbation strategies for the boundary records. Each subproblem then checks whether the current solution satisfies the constraints within its group. If not, it returns a new constraint (called a \emph{cut}) to refine the master program. This cycle continues until both the master and subproblems reach an optimal solution. Notably, we observe that the anchor sets selected through \textsc{PAnDA} exhibit higher clustering coefficients compared to the original secret data domain $\mathcal{X}$, indicating stronger locality. This structural property facilitates decomposition by reducing coupling between subproblems. Detailed clustering analysis results can be found in \textbf{Section \ref{subsec:clustering_add} in Appendix}.} \looseness = -1


\vspace{0.03in}
\noindent\textbf{Lower bound on utility loss via Relaxed \textsc{AnPO}.}  
For theoretical interest, we derive a lower bound on the utility loss $\mathcal{L}\left(\mathbf{Z}^*_{[1, N]}\right)$, where $\mathbf{Z}^*$ denotes the optimal solution to the original perturbation optimization problem defined in Eq.~(\ref{eq:PPO}), and $\mathbf{Z}^*_{[1, N]}$ is the submatrix corresponding to the joint anchor set $\mathcal{A}_{[1, N]}$. By comparing the solution obtained from \textsc{AnPO} to this lower bound, we assess how closely \textsc{AnPO} approximates the optimal utility.

To compute this bound, we consider a relaxed version of the mDP constraints in Eq.~(\ref{eq:PAnDA_mDP_LP}), where the constraint set is modified as:
\vspace{-0.05in}
\begin{equation}
\label{eq:PAnDA_mDP_LP_relax}
z_{x,y} - e^{\epsilon d_{x, x'}} z_{x',y} \leq 0, \quad \forall (x, x') \in \mathcal{A}_n \times \mathcal{A}_m.
\end{equation}
This yields the \emph{Relaxed \textsc{AnPO}} problem, which remains a LP and is solvable efficiently.

\begin{proposition}
\label{prop:ULbound}
Let $\tilde{\mathbf{Z}}_{[1, N]}$ be the optimal solution to the Relaxed \textsc{AnPO} problem. Then, the minimum utility loss achieved by Relaxed \textsc{AnPO} provides a lower bound on the utility loss of the original optimization over the anchor set: $\mathcal{L}\left(\tilde{\mathbf{Z}}_{[1, N]}\right) \leq \mathcal{L}\left(\mathbf{Z}^*_{[1, N]}\right)$. 
\end{proposition}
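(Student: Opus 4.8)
The plan is to prove Proposition~\ref{prop:ULbound} by a direct relaxation (feasibility-plus-optimality) argument rather than by any explicit computation. Concretely, I would show that the restriction $\mathbf{Z}^*_{[1, N]}$ of the full-domain optimizer $\mathbf{Z}^*$ to the rows indexed by the joint anchor set $\mathcal{A}_{[1, N]}$ is a \emph{feasible} point of the Relaxed \textsc{AnPO} program, and then conclude by invoking the optimality of $\tilde{\mathbf{Z}}_{[1, N]}$.

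First I would recall that $\mathbf{Z}^*$ solves the original LP in Eq.~(\ref{eq:PPO}), so it satisfies the mDP constraints of Eq.~(\ref{eq:mDPdiscrete}) for \emph{every} pair $x, x' \in \mathcal{X}$, the unit-measure constraints of Eq.~(\ref{eq:um}) for every $x \in \mathcal{X}$, and the box constraints $0 \le z^*_{x,y} \le 1$ for every $(x,y)$. Next I would observe that the Relaxed \textsc{AnPO} program minimizes $\mathcal{L}(\mathbf{Z}_{\mathcal{A}_{[1, N]}}) = \sum_{(x,y)\in\mathcal{A}_{[1, N]}\times\mathcal{Y}} p_x c_{x,y} z_{x,y}$ subject to the relaxed mDP constraints $z_{x,y} - e^{\epsilon d_{x, x'}} z_{x',y} \le 0$ for $(x,x')\in\mathcal{A}_n\times\mathcal{A}_m$ (Eq.~(\ref{eq:PAnDA_mDP_LP_relax})), the unit-measure constraints for $x\in\mathcal{A}_{[1, N]}$, and the box constraints for $(x,y)\in\mathcal{A}_{[1, N]}\times\mathcal{Y}$. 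Since $\mathcal{A}_n,\mathcal{A}_m\subseteq\mathcal{X}$, each relaxed constraint over an anchor pair is literally one of the full-domain constraints in Eq.~(\ref{eq:mDPdiscrete}) (both carry the coefficient $e^{\epsilon d_{x,x'}}$, i.e., the relaxation drops the Phase-I leakage term $\overline{\epsilon}_{x,x'}$ and the safety margin $\hat{\xi}_{x,x'}$, so each relaxed constraint is no tighter), and the remaining constraints are row-local to $\mathcal{A}_{[1, N]}\subseteq\mathcal{X}$. Hence $\mathbf{Z}^*_{[1, N]}$ inherits every constraint of Relaxed \textsc{AnPO} from $\mathbf{Z}^*$ and is feasible for it.

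Finally, because $\tilde{\mathbf{Z}}_{[1, N]}$ is by definition a minimizer of $\mathcal{L}(\cdot)$ over the feasible region of Relaxed \textsc{AnPO}, and $\mathbf{Z}^*_{[1, N]}$ lies in that region with objective value exactly $\mathcal{L}(\mathbf{Z}^*_{[1, N]})$, I obtain $\mathcal{L}(\tilde{\mathbf{Z}}_{[1, N]}) \le \mathcal{L}(\mathbf{Z}^*_{[1, N]})$, which is the claimed bound. I do not anticipate a genuine obstacle: the only point that needs care is verifying that restricting a feasible matrix to a subset of its rows preserves feasibility for Relaxed \textsc{AnPO}, and this holds precisely because the relaxed constraint family indexed by anchor pairs is a sub-collection of the full-domain mDP constraint family while all other constraints are row-local.
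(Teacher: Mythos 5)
Your proposal is correct and follows essentially the same route as the paper's proof: both establish that the restriction $\mathbf{Z}^*_{[1,N]}$ is feasible for the Relaxed \textsc{AnPO} program (the paper phrases this as $\mathcal{F}_{\text{orig}} \subseteq \mathcal{F}_{\text{relaxed}}$) and then invoke the optimality of $\tilde{\mathbf{Z}}_{[1,N]}$ over the larger feasible set. Your version is slightly more explicit than the paper's in verifying constraint-by-constraint why the restriction remains feasible, but the argument is the same.
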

\vspace{-0.10in}
{\rev 
\begin{proof}[Proof Sketch] 
We compare the utility loss of the Relaxed \textsc{AnPO} solution with that of the original problem, both restricted to the same anchor set $\mathcal{A}{[1,N]}$. Since the relaxed problem has a larger feasible set ($\mathcal{F}_{\text{orig}} \subseteq \mathcal{F}_{\text{relaxed}}$) and minimizes the same objective, its utility loss is no worse:
$\mathcal{L}(\tilde{\mathbf{Z}}_{[1, N]}) \leq \mathcal{L}(\mathbf{Z}^*_{[1, N]})$.
The detailed proof can be found in \textbf{Section \ref{subsec:proof:prop:ULbound} in Appendix}. 
\end{proof}}

\vspace{-0.15in}
\section{Performance Evaluation}
\label{sec:performance}

In this section, we evaluate \textsc{PAnDA}'s performance. Section~\ref{subsec:settings} describes the experimental setup. Section~\ref{subsec:anchorselectpara} analyzes the impact of anchor selection parameters on the required safety margin, guiding parameter choices for subsequent evaluations. Section~\ref{subsec:benchmarks} compares \textsc{PAnDA} with baseline methods, including EM, LP-based, and hybrid approaches. Finally, Section~\ref{subsec:privacycosts} examines the distribution of privacy budgets across \textsc{PAnDA}’s two phases.

\vspace{-0.10in}
\subsection{Settings}
\label{subsec:settings}
\vspace{-0.05in}
\subsubsection*{\textbf{Datasets}} 
We conducted experiments on road map datasets for three cities: \emph{Rome, Italy}, \emph{New York City (NYC), USA}, and \emph{London, UK}. {\rev These datasets define the secret record domains (i.e., users’ possible locations) rather than observed data.} Each dataset comprises nodes representing intersections, junctions, and other critical points in the city's road network, with data sourced from OpenStreetMap~\cite{openstreetmap}. 

The geographic boundaries and dataset statistics for each city are summarized below:
\begin{itemize}[left=0.3em, labelsep=0.5em]
    \item \textbf{Rome, Italy}, bounded by (lat = 41.66, lon = 12.24) in the southwest and (lat = 42.10, lon = 12.81) in the northeast, including 43,160 nodes. 
    \item \textbf{NYC, USA}, bounded by (lat = 73.70, lon = 40.50) in the southwest and (lat = 74.25, lon = 40.91) in the northeast, including 55,353 nodes. 
    \item \textbf{London, UK}, bounded by (lat = 51.29, lon = -0.51) in the southwest and (lat = 51.69, lon = 0.28) in the northeast, including 12,820 nodes. 
\end{itemize}
To construct the secret data domain $\mathcal{X}$ for evaluation, we randomly sample a subset of nodes from each city's road map dataset, varying the size from {\rev 500 to 5,000}. This sampling simulates different domain scales and allows us to assess the scalability and performance of \textsc{PAnDA} under increasing secret data domain size. 

{\rev Since real user data is unavailable for some cities, we simulate user locations uniformly over the domain. To enhance realism, we also include experiments based on a real-world taxicab location dataset~\cite{roma-taxi-20140717} for the Rome road network.}

Our tests were conducted on a machine equipped with an Intel Core i9-13900F processor, featuring 24 cores with a base clock speed ranging from 2.00 GHz to a maximum of 5.60 GHz. The system was configured with 32 GB of DDR5 memory (32 GB, 4800 MHz) and included an NVIDIA GeForce RTX 4090 graphics card with 24 GB of GDDR6X VRAM. We use the MATLAB optimization toolbox "$\mathsf{linearprog}$" to solve the LP problems.






\vspace{-0.10in}
\subsection{Impact of Anchor Selection Parameters on Safety Margin}
\label{subsec:anchorselectpara}

\begin{figure}[t]
\centering
\hspace{0.00in}
\begin{minipage}{0.50\textwidth}
\centering
  \subfigure[Exponential decay]{
\includegraphics[width=0.31\textwidth, height = 0.10\textheight]{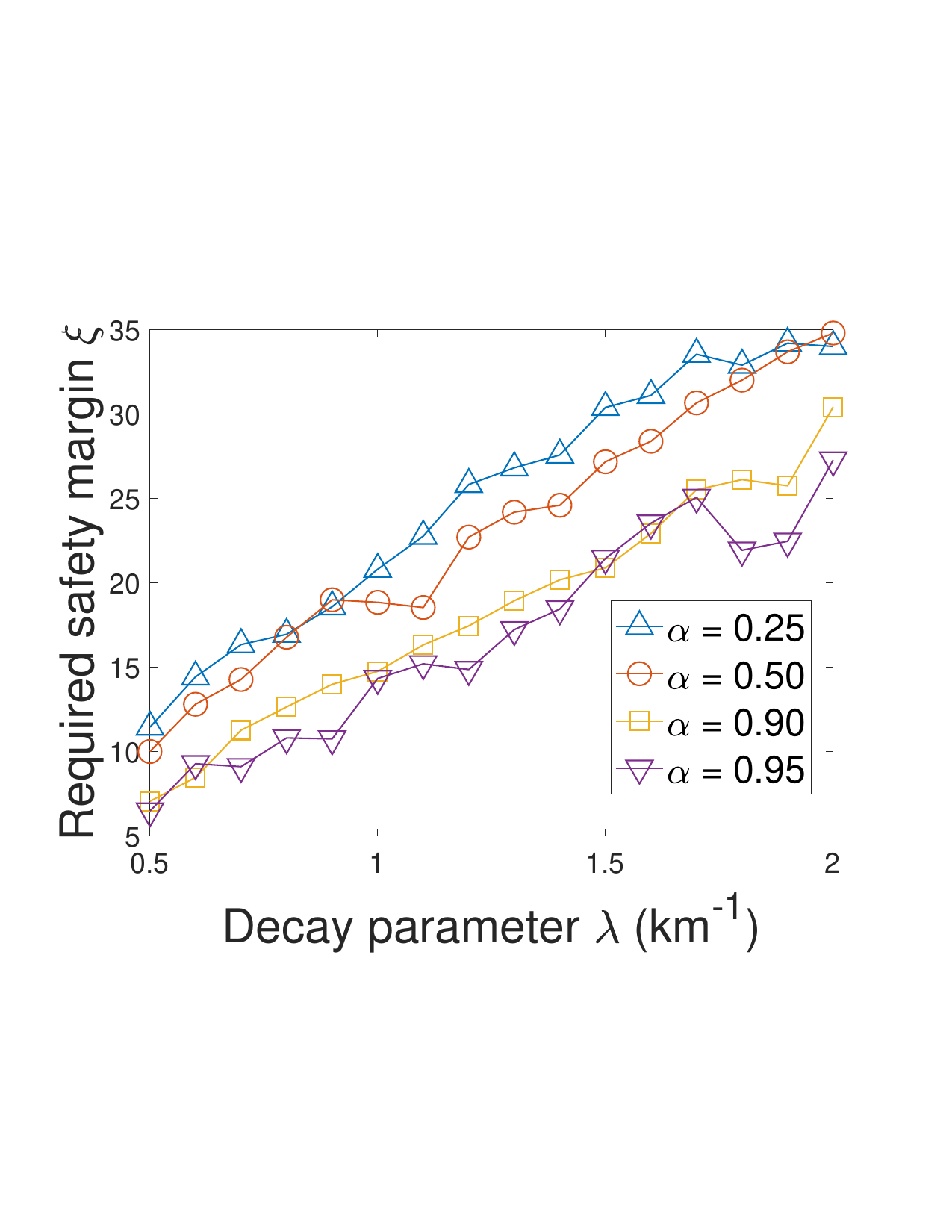}}
  \subfigure[Power law decay]{
\includegraphics[width=0.31\textwidth, height = 0.10\textheight]{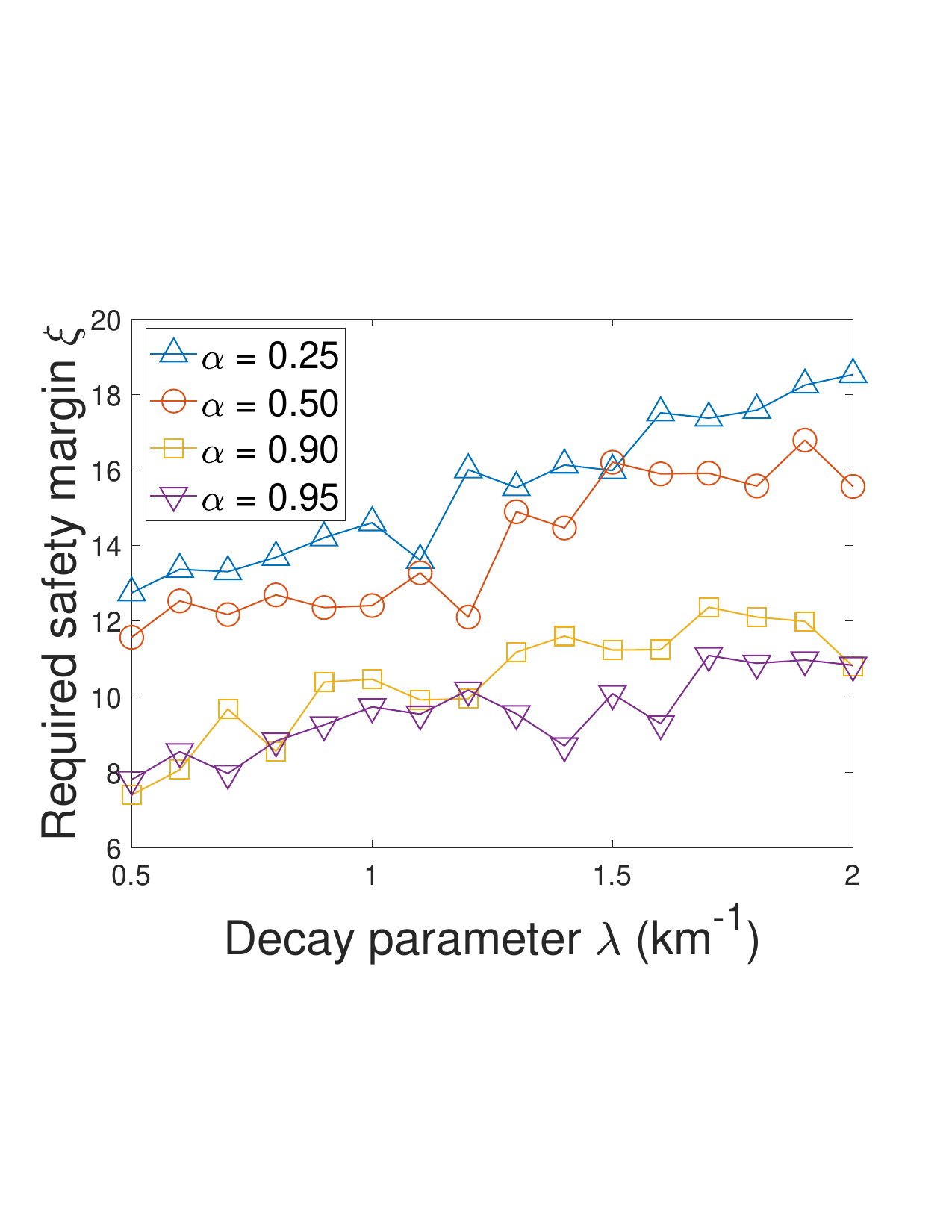}}
  \subfigure[Logistic]{
\includegraphics[width=0.31\textwidth, height = 0.10\textheight]{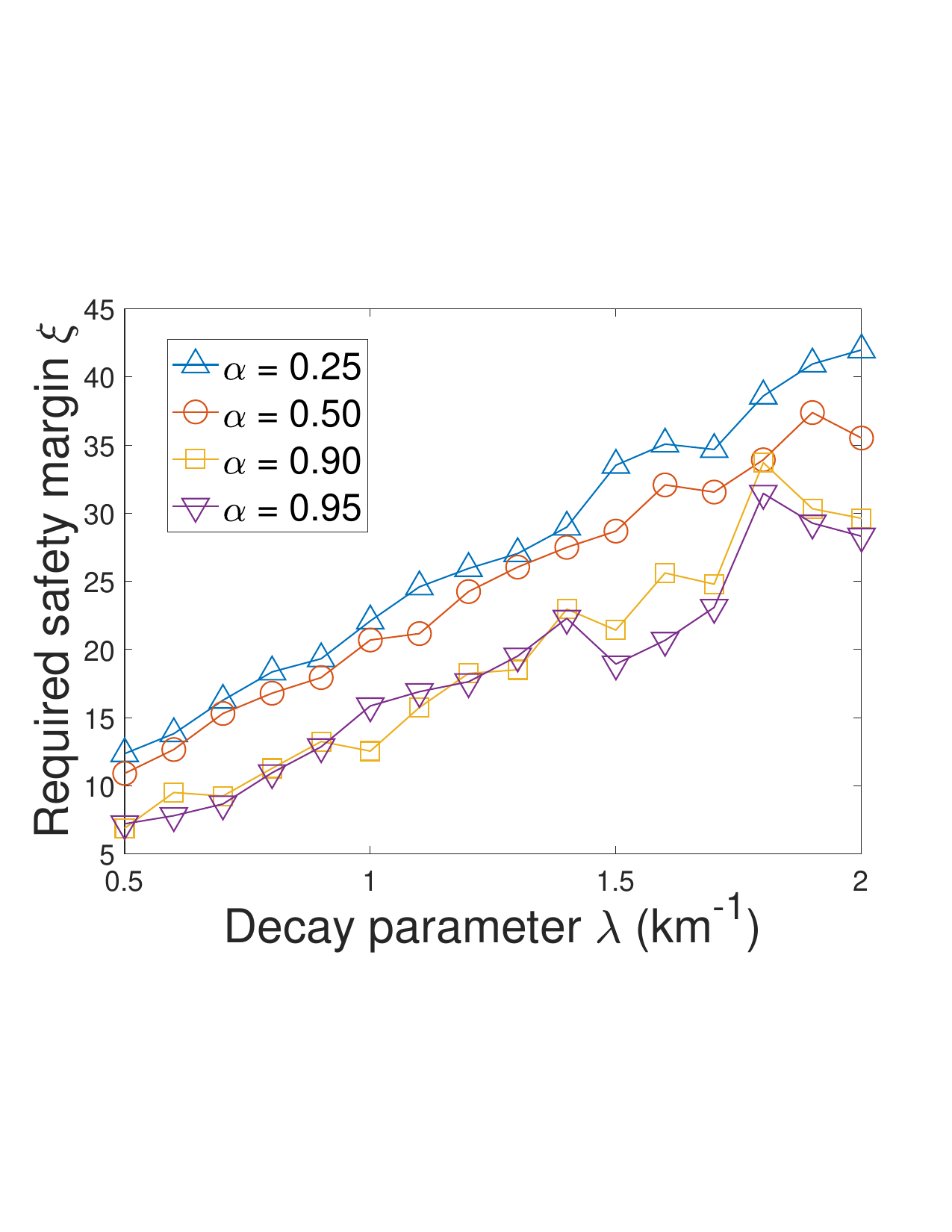}}
\vspace{-0.15in}
\end{minipage}
\caption{Expected safety margin $\xi$ with different decay parameter $\lambda$ and scaling parameter $\alpha$ (Rome).}
\label{fig:xi_lambda_rome}
\vspace{-0.20in}
\end{figure}

We begin by examining how the anchor selection parameters, \textit{the scaling factor} $\alpha$ and \textit{the decay parameter} $\lambda$, impact the required safety margin $\xi$ for achieving $(\epsilon, \delta)$-PmDP, which guides the parameter selection for the subsequent experiment. Recall that in the three anchor selection methods---\emph{exponential decay}, \emph{power-law decay}, and \emph{logistic} (defined in Table~\ref{Tb:anchorrecordselection}), $\alpha$ controls the overall probability of selecting anchors, and $\lambda$ determines how sharply this probability decreases with with increasing distance between a true record and a candidate anchor record. 

Fig.~\ref{fig:xi_lambda_rome}(a)(b)(c) show how the average required safety margin $\xi$ varies with different $\lambda$ and $\alpha$ across the three methods using Rome roadmap dataset. Due to the limited space, the results of NYC and London can be found in Fig. \ref{fig:xi_lambda_nyc}(a)(b)(c) and \ref{fig:xi_lambda_london}(a)(b)(c) in Appendix. In all cases, we observe that increasing $\lambda$ or decreasing $\alpha$ leads to a larger safety margin $\xi$. This is because a steeper decay (larger $\lambda$) or lower scaling factor (smaller $\alpha$) results in fewer selected anchors, which can degrade the approximation of true distances between users' records. Consequently, a larger $\xi$ is needed to compensate for such higher approximation errors and to ensure that the total privacy cost remains within the budget $\epsilon$ with high probability $1-\delta$. \emph{These results suggest that using a higher $\alpha$ and a lower $\lambda$ leads to a smaller required safety margin $\xi$, thereby leaving more privacy budget for perturbation optimization.} 

However, increasing $\alpha$ and decreasing $\lambda$ also result in more anchor records being selected, which increases the computational overhead of anchor perturbation optimization in the second phase. As a trade-off between utility and computational efficiency, in the following experiment, {\rev we set $\alpha$ and $\lambda$, and $\gamma = d_{\max}/50$ by 0.95 and 0.50, respectively, where $d_{\max}$ denotes the maximum pairwise distance among locations in the target region.} In addition, we set {\rev the acceptable mDP violation ratio $\delta = 10^{-7}$ and the privacy budget $\epsilon  = 5km^{-1}$ \cite{Andres-CCS2013} by default}.





\vspace{-0.10in}
\subsection{Comparison with the Benchmarks}
\label{subsec:benchmarks}
\vspace{-0.05in}
In this part, we compare \textsc{\textsc{PAnDA}} against several representative baselines that achieve $\epsilon$-mDP:
\vspace{-0.05in}
\begin{itemize}[left=0.1em, labelsep=0.5em]  
    \item \textbf{Exponential Mechanism (EM) \cite{Chatzikokolakis-PoPETs2015}}, which is a \emph{predefined noise distribution mechanism}.  EM perturbs true record $x \in \mathcal{X}$ by following the probability distribution $\Pr[\mathcal{M}(x) = y] \propto \exp\left(-\epsilon  d_{x, y}\right)$,
    where $d_{x,y}$ is the distance between true record $x$ and perturbed record $y$. This mechanism satisfies pure $\epsilon$-mDP and serves as a generic, non-adaptive baseline \cite{dwork2014algorithmic}. 
    \item \textbf{Linear programming (LP) \cite{Bordenabe-CCS2014}.} 
    This approach formulates perturbation as a linear program minimizing expected utility loss under mDP constraints, but incurs $|\mathcal{X}|^2$ variables, which limits scalability. \looseness = -1
    \item \textbf{Coarse Approximation of LP (LP+CA).} Considering the high complexity of LP, we also select a coarse-grained approximation strategy of LP as a baseline. In LP+CA, the target region is partitioned into an $8 \times 8$ grid and each location is approximated to its corresponding grid cell. This reduces the domain to 64 representative records, lowering computation but potentially increasing utility loss and mDP violation.  
    \item \textbf{Benders Decomposition (LP+BD) \cite{Qiu-IJCAI2024}.} To improve LP scalability, BD clusters the secret data domain and decomposes the mDP optimization into a master problem and subproblems. The master coordinates cross-cluster perturbation, while each subproblem optimizes perturbation within a cluster.
    \item \textbf{ConstOPTMech (LP+EM) \cite{ImolaUAI2022}.} LP+EM uses LP to optimize perturbation probabilities over a candidate set of outputs with high utility sensitivity, while relying on EM to handle the remaining outputs. This design reduces computational overhead while preserving strong privacy guarantees under mDP.
 
    \item \textbf{Bayesian Remapping (EM+BR) \cite{Chatzikokolakis-PETS2017}.} BR is a post-processing technique that improves utility without violating mDP. Given a perturbed record $y$ (output by EM), the remapped output $y'$ is selected to minimize expected loss: 
    \vspace{-0.05in}
    $$\textstyle y' = \arg\min_{y' \in \mathcal{Y}} \mathbb{E}_{x \sim \Pr[x \mid y]} \left[\mathcal{L}(x, y')\right],$$
    \vspace{-0.10in}
    \newline where the posterior $\Pr[x| y]$ is derived from the prior $p_x$ and the perturbation matrix $\mathbf{Z}_{\mathcal{X}}$. This remapping improves utility while preserving the original privacy guarantees.
\end{itemize}
We refer to our approach as \textbf{\textsc{PAnDA}}, using the anchor selection methods \emph{exponential decay}, \emph{power-law decay}, and \emph{logistic}, denoted as \textbf{\textsc{PAnDA}-e}, \textbf{\textsc{PAnDA}-p}, and \textbf{\textsc{PAnDA}-l}, respectively.

\subsubsection{Computation efficiency.} We evaluate the computation time of different data perturbation methods across the three roadmap datasets in \emph{Rome}, \emph{NYC}, and \emph{London}, when secret data domain size (denoted by $K = |\mathcal{X}|$) is increased from {\rev 500 to 5,000}. {\rev Unless otherwise specified, we set $K = 2,000$ by default.} We repeat each experiment by 10 times. {\rev Table~\ref{Tb:exp:time_scalability} reports the average computation time (in seconds) for each algorithm under a uniform user location distribution, while Table~\ref{Tb:exp:time_scalability_realdata} shows results on the Rome roadmap using the real-world Rome taxicab location dataset~\cite{roma-taxi-20140717}.} For fairness, we consider an LP-based method as ``unsolved' if its computation time exceeds 1,800 seconds.

{\rev By default, we set the number of users to 50, which aligns with typical applications of mDP (geo-indistinguishability), such as in spatial crowdsourcing (SC)~\cite{Qiu-CIKM2020}. In SC, the perturbation matrix is computed over active users or workers within the target task’s influence region during a given time window, rather than across the entire user population. In practice, a setting with approximately 10--50 users is common \cite{Qiu-CIKM2020}.}

\vspace{-0.00in}
\begin{table}[t]
\centering
\small 
\scriptsize 
\setlength{\tabcolsep}{3pt}  
\begin{tabular}{ c|c|c|c|c|c|c}
\toprule
\multicolumn{7}{ c  }{Rome}\\ 
\cline{1-7}
\multicolumn{1}{ c|  }{Method}  & $K = 500$ & $K = 1,000$  & $K = 2,000$ & $K = 3,000$& $K = 4,000$& $K = 5,000$\\
\hline
\hline
\multicolumn{1}{ c|  }{EM} & $\leq0.005$ & $\leq0.005$ & $\leq0.005$ & $\leq0.005$ & $\leq0.005$ & 
$\leq0.005$ \\ 
\multicolumn{1}{ c|  }{LP} & >1,800 & >1,800 & >1,800 & >1,800 & >1,800 & >1,800 \\ 
\multicolumn{1}{ c|  }{LP+CA} & 6.10±0.60 & 7.86±0.48 & 14.99±0.71 & 19.73±0.44 & 23.17±0.54 & 27.07±0.45 \\ 
\multicolumn{1}{ c|  }{LP+BD} & 5.45±7.10 & >1,800 & >1,800 & >1,800 & >1,800 & >1,800 \\ 
\multicolumn{1}{ c|  }{LP+EM} & 0.67±0.31 & >1,800 & >1,800 & >1,800 & >1,800 & >1,800 \\ 
\multicolumn{1}{ c|  }{EM+BR} & 0.04±0.00 & 0.11±0.00 & 0.17±0.00 & 0.35±0.00 & 0.51±0.01 & 0.65±0.02  \\ 
\hline
\multicolumn{1}{ c|  }{ \textbf{\textsc{PAnDA}-e}} & 0.21±0.06 & 0.36±0.02 & 4.32±0.11 & 8.17±4.52 & 16.57±7.02 & 25.67±8.69 \\
\multicolumn{1}{ c|  }{ \textbf{\textsc{PAnDA}-p}} & 0.33±0.08 & 0.58±0.10 & 5.30±3.07 & 10.43±5.18 & 18.51±9.31 & 29.67±12.10 \\
\multicolumn{1}{ c|  }{ \textbf{\textsc{PAnDA}-l}} & 0.32±0.11 & 0.75±0.05 & 6.69±2.69 & 12.18±5.24 & 25.90±11.87 & 34.77±17.71 \\
\hline
\toprule
\multicolumn{7}{ c  }{NYC}\\ 
\cline{1-7}
\multicolumn{1}{ c|  }{Method}  & $K = 500$ & $K = 1,000$  & $K = 2,000$ & $K = 3,000$& $K = 4,000$& $K = 5,000$\\
\hline
\hline
\multicolumn{1}{ c|  }{EM} & $\leq0.005$ & $\leq0.005$ & $\leq0.005$ & $\leq0.005$ & $\leq0.005$ & 
$\leq0.005$ \\ 
\multicolumn{1}{ c|  }{LP} & >1,800 & >1,800 & >1,800 & >1,800 & >1,800 & >1,800 \\ 
\multicolumn{1}{ c|  }{LP+CA} & 5.24±0.17 & 5.55±0.14 & 19.26±0.31 & 25.61±0.43 & 32.15±0.56 & 38.49±0.96 \\ 
\multicolumn{1}{ c|  }{LP+BD} & 6.60±0.39 & >1,800 & >1,800 & >1,800 & >1,800 & >1,800 \\ 
\multicolumn{1}{ c|  }{LP+EM} & 0.95±0.15 & >1,800 & >1,800 & >1,800 & >1,800 & >1,800 \\ 
\multicolumn{1}{ c|  }{EM+BR} & 0.02±0.00 & 0.07±0.00 & 0.12±0.00 & 0.22±0.05 & 0.34±0.07 & 0.46±0.09  \\ 
\hline
\multicolumn{1}{ c|  }{ \textbf{\textsc{PAnDA}-e}} & 0.26±0.03 & 0.23±0.08 & 7.91±0.37 & 12.84±4.18 & 18.43±7.32 & 23.87±6.01 \\
\multicolumn{1}{ c|  }{ \textbf{\textsc{PAnDA}-p}} & 0.33±0.01 & 0.56±0.06 & 7.61±2.08 & 14.69±4.80 & 18.96±13.17 & 27.16±20.25 \\
\multicolumn{1}{ c|  }{ \textbf{\textsc{PAnDA}-l}} & 0.38±0.01 & 0.56±0.06 & 7.24±0.43 & 19.13±18.91 & 29.33±10.38 & 39.07±18.41 \\
\hline
\toprule
\multicolumn{7}{ c  }{London}\\ 
\cline{1-7}
\multicolumn{1}{ c|  }{Method}  & $K = 500$ & $K = 1,000$  & $K = 2,000$ & $K = 3,000$& $K = 4,000$& $K = 5,000$\\
\hline
\hline
\multicolumn{1}{ c|  }{EM} & $\leq0.005$ & $\leq0.005$ & $\leq0.005$ & $\leq0.005$ & $\leq0.005$ & 
$\leq0.005$ \\ 
\multicolumn{1}{ c|  }{LP} & >1,800 & >1,800 & >1,800 & >1,800 & >1,800 & >1,800 \\ 
\multicolumn{1}{ c|  }{LP+CA} & 17.72±0.24 & 15.24±0.55 & 28.38±0.93 & 48.92±2.34 & 61.78±1.06 & 87.48±0.56 \\ 
\multicolumn{1}{ c|  }{LP+BD} & 5.77±0.46 & >1,800 & >1,800 & >1,800 & >1,800 & >1,800 \\ 
\multicolumn{1}{ c|  }{LP+EM} & 0.41±0.04 & >1,800 & >1,800 & >1,800 & >1,800 & >1,800 \\ 
\multicolumn{1}{ c|  }{EM+BR} & 0.02±0.00 & 0.05±0.00 & 0.11±0.00 & 0.25±0.00 & 0.44±0.03 & 0.67±0.07  \\ 
\hline
\multicolumn{1}{ c|  }{ \textbf{\textsc{PAnDA}-e}} & 0.14±0.04 & 0.55±0.32 & 7.29±0.94 & 22.42±6.90 & 30.73±9.87 & 42.38±12.72 \\
\multicolumn{1}{ c|  }{ \textbf{\textsc{PAnDA}-p}} & 0.16±0.02 & 0.30±0.02 & 6.90±0.47 & 15.44±12.11 & 64.29±16.49 & 56.08±54.04 \\
\multicolumn{1}{ c|  }{ \textbf{\textsc{PAnDA}-l}} & 0.28±0.02 & 0.30±0.04 & 7.20±1.37 & 20.68±22.35 & 58.24±21.51 & 80.61±34.92 \\
\hline
\end{tabular}
\vspace{0.00in}
\caption{Computation time of different algorithms (uniform user location distribution). Mean$\pm$1.96$\times$std. deviation. }
\label{Tb:exp:time_scalability}
\vspace{-0.25in}
\end{table}

\vspace{-0.00in}
\begin{table}[t]
\centering
\small 
\scriptsize 
\setlength{\tabcolsep}{3pt}  
\begin{tabular}{ c|c|c|c|c|c|c}
\toprule
\multicolumn{7}{ c  }{Rome}\\ 
\cline{1-7}
\multicolumn{1}{ c|  }{Method}  & $K = 500$ & $K = 1,000$  & $K = 2,000$ & $K = 3,000$& $K = 4,000$& $K = 5,000$\\
\hline
\hline
\multicolumn{1}{ c|  }{EM} & $\leq0.005$ & $\leq0.005$ & $\leq0.005$ & $\leq0.005$ & $\leq0.005$ & 
$\leq0.005$ \\ 
\multicolumn{1}{ c|  }{LP} & >1,800 & >1,800 & >1,800 & >1,800 & >1,800 & >1,800 \\ 
\multicolumn{1}{ c|  }{LP+CA} & 6.41±0.54 & 8.10±0.53 & 15.11±0.73 & 21.09±0.42 & 24.61±0.42 & 29.36±0.51 \\ 
\multicolumn{1}{ c|  }{LP+BD} & 6.19±4.05 & >1,800 & >1,800 & >1,800 & >1,800 & >1,800 \\ 
\multicolumn{1}{ c|  }{LP+EM} & 0.45±0.18 & >1,800 & >1,800 & >1,800 & >1,800 & >1,800 \\ 
\multicolumn{1}{ c|  }{EM+BR} & 0.05±0.00 & 0.10±0.00 & 0.17±0.00 & 0.39±0.00 & 0.49±0.01 & 0.73±0.02 \\  
\hline
\multicolumn{1}{ c|  }{ \textbf{\textsc{PAnDA}-e}} & 0.19±0.07 & 0.38±0.02 & 4.10±0.12 & 8.55±5.58 & 17.81±8.31 & 26.31±6.89 \\ 
\multicolumn{1}{ c|  }{ \textbf{\textsc{PAnDA}-p}} & 0.31±0.06 & 0.53±0.11 & 5.35±2.75 & 9.15±5.45 & 17.98±11.47 & 26.15±11.69 \\ 
\multicolumn{1}{ c|  }{ \textbf{\textsc{PAnDA}-l}} & 0.28±0.09 & 0.71±0.04 & 7.24±2.59 & 12.90±5.65 & 22.99±13.19 & 30.91±17.99 \\ 
\hline
\end{tabular}
\vspace{0.00in}
\caption{Computation time of different algorithms (Rome taxicab location dataset). Mean$\pm$1.96$\times$std. deviation. }
\label{Tb:exp:time_scalability_realdata}
\vspace{-0.25in}
\end{table}

From both Table \ref{Tb:exp:time_scalability} and Table \ref{Tb:exp:time_scalability_realdata}, we observe that \textsc{PAnDA} demonstrates a much shorter computation time compared to LP, LP+BD, and LP+EM. In particular, for all the methods, the computation time increases with the increase of number of records. In particular, in Table \ref{Tb:exp:time_scalability}, on average, \textsc{PAnDA}-e, \textsc{PAnDA}-p, and \textsc{PAnDA}-l achieve {\rev 30.64s, 37.63s, and 51.48s} across the three datasets with 5,000 records. In Table \ref{Tb:exp:time_scalability_realdata}, \textsc{PAnDA}-e, \textsc{PAnDA}-p, and \textsc{PAnDA}-l achieve {\rev 26.31, 26.15s, and 30.91s}. While, for the LP-based methods LP, LP+BD, and LP+EM, their computation considerable inefficiency, especially at {\rev larger data domain size}.

\textsc{PAnDA} achieves lower computation time by avoiding the need to solve a full-scale LP over the entire data domain $\mathcal{X}$. Traditional LP-based methods, including the LP+BD variant, require optimizing over all pairs of real and perturbed records, which becomes computationally intractable as $|\mathcal{X}|$ increases. Although LP+EM reduces the LP size by focusing on a utility-sensitive subset of outputs, it still incurs a large number of decision variables because it aims to derive perturbation vectors for all records in $\mathcal{X}$. This limitation remains significant when $|\mathcal{X}|$ reaches 1,000 or more. In contrast, \textsc{PAnDA} allows each user to select a small, personalized set of anchor records, and the server optimizes only over the union of these anchors, drastically reducing the LP size. Figure~\ref{fig:variablesvsnrrecords} illustrates the number of anchor records selected by \textsc{PAnDA}-e, \textsc{PAnDA}-p, and \textsc{PAnDA}-l as the secret data domain size grows from {\rev 500 to 5,000}. On average, each variant selects anchor sets that cover only a small fraction, averaging {\rev 14.81\%}, of the full domain. Even at a domain size of {\rev 5,000}, the number of anchor records remains bounded (up to {\rev 450}), enabling the resulting LPs to be efficiently solved using Benders decomposition~\cite{Qiu-IJCAI2024}. This substantial reduction in optimization variables is key to the improved scalability of \textsc{PAnDA}. {\rev Additionally, the figures show that the number of records covered in the London dataset is lower than in the other two cities. This is due to variations in the density of location points retrieved from OpenStreetMap across the three maps. In particular, London has a significantly lower point density, 62.74\% lower than NYC and 52.50\% lower than Rome, resulting in fewer selected points compared to the other two datasets.}

\begin{figure}[t]
\centering
\hspace{0.00in}
\begin{minipage}{0.50\textwidth}
  \subfigure[Rome]{
\includegraphics[width=0.32\textwidth, height = 0.10\textheight]{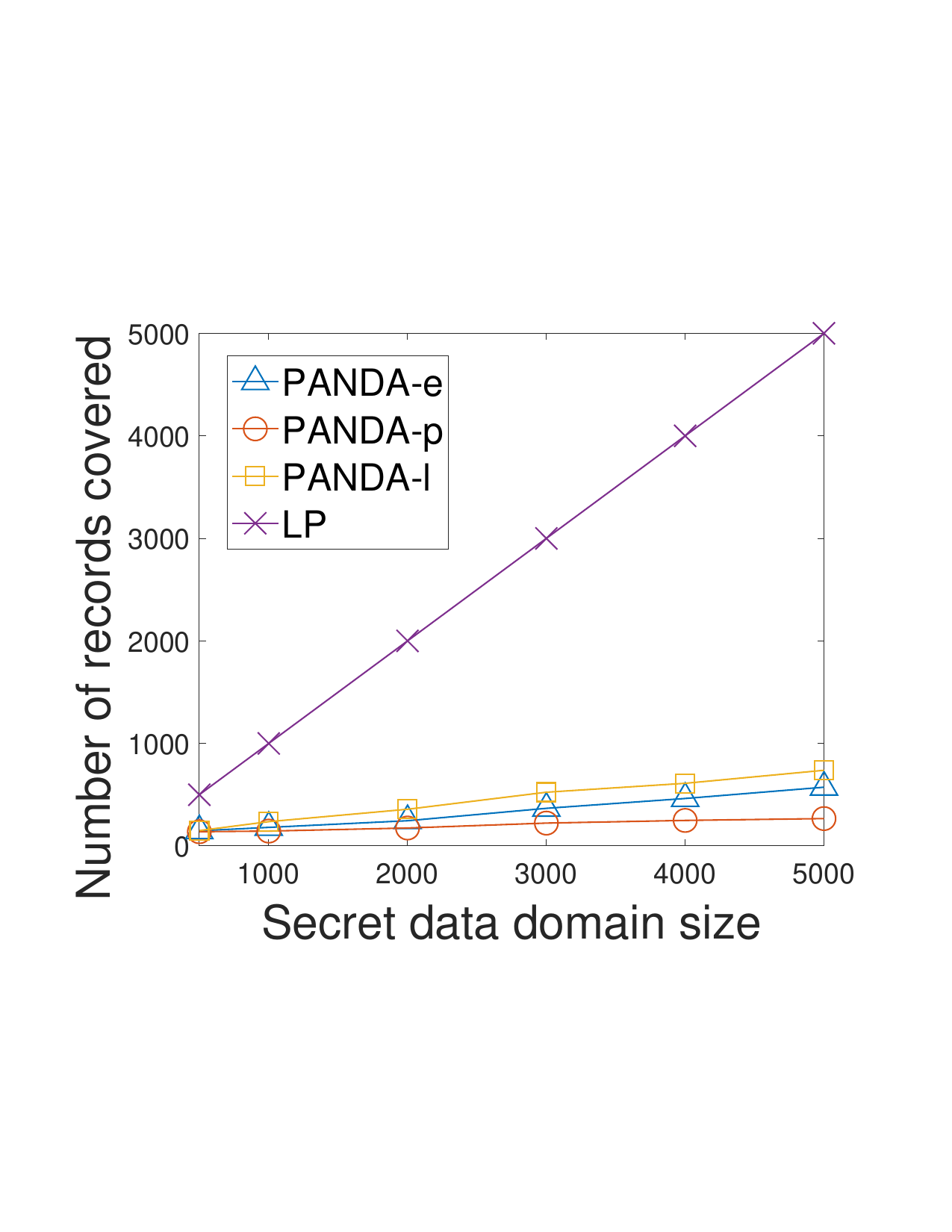}}
  \subfigure[NYC]{
\includegraphics[width=0.32\textwidth, height = 0.10\textheight]{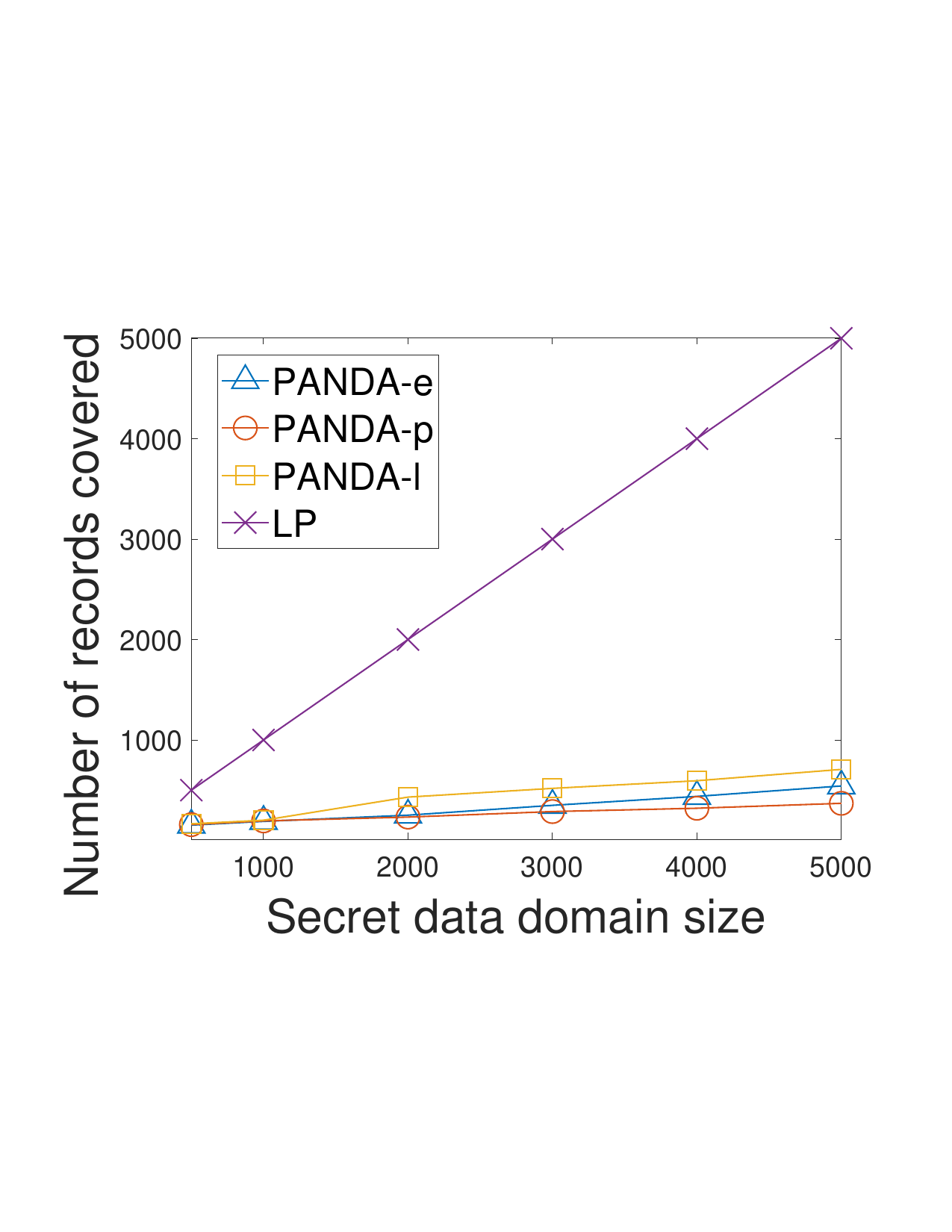}}
  \subfigure[London]{
\includegraphics[width=0.32\textwidth, height = 0.10\textheight]{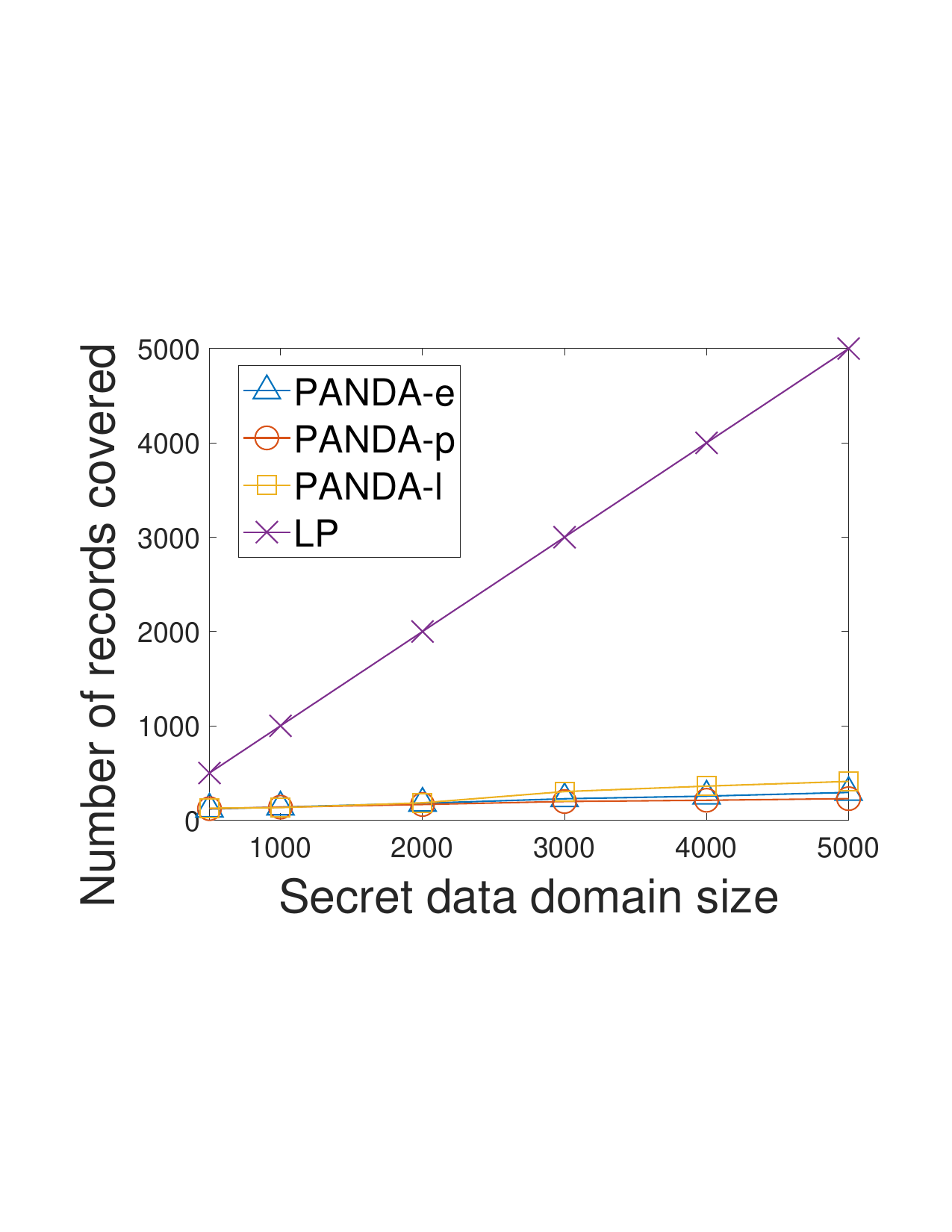}}
\vspace{-0.20in}
\end{minipage}
\caption{Number of records covered by LP vs. secret data domain size $K$.}
\label{fig:variablesvsnrrecords}
\vspace{-0.20in}
\end{figure}

Additionally, among the three methods, \textsc{PAnDA}-e achieves the fastest computation time ({\rev 22.18s}), followed by \textsc{PAnDA}-p ({\rev 34.64s}), and \textsc{PAnDA}-l is the slowest ({\rev 47.24s}) when the domain size is 3,000. This difference arises from the nature of their anchor selection distributions: exponential decay (used in \textsc{PAnDA}-e) concentrates anchor selection on nearby records, resulting in smaller, more localized optimization problems; power-law decay (\textsc{PAnDA}-p) selects a wider range of anchors, increasing problem size; and logistic selection (\textsc{PAnDA}-l) introduces a more uniform spread across distances, leading to denser anchor sets and higher computational overhead.


\begin{figure}[t]
\centering
\hspace{0.00in}
\begin{minipage}{0.50\textwidth}
  \subfigure[Rome]{
\includegraphics[width=0.32\textwidth, height = 0.10\textheight]{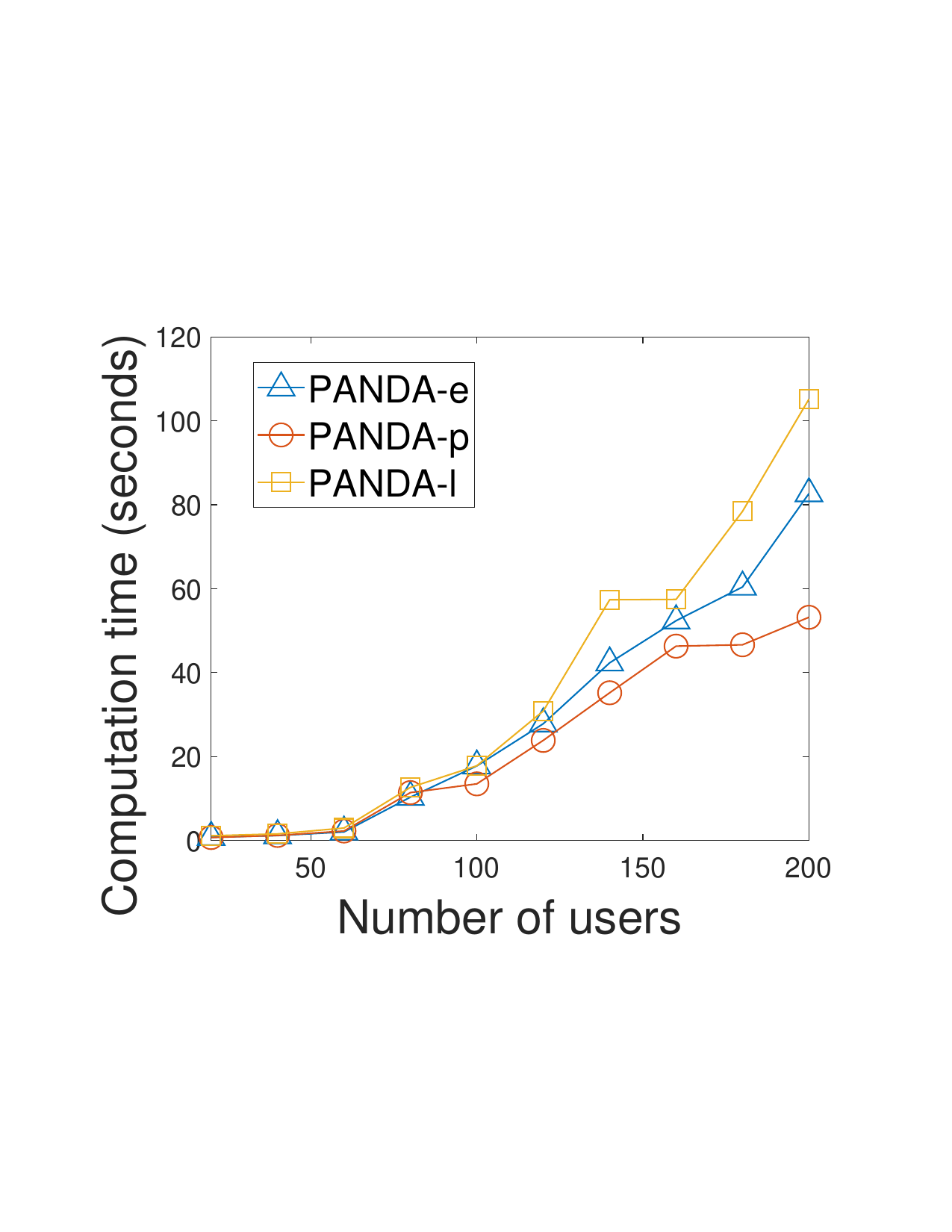}}
  \subfigure[NYC]{
\includegraphics[width=0.32\textwidth, height = 0.10\textheight]{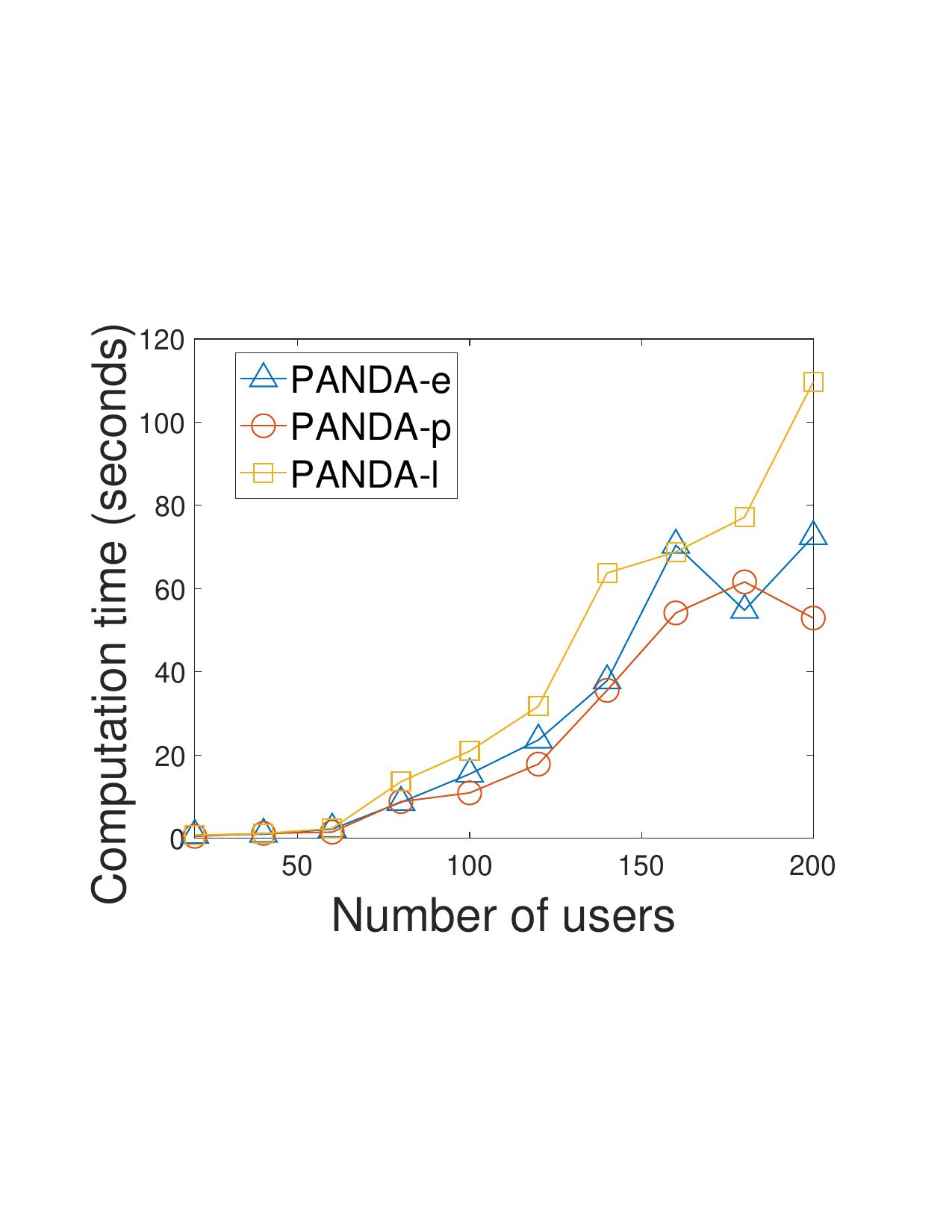}}
  \subfigure[London]{
\includegraphics[width=0.32\textwidth, height = 0.10\textheight]{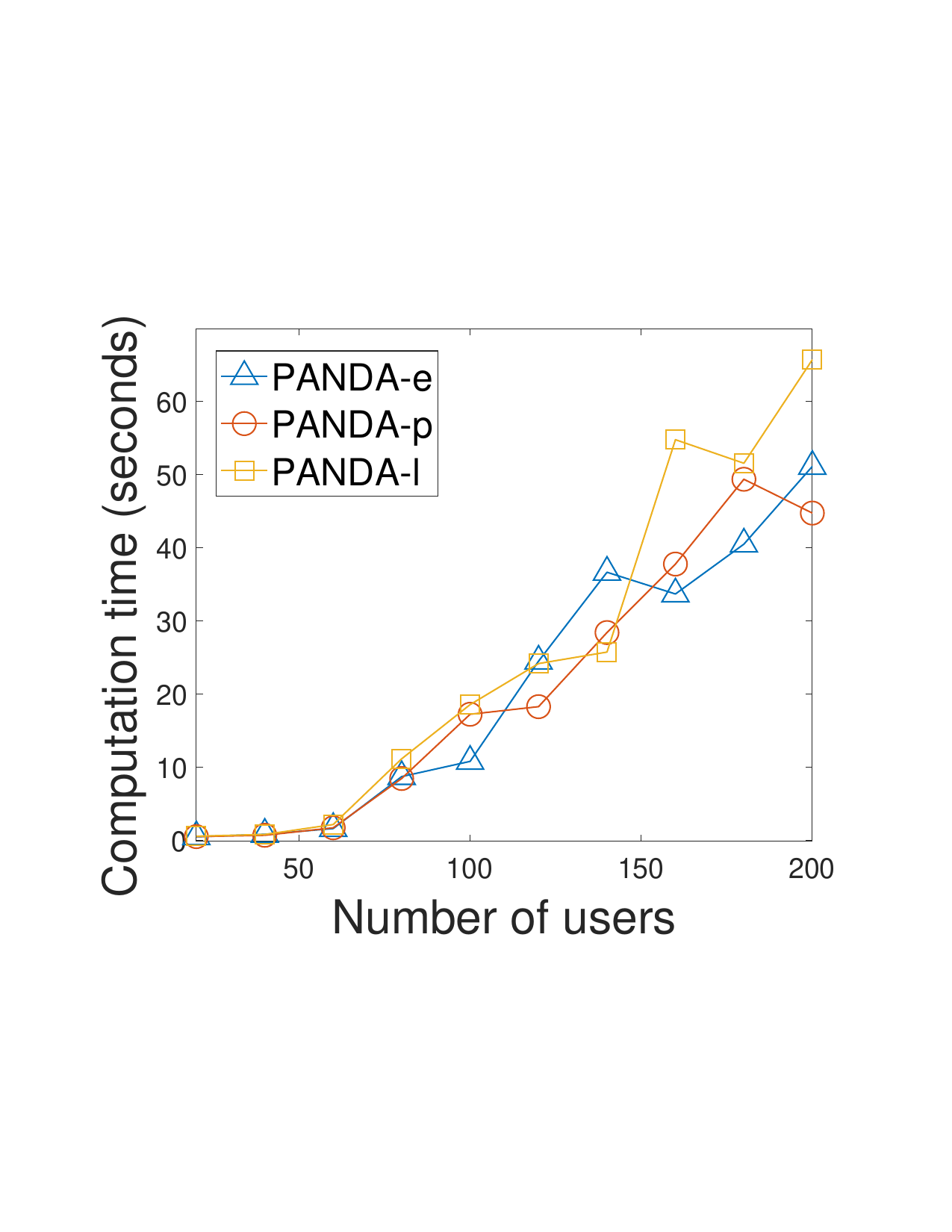}}
\vspace{-0.15in}
\end{minipage}
\caption{Computation time vs. numbers of users.}
\label{fig:timevsnrusers}
\begin{minipage}{0.50\textwidth}
  \subfigure[Rome]{
\includegraphics[width=0.32\textwidth, height = 0.10\textheight]{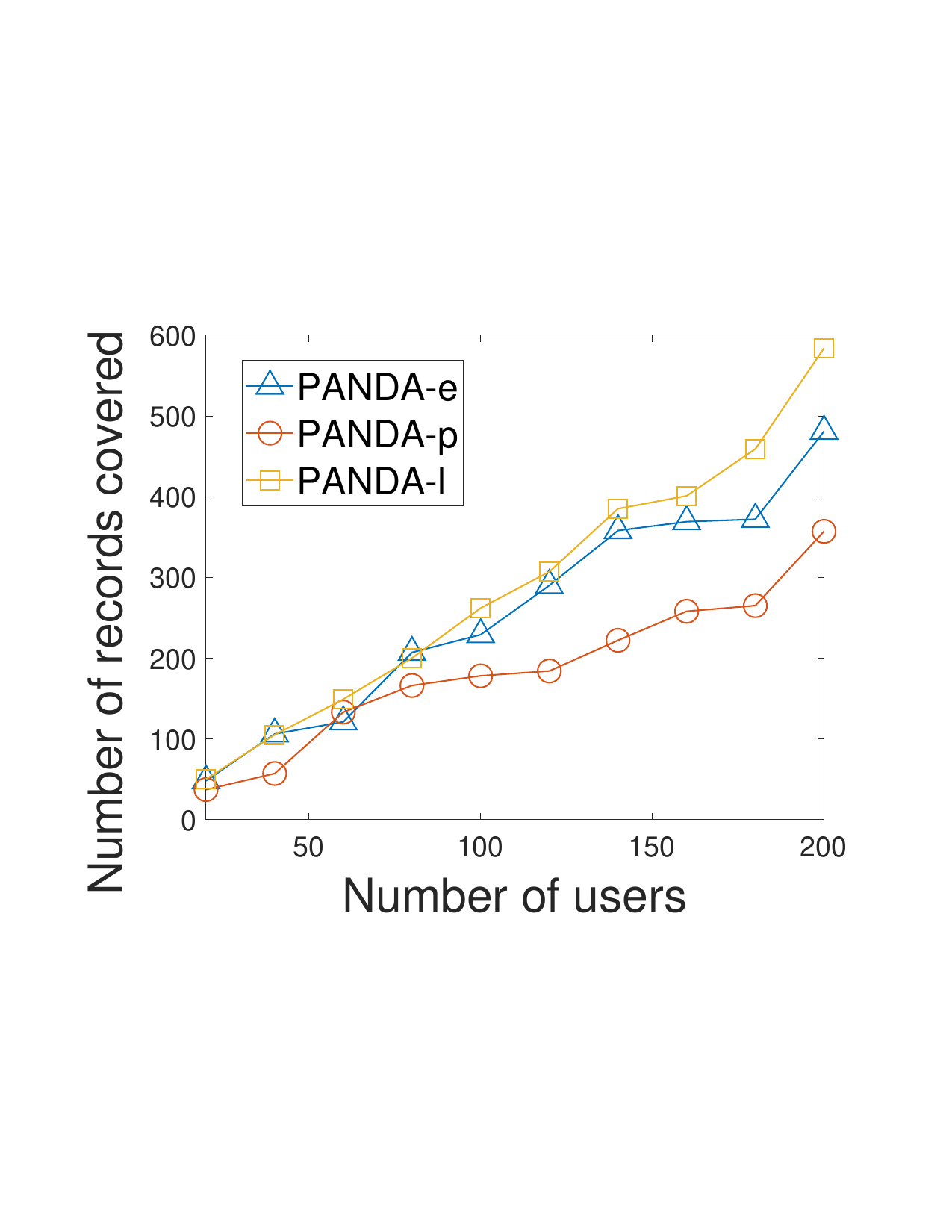}}
  \subfigure[NYC]{
\includegraphics[width=0.32\textwidth, height = 0.10\textheight]{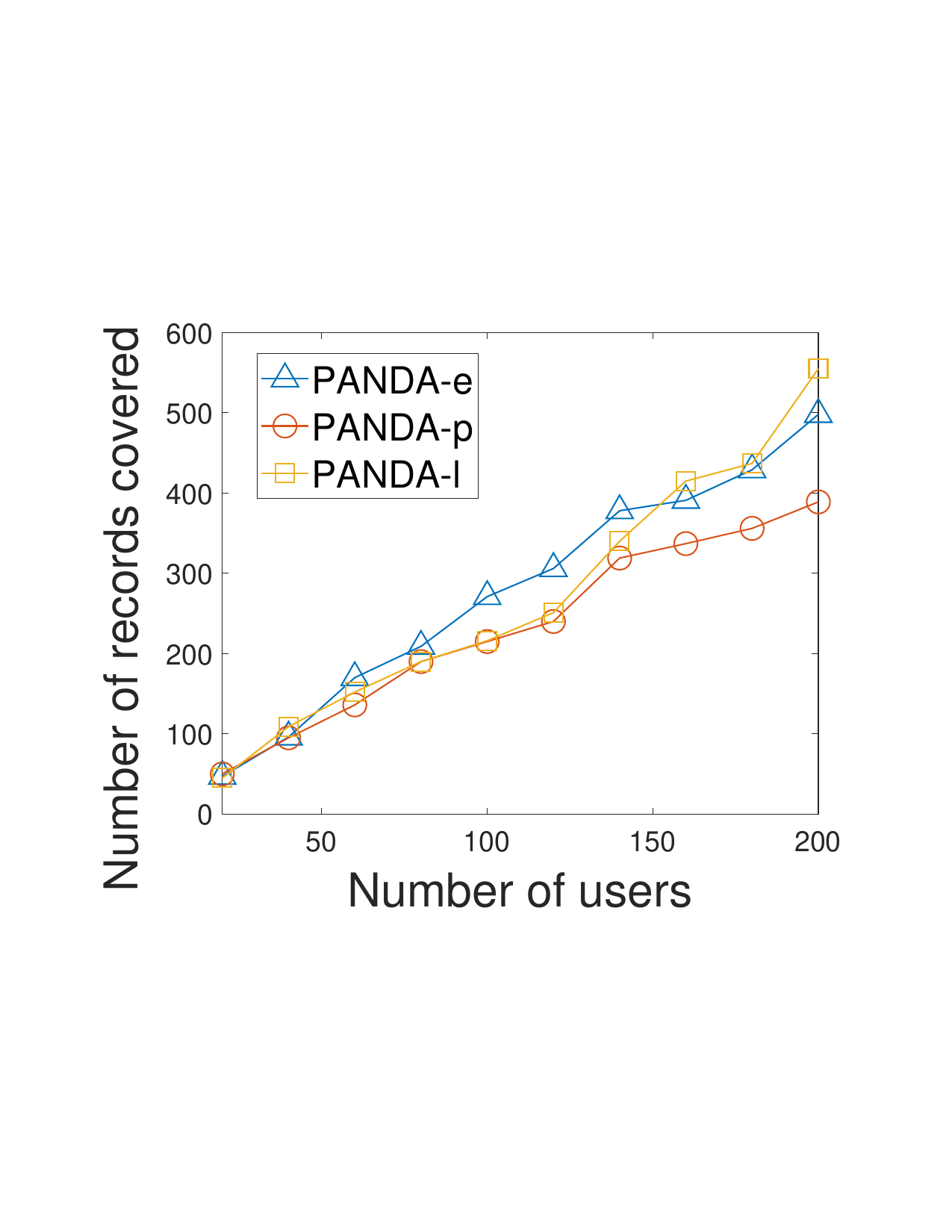}}
  \subfigure[London]{
\includegraphics[width=0.32\textwidth, height = 0.10\textheight]{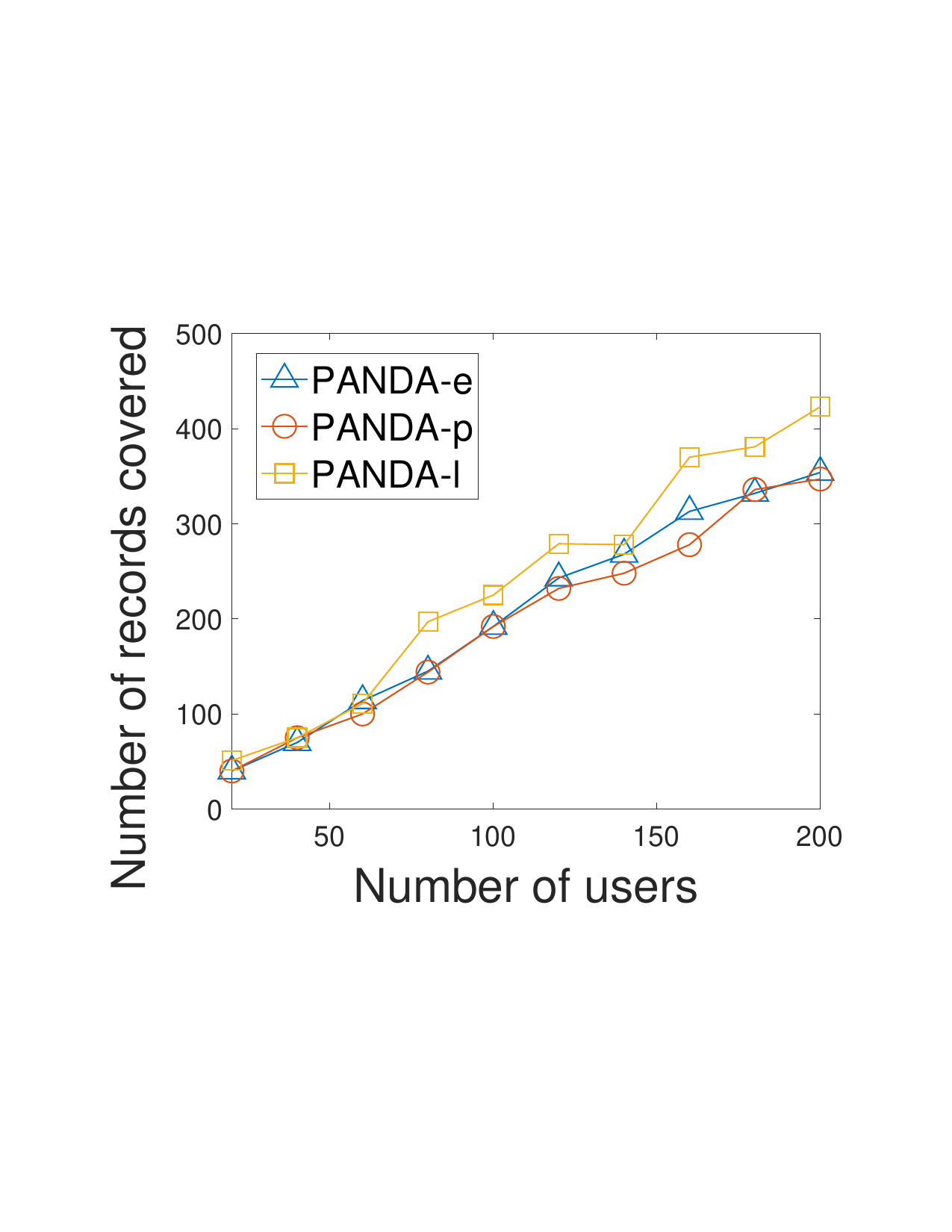}}
\vspace{-0.15in}
\end{minipage}
\caption{Number of decision variables vs. number of users.}
\label{fig:variablesvsnrusers}
\vspace{-0.25in}
\end{figure}

Unlike other methods, \textsc{PAnDA}'s computation time depends on both the domain size and the number of users, as each user contributes a personalized anchor set that expands the LP problem. To evaluate this user-dependent scalability, we measure the computation time of \textsc{PAnDA}-e, \textsc{PAnDA}-p, and \textsc{PAnDA}-l as the number of users increases from {\rev 20 to 200} across three datasets: Rome, NYC, and London (see Fig.\ref{fig:timevsnrusers} and Fig.\ref{fig:variablesvsnrusers}). This experiment is exclusive to \textsc{PAnDA}, as baseline methods do not perform user-level anchor selection and thus have user-independent computation time. 
As expected, \textsc{PAnDA}'s runtime grows with more users due to the expanding anchor set. Among variants, \textsc{PAnDA}-e is the fastest, followed by \textsc{PAnDA}-p and then \textsc{PAnDA}-l, reflecting differences in anchor sparsity: \textsc{PAnDA}-e favors localized anchors, while \textsc{PAnDA}-l selects broader sets, yielding denser LPs. 

{\rev Notably, \textsc{PAnDA}'s computation time is largely driven by the size of the global anchor set, which aggregates anchors from all users. When this set grows close to the full domain—e.g., due to a large user population with many anchors per user—\textsc{PAnDA} may lose its computational advantage over LP-based methods. However, this can be mitigated by tuning anchor selection parameters such as the decay rate $\lambda$, scaling factor $\alpha$, and distance threshold $\gamma$ to limit the number of anchors per user.}

While LP+CA, EM, and EM+BR offer lower computational costs, this advantage comes at the expense of higher utility loss or greater mDP violation rates, as evidenced in Table~\ref{Tb:exp:ULscalability} {\rev and Table~\ref{Tb:exp:ULscalability_realdata}}.




\subsubsection{Utility loss.} We compare the utility loss of different data perturbation methods on the three roadmap datasets. {\rev Utility loss is defined as the expected discrepancy in travel costs between the true location and the perturbed location, measured with respect to a set of randomly placed tasks. Specifically, the utility loss incurred by reporting a perturbed record $y$ instead of the true record $x$ is computed as: $\textstyle c_{x,y} = \sum_{x_{\mathrm{task}} \in \mathcal{X}} p_{x_{\mathrm{task}}} \left| \text{travel}(x, x_{\mathrm{task}}) - \text{travel}(y, x_{\mathrm{task}}) \right|$, 
where $p_{x_{\mathrm{task}}}$ denotes the probability of each task location $x_{\mathrm{task}} \in \mathcal{V}$, and $\text{travel}(x, x_{\mathrm{task}})$ (resp. $\text{travel}(y, x_{\mathrm{task}})$) represents the travel cost from the true location $x$ (resp. perturbed location $y$) to the task location $x_{\mathrm{task}}$.
}

Tables~\ref{Tb:exp:ULscalability} and \ref{Tb:exp:ULscalability_realdata} report the utility loss (in meters) for each algorithm as the secret domain size $K = |\mathcal{X}|$ increases from 500 to 5,000, under uniform and real user location distributions, respectively. Our approach, \textsc{PAnDA}, shows competitive utility loss across all datasets compared to all other methods. In Table \ref{Tb:exp:ULscalability}, on average across all domain sizes, \textsc{PAnDA}-e achieves the high utility improvement, reducing utility loss by {\bl82.37\%} over EM, {\bl78.32\%} over EM+BR, and {\bl70.16\%} over LP+EM. \textsc{PAnDA}-p follows closely, with improvements of {\bl82.98\%}, {\bl79.08\%}, and {\bl71.73\%} over the same baselines. \textsc{PAnDA}-l also demonstrates substantial gains, reducing utility loss by {\bl82.44\%} compared to EM, {\bl78.42\%} compared to EM+BR, and {\bl68.73\%} compared to LP+EM. In Table \ref{Tb:exp:ULscalability_realdata}, on average across all domain sizes, \textsc{PAnDA}-e achieves the high utility improvement, reducing utility loss by {\bl 80.75\%} over EM, {\bl 76.39\%} over EM+BR, and {\bl 69.27\%} over LP+EM. \textsc{PAnDA}-p follows closely, with improvements of {\bl 80.77\%}, {\bl 76.47\%}, and {\bl 69.45\%} over the same baselines. \textsc{PAnDA}-l also demonstrates substantial gains, reducing utility loss by {\bl 79.95\%} compared to EM, {\bl 75.46\%} compared to EM+BR, and {\bl 68.46\%} compared to LP+EM. Note that while LP+BD achieves the lowest utility loss at a domain size of 500, it fails to scale to domain sizes of 1,000 or more due to its high computational overhead.

\vspace{-0.00in}
\begin{table}[t]
\centering
\small 
\scriptsize 
\setlength{\tabcolsep}{1pt}  
\begin{tabular}{ c|c|c|c|c|c|c}
\toprule
\multicolumn{7}{ c  }{Rome}\\ 
\cline{1-7}
\multicolumn{1}{ c|  }{Method}  & $K = 500$ & $K = 1,000$  & $K = 2,000$  & $K = 3,000$& $K = 4,000$& $K = 5,000$\\
\hline
\hline
\multicolumn{1}{ c|  }{EM} & 1315.27±98.35 & 1322.43±85.16  & 1402.29±95.17  & 1360.51±74.82 & 1412.43±97.45 & 1387.18±72.94\\
\multicolumn{1}{ c|  }{LP} & - & -  & -  & 
- & - & -\\ 
\multicolumn{1}{ c|  }{LP+CA} & 2127.95±191.52 & 2169.68±95.67 & 2172.65±68.04 & 2170.7±127.4 & 2157.70±81.41 & 2200.32±99.75\\ 
\multicolumn{1}{ c|  }{LP+BD} & \textbf{167.34±46.52} & -  & - & - & - & -\\ 
\multicolumn{1}{ c|  }{LP+EM} & 792.81±74.78 & -  & - & 
-  & - & -\\ 
\multicolumn{1}{ c|  }{EM+BR} & 964.25±36.04 & 1019.93±55.25 & 1179.39±52.41  & 1142.15±82.24 & 1152.90±51.86 & 1170.21±81.78 \\ 
\hline
\multicolumn{1}{ c|  }{ \textbf{\textsc{PAnDA}-e}} & 238.77±39.76 & 251.64±16.35  & 269.88±55.63  & 277.88±39.74 & 267.04±54.35 & 275.01±40.15\\
\multicolumn{1}{ c|  }{ \textbf{\textsc{PAnDA}-p}} & 243.59±44.33 & 256.87±33.75  & \textbf{269.10±48.16} & \textbf{272.35±23.91} & 273.93±56.38 & 276.11±40.69\\
\multicolumn{1}{ c|  }{ \textbf{\textsc{PAnDA}-l}} & 249.74±55.41 & \textbf{240.81±20.73}  & 271.27±43.45  & 289.37±73.14 & 277.36±43.29 & 296.55±73.12 \\
\toprule
\multicolumn{7}{ c  }{NYC}\\ 
\cline{1-7}
\multicolumn{1}{ c|  }{Method}  & $K = 500$ & $K = 1,000$  & $K = 2,000$  & $K = 3,000$& $K = 4,000$& $K = 5,000$\\
\hline
\hline
\multicolumn{1}{ c|  }{EM} & 1472.57±144.95 & 1651.86±180.61  & 1849.84±157.82  & 1828.18±169.63 & 1894.24±156.53 & 1907.32±153.93\\
\multicolumn{1}{ c|  }{LP} & -  & - & - & 
- & -  & -\\ 
\multicolumn{1}{ c|  }{LP+CA} & 2121.57±121.71 & 2126.75±89.13 & 2108.98±54.97 & 2134.18±59.51 & 2199.11±90.60 & 2170.25±83.22\\ 
\multicolumn{1}{ c|  }{LP+BD} & \textbf{174.25±58.91} & - & -  & - & -  & -\\ 
\multicolumn{1}{ c|  }{LP+EM} & 817.66±96.23  & - & - & 
-  & -  & -\\ 
\multicolumn{1}{ c|  }{EM+BR} & 1353.97±116.35 & 1411.72±95.63 & 1419.84±84.85  & 1446.86±48.38  & 1467.51±122.15 & 1457.32±47.42\\ 
\hline
\multicolumn{1}{ c|  }{ \textbf{\textsc{PAnDA}-e}}& 246.41±105.18 & 304.14±89.95  & 336.08±101.03 & \textbf{301.25±104.75} & 320.44±63.98 & 344.81±82.48
 \\
\multicolumn{1}{ c|  }{ \textbf{\textsc{PAnDA}-p}} & 212.48±90.52 & \textbf{251.65±59.26}  & \textbf{275.34±86.68}  & 319.86±115.08 & 339.58±105.44 & 346.97±112.33\\
\multicolumn{1}{ c|  }{ \textbf{\textsc{PAnDA}-l}} & 257.34±82.15 & 279.24±72.28  & 294.49±103.75  & 329.14±106.46 & 322.86±107.63 & 324.47±116.21\\
\hline
\toprule
\multicolumn{7}{ c  }{London}\\ 
\cline{1-7}
\multicolumn{1}{ c|  }{Method}  & $K = 500$ & $K = 1,000$  & $K = 2,000$ & $K = 3,000$& $K = 4,000$& $K = 5,000$\\
\hline
\hline
\multicolumn{1}{ c|  }{EM} & 1364.90±96.62 & 1422.15±115.36  & 1551.58±86.94  & 1643.75±80.41 & 1644.96±58.94 & 1689.06±80.06\\
\multicolumn{1}{ c|  }{LP} & -  & - & - & 
- & -  & -\\ 
\multicolumn{1}{ c|  }{LP+CA} & 1961.07±154.34 & 1928.75±71.72 & 1963.92±71.11 & 2080.37±66.07 & 2063.05±73.56 & 2167.44±74.19 \\ 
\multicolumn{1}{ c|  }{LP+BD} & \textbf{168.75±40.34} & - & -  & -& -  & -\\ 
\multicolumn{1}{ c|  }{LP+EM} & 776.87±59.83  & - & - & 
-  & -  & -\\ 
\multicolumn{1}{ c|  }{EM+BR} & 1148.13±48.75 & 1192.43±40.34  & 1295.75±62.38  & 1306.15±44.17  & 1358.14±68.48 & 1389.20±36.95\\ 
\hline
\multicolumn{1}{ c|  }{ \textbf{\textsc{PAnDA}-e}} & 227.23±60.42 & \textbf{184.22±72.28}  & 262.84±70.14  & 273.74±56.26 & 281.26±98.98 & 296.67±86.03\\
\multicolumn{1}{ c|  }{ \textbf{\textsc{PAnDA}-p}} & 218.90±53.05 & 192.78±74.17  & \textbf{229.77±44.35}  & 258.36±45.72 & 271.58±110.04 & 277.08±47.09\\
\multicolumn{1}{ c|  }{ \textbf{\textsc{PAnDA}-l}} & 239.50±52.27 & 241.65±46.92  & 242.54±111.48  & \textbf{247.78±67.39} & 264.19±101.97 & 270.08±42.22\\
\hline
\end{tabular}
\vspace{0.00in}
\caption{Utility loss (in meters) of different algorithms (uniform user location distribution). Mean$\pm$1.96$\times$std. deviation.}
\label{Tb:exp:ULscalability}
\vspace{-0.25in}
\end{table}

\vspace{-0.00in}
\begin{table}[t]
\centering
\small 
\scriptsize 
\setlength{\tabcolsep}{1pt}  
\begin{tabular}{ c|c|c|c|c|c|c}
\toprule
\multicolumn{7}{ c  }{Rome}\\ 
\cline{1-7}
\multicolumn{1}{ c|  }{Method}  & $K = 500$ & $K = 1,000$  & $K = 2,000$ & $K = 3,000$& $K = 4,000$& $K = 5,000$\\
\hline
\hline
\multicolumn{1}{ c|  }{EM} & 1284.57±88.66 & 1291.24±105.95 & 1401.66±74.91 & 1380.45±85.99 & 1409.63±113.08 & 1375.11±83.01 \\
\multicolumn{1}{ c|  }{LP} & - & - & - & - & - & - \\ 
\multicolumn{1}{ c|  }{LP+CA} & 2114.77±228.56 & 2146.51±94.29 & 2156.15±68.28 & 2211.00±138.64 & 2148.34±86.01 & 2176.42±96.01 \\ 
\multicolumn{1}{ c|  }{LP+BD}  & \textbf{165.35±44.11} & - & - & - & - & - \\ 
\multicolumn{1}{ c|  }{LP+EM} & 787.33±72.01 & - & - & - & - & - \\ 
\multicolumn{1}{ c|  }{EM+BR} & 987.44±27.17 & 1037.52±62.61 & 1168.49±55.43 & 1159.00±62.87 & 1148.25±59.58 & 1153.99±82.64 \\  
\hline
\multicolumn{1}{ c|  }{ \textbf{\textsc{PAnDA}-e}} & 241.97±46.32 & \textbf{244.24±16.59} & 262.12±41.85 & 264.79±41.61 & 285.67±41.70 & \textbf{272.18±34.80} \\ 
\multicolumn{1}{ c|  }{ \textbf{\textsc{PAnDA}-p}} & \textbf{240.53±38.80} & 254.01±41.85 & \textbf{256.69±39.44} & \textbf{264.48±25.87} & \textbf{267.02±54.64} & 282.85±48.60 \\ 
\multicolumn{1}{ c|  }{ \textbf{\textsc{PAnDA}-l}} &248.30±55.28 & 249.98±20.21 & 269.71±33.07 & 297.14±79.77 & 284.43±47.39 & 283.20±55.28 \\ 
\hline
\end{tabular}
\vspace{0.00in}
\caption{Utility loss (in meters) of different algorithms (Rome taxicab location dataset). Mean$\pm$1.96$\times$std. deviation. }
\label{Tb:exp:ULscalability_realdata}
\vspace{-0.25in}
\end{table}

\begin{figure}[t]
\centering
\hspace{0.00in}
\begin{minipage}{0.50\textwidth}
  \subfigure[Rome]{
\includegraphics[width=0.32\textwidth, height = 0.10\textheight]{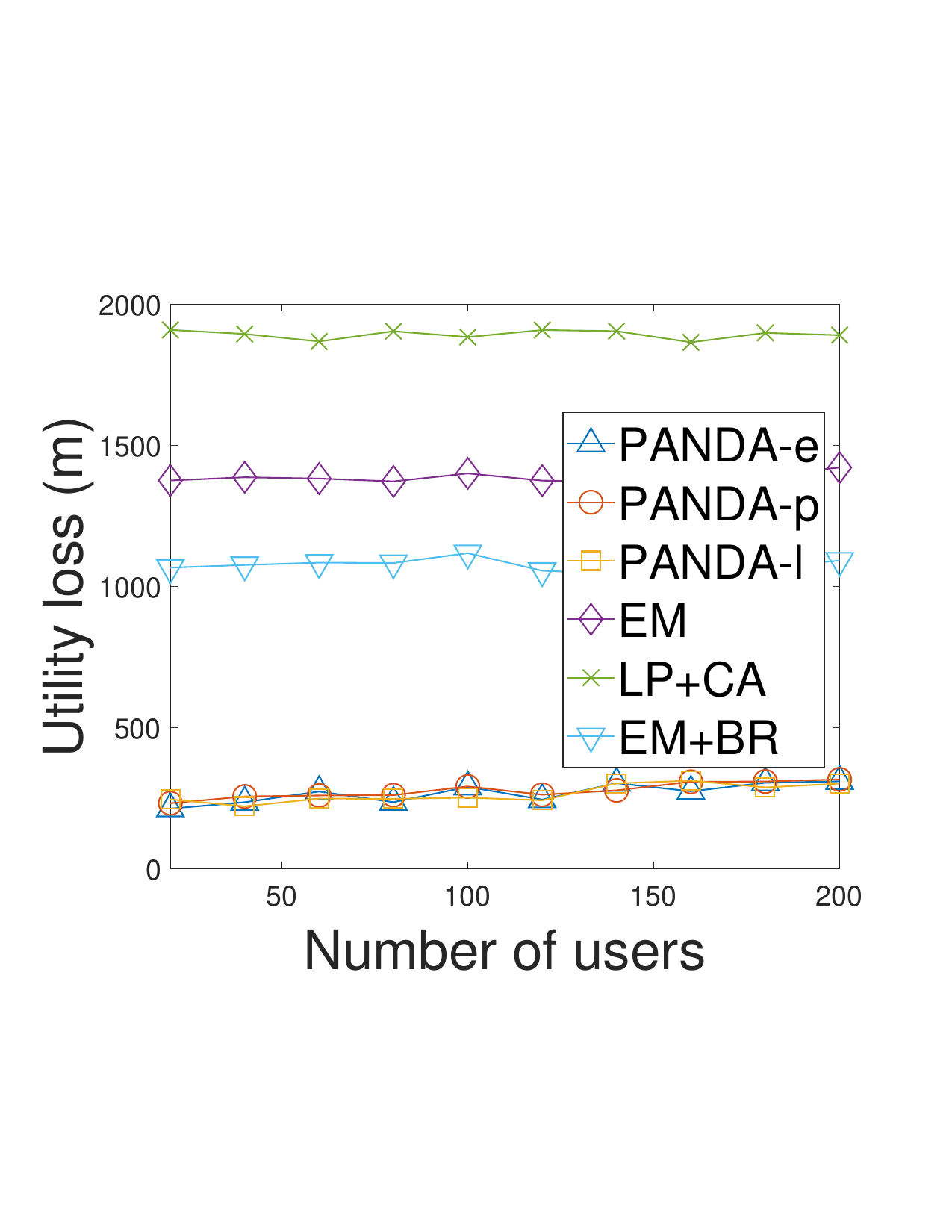}}
  \subfigure[NYC]{
\includegraphics[width=0.32\textwidth, height = 0.10\textheight]{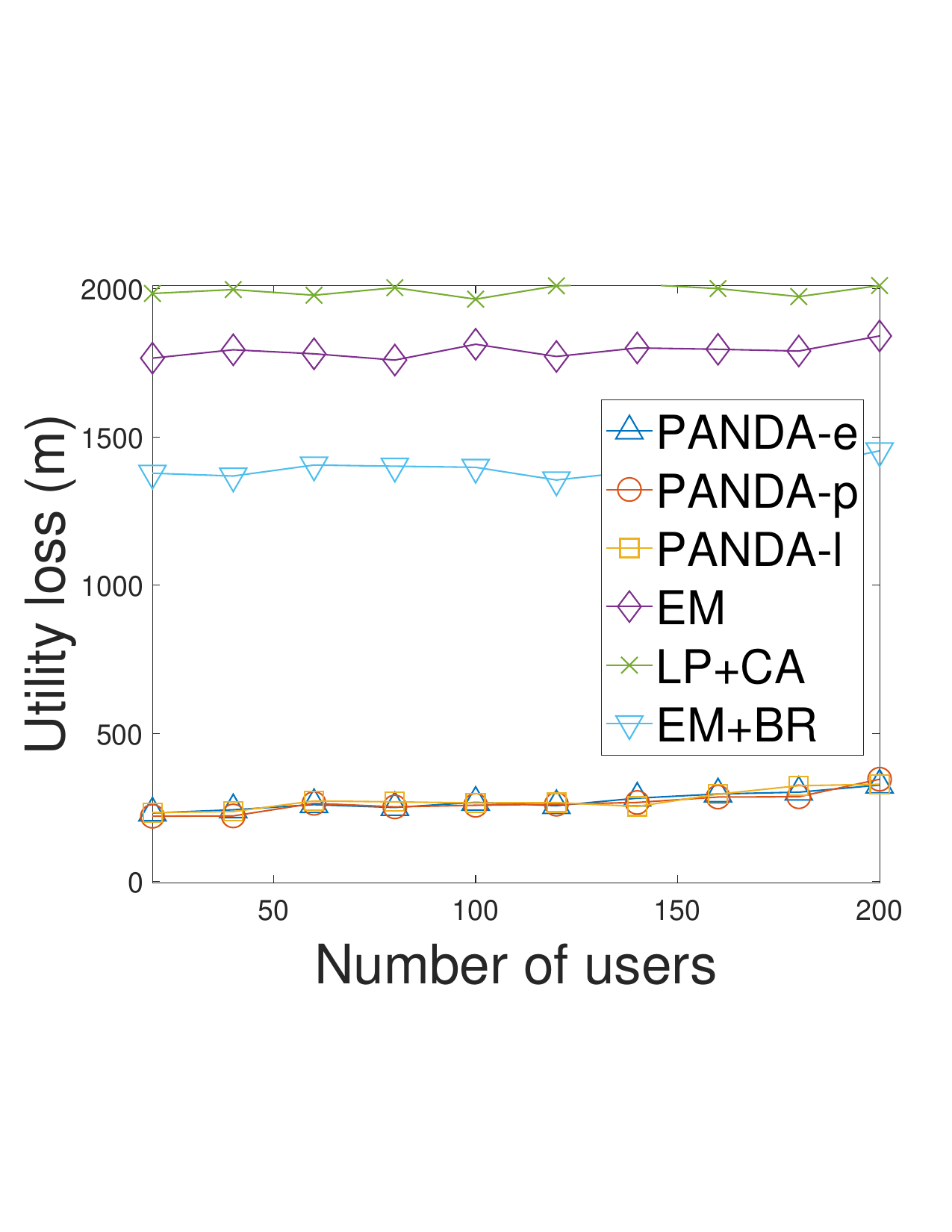}}
  \subfigure[London]{
\includegraphics[width=0.32\textwidth, height = 0.10\textheight]{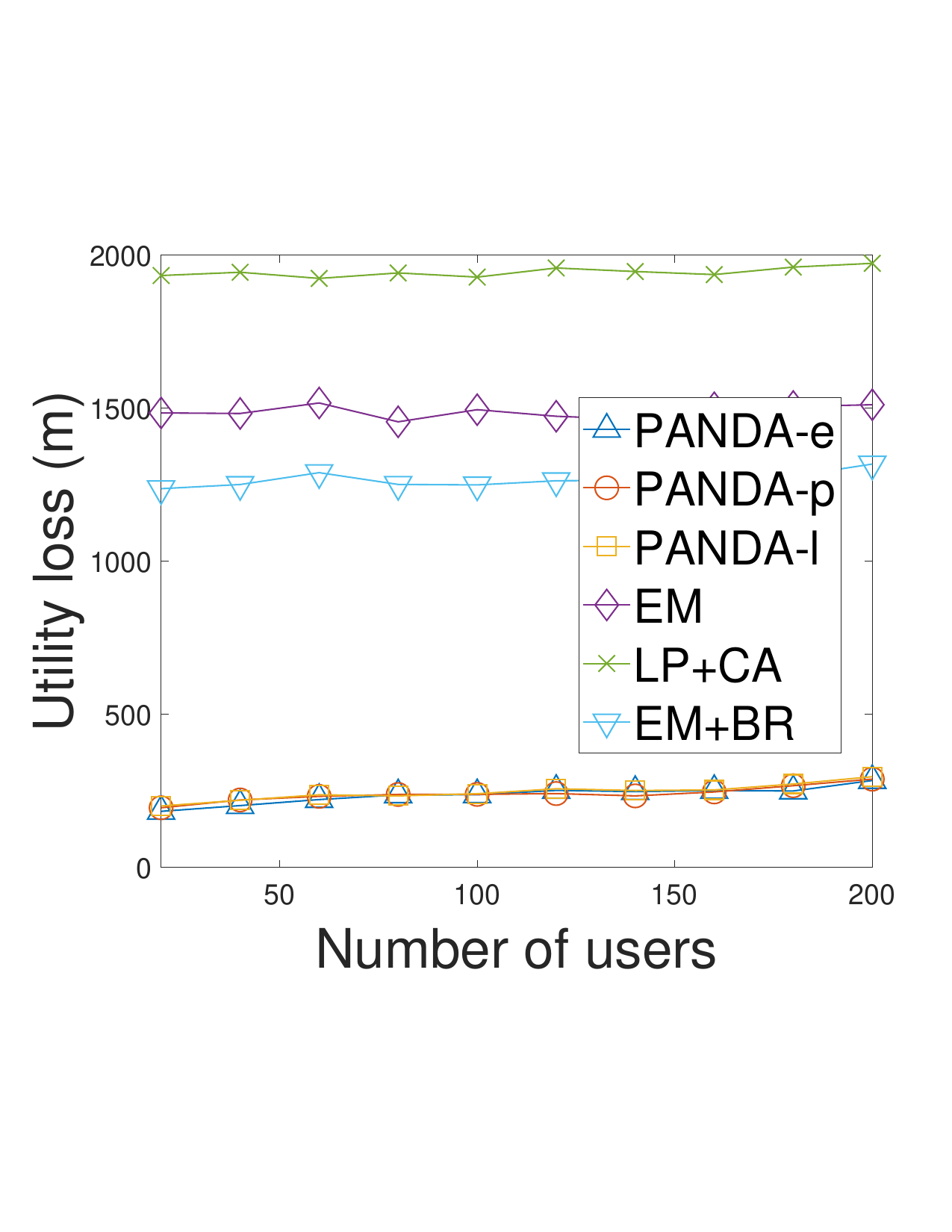}}
\vspace{-0.20in}
\end{minipage}
\caption{Utility loss vs. number of users.}
\label{fig:ULvsnrusers}
\vspace{-0.20in}
\end{figure}

\begin{figure}[t]
\centering
\hspace{0.00in}
\begin{minipage}{0.50\textwidth}
  \subfigure[Rome]{
\includegraphics[width=0.32\textwidth, height = 0.10\textheight]{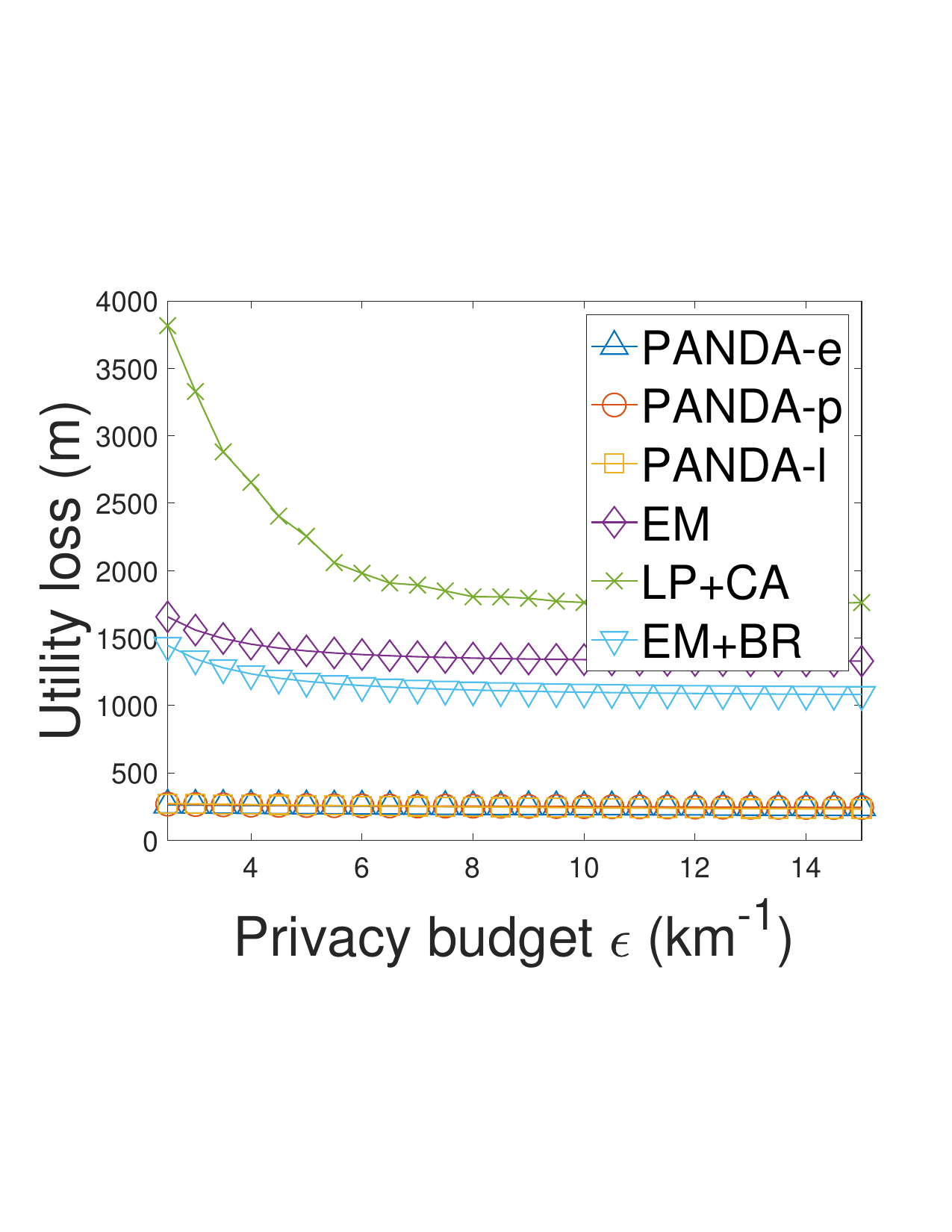}}
  \subfigure[NYC]{
\includegraphics[width=0.32\textwidth, height = 0.10\textheight]{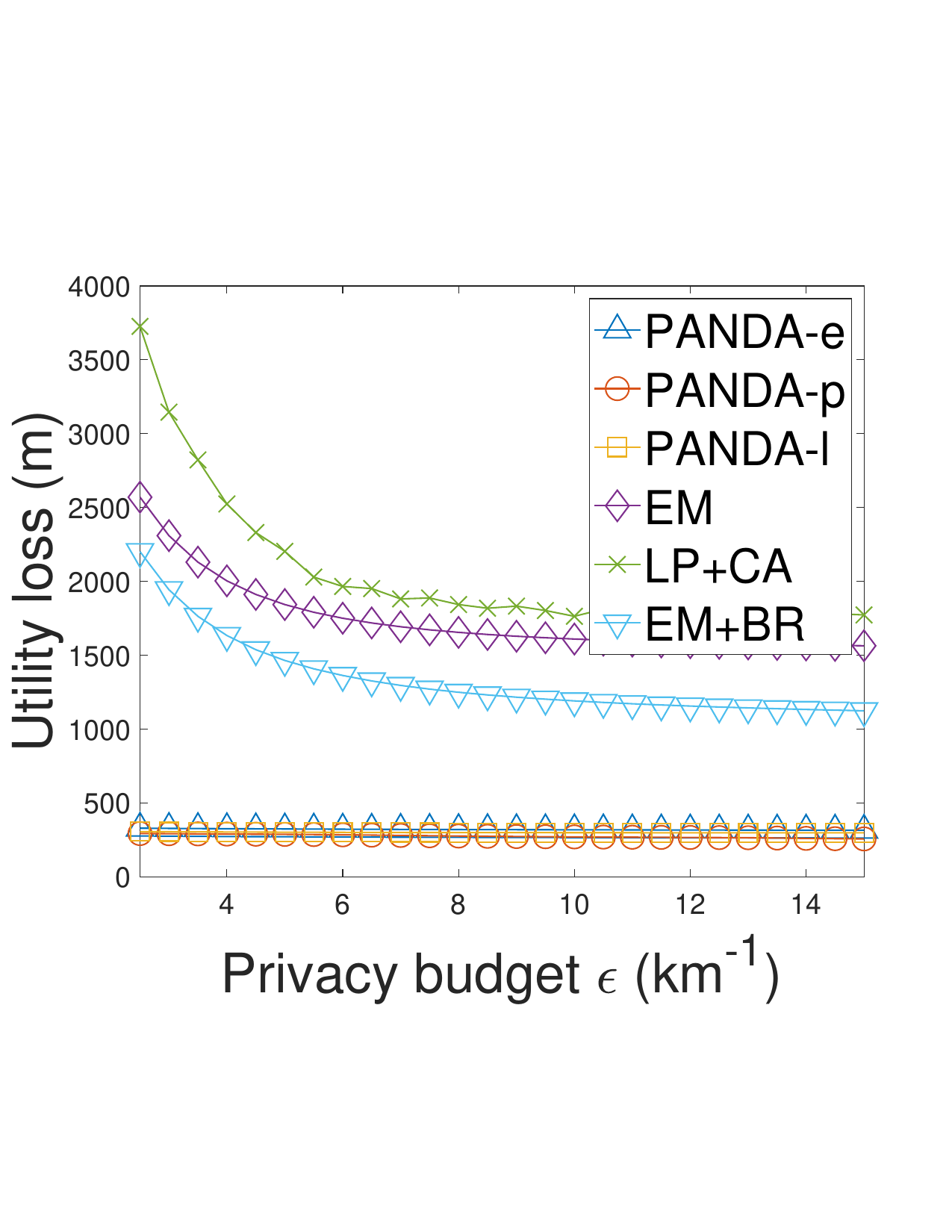}}
  \subfigure[London]{
\includegraphics[width=0.32\textwidth, height = 0.10\textheight]{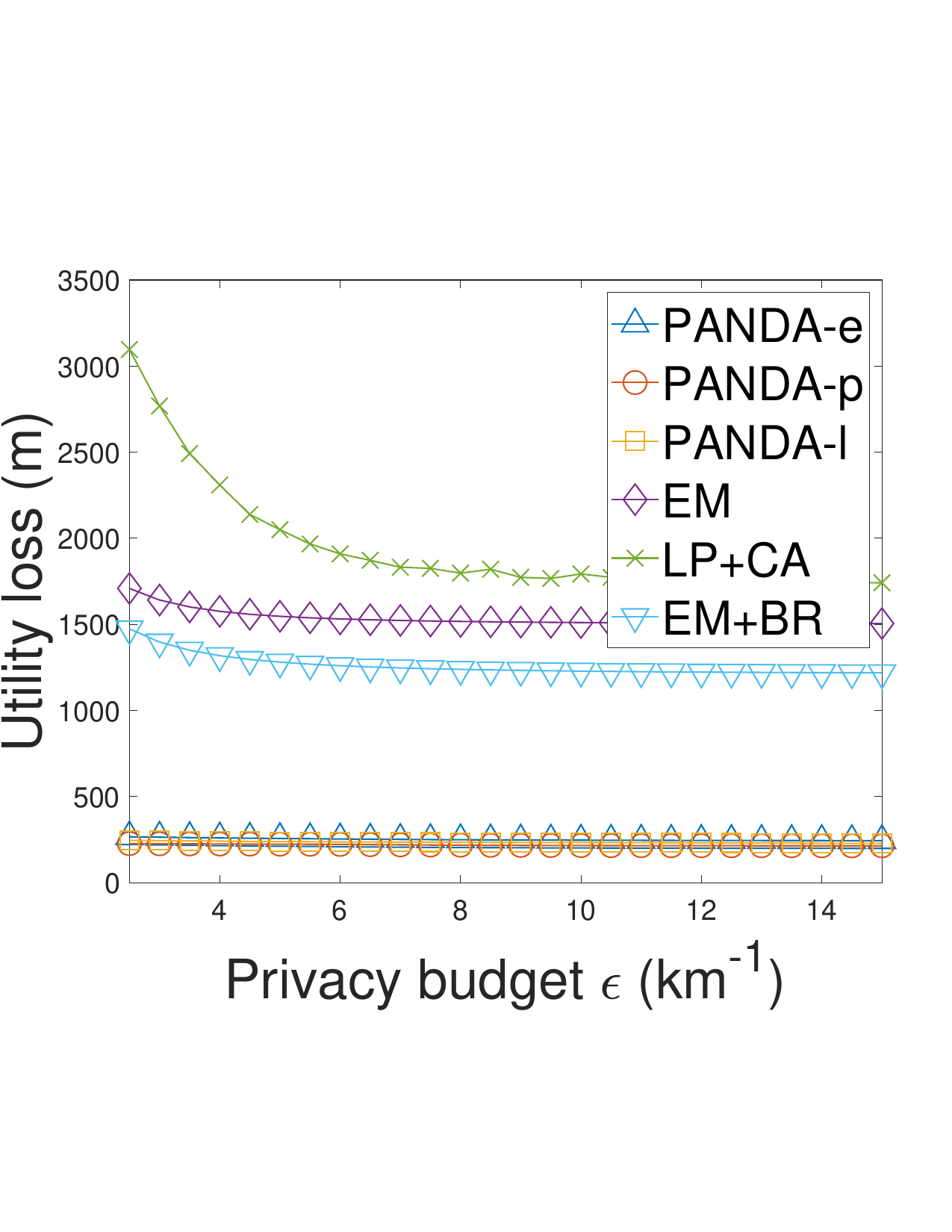}}
\vspace{-0.20in}
\end{minipage}
\caption{Utility loss vs. privacy budget$\epsilon$.}
\label{fig:ULvsbudget}
\vspace{-0.25in}
\end{figure}

Unlike EM and LP+CA mechanisms, which apply fixed or coarse perturbation strategies across the entire data domain, \textsc{PAnDA} selects personalized anchor sets based on each user's real record. These anchors capture the local geometry of the input space, allowing the server to optimize perturbation probabilities that more accurately reflect the utility sensitivity of nearby records. Furthermore, compared to LP+EM, which relies on static anchors and interpolation, \textsc{PAnDA}'s anchor selection is probabilistic and data-aware, reducing approximation errors. Unlike EM+BR, which applies EM noise and then remaps, \textsc{PAnDA} performs end-to-end utility optimization tailored to privacy constraints. As a result, \textsc{PAnDA} consistently achieves lower utility loss while preserving strong privacy guarantees. \looseness = -1


For theoretical interest, we compare \textsc{PAnDA}-e, \textsc{PAnDA}-p, and \textsc{PAnDA}-l with their respective lower bounds (derived from \textbf{Proposition~\ref{prop:ULbound}}) to evaluate how closely each method approaches the optimal utility. On average, the approximation ratios for \textsc{PAnDA}-e, \textsc{PAnDA}-p, and \textsc{PAnDA}-l are {\bl1.3608, 1.3708, and 1.3630}, respectively, across the three datasets and different domain size $K$. The \emph{approximation ratio} is defined as the ratio between the utility loss of the evaluated mechanism and the lower bound on utility loss. A ratio close to 1 indicates that the mechanism achieves near-optimal performance. Detailed results are presented in \textbf{Figures~\ref{fig:boundPAnDA_rome}, \ref{fig:boundPAnDA_nyc}, and~\ref{fig:boundPAnDA_london} in Section~\ref{sec:addexp} in Appendix}.

{\rev Based on our experimental results in Tables \ref{Tb:exp:time_scalability}--\ref{Tb:exp:ULscalability_realdata}, and Figures~\ref{fig:boundPAnDA_rome}--\ref{fig:boundPAnDA_london}, we also provide a guideline of choosing the three PAnDA variants: (1) \textbf{\textsc{PAnDA}-l (Logistic decay)} features a smoother decay and performs well when the privacy metric varies gradually with distance. It demonstrates stable performance and is relatively insensitive to hyperparameter choices. (2) \textbf{\textsc{PAnDA}-p (Power-law decay)} strikes a strong balance between locality and diversity in anchor selection. It often delivers high utility, particularly when the utility loss is sensitive to moderately distant records. (3) \textbf{\textsc{PAnDA}-e (Exponential decay)} yields more localized anchor sets, which can be advantageous when the utility loss is sharply concentrated around the true record. However, it may reduce anchor diversity and be less effective in sparse domains. As a general recommendation, we suggest using \textsc{PAnDA}-p as a default, especially in large-scale domains with non-uniform density. Ultimately, the choice of variant should be guided by the application's utility function and its sensitivity to anchor locality and diversity.}

{\rev Figure~\ref{fig:ULvsnrusers}(a)(b)(c) compares the utility loss of \textsc{PAnDA}-e, \textsc{PAnDA}-p, and \textsc{PAnDA}-l as the number of users increases from 20 to 200. On average, \textsc{PAnDA}-e achieves utility loss reductions of {\bl 83.37\%}, {\bl 79.27\%}, and {\bl 86.68\%} over EM, EM+BR, and LP+CA, respectively. \textsc{PAnDA}-p and \textsc{PAnDA}-l follow closely, with respective improvements of {\bl 83.21\%}, {\bl 79.07\%}, {\bl 86.56\%} and {\bl 83.15\%}, {\bl 78.99\%}, {\bl 86.50\%} over the same baselines. \looseness = -1}

{\rev Additionally, Figure \ref{fig:ULvsbudget}(a)(b)(c) compares the utility loss of different methods when $\epsilon$ is incresed from 2.5km$^{-1}$ to 15km$^{-1}$. As shown across all three datasets, our proposed methods (\textsc{PAnDA}-e, \textsc{PAnDA}-p, \textsc{PAnDA}-l) consistently outperform the baselines (EM, LP+CA, and EM-BR) in terms of utility, {\rev achieving at least {\bl 79.12\%} lower utility loss}. Moreover, we observe that the utility loss decreases as $\epsilon$ increases for all perturbation methods. This is expected, as a larger privacy budget $\epsilon$ permits the addition of less noise or distortion, thereby preserving more of the original data utility while relaxing the strictness of the privacy guarantee.}



\begin{figure}[t]
\centering
\hspace{0.00in}
\begin{minipage}{0.50\textwidth}
  \subfigure[Rome]{
\includegraphics[width=0.32\textwidth, height = 0.10\textheight]{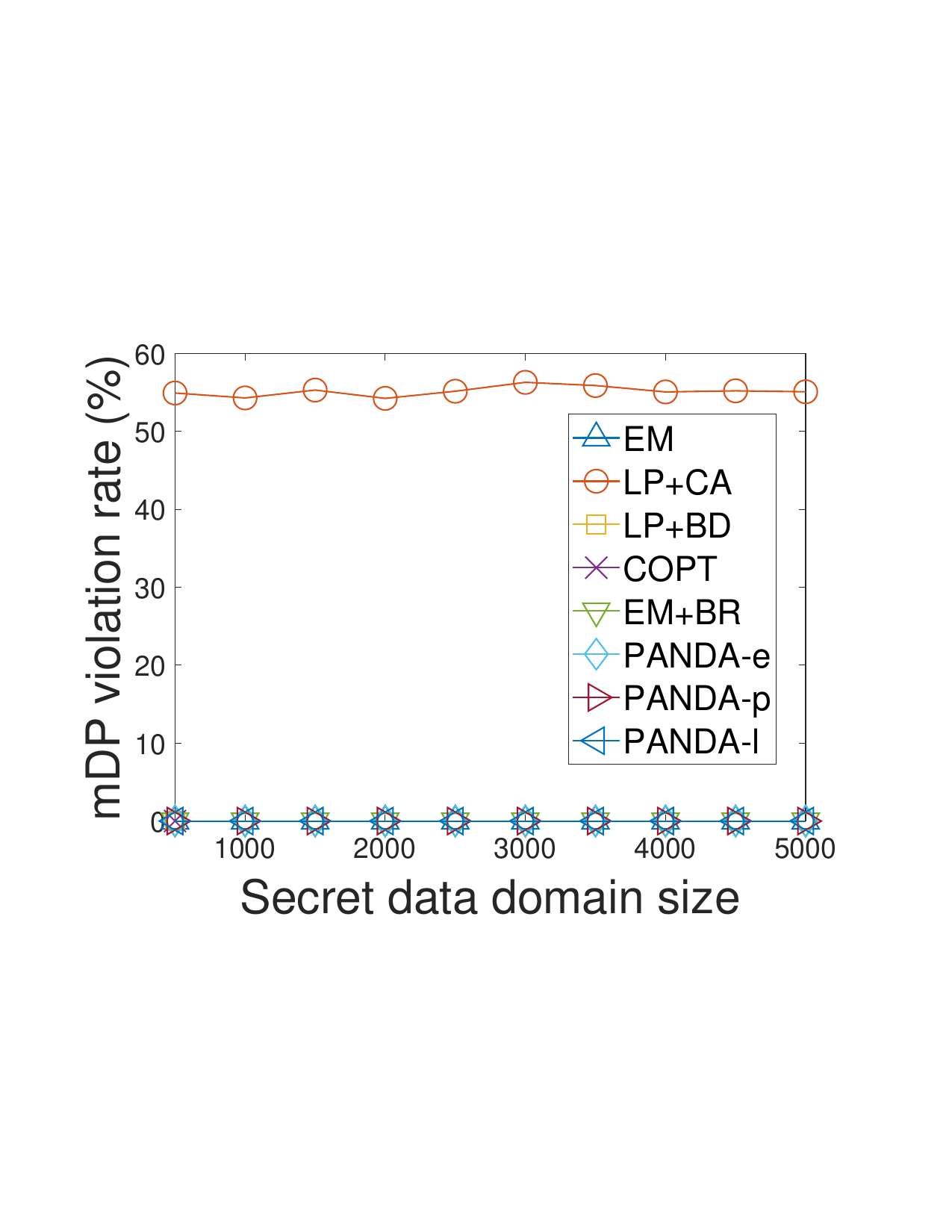}}
  \subfigure[NYC]{
\includegraphics[width=0.32\textwidth, height = 0.10\textheight]{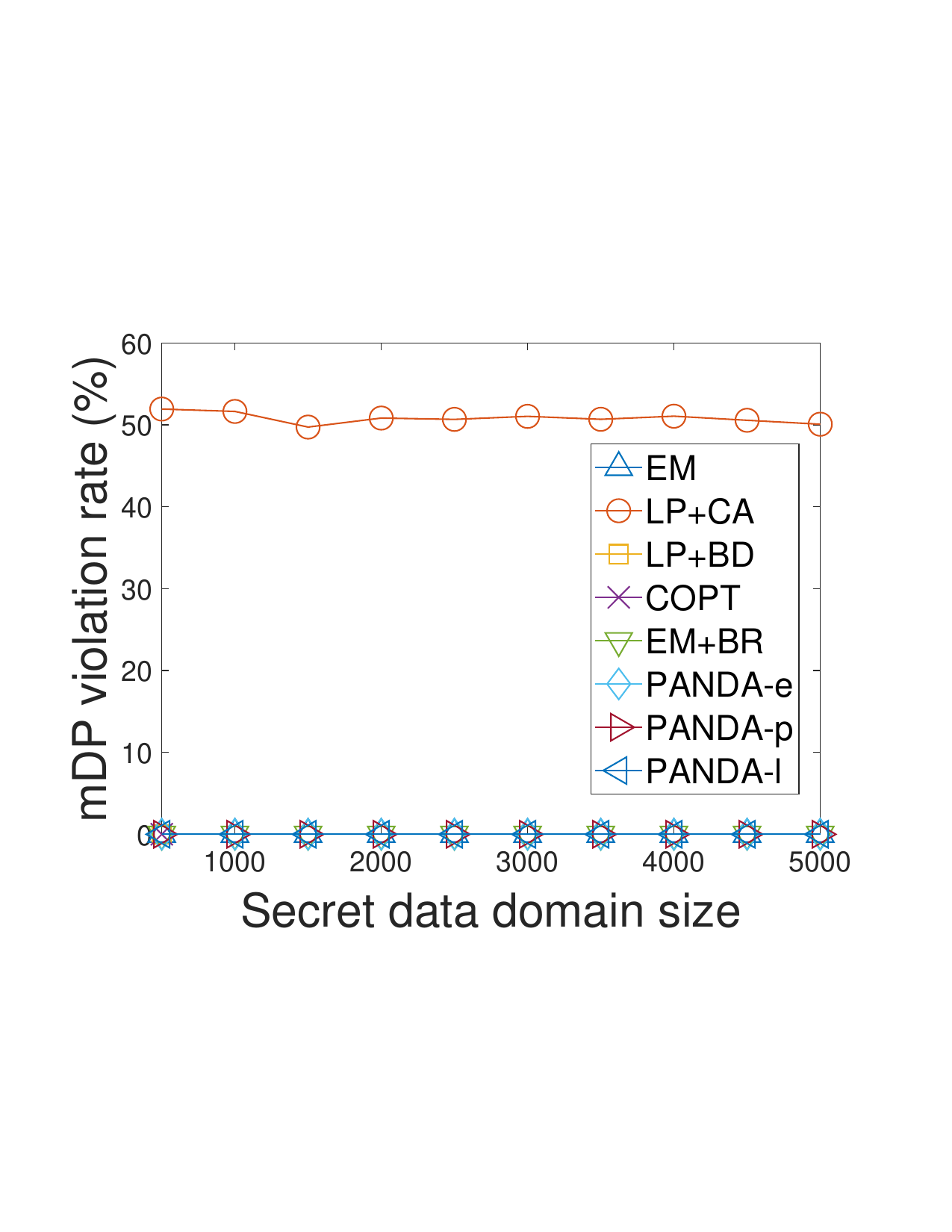}}
  \subfigure[London]{
\includegraphics[width=0.32\textwidth, height = 0.10\textheight]{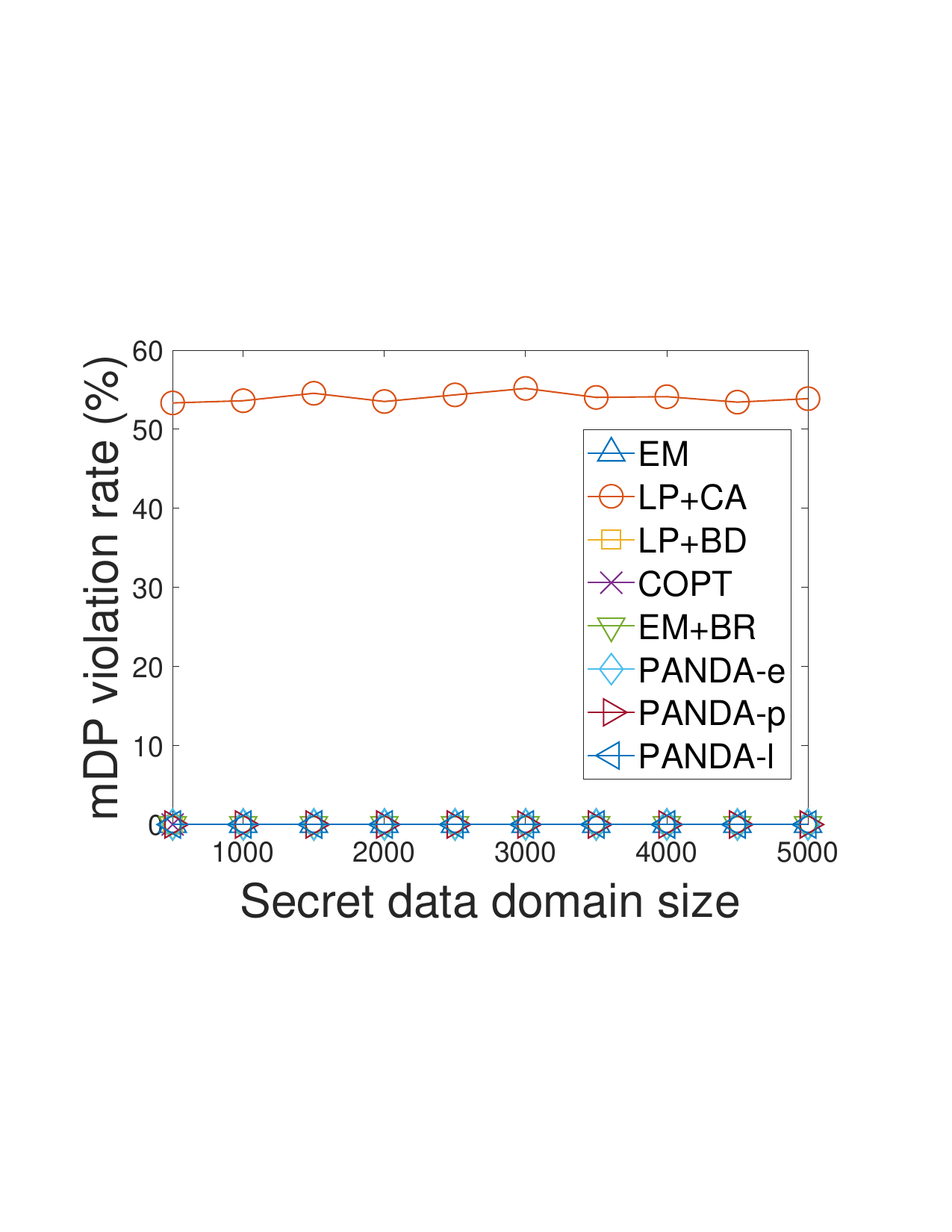}}
\vspace{-0.15in}
\end{minipage}
\caption{mDP violation rate.}
\label{fig:mDPfailture}
\vspace{-0.20in}
\end{figure}

\vspace{-0.05in}
\subsubsection{mDP Violation Rate.}  
Both \textsc{PAnDA} and LP+CA approximate the mDP constraints using surrogate perturbation vectors. Therefore, {\rev it is important to evaluate how often these surrogate-based constraints fail to satisfy the original $\epsilon$-mDP condition between real records, a metric we refer to as the \emph{mDP violation rate}}. 

Figures~\ref{fig:mDPfailture}(a)(b)(c) compare the mDP violation rates of different methods across the {\rev three datasets}. Due to the preserved safety margin, \textsc{PAnDA} consistently achieves a violation rate below $10^{-7}$, which is lower than the predefined threshold $\delta$, thereby demonstrating its strong reliability in meeting the probabilistic privacy guarantee. In contrast, LP+CA exhibits much higher violation rates, averaging {\bl53.33\%}, highlighting its limited robustness in satisfying mDP constraints.



\vspace{-0.08in}
\subsection{Privacy Budget Distribution of \textsc{PAnDA}}
\label{subsec:privacycosts}
\vspace{-0.02in}
\begin{figure}[t]
\centering
\hspace{0.00in}
\begin{minipage}{0.50\textwidth}
  \subfigure[Rome]{
\includegraphics[width=0.32\textwidth, height = 0.10\textheight]{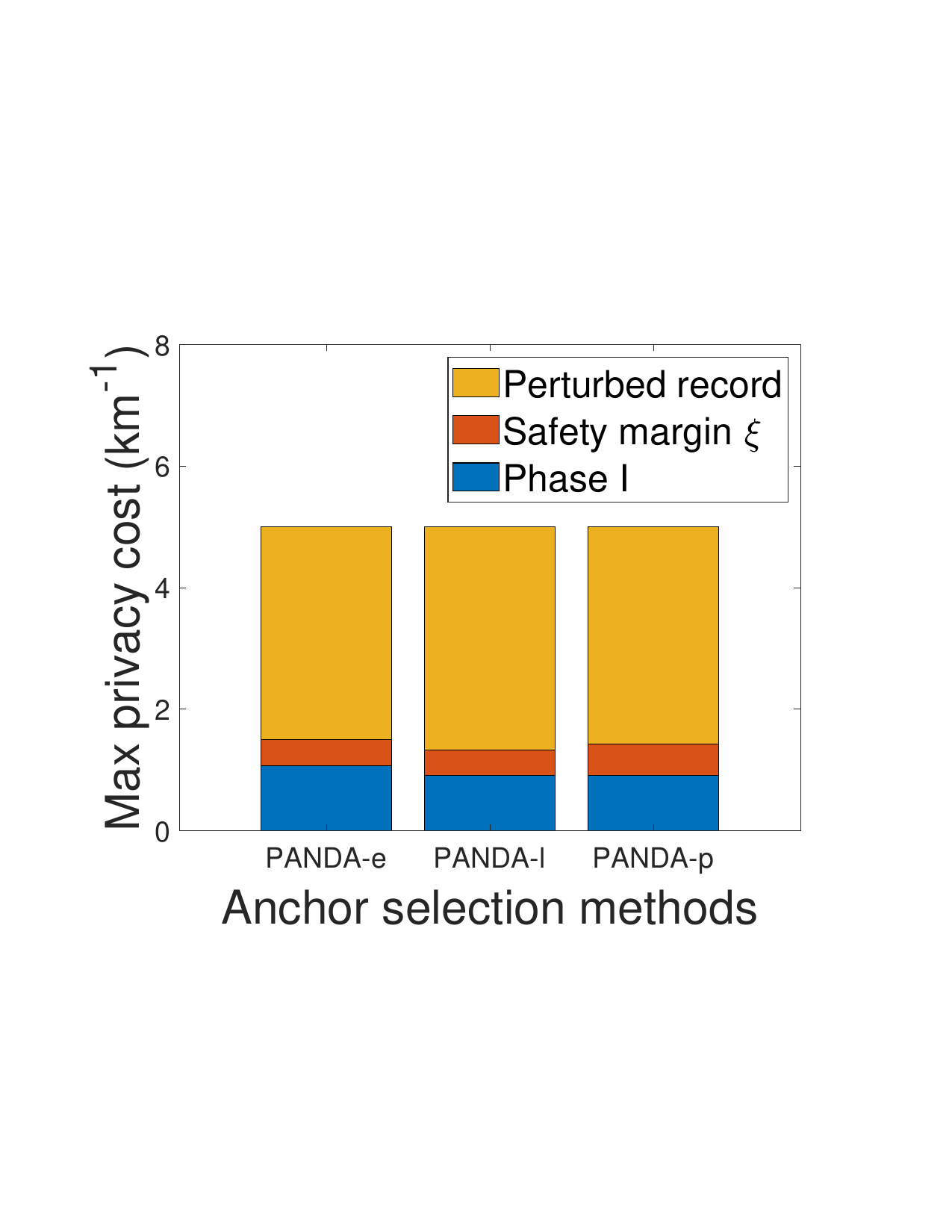}}
  \subfigure[NYC]{
\includegraphics[width=0.32\textwidth, height = 0.10\textheight]{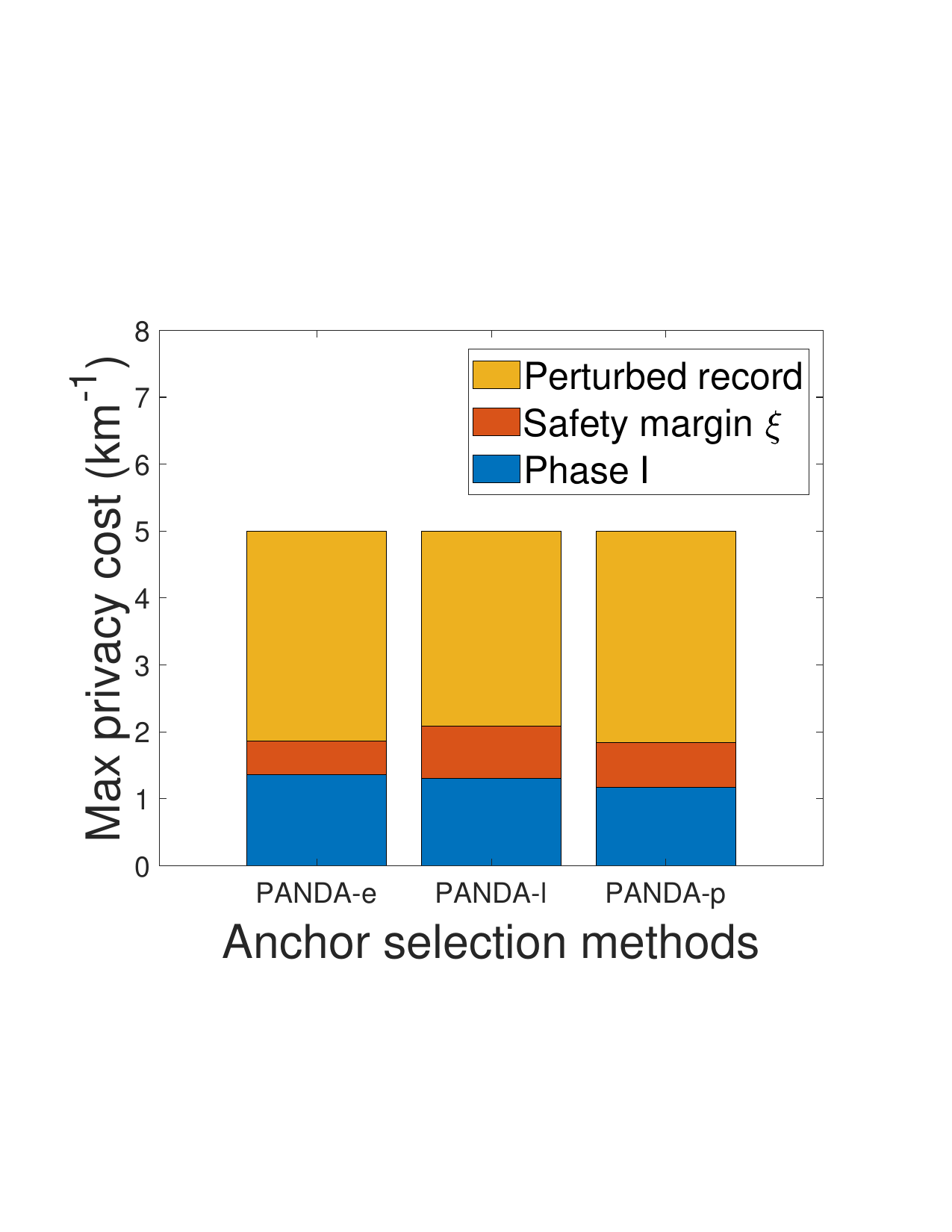}}
  \subfigure[London]{
\includegraphics[width=0.32\textwidth, height = 0.10\textheight]{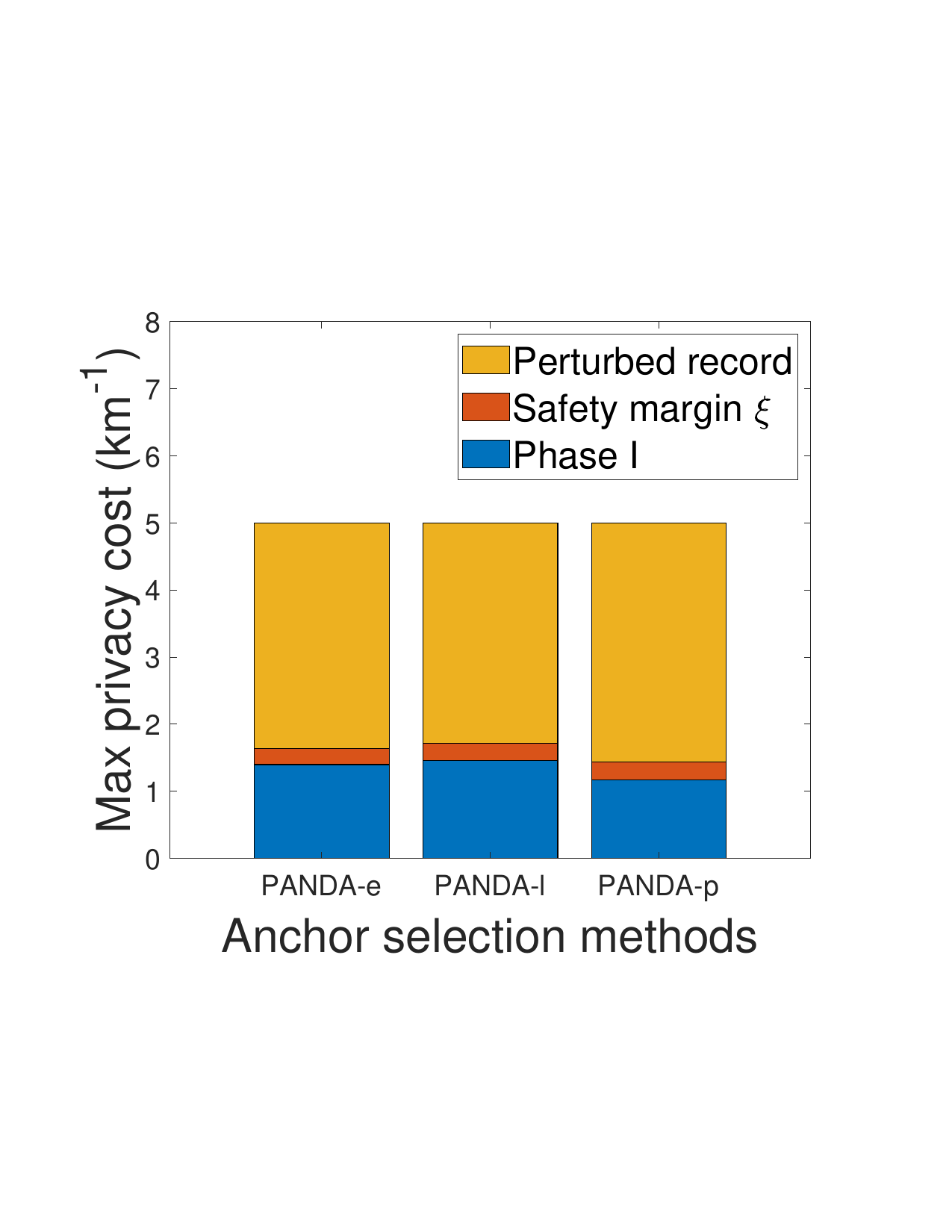}}
\vspace{-0.15in}
\end{minipage}
\caption{Privacy budget allocation across the two phases.}
\label{fig:budgetallocation}
\vspace{-0.25in}
\end{figure}

Finally, Fig.~\ref{fig:budgetallocation}(a)(b)(c) show the distribution of the privacy budget among the three components of \textsc{PAnDA}: anchor selection in Phase I, and the reserved safety margin $\xi$ and perturbation vector optimization in Phase II. The results show that a majority of the privacy budget, {\bl67.03\%} on average, is preserved for Phase II, enabling sufficient flexibility in optimizing utility under the mDP constraints. In contrast, anchor selection consumes {\bl23.91\%} of the budget on average, while the safety margin accounts for {\bl9.06\%}. This allocation reflects a carefully balanced design: while both anchor selection and the safety margin are important for maintaining privacy under probabilistic mDP, over-allocating budget to these components would compromise the effectiveness of the final perturbation mechanism. The results confirm that \textsc{PAnDA} can effectively control privacy leakage in Phase I and judiciously bounds the safety margin, preserving a sufficient portion of $\epsilon$ for utility-aware perturbation optimization. \looseness = -1

\DEL{
\vspace{-0.00in}
\vspace{-0.00in}
\section{Related Works}
\label{sec:related}
\vspace{-0.00in}
The study of record privacy can date back to almost two decades ago, when Gruteser and Grunwald \cite{Gruteser-MobiSys2003} introduced the notion of \emph{record $k$-anonymity}. This notion has been extended to  obfuscate record by means of \emph{$l$-diversity}, i.e., a user's record cannot be distinguished from other $l-1$ records \cite{Yu-NDSS2017}. However, $l$-diversity oversimplifies the threat model by assuming that all the dummy records are equally likely to be the real record from the attacker's view, making it vulnerable to various advanced inference attacks \cite{Andres-CCS2013, Yu-NDSS2017, Qiu-TMC2022}. \looseness = -1

In recent years, Andr{\'e}s \emph{et al.} \cite{Andres-CCS2013} proposed a more practical privacy criterion \emph{Geo-Ind} based on the well-known statistical notion \emph{differential privacy (DP)}. Following this work, a large body of record perturbation strategies have been proposed, e.g., \cite{Mendes-PETS2020, Simon-EuroSPh-II019, Wang-WWW2017, Shokri-TOPS2017, Yu-NDSS2017, Xiao-CCS2015, Qiu-CIKM2020, Qiu-TMC2022, Qiu-SIGSPATIAL2022, Al-Dhubhani-PETS2017, Wang-CIDM2016}. For instance, besides first proposing the notion of Geo-Ind, Andr{\'e}s \emph{et al.}  \cite{Andres-CCS2013} developed a geo-perturbation technique to achieve Geo-Ind by adding noise to the actual record drawn from a polar Laplacian distribution. Moreover, considering that geo-perturbation inevitably introduces errors to users' reported records and causes quality loss in LBS, a key issue that has been discussed by many works is how to trade off quality of service and privacy. For example, given the restriction of Geo-Ind, Bordenabe \emph{et al.}  \cite{Bordenabe-CCS2014} proposed an optimization framework for geo-perturbation to minimize the quality loss for every single user. Chatzikokolakis \emph{et al.} \cite{Chatzikokolakis-PoPETs2015} defines privacy mass over the points of interest 
and decides the privacy budget $\epsilon$ of Geo-Ind for a record given the local features of each area. Wang \emph{et al.}  \cite{Wang-WWW2017} considered the quality loss generated by all the users (users) as a whole and proposed a record privacy-preserving target assignment algorithm to minimize the total traveling cost. 

Most of those works follow an LP framework, of which the time complexity is relatively high if no decision variable/constraint reduction is applied. For the sake of time efficiency, the existing works limit the scope of geo-perturbation to either low granularity in a large-scale region (e.g., \cite{Yu-NDSS2017} targets a city-scale region and discretize record field to a grid with each cell size 766m$\times$766m), or high granularity in a small-scale region (e.g., \cite{Qiu-TMC2022} targets a small town with a record point sampled in every 500 square meters). }

\vspace{-0.12in}
\section{Related Work}
\label{sec:related}
\vspace{-0.02in}

Since the introduction of mDP by \cite{Chatzikokolakis-PETS2013}, substantial research efforts have focused on developing data perturbation mechanisms. These mechanisms generally fall into two categories: \textbf{predefined noise distribution mechanisms} and \textbf{optimization-based mechanisms}.

\vspace{-0.15in}
\subsection{Predefined Noise Distribution Mechanisms}

The \textbf{Laplace mechanism}, one of the most widely used data perturbation mechanisms, was originally designed for DP~\cite{Dwork-TC2006} and later adapted to mDP by Chatzikokolakis et al.~\cite{Chatzikokolakis-PETS2013}. Instead of scaling noise to global sensitivity (as in standard DP), \cite{Dwork-TC2006} uses \emph{metric distance} $d_{\mathcal{X}}$ between records to set the Laplace noise scale to $\frac{d_{\mathcal{X}}}{\epsilon}$, ensuring outputs for nearby records are statistically indistinguishable.  
This adaptation underpins \emph{Geo-Indistinguishability}~\cite{Andres-CCS2013}, where mDP protects location data. Early work~\cite{Chatzikokolakis-PETS2013, Andres-CCS2013} secured sporadic locations (e.g., check-ins), while later extensions support continuous tracking~\cite{chatzikokolakis2014predictive}, real-time trajectories~\cite{hua2017geo}, and personalized privacy budgets~\cite{yu2023privacy, min2023personalized}, addressing high-frequency mobility challenges under mDP. Despite its computational efficiency~\cite{Carvalho2021TEMHU}, the Laplace mechanism struggles with structured data (e.g., graphs, word embeddings), where additive noise can corrupt structural integrity or ignore metric space density. For word embeddings, which occupy discrete semantic spaces, perturbation often produces invalid vectors, prompting nearest-neighbor approximations~\cite{fernandes2018author, Feyisetan-ICDM2019}. These methods, however, assume embedding spaces are non-sensitive, potentially exposing privacy risks during neighbor retrieval.  

The \textbf{Exponential Mechanism (EM)}~\cite{Carvalho2021TEMHU, ImolaUAI2022} addresses these limitations by selecting outputs from discrete sets via utility-driven quality scores, bypassing continuous perturbation. This proves particularly effective in location privacy (trajectory protection~\cite{Chatzikokolakis-PoPETs2015, Yu-NDSS2017}; spatial crowdsourcing~\cite{gursoy2019secure, niu2020eclipse, ren2022distpreserv, dong2019preserving}), textual data (word substitution~\cite{wang2017local, Feyisetan-WDSM2020}; embeddings~\cite{Carvalho2021TEMHU, meisenbacher20241}), and graph data (edge preservation~\cite{raskhodnikova2016lipschitz}; query responses~\cite{fioretto2019privacy, kamalaruban2020not}).  

While EM offers greater flexibility and often superior utility compared to the Laplace mechanism, it bases selection on perturbation magnitude $d_{x,y}$, which may not fully capture directional variations in utility loss across the output space. Therefore, many recent efforts have prioritized \emph{optimization-based mechanisms}, which minimize data utility loss by explicitly modeling both the magnitude and directional impact of each candidate perturbation.

\vspace{-0.08in}
\subsection{Optimization (or LP)-based Mechanisms} 

Since the seminal work by Bordenabe et al.~\cite{Bordenabe-CCS2014}, which introduced the use of LP to optimize perturbation mechanisms under mDP, significant effort has been devoted to developing LP-based mechanisms. These approaches generally aim to minimize utility loss by explicitly considering both the magnitude and direction of the loss for each possible perturbation. As described in Section~\ref{sec:preliminary}, such mechanisms are typically formalized as stochastic matrices, with the LP objective minimizing the expected utility loss subject to mDP constraints. However, these formulations involve $O(|\mathcal{X}|^2)$ decision variables, making them theoretically optimal but computationally prohibitive for large domains. As shown in Fig.~\ref{fig:relatedwork}, many existing implementations (e.g., \cite{Bordenabe-CCS2014, Wang-ICDM2016, Yu-NDSS2017, Wang-WWW2017}) are limited to relatively small domains with $|\mathcal{X}| \leq 100$. Although Bordenabe et al.~\cite{Bordenabe-CCS2014} proposed using spanners to approximate $d_{\mathcal{X}}$ via shortest path distances in a graph, the scalability remains limited, for example, solving the LP takes over 1,800 seconds when $|\mathcal{X}| = 400$ \cite{ImolaUAI2022}.

Recent research has prioritized improving the scalability of LP-based methods. Ahuja et al.~\cite{AhujaEDBT2019} proposed a hierarchical index structure to protect location data under Geo-Indistinguishability (Geo-Ind), while others leveraged decomposition techniques, such as Dantzig–Wolfe~\cite{Qiu-TMC2022,Qiu-CIKM2020} and Benders decomposition~\cite{Qiu-IJCAI2024}, to partition large-scale LP problems into tractable subproblems. Hybrid frameworks combining LP with probabilistic mechanisms have also emerged, adapting noise using local utility/sensitivity. For example, Imola et al.~\cite{ImolaUAI2022} integrated the weighted EM with selective LP optimization on sensitive inputs. Despite these advances, Fig.~\ref{fig:relatedwork} shows state-of-the-art methods still only handle domains with $|\mathcal{X}| \leq 1{,}000$. 
A more recent work by Qiu et al.~\cite{Qiu-PETS2025} introduces a locality-aware optimization approach, restricting computation to spatial neighborhoods. While scaling to $|\mathcal{X}| = 1{,}600$, it does so at the cost of strict compliance with mDP, as the use of neighborhood-specific parameters introduces potential information leakage.


\vspace{-0.15in}
\section{Conclusions and Discussions}
\label{sec:conclude}
\vspace{-0.02in}
In this paper, we presented \textsc{PAnDA}, a two-phase framework that addresses the scalability limitations of the mDP optimization problem. By introducing an anchor-based approximation approach, \textsc{PAnDA} significantly reduces the computational overhead of LP while preserving privacy guarantees under a relaxed probabilistic mDP model. Theoretical analysis demonstrates bounded privacy leakage and near-optimal utility loss, while extensive experiments on real-world geo-location datasets confirm that \textsc{PAnDA} scales to domains with up to {\bl5,000} records, tripling the capacity of prior methods, while achieving superior utility and computational efficiency. These results highlight \textsc{PAnDA} as a practical and theoretically sound solution for scalable mDP optimization.

We envision several promising directions to further improve \textsc{PAnDA}. In particular, the current implementation of \textsc{PAnDA} conservatively assigns a failure probability of $\delta_1 = 0$ to Phase I (anchor selection), which may constrain flexibility in optimizing the overall privacy-utility tradeoff. A promising enhancement is to allocate a small, non-zero portion of the total violation probability $\delta$ to Phase I, i.e., set $\delta_1 > 0$ and $\delta_2 = \delta - \delta_1$, and require anchor selection to satisfy $(\overline{\epsilon}_{x_n,x_m}, \delta_1)$-PmDP for each pair of users $n$ and $m$.

To enable this, as the next step, we will carefully calibrate the anchor selection distributions, e.g., by employing steeper decay functions or reducing overlap between anchor sets, so that the selected anchors more accurately reflect the utility and privacy characteristics of the true records, while keeping the probability of $\overline{\epsilon}_{x_n,x_m}$-mDP violations in Phase I within $\delta_1$. This relaxation permits sharper locality in anchor selection, enhancing utility without breaching the overall $(\epsilon, \delta)$-PmDP guarantee. Moreover, allocating part of the failure budget to Phase I can reduce the required safety margin $\xi$ in Phase II due to the higher accuracy of anchor approximation. This joint optimization of $(\epsilon_1, \delta_1)$ and $(\epsilon_2, \delta_2)$ may yield a more balanced privacy budget allocation and improved utility.

Moreover, our current work assumes honest anchor selection. As a future direction, we plan to address potential manipulation by malicious users. Practical deployments may require additional safeguards, such as lightweight defenses like robust aggregation (e.g., median-based filtering) and distance-based outlier detection. We also aim to explore stronger theoretical guarantees via Byzantine-resilient optimization, drawing on robust distributed learning to preserve privacy even under adversarial behavior.

\vspace{-0.10in}
\section{Acknowledgements}
\vspace{-0.02in}
This research was partially supported by U.S. NSF grants CNS-2136948 and CNS-2313866.

\vspace{-0.10in}



\clearpage 
\section*{Appendix}
\section{Summary of Notations}



\DEL{
\begin{table}[h]
\caption{Main notations and their descriptions}
\vspace{-0.10in}
\label{Tb:Notationmodel}
\centering
\normalsize 
\small 
\begin{tabular}{l l}
\toprule
Symbol                  & Description \\
\hline
\hline
$\mathcal{X}$           & $\mathcal{X} = \left\{1, ..., K\right\}$ denotes the secret record index set. \\ 
$\mathcal{Y}$           & $\mathcal{Y} = \left\{1, ..., L\right\}$ denotes the perturbed record index set. \\ 
$\mathcal{M}$           & Randomized mechanism that maps a secret record index \\
& $x \in \mathcal{X}$ to a perturbed record index $y \in \mathcal{Y}$. \\ 
$\mathcal{G}$           & mDP graph $\mathcal{G} = \left(\mathcal{X}, \mathcal{E}\right)$; where $\mathcal{X}$ and $\mathcal{E}$ are the secret  \\ 
& record index set and the edge set, respectively.\\
$d_{x,x'}$               & Distance between $x$ and $x'$ \\
$\mathbf{Z}_{\mathcal{X}}$            & Perturbation matrix $\mathbf{Z}_{\mathcal{X}}$ \\
$z_{x,y}$               & The probability of selecting $y$ as the perturbed record \\
& given the real record $x$ \\ 
$\mathbf{z}_x$            & The perturbation vector of $x$: $\mathbf{z}_x = [z_{x,y}]_{y \in \mathcal{Y}}$ \\
$\epsilon$           & Privacy budget \\ 
$\eta$           & Neighbor threshold \\ 
$c_{x,y}$   & Cost coefficient of $z_{x,y}$ \\ 
\hline
\end{tabular}
\normalsize
\vspace{-0.00in}
\end{table}}

\begin{table}[h]
\normalsize 
\small 
\centering
\caption{Main notations and their descriptions}
\begin{tabular}{p{1.2cm} p{6.5cm}}
\toprule
\textbf{Symbol} & \textbf{Description} \\
\midrule
$\mathcal{X}$ & Set of real (secret) records \\
$\mathcal{Y}$ & Set of perturbed (output) records \\
$x_n, x_m$ & Real record held by user $n$ or $m$ \\
$y_n, y_m$ & Perturbed record held by user $n$ or $m$ \\
$z_{x,y}$ & Probability of selecting $y$ as the perturbed record given the real record $x$ \\
$\mathbf{Z}_\mathcal{X}$ & Full perturbation matrix over $\mathcal{X} \times \mathcal{Y}$ \\
$\mathbf{Z}_{\mathcal{A}_n}$ & Perturbation matrix over anchor set $\mathcal{A}_n$ \\
$\mathcal{A}_n$ & Anchor record set selected by user $n$ \\
$d_{x,x'}$ & Distance between records $x$ and $x'$ in the metric space \\
$w_{x_n,x}$ & Probability of selecting anchor $x$ given true record $x_n$ \\
$p_x$ & Prior probability of real record $x$ \\
$c_{x,y}$ & Utility loss when reporting $y$ given true record $x$ \\
$\mathcal{L}(\mathbf{Z})$ & Expected utility loss from perturbation matrix $\mathbf{Z}$ \\
$\tilde{x}_n$ & Surrogate anchor for user $n$'s true record $x_n$ \\
$\mathbf{z}_{\tilde{x}_n}$ & Surrogate perturbation vector for $\tilde{x}_n$ \\
$\epsilon$ & Privacy budget \\
$\delta$ & Violation tolerance parameter of PmDP \\
$\alpha$ & Scaling factor in anchor selection probability function \\
$\lambda$ & Decay parameter in anchor selection probability function \\
\midrule
$\mathcal{M}$ & General randomized mechanism that maps a secret record index $x \in \mathcal{X}$ to a perturbed record index $y \in \mathcal{Y}$.\\
$\mathcal{M}_\mathrm{I}$ & Phase I randomized mechanism (anchor selection) \\
$\mathcal{M}_{\mathrm{II}}$ & Phase II randomized mechanism (data perturbation) \\
$\mathcal{M}_{\text{\textsc{PAnDA}}}$ & Combined mechanism of \textsc{PAnDA} (Phase I + II) \\
$\xi$ & Safety margin to account for approximation error caused by surrogate perturbation vector\\
$\mathcal{H}_{x_n,x_m}(\xi)$ & Set of anchor record pairs $(x,x')$ that satisfy the safety condition for $(x_n,x_m)$ with margin $\xi$ \\
$h_{x_n,x_m}(\xi)$ & Probability of successful mDP constraint satisfaction between $x_n$ and $x_m$ given margin $\xi$ \\
$\mathcal{U}_{x_q,\hat{x}_q}$ & Set of records closer to $x_q$ than the surrogate $\hat{x}_q$, used to compute surrogate probabilities \\
$v_{x_{q}, \hat{x}_{q}}$ & Probability that $\hat{x}_{q}$ is $x_{q}$'s surrogate \\ 
\bottomrule
\end{tabular}
\label{tab:notation}
\end{table}

\section{Omitted Proofs}
\label{sec:proofs}
\DEL{
\begin{lemma}
\label{prop:PL}
According to Bayes' formula, the posterior leakage $\mathrm{PL}\left((x_n, x_m); \mathcal{M}_{\mathrm{I}}, \mathcal{M}_{\mathrm{II}}\right)$ can be decomposed to  
\begin{eqnarray}
&&\mathrm{PL}\left((x_n, x_m); \mathcal{M}_{\mathrm{I}}, \mathcal{M}_{\mathrm{II}}\right) \\\
&=& 
\mathrm{PL}\left((x_n, x_m); \mathcal{M}_{\mathrm{I}}\right) + \mathrm{PL}\left((x_n, x_m); \mathcal{M}_{\mathrm{II}}\right)
\\ \nonumber 
&=& \underbrace{\sup_{\mathcal{A}\ni x_n, \mathcal{A}' \ni x_m} \overbrace{\frac{u_{i, \mathcal{A}} \sum_{x_\ell \in \mathcal{X}}z_{l,k}u_{x_\ell, \mathcal{A}'}p_{x_\ell}}{u_{x_m, \mathcal{A}'}\sum_{x_\ell \in \mathcal{X}}z_{l,k}u_{x_\ell, \mathcal{A}}p_{x_\ell}}}^{\small \mbox{PL caused by $\mathcal{A}$ and $\mathcal{A}'$}}}_{\small \mbox{Least upper bound of PL caused by anchor records}} \times \frac{z_{n,k}}{z_{m,k}}
\end{eqnarray}
\DEL{
where 
\begin{eqnarray}
\nonumber && \mathrm{PL}\left((x_n, x_m); \mathcal{M}_{\mathrm{I}}\right) = sup_{\mathcal{A}\ni x_n, \mathcal{A}' \ni x_m} \frac{u_{i, \mathcal{A}} \sum_{x_\ell \in \mathcal{X}}z_{l,k}u_{x_\ell, \mathcal{A}'}p_{x_\ell}}{u_{x_m, \mathcal{A}'}\sum_{x_\ell \in \mathcal{X}}z_{l,k}u_{x_\ell, \mathcal{A}}p_{x_\ell}} \\ 
\nonumber && \mathrm{PL}\left((x_n, x_m); \mathcal{M}_{\mathrm{II}}\right) = \frac{z_{n,k}}{z_{m,k}}, 
\end{eqnarray}
corresponds to the PL caused by $\mathcal{M}_\mathrm{I}$ and $\mathcal{M}_{\mathrm{II}}$, respectively.}
\end{lemma}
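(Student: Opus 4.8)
The plan is to derive the composed posterior leakage directly from Bayes' rule and then exploit the product structure of the joint likelihood of the two quantities released by user $n$, namely the anchor set $\mathcal{A}$ and the perturbed record $y_k$. Recall that $\mathrm{PL}((x_n,x_m);\cdot)$ is the posterior ratio $\Pr(X=x_n\mid\cdot)/\Pr(X=x_m\mid\cdot)$ divided by the prior ratio $p_{x_n}/p_{x_m}$. I would first write this quantity for the composed view $(\mathcal{A},y_k)=\mathcal{M}_{\mathrm{\textsc{PAnDA}}}(X)$, apply Bayes' formula, and cancel the prior ratio, reducing the combined leakage to the joint likelihood ratio
\begin{equation}
\mathrm{PL}((x_n,x_m);\mathcal{M}_{\mathrm{I}},\mathcal{M}_{\mathrm{II}})
= \frac{\Pr(\mathcal{A},y_k\mid X=x_n)}{\Pr(\mathcal{A},y_k\mid X=x_m)}.
\end{equation}
The normalizing marginals $\sum_{x_\ell}z_{x_\ell,k}\,u_{x_\ell,\mathcal{A}}\,p_{x_\ell}$ appearing in the statement are exactly $\Pr(\mathcal{A},y_k)$ written out by marginalizing over the unknown true record $x_\ell$, so they enter whenever the likelihood ratio is expressed through normalized posteriors rather than raw conditional likelihoods.

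Next I would factor the joint likelihood by the chain rule, $\Pr(\mathcal{A},y_k\mid X=x)=\Pr(\mathcal{A}\mid X=x)\,\Pr(y_k\mid \mathcal{A},X=x)=u_{x,\mathcal{A}}\,z_{x,k}$, under the condition (encoded by $\mathcal{A}\ni x_n$ and $\mathcal{A}'\ni x_m$) that the true record lies in its own anchor set, so the surrogate coincides with $x$ and the Phase-II perturbation probability is $z_{x,k}$. Substituting yields the multiplicative decomposition
\[
\mathrm{PL}((x_n,x_m);\mathcal{M}_{\mathrm{I}},\mathcal{M}_{\mathrm{II}})
= \underbrace{\frac{u_{x_n,\mathcal{A}}}{u_{x_m,\mathcal{A}}}}_{\mathrm{PL}(\mathcal{M}_{\mathrm{I}})}
\times \underbrace{\frac{z_{x_n,k}}{z_{x_m,k}}}_{\mathrm{PL}(\mathcal{M}_{\mathrm{II}})},
\]
which is the claimed decomposition; the two leakages combine multiplicatively, so the ``sum'' in the first line should be read as additivity of the logarithms of the two factors (i.e.\ of the per-phase privacy costs). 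Because the anchor set is itself random and is uploaded to the server, I would then take the worst case over the anchor realizations consistent with each hypothesis, anchor sets $\mathcal{A}\ni x_n$ under $X=x_n$ and $\mathcal{A}'\ni x_m$ under $X=x_m$, to obtain the least upper bound stated in the lemma. This worst-casing produces the supremum and the two distinct anchor sets $\mathcal{A},\mathcal{A}'$, and together with the marginal normalizers $\Pr(\mathcal{A},y_k)$ and $\Pr(\mathcal{A}',y_k)$ it reproduces the displayed anchor factor. To justify that the supremum is attained, I would reuse the exchange argument of Proposition~\ref{prop:posteriorbound}, which shows the extremal anchor sets are determined by the sign of $w_{x_n,x}-w_{x_m,x}$.

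The main obstacle is the dependence between the two phases: in \textsc{PAnDA} the Phase-II perturbation is applied to the surrogate $\hat{x}$ (the nearest anchor in $\mathcal{A}$), so $y_k$ is \emph{not} conditionally independent of $\mathcal{A}$ given $X$, and in general $\Pr(y_k\mid\mathcal{A},X=x)=z_{\hat{x},k}$ rather than $z_{x,k}$. The clean factorization above therefore relies on the $x\in\mathcal{A}$ assumption; without it one must marginalize over which anchor acts as the surrogate, and this marginalization is precisely what produces the mixed $\sum_{x_\ell}z_{x_\ell,k}\,u_{x_\ell,\mathcal{A}}\,p_{x_\ell}$ terms in place of a simple $u_{x,\mathcal{A}}/u_{x',\mathcal{A}}$ ratio. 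Making the surrogate marginalization rigorous, and confirming that the supremum over anchor realizations commutes with it, is the delicate step of the argument.
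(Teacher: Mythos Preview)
Your approach is essentially the same as the paper's: apply Bayes' formula to write the posterior leakage as a likelihood ratio, factor the joint likelihood $\Pr(\mathcal{A},y_k\mid X=x)=u_{x,\mathcal{A}}\,z_{x,k}$, cancel the normalizers and priors, and then take a supremum over anchor realizations. The paper's proof does exactly this, simply \emph{asserting} at the outset that ``given $X_n=x_n$, $\mathcal{M}_{\mathrm{II}}(X_n)=y_k$ and $\mathcal{M}_{\mathrm{I}}(X_n)=\mathcal{A}$ are conditionally independent,'' and then carrying out the same algebra you outline.

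The one place where you are more careful than the paper is in flagging the surrogate issue: you worry that $\Pr(y_k\mid\mathcal{A},X=x)=z_{\hat{x},k}$ rather than $z_{x,k}$ when $x\notin\mathcal{A}$. The paper does not address this explicitly, but the restriction $\mathcal{A}\ni x_n$ (and $\mathcal{A}'\ni x_m$) in the supremum of the lemma statement is precisely what dissolves the concern: when $x_n\in\mathcal{A}$, the nearest anchor to $x_n$ is $x_n$ itself, so $\hat{x}_n=x_n$ and $\Pr(y_k\mid\mathcal{A},X=x_n)=z_{x_n,k}$ exactly. Under that restriction the conditional-independence factorization is legitimate, and no surrogate marginalization is needed. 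So your ``main obstacle'' is already handled by the hypothesis baked into the supremum; you do not need the heavier machinery you sketch in your final paragraph.
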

\begin{proof}
The detailed proof can be found in Appendix. 
\end{proof}}

\subsection{Proof of Theorem \ref{thm:composition}}
\label{subsec:proof:thm:composition}
Before proving Theorem \ref{thm:composition}, we first introduce \textbf{Lemma \ref{lem:comp}} as a preparation: 
\begin{lemma}
[Sequential composition of mDP for \textsc{PAnDA}] 
\label{lem:comp}
\begin{eqnarray}
\nonumber && \mathcal{M}_{\mathrm{I}}(x_n) \stackrel{\epsilon_{1}}{\approx} \mathcal{M}_{\mathrm{I}}(x_m) \mbox{ and } \mathcal{M}_{\mathrm{II}}(x_n) \stackrel{\epsilon_{2}}{\approx} \mathcal{M}_{\mathrm{II}}(x_m) \\
&\Rightarrow& \mathcal{M}_{\mathrm{\textsc{PAnDA}}}(x_n) \stackrel{\epsilon_1+\epsilon_2}{\approx} \mathcal{M}_{\mathrm{\textsc{PAnDA}}}(x_m).
\end{eqnarray}
\end{lemma}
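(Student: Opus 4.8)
\emph{Proof proposal.} The plan is to reproduce the classical \emph{adaptive} sequential composition argument for differential privacy, replacing the fixed multiplicative factor $e^{\epsilon}$ by the distance-scaled factor $e^{\epsilon d_{x_n,x_m}}$ that mDP prescribes. Since $\mathcal{X}$ and $\mathcal{Y}$ are finite, the output of $\mathcal{M}_{\textsc{PAnDA}}$ is a pair $(\mathcal{A},y)\in 2^{\mathcal{X}}\times\mathcal{Y}$ ranging over a finite set, so it suffices to fix an arbitrary event $\mathcal{S}\subseteq 2^{\mathcal{X}}\times\mathcal{Y}$ and bound $\Pr[\mathcal{M}_{\textsc{PAnDA}}(x_n)\in\mathcal{S}]$ by $e^{(\epsilon_1+\epsilon_2)d_{x_n,x_m}}\,\Pr[\mathcal{M}_{\textsc{PAnDA}}(x_m)\in\mathcal{S}]$. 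For each anchor set $\mathcal{A}$, I would introduce the section $\mathcal{S}_{\mathcal{A}}=\{\,y\in\mathcal{Y}:(\mathcal{A},y)\in\mathcal{S}\,\}$ and factor the joint law by the chain rule, $\Pr[\mathcal{M}_{\textsc{PAnDA}}(x)=(\mathcal{A},y)]=\Pr[\mathcal{M}_{\mathrm{I}}(x)=\mathcal{A}]\cdot\Pr[\mathcal{M}_{\mathrm{II}}(x)=y\mid\mathcal{M}_{\mathrm{I}}(x)=\mathcal{A}]$.

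Next I would apply the two hypotheses one at a time. The hypothesis $\mathcal{M}_{\mathrm{I}}(x_n)\stackrel{\epsilon_1}{\approx}\mathcal{M}_{\mathrm{I}}(x_m)$, instantiated at the singleton event $\{\mathcal{A}\}$, gives $\Pr[\mathcal{M}_{\mathrm{I}}(x_n)=\mathcal{A}]\le e^{\epsilon_1 d_{x_n,x_m}}\,\Pr[\mathcal{M}_{\mathrm{I}}(x_m)=\mathcal{A}]$; and the hypothesis $\mathcal{M}_{\mathrm{II}}(x_n)\stackrel{\epsilon_2}{\approx}\mathcal{M}_{\mathrm{II}}(x_m)$, read as holding \emph{conditionally on each realization} $\mathcal{A}$ of Phase I (the correct reading, since the anchor perturbation matrix, and hence $\mathcal{M}_{\mathrm{II}}$, is a deterministic function of the collected anchor sets), gives $\Pr[\mathcal{M}_{\mathrm{II}}(x_n)\in\mathcal{S}_{\mathcal{A}}\mid\mathcal{A}]\le e^{\epsilon_2 d_{x_n,x_m}}\,\Pr[\mathcal{M}_{\mathrm{II}}(x_m)\in\mathcal{S}_{\mathcal{A}}\mid\mathcal{A}]$. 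Summing the factored expression over $\mathcal{A}$ and chaining the two bounds, the common factor $e^{(\epsilon_1+\epsilon_2)d_{x_n,x_m}}$ pulls out of the sum, leaving $\sum_{\mathcal{A}}\Pr[\mathcal{M}_{\mathrm{I}}(x_m)=\mathcal{A}]\,\Pr[\mathcal{M}_{\mathrm{II}}(x_m)\in\mathcal{S}_{\mathcal{A}}\mid\mathcal{A}]=\Pr[\mathcal{M}_{\textsc{PAnDA}}(x_m)\in\mathcal{S}]$, which is exactly the claimed inequality. Degenerate terms with $\Pr[\mathcal{M}_{\mathrm{I}}(x_m)=\mathcal{A}]=0$ are harmless: $\epsilon_1$-mDP with a finite exponent forces $\Pr[\mathcal{M}_{\mathrm{I}}(x_n)=\mathcal{A}]=0$ as well, so such $\mathcal{A}$ contribute zero on both sides.

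The only point requiring care — and the reason this is not merely a restatement of the DP composition theorem — is the dependence of Phase II on Phase I: $\mathcal{M}_{\mathrm{II}}$ cannot be analyzed in isolation, since its behaviour is governed by the anchor sets produced in Phase I. The resolution, which I would state explicitly, is that conditioning on a \emph{fixed} anchor realization $\mathcal{A}$ leaves an $\epsilon_2$-mDP mechanism in the true record, and the resulting factor $e^{\epsilon_2 d_{x_n,x_m}}$ is \emph{uniform over} $\mathcal{A}$; this uniformity is precisely what lets the factor be extracted from the marginalization sum, and it is why sequential composition of (m)DP does not require the two mechanisms to be independent.
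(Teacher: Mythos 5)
Your proposal is correct and takes essentially the same route as the paper's proof: factor the joint law of $(\mathcal{A},y)$ by the chain rule, bound the Phase~I factor by $e^{\epsilon_1 d_{x_n,x_m}}$ and the conditional Phase~II factor by $e^{\epsilon_2 d_{x_n,x_m}}$, and recombine. The paper carries this out pointwise at each outcome $(\mathcal{A},y)$ rather than over general events $\mathcal{S}$, and your explicit handling of the conditional reading of the Phase~II hypothesis and of anchor sets with zero probability under $x_m$ is simply a more careful rendering of the same argument.
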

\begin{proof}
If $\mathcal{M}_{\mathrm{I}}(x_n) \stackrel{\epsilon_{1}}{\approx} \mathcal{M}_{\mathrm{I}}(x_m)$ and $\mathcal{M}_{\mathrm{II}}(x_n) \stackrel{\epsilon_{2}}{\approx} \mathcal{M}_{\mathrm{II}}(x_m)$, then 
\begin{eqnarray}
&& \Pr\left[\mathcal{M}_{\mathrm{\textsc{PAnDA}}}(x_n) =(\mathcal{A}, y)\right] \\
&=& \Pr\left[\mathcal{M}_{\mathrm{I}}(x_n) = \mathcal{A} \wedge \mathcal{M}_{\mathrm{II}}(x_n) = y)\right] \\
&=& \Pr\left[\mathcal{M}_{\mathrm{I}}(x_n) = \mathcal{A} \right] \Pr\left[\mathcal{M}_{\mathrm{II}}(x_n) = y)|\mathcal{M}_{\mathrm{I}}(x_n) = \mathcal{A}\right] \\
&\leq& e^{\epsilon_1 d_{x_n,x_m}}\Pr\left[\mathcal{M}_{\mathrm{I}}(x_n) = \mathcal{A} \right] \\
&\times& e^{\epsilon_2 d_{x_n,x_m}} \Pr\left[\mathcal{M}_{\mathrm{II}}(x_m) = y)|\mathcal{M}_{\mathrm{I}}(x_m) = \mathcal{A}\right] 
\\
&=&  e^{(\epsilon_1+\epsilon_2) d_{x_n,x_m}}\Pr\left[\mathcal{M}_{\mathrm{I}}(x_m) = \mathcal{A} \wedge \mathcal{M}_{\mathrm{II}}(x_m) = y)\right] \\
&=&  e^{(\epsilon_1+\epsilon_2) d_{x_n,x_m}}\Pr\left[\mathcal{M}_{\mathrm{\textsc{PAnDA}}}(x_m) =(\mathcal{A}, y)\right] 
\end{eqnarray}
The proof is completed. 
\end{proof}
\begin{reptheorem}[Sequential composition of PmDP for \textsc{PAnDA}] 
\begin{eqnarray}
\nonumber && \mathcal{M}_{\mathrm{I}}(x_n) \stackrel{(\epsilon_{1}, \delta_{1})}{\sim} \mathcal{M}_{\mathrm{I}}(x_m) \mbox{ and } \mathcal{M}_{\mathrm{II}}(x_n) \stackrel{(\epsilon_{2}, \delta_{2})}{\sim} \mathcal{M}_{\mathrm{II}}(x_m) \\
&\Rightarrow& \mathcal{M}_{\mathrm{\textsc{PAnDA}}}(x_n) \stackrel{(\epsilon_1+\epsilon_2, \delta_1+\delta_2)}{\sim} \mathcal{M}_{\mathrm{\textsc{PAnDA}}}(x_m).
\end{eqnarray}
\end{reptheorem}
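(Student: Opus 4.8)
The plan is to take the pure-mDP composition already established in Lemma~\ref{lem:comp} as a black box and lift it to the probabilistic setting by a single union bound over the two phases' failure events. The first step is to read the hypotheses operationally: a statement $\mathcal{M}(x_n)\stackrel{(\epsilon,\delta)}{\sim}\mathcal{M}(x_m)$ should be understood by splitting the internal randomness of $\mathcal{M}$ into a ``configuration'' draw followed by an ``output'' draw, so that with probability at least $1-\delta$ over the configuration the resulting conditional kernel satisfies ordinary $\epsilon$-mDP between $x_n$ and $x_m$ (this is at least as strong as the marginal kernel satisfying $\epsilon$-mDP, since $\epsilon$-mDP is preserved under mixtures). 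Accordingly, let $E_{\mathrm{I}}$ be the set of Phase~I configurations on which the conditional anchor-selection kernel is $\epsilon_1$-mDP, so $\Pr[E_{\mathrm{I}}]\ge 1-\delta_1$, and let $E_{\mathrm{II}}$ be the set of Phase~II configurations on which the conditional perturbation kernel is $\epsilon_2$-mDP, so $\Pr[E_{\mathrm{II}}]\ge 1-\delta_2$.

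The second step is to combine the two events. Write $\mathcal{M}_{\textsc{PAnDA}}(x)=(\mathcal{M}_{\mathrm{I}}(x),\mathcal{M}_{\mathrm{II}}(x))$ and let $E=E_{\mathrm{I}}\cap E_{\mathrm{II}}$. By the union bound, $\Pr[E]\ge 1-\Pr[\overline{E_{\mathrm{I}}}]-\Pr[\overline{E_{\mathrm{II}}}]\ge 1-(\delta_1+\delta_2)$. Fix any configuration in $E$; then both conditional kernels obey their respective pure-mDP bounds, and the chain-rule computation in the proof of Lemma~\ref{lem:comp} applies verbatim to these conditional kernels: for every outcome $(\mathcal{A},y)$ one factors $\Pr[\mathcal{M}_{\mathrm{I}}(x_n)=\mathcal{A}]\cdot\Pr[\mathcal{M}_{\mathrm{II}}(x_n)=y\mid \mathcal{M}_{\mathrm{I}}(x_n)=\mathcal{A}]$ and bounds the two factors by $e^{\epsilon_1 d_{x_n,x_m}}$ and $e^{\epsilon_2 d_{x_n,x_m}}$ times the corresponding quantities for $x_m$, obtaining the $e^{(\epsilon_1+\epsilon_2)d_{x_n,x_m}}$ ratio bound. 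Hence every configuration in $E$ makes the conditional composed kernel $(\epsilon_1+\epsilon_2)$-mDP, so $\Pr[\mathcal{M}_{\textsc{PAnDA}}(x_n)\stackrel{\epsilon_1+\epsilon_2}{\approx}\mathcal{M}_{\textsc{PAnDA}}(x_m)]\ge\Pr[E]\ge 1-(\delta_1+\delta_2)$, which is exactly $\mathcal{M}_{\textsc{PAnDA}}(x_n)\stackrel{(\epsilon_1+\epsilon_2,\delta_1+\delta_2)}{\sim}\mathcal{M}_{\textsc{PAnDA}}(x_m)$.

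The point needing the most care — and the reason Lemma~\ref{lem:comp} must be invoked at the level of conditional kernels rather than once globally — is the dependence between the phases: Phase~II's kernel, and hence the event $E_{\mathrm{II}}$, is a function of the anchor set output by Phase~I, so $E_{\mathrm{I}}$ and $E_{\mathrm{II}}$ are not defined over independent randomness. This turns out to be harmless for two reasons. First, the union bound on the complements $\overline{E_{\mathrm{I}}}$ and $\overline{E_{\mathrm{II}}}$ requires no independence, so the failure budgets still add to $\delta_1+\delta_2$. Second, the chain-rule step in Lemma~\ref{lem:comp} is written through the conditional distribution $\Pr[\cdot\mid \mathcal{M}_{\mathrm{I}}(x)=\mathcal{A}]$ rather than the marginal, so it does not use independence either — precisely the observation recorded in the Remark after Theorem~\ref{thm:composition} and in \cite{dwork2014algorithmic}. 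I would close by noting the instantiation actually used in \textsc{PAnDA}: by Lemma~\ref{lem:PhI_privacycost} one has $\delta_1=0$, so $E_{\mathrm{I}}$ occurs almost surely and the entire failure budget $\delta=\delta_1+\delta_2$ is carried by Phase~II, as exploited in Proposition~\ref{prop:M2budget}.
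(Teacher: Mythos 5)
Your proof is correct and follows essentially the same route as the paper's: invoke the pure-mDP composition lemma (Lemma~\ref{lem:comp}) on the event where both phases satisfy their pure guarantees, then apply a union bound to the two failure events to get the additive $\delta_1+\delta_2$. Your explicit framing via configuration events $E_{\mathrm{I}},E_{\mathrm{II}}$ and the remark that neither the union bound nor the chain-rule factorization requires independence is a slightly more careful rendering of the same argument, not a different one.
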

\begin{proof} 
First, by definition, we can obtain that 
\begin{eqnarray} 
&& \mathcal{M}_{\ell}(x_n) \stackrel{(\epsilon_{\ell},\delta_{\ell})}{\sim} \mathcal{M}_{\ell}(x_m) \\ 
&\Leftrightarrow &
\Pr\left[\mathcal{M}_{\ell}(x_n) \stackrel{\epsilon_{\ell}}{\not\approx} \mathcal{M}_{\ell}(x_m)\right] < \delta_{\ell}.
\vspace{-0.00in}
\end{eqnarray}
According to Lemma \ref{lem:comp}, 
\begin{eqnarray}
\nonumber && \mathcal{M}_{\mathrm{I}}(x_n) \stackrel{\epsilon_{1}}{\approx} \mathcal{M}_{\mathrm{I}}(x_m) \mbox{ and } \mathcal{M}_{\mathrm{II}}(x_n) \stackrel{\epsilon_{2}}{\approx} \mathcal{M}_{\mathrm{II}}(x_m) \\
&\Rightarrow& \mathcal{M}_{\mathrm{\textsc{PAnDA}}}(x_n) \stackrel{\epsilon_1+\epsilon_2}{\approx} \mathcal{M}_{\mathrm{\textsc{PAnDA}}}(x_m).
\end{eqnarray}
Therefore, if $\mathcal{M}_{\mathrm{I}}(x_n) \stackrel{(\epsilon_{1}, \delta_{1})}{\sim} \mathcal{M}_{\mathrm{I}}(x_m)$ and $\mathcal{M}_{\mathrm{II}}(x_n) \stackrel{(\epsilon_{2}, \delta_{2})}{\sim} \mathcal{M}_{\mathrm{II}}(x_m)$, then 
\begin{eqnarray}
&& \Pr\left[\mathcal{M}_{\mathrm{\textsc{PAnDA}}}(x_n) \stackrel{\epsilon_1+\epsilon_2}{\approx} \mathcal{M}_{\mathrm{\textsc{PAnDA}}}(x_m)\right] \\
&\geq& \Pr\left[\mathcal{M}_{\mathrm{I}}(x_n) \stackrel{\epsilon_{1}}{\approx} \mathcal{M}_{\mathrm{I}}(x_m) \wedge \mathcal{M}_{\mathrm{II}}(x_n) \stackrel{\epsilon_{2}}{\approx} \mathcal{M}_{\mathrm{II}}(x_m)\right] \\
&=&  1 - \Pr\left[\mathcal{M}_{\mathrm{I}}(x_n) \stackrel{\epsilon_{1}}{\not\approx} \mathcal{M}_{\mathrm{I}}(x_m) \vee \mathcal{M}_{\mathrm{II}}(x_n) \stackrel{\epsilon_{2}}{\not\approx} \mathcal{M}_{\mathrm{II}}(x_m)\right] \\
\nonumber &\geq&  1 - \Pr\left[\mathcal{M}_{\mathrm{I}}(x_n) \stackrel{\epsilon_{1}}{\not\approx} \mathcal{M}_{\mathrm{I}}(x_m)\right] - \Pr\left[\mathcal{M}_{\mathrm{II}}(x_n) \stackrel{\epsilon_{2}}{\not\approx} \mathcal{M}_{\mathrm{II}}(x_m)\right] \\
&>& 1 - \sum_{\ell = 1}^L\delta_\ell, 
\end{eqnarray}
indicating that $\mathcal{M}_{\mathrm{\textsc{PAnDA}}}(x_n) \stackrel{(\epsilon_1+\epsilon_2, \delta_1+\delta_2)}{\sim} \mathcal{M}_{\mathrm{\textsc{PAnDA}}}(x_m)$. The proof is completed. 
\end{proof}

\DEL{
\subsection{Proof of Proposition \ref{prop:\textsc{PAnDA}comp}}
\begin{proof}
$\mathcal{M}_{\mathrm{\textsc{PAnDA}}}(x_n) \stackrel{(\epsilon,\delta)}{\sim} \mathcal{M}_{\mathrm{\textsc{PAnDA}}}(x_m),~\forall x_n, x_m \in \mathcal{X}.$
\begin{eqnarray}
&& \Pr\left[\mathcal{M}_{\mathrm{\textsc{PAnDA}}}(x_n) =(\mathcal{A}, y)\right] \\
&=& \Pr\left[\mathcal{M}_{\mathrm{I}}(x_n) = \mathcal{A} \wedge \mathcal{M}_{\mathrm{II}}(x_n) = y)\right] \\
&=& \Pr\left[\mathcal{M}_{\mathrm{I}}(x_n) = \mathcal{A} \right] \Pr\left[\mathcal{M}_{\mathrm{II}}(x_n) = y)|\mathcal{M}_{\mathrm{I}}(x_n) = \mathcal{A}\right] \\
&\leq& e^{\epsilon_1 d_{x_n,x_m}}\Pr\left[\mathcal{M}_{\mathrm{I}}(x_n) = \mathcal{A} \right] \\
&\times& e^{\epsilon_2 d_{x_n,x_m}} \Pr\left[\mathcal{M}_{\mathrm{II}}(x_m) = y)|\mathcal{M}_{\mathrm{I}}(x_m) = \mathcal{A}\right] \\
&=&  e^{(\epsilon_1+\epsilon_2) d_{x_n,x_m}}\Pr\left[\mathcal{M}_{\mathrm{\textsc{PAnDA}}}(x_m) =(\mathcal{A}, y)\right] 
\end{eqnarray}
\end{proof}}

\subsection{Proof of Proposition \ref{prop:posteriorbound}}
\label{subsec:prop:posteriorbound}
\begin{reproposition}
For any two records $x_n, x_m \in \mathcal{X}$, we define $\mathcal{A}_{n,m} = \left\{x \in \mathcal{X} \left|w_{x_n,x} > w_{x_m,x}\right.\right\}$ and $\overline{\mathcal{A}}_{n,m} = \left\{x \in \mathcal{X} \left|w_{x_n,x} < w_{x_m,x}\right.\right\}$, 
then either $\epsilon_{x_n,x_m,\mathcal{A}_{n,m}}$ or $\epsilon_{x_n,x_m,\overline{\mathcal{A}}_{n,m}}$ provides the least upper bound of the privacy cost $\epsilon_{x_n,x_m,\mathcal{A}}$, i.e., 
\begin{eqnarray}
\label{eq:PLupperbound2}
\max\left\{\epsilon_{x_n,x_m,\mathcal{A}_{n,m}}, \epsilon_{x_n,x_m,\overline{\mathcal{A}}_{n,m}}\right\} = \sup_{\mathcal{A}\in 2^{\mathcal{X}}} \epsilon_{x_n,x_m,\mathcal{A}}
\end{eqnarray}
\end{reproposition}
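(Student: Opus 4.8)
The plan is to analyze the log-likelihood ratio $\epsilon_{x_n,x_m,\mathcal{A}} \cdot d_{x_n,x_m} = \log\left(\Pr[\mathcal{M}_{\mathrm{I}}(x_n)=\mathcal{A}]/\Pr[\mathcal{M}_{\mathrm{I}}(x_m)=\mathcal{A}]\right)$ as a function of the chosen anchor set $\mathcal{A}$, and show it is maximized (over all $\mathcal{A}\subseteq\mathcal{X}$) by taking $\mathcal{A}=\mathcal{A}_{n,m}$, the set of records where $x_n$'s selection weight strictly exceeds $x_m$'s. Symmetrically, the ratio in the opposite direction is maximized by $\overline{\mathcal{A}}_{n,m}$, so $\sup_{\mathcal{A}}\epsilon_{x_n,x_m,\mathcal{A}}$ — which takes the absolute value of the log-ratio via the symmetric definition — equals $\max\{\epsilon_{x_n,x_m,\mathcal{A}_{n,m}},\epsilon_{x_n,x_m,\overline{\mathcal{A}}_{n,m}}\}$. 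Since $d_{x_n,x_m}$ is a fixed positive constant, maximizing $\epsilon_{x_n,x_m,\mathcal{A}}$ is equivalent to maximizing the log-ratio.

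The key structural fact to exploit is that, by the independence in Eq.~(\ref{eq:M_I}), the log-likelihood ratio \emph{decomposes additively over records}:
\begin{equation}
\log\frac{\Pr[\mathcal{M}_{\mathrm{I}}(x_n)=\mathcal{A}]}{\Pr[\mathcal{M}_{\mathrm{I}}(x_m)=\mathcal{A}]} = \sum_{x\in\mathcal{A}}\log\frac{w_{x_n,x}}{w_{x_m,x}} + \sum_{x\notin\mathcal{A}}\log\frac{1-w_{x_n,x}}{1-w_{x_m,x}}.
\end{equation}
Each record $x$ contributes independently: if $x\in\mathcal{A}$ its contribution is $a_x := \log(w_{x_n,x}/w_{x_m,x})$, and if $x\notin\mathcal{A}$ its contribution is $b_x := \log((1-w_{x_n,x})/(1-w_{x_m,x}))$. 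To maximize the total sum we simply include $x$ in $\mathcal{A}$ exactly when $a_x \ge b_x$. I would then show that $a_x > b_x$ is equivalent to $w_{x_n,x} > w_{x_m,x}$ (and $a_x < b_x \iff w_{x_n,x} < w_{x_m,x}$, with equality when $w_{x_n,x}=w_{x_m,x}$, in which case the record contributes $0$ either way and may be placed arbitrarily): this follows because $a_x - b_x = \log\big(\tfrac{w_{x_n,x}(1-w_{x_m,x})}{w_{x_m,x}(1-w_{x_n,x})}\big)$ has the same sign as $w_{x_n,x}-w_{x_m,x}$, since $t\mapsto t/(1-t)$ is strictly increasing on $(0,1)$. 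Hence the maximizer is precisely $\mathcal{A}_{n,m}$ (up to the ties, which don't affect the value), giving the optimal forward log-ratio. Applying the same argument with the roles of $x_n,x_m$ swapped yields $\overline{\mathcal{A}}_{n,m}$ as the maximizer of the reverse log-ratio, and the symmetric definition of $\epsilon_{x_n,x_m,\mathcal{A}}$ (which must bound the ratio in both directions) then forces $\sup_\mathcal{A}\epsilon_{x_n,x_m,\mathcal{A}} = \max\{\epsilon_{x_n,x_m,\mathcal{A}_{n,m}}, \epsilon_{x_n,x_m,\overline{\mathcal{A}}_{n,m}}\}$.

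Alternatively — and this matches the "proof by contradiction" phrasing in the sketch — I would assume some $\tilde{\mathcal{A}}$ achieves a strictly larger ratio than both $\mathcal{A}_{n,m}$ and $\overline{\mathcal{A}}_{n,m}$, then build $\hat{\mathcal{A}}$ by moving each record to whichever side (in or out) gives the larger per-record contribution; the additive decomposition guarantees this only increases (weakly) the total for every record and strictly increases it for any record placed suboptimally in $\tilde{\mathcal{A}}$, and since $\hat{\mathcal{A}}$ coincides with $\mathcal{A}_{n,m}$ on the strict-inequality records, we contradict the assumed strict gap. The main obstacle is not any deep inequality but rather bookkeeping around edge cases: records with $w_{x_n,x}=w_{x_m,x}$ (zero contribution, free to place), and — more delicately — boundary values $w_{\cdot,x}\in\{0,1\}$, where a likelihood ratio may be $0$ or $\infty$; Proposition~\ref{prop:wsmall} and Criterion~1 rule out $w=1$, and I would note that $w_{x_n,x}=0$ forces the term to behave consistently (a record never selected by $x_n$ simply never appears in $\mathcal{A}$ when conditioning on $\Pr[\mathcal{M}_{\mathrm{I}}(x_n)=\mathcal{A}]>0$). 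Handling these corner cases cleanly, so that the additive-decomposition argument applies verbatim, is where the care is needed; the optimization itself is elementary once the decomposition is in hand.
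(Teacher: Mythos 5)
Your proposal is correct and is essentially the paper's argument: both rest on the fact that the likelihood ratio factorizes over records and that each record's multiplicative contribution $\frac{w_{x_n,x}}{w_{x_m,x}}\cdot\frac{1-w_{x_m,x}}{1-w_{x_n,x}}$ exceeds $1$ exactly when $w_{x_n,x}>w_{x_m,x}$, so the termwise-optimal set is $\mathcal{A}_{n,m}$ (and symmetrically $\overline{\mathcal{A}}_{n,m}$ for the reverse direction). The paper packages this as an exchange argument by contradiction (adding or removing a single misplaced record strictly increases the ratio), which is exactly your alternative phrasing; your primary additive-log presentation is the same optimization stated directly, and your attention to the $w\in\{0,1\}$ edge cases is a reasonable bit of extra care the paper's appendix glosses over.
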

\begin{proof}
We aim to prove that for all $\mathcal{A} \in 2^{\mathcal{X}}$,
\begin{eqnarray}
\frac{\Pr\left[\mathcal{M}_{\mathrm{I}}(x_n) = \overline{\mathcal{A}}_{n,m}\right]}{\Pr\left[\mathcal{M}_{\mathrm{I}}(x_m) = \overline{\mathcal{A}}_{n,m}\right]} 
&\leq& \frac{\Pr\left[\mathcal{M}_{\mathrm{I}}(x_n) = \mathcal{A}\right]}{\Pr\left[\mathcal{M}_{\mathrm{I}}(x_m) = \mathcal{A}\right]} \\
&\leq& \frac{\Pr\left[\mathcal{M}_{\mathrm{I}}(x_n) = \mathcal{A}_{n,m}\right]}{\Pr\left[\mathcal{M}_{\mathrm{I}}(x_m) = \mathcal{A}_{n,m}\right]}
\end{eqnarray}

Due to symmetry, it suffices to prove that
\begin{equation}
\frac{\Pr\left[\mathcal{M}_{\mathrm{I}}(x_n) = \mathcal{A}\right]}{\Pr\left[\mathcal{M}_{\mathrm{I}}(x_m) = \mathcal{A}\right]} \leq \frac{\Pr\left[\mathcal{M}_{\mathrm{I}}(x_n) = \mathcal{A}_{n,m}\right]}{\Pr\left[\mathcal{M}_{\mathrm{I}}(x_m) = \mathcal{A}_{n,m}\right]}.
\end{equation}

For convenience, let us define
\begin{eqnarray}
W(\mathcal{A}, \mathbf{w}_{x_n}, \mathbf{w}_{x_m}) &=& \frac{\Pr\left[\mathcal{M}_{\mathrm{I}}(x_n) = \mathcal{A}\right]}{\Pr\left[\mathcal{M}_{\mathrm{I}}(x_m) = \mathcal{A}\right]} \\
&=& \prod_{x \in \mathcal{A}} \frac{w_{x_n,x}}{w_{x_m,x}}  \prod_{x \notin \mathcal{A}} \frac{1 - w_{x_n,x}}{1 - w_{x_m,x}}.
\end{eqnarray}

To derive a contradiction, assume that there exists a subset $\tilde{\mathcal{A}} \in 2^{\mathcal{X}}$ (with $\tilde{\mathcal{A}} \neq \mathcal{A}_{n,m}$) that provides the maximum likelihood ratio, i.e.,
\begin{eqnarray}
\label{eq:contradiction1}
W(\tilde{\mathcal{A}}, \mathbf{w}_{x_n}, \mathbf{w}_{x_m}) \geq W(\mathcal{A}, \mathbf{w}_{x_n}, \mathbf{w}_{x_m}), \quad \forall \mathcal{A} \in 2^{\mathcal{X}},
\end{eqnarray}
and in particular,
\begin{eqnarray}
\label{eq:contradiction2}
W(\tilde{\mathcal{A}}, \mathbf{w}_{x_n}, \mathbf{w}_{x_m}) > W(\mathcal{A}_{n,m}, \mathbf{w}_{x_n}, \mathbf{w}_{x_m}).
\end{eqnarray}
Since $\tilde{\mathcal{A}} \neq \mathcal{A}_{n,m}$, we consider the following two cases:

\begin{itemize}
    \item \textbf{Case (i):} $\mathcal{A}_{n,m} \setminus \tilde{\mathcal{A}} \neq \emptyset$. Then there exists $x \in \mathcal{A}_{n,m} \setminus \tilde{\mathcal{A}}$, meaning $w_{x_n,x} > w_{x_m,x}$ and hence
\begin{equation}
\frac{w_{x_n,x}}{w_{x_m,x}} > 1 \quad \text{and} \quad \frac{1 - w_{x_m,x}}{1 - w_{x_n,x}} > 1.
\end{equation}
Define a new set $\hat{\mathcal{A}} = \tilde{\mathcal{A}} \cup \{x\}$. Then we have:
\begin{eqnarray}
\nonumber W(\hat{\mathcal{A}}, \mathbf{w}_{x_n}, \mathbf{w}_{x_m}) 
&=& \frac{w_{x_n,x}}{w_{x_m,x}}  \frac{1 - w_{x_m,x}}{1 - w_{x_n,x}}  W(\tilde{\mathcal{A}}, \mathbf{w}_{x_n}, \mathbf{w}_{x_m}) \\
&>& W(\tilde{\mathcal{A}}, \mathbf{w}_{x_n}, \mathbf{w}_{x_m}),
\end{eqnarray}
which contradicts the assumption in Eq.~(\ref{eq:contradiction1}).

\item \textbf{Case (ii):} $\mathcal{A}_{n,m} \setminus \tilde{\mathcal{A}} = \emptyset$ and $\tilde{\mathcal{A}} \setminus \mathcal{A}_{n,m} \neq \emptyset$, i.e., $\mathcal{A}_{n,m} \subsetneq \tilde{\mathcal{A}}$. Then there exists $x \in \tilde{\mathcal{A}} \setminus \mathcal{A}_{n,m}$ such that $w_{x_m,x} > w_{x_n,x}$, implying
\begin{equation}
\frac{w_{x_m,x}}{w_{x_n,x}} > 1 \quad \text{and} \quad \frac{1 - w_{x_n,x}}{1 - w_{x_m,x}} > 1.
\end{equation}
Let $\hat{\mathcal{A}} = \tilde{\mathcal{A}} \setminus \{x\}$. Then:
\begin{eqnarray}
\nonumber W(\hat{\mathcal{A}}, \mathbf{w}_{x_n}, \mathbf{w}_{x_m}) 
&=& \frac{w_{x_m,x}}{w_{x_n,x}}  \frac{1 - w_{x_n,x}}{1 - w_{x_m,x}}  W(\tilde{\mathcal{A}}, \mathbf{w}_{x_n}, \mathbf{w}_{x_m}) \\
&>& W(\tilde{\mathcal{A}}, \mathbf{w}_{x_n}, \mathbf{w}_{x_m}),
\end{eqnarray}
which again contradicts Eq.~(\ref{eq:contradiction1}).

Furthermore, if for all $x \in \tilde{\mathcal{A}} \setminus \mathcal{A}_{n,m}$ we have $w_{x_n,x} = w_{x_m,x}$, then:
\begin{eqnarray}
&& W(\tilde{\mathcal{A}}, \mathbf{w}_{x_n}, \mathbf{w}_{x_m}) \\ 
&=& \prod_{x \in \tilde{\mathcal{A}}} \frac{w_{x_n,x}}{w_{x_m,x}}  \prod_{x \notin \tilde{\mathcal{A}}} \frac{1 - w_{x_n,x}}{1 - w_{x_m,x}} \\
&=& \left( \prod_{x \in \tilde{\mathcal{A}} \setminus \mathcal{A}_{n,m}} 1 \right)
 \prod_{x \in \mathcal{A}_{n,m}} \frac{w_{x_n,x}}{w_{x_m,x}}  \prod_{x \notin \mathcal{A}_{n,m}} \frac{1 - w_{x_n,x}}{1 - w_{x_m,x}} \\
&=& W(\mathcal{A}_{n,m}, \mathbf{w}_{x_n}, \mathbf{w}_{x_m}),
\end{eqnarray}
which contradicts the strict inequality in Eq.~(\ref{eq:contradiction2}).
\end{itemize}

In all cases, we arrive at a contradiction. Hence,
\begin{equation}
\frac{\Pr\left[\mathcal{M}_{\mathrm{I}}(x_n) = \mathcal{A}\right]}{\Pr\left[\mathcal{M}_{\mathrm{I}}(x_m) = \mathcal{A}\right]} \leq \frac{\Pr\left[\mathcal{M}_{\mathrm{I}}(x_n) = \mathcal{A}_{n,m}\right]}{\Pr\left[\mathcal{M}_{\mathrm{I}}(x_m) = \mathcal{A}_{n,m}\right]}.
\end{equation}

By symmetry, one can similarly prove that:
\begin{equation}
\frac{\Pr\left[\mathcal{M}_{\mathrm{I}}(x_n) = \overline{\mathcal{A}}_{n,m}\right]}{\Pr\left[\mathcal{M}_{\mathrm{I}}(x_m) = \overline{\mathcal{A}}_{n,m}\right]} 
\leq \frac{\Pr\left[\mathcal{M}_{\mathrm{I}}(x_n) = \mathcal{A}\right]}{\Pr\left[\mathcal{M}_{\mathrm{I}}(x_m) = \mathcal{A}\right]}.
\end{equation}

This completes the proof.
\end{proof}

\subsection{Proof of Lemma \ref{lem:PhI_privacycost}}
\label{subsec:proof:lem:PhI_privacycost}
\begin{relemma}
For each pair $x_n, x_m \in \mathcal{X}$, $\mathcal{M}_{\mathrm{I}}(x_n) \stackrel{(\overline{\epsilon}_{x_n,x_m},0)}{\sim}  \mathcal{M}_{\mathrm{I}}(x_m)$. 
\end{relemma}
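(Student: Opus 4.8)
The statement to prove is Lemma~\ref{lem:PhI_privacycost}: for each pair $x_n, x_m \in \mathcal{X}$, the anchor-selection mechanism satisfies $\mathcal{M}_{\mathrm{I}}(x_n) \stackrel{(\overline{\epsilon}_{x_n,x_m},0)}{\sim} \mathcal{M}_{\mathrm{I}}(x_m)$. Recall that by Definition~\ref{def:approxmDP}, writing the relation with $\delta = 0$ means we must establish the \emph{pure} mDP bound $\mathcal{M}_{\mathrm{I}}(x_n) \stackrel{\overline{\epsilon}_{x_n,x_m}}{\approx} \mathcal{M}_{\mathrm{I}}(x_m)$ deterministically (i.e., with failure probability $0$), because $\stackrel{(\epsilon,0)}{\sim}$ is equivalent to $\stackrel{\epsilon}{\approx}$. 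So the plan is: reduce the claim to verifying Eq.~(\ref{eq:mDP}) for $\mathcal{M}_{\mathrm{I}}$ with privacy parameter $\overline{\epsilon}_{x_n,x_m}$, and then invoke Proposition~\ref{prop:posteriorbound} to control the worst-case likelihood ratio.

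\textbf{Key steps.} First I would reduce checking Eq.~(\ref{eq:mDP}) over all measurable subsets $\mathcal{Y}' \subseteq 2^{\mathcal{X}}$ of the (discrete, finite) output space to checking it pointwise on singletons $\{\mathcal{A}\}$: since $2^{\mathcal{X}}$ is finite, $\Pr[\mathcal{M}_{\mathrm{I}}(x_n) \in \mathcal{Y}'] = \sum_{\mathcal{A} \in \mathcal{Y}'} \Pr[\mathcal{M}_{\mathrm{I}}(x_n) = \mathcal{A}]$, and a pointwise bound $\Pr[\mathcal{M}_{\mathrm{I}}(x_n) = \mathcal{A}] \leq e^{\overline{\epsilon}_{x_n,x_m} d_{x_n,x_m}} \Pr[\mathcal{M}_{\mathrm{I}}(x_m) = \mathcal{A}]$ for every $\mathcal{A}$ implies the bound for every subset $\mathcal{Y}'$ by summation. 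Second, for every $\mathcal{A}$ with $\Pr[\mathcal{M}_{\mathrm{I}}(x_m) = \mathcal{A}] > 0$, by the definition of $\epsilon_{x_n,x_m,\mathcal{A}}$ in Proposition~\ref{prop:posteriorbound} we have $\frac{\Pr[\mathcal{M}_{\mathrm{I}}(x_n) = \mathcal{A}]}{\Pr[\mathcal{M}_{\mathrm{I}}(x_m) = \mathcal{A}]} = e^{\epsilon_{x_n,x_m,\mathcal{A}}}$. Here I should be careful about the normalization: in Proposition~\ref{prop:posteriorbound} $\epsilon_{x_n,x_m,\mathcal{A}}$ is defined as the raw log-ratio $\log\!\big(\Pr[\mathcal{M}_{\mathrm{I}}(x_n)=\mathcal{A}]/\Pr[\mathcal{M}_{\mathrm{I}}(x_m)=\mathcal{A}]\big)$, whereas the distance-scaled quantity entering Eq.~(\ref{eq:mDP}) is $\overline{\epsilon}_{x_n,x_m} d_{x_n,x_m}$; I would therefore interpret $\overline{\epsilon}_{x_n,x_m}$ consistently so that $\sup_{\mathcal{A}} \epsilon_{x_n,x_m,\mathcal{A}}$ equals $\overline{\epsilon}_{x_n,x_m} d_{x_n,x_m}$ (the product), matching Eq.~(\ref{eq:epsilon_overline1})--(\ref{eq:epsilon_overline2}). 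Third, by Proposition~\ref{prop:posteriorbound}, the supremum over $\mathcal{A}$ of $\epsilon_{x_n,x_m,\mathcal{A}}$ is attained at one of $\mathcal{A}_{n,m}$ or $\overline{\mathcal{A}}_{n,m}$ and equals $\overline{\epsilon}_{x_n,x_m}$ (in the scaled sense); hence $\frac{\Pr[\mathcal{M}_{\mathrm{I}}(x_n) = \mathcal{A}]}{\Pr[\mathcal{M}_{\mathrm{I}}(x_m) = \mathcal{A}]} \leq e^{\overline{\epsilon}_{x_n,x_m} d_{x_n,x_m}}$ for all $\mathcal{A}$. Combining with the reduction in Step~1 gives Eq.~(\ref{eq:mDP}) for $\mathcal{M}_{\mathrm{I}}$ with budget $\overline{\epsilon}_{x_n,x_m}$, which is exactly the $\delta = 0$ relation. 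I should also dispose of the degenerate case $\Pr[\mathcal{M}_{\mathrm{I}}(x_m) = \mathcal{A}] = 0$: by Proposition~\ref{prop:wsmall} and Criterion~1 ($w_{x_n,x} < 1$ for all $x$), every factor $w_{x_m,x}$ and $1-w_{x_m,x}$ in Eq.~(\ref{eq:M_I}) is strictly positive, so $\Pr[\mathcal{M}_{\mathrm{I}}(x_m) = \mathcal{A}] > 0$ for all $\mathcal{A}$ and the ratio is always well-defined and finite.

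\textbf{Main obstacle.} The substantive content is entirely carried by Proposition~\ref{prop:posteriorbound}, so the proof of the lemma itself is short; the only real subtlety is bookkeeping around the two meanings of $\epsilon_{x_n,x_m,\mathcal{A}}$ and $\overline{\epsilon}_{x_n,x_m}$ (raw log-ratio versus $\epsilon$-per-unit-distance) and making sure the reduction from measurable sets to singletons is stated cleanly for the finite discrete output domain $2^{\mathcal{X}}$. I expect no genuine difficulty beyond this, since $\overline{\epsilon}_{x_n,x_m}$ is \emph{defined} as the tight worst-case leakage and the mechanism has no internal randomness beyond the independent Bernoulli trials, so there is no probabilistic slack to account for — that is precisely why $\delta = 0$.
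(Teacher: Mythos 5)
Your proposal is correct and follows essentially the same route as the paper's own proof: both reduce the claim to the fact that $\overline{\epsilon}_{x_n,x_m}$ is by definition the supremum of the per-anchor-set log-likelihood ratio, so the pure mDP bound holds deterministically and $\delta=0$. You additionally spell out the reduction from arbitrary output subsets to singletons and the non-degeneracy of $\Pr[\mathcal{M}_{\mathrm{I}}(x_m)=\mathcal{A}]$, which the paper leaves implicit, and you correctly flag the normalization discrepancy (raw log-ratio versus distance-scaled) between the main-text and appendix definitions of $\epsilon_{x_n,x_m,\mathcal{A}}$.
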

\begin{proof}
First, according to the definition of $\epsilon_{x_n,x_m,\mathcal{A}}$
\begin{eqnarray}
&& \epsilon_{x_n,x_m,\mathcal{A}} = \left|\frac{\ln \left(\frac{\Pr\left[\mathcal{M}_{\mathrm{I}}(x_n) = \mathcal{A}\right]}{\Pr\left[\mathcal{M}_{\mathrm{I}}(x_m) = \mathcal{A}\right]}\right)}{d_{x_n,x_m}}\right| \\
&\Rightarrow& \left|\ln \left(\frac{\Pr\left[\mathcal{M}_{\mathrm{I}}(x_n) = \mathcal{A}\right]}{\Pr\left[\mathcal{M}_{\mathrm{I}}(x_m) = \mathcal{A}\right]}\right)\right| = \epsilon_{x_n,x_m,\mathcal{A}}d_{x_n,x_m} \\
&& \leq \overline{\epsilon}_{x_n,x_m}d_{x_n,x_m} \\
&\Rightarrow& \Pr\left[\mathcal{M}_{\mathrm{I}}(x_n) \stackrel{\overline{\epsilon}_{x_n,x_m}}{\sim} \mathcal{M}_{\mathrm{I}}(x_m)\right] = 1 \\
&\Rightarrow& \mathcal{M}_{\mathrm{I}}(x_n) \stackrel{(\overline{\epsilon}_{x_n,x_m},0)}{\sim} \mathcal{M}_{\mathrm{I}}(x_m).
\end{eqnarray}
The proof is complete. 
\end{proof}

\subsection{Proof of Proposition \ref{prop:wsmall}}
\label{subsec:proof:prop:wsmall}
\begin{reproposition}
To ensure that $\mathcal{M}_{\mathrm{I}}(x_n) \stackrel{\epsilon}{\approx} \mathcal{M}_{\mathrm{I}}(x_m)$ for any finite $\epsilon > 0$, it is necessary that if $w_{x_n,x} = 1$, then $w_{x_m,x} = 1$ must also hold.
\end{reproposition}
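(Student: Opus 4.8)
The plan is to establish the contrapositive: assuming $\mathcal{M}_{\mathrm{I}}(x_n) \stackrel{\epsilon}{\approx} \mathcal{M}_{\mathrm{I}}(x_m)$ holds for some finite $\epsilon > 0$, I want to show that $w_{x_n,x} = 1$ forces $w_{x_m,x} = 1$. Equivalently, I would suppose for contradiction that there is an anchor record $x \in \mathcal{X}$ with $w_{x_n,x} = 1$ but $w_{x_m,x} < 1$, and then produce a single output event on which the likelihood ratio between $\mathcal{M}_{\mathrm{I}}(x_m)$ and $\mathcal{M}_{\mathrm{I}}(x_n)$ is infinite, which cannot be dominated by the finite quantity $e^{\epsilon d_{x_n,x_m}}$. (The case $x_n = x_m$ is trivial since then $w_{x_n,\cdot} = w_{x_m,\cdot}$, and the case where no record is selected with probability $1$ by $x_n$ makes the claim vacuous, so I may assume $x_n \neq x_m$ and that such an $x$ exists.)

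Concretely, the first step is to exhibit the separating event. A convenient choice is $\mathcal{A}^{\star} = \{x' \in \mathcal{X} : w_{x_m,x'} = 1\}$, the set of records $x_m$ selects deterministically; note $x \notin \mathcal{A}^{\star}$ because $w_{x_m,x} < 1$. Using the product form of $\mathcal{M}_{\mathrm{I}}$ in Eq.~(\ref{eq:M_I}), I would compute $\Pr[\mathcal{M}_{\mathrm{I}}(x_m) = \mathcal{A}^{\star}] = \prod_{x' \notin \mathcal{A}^{\star}} (1 - w_{x_m,x'})$, which is strictly positive since $\mathcal{X}$ is finite and every factor is positive by the definition of $\mathcal{A}^{\star}$. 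On the other hand, since $x \notin \mathcal{A}^{\star}$ and $w_{x_n,x} = 1$, the factor $(1 - w_{x_n,x}) = 0$ appears in the product expansion of $\Pr[\mathcal{M}_{\mathrm{I}}(x_n) = \mathcal{A}^{\star}]$, so $\Pr[\mathcal{M}_{\mathrm{I}}(x_n) = \mathcal{A}^{\star}] = 0$.

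The final step is to plug the measurable set $\mathcal{Y}' = \{\mathcal{A}^{\star}\} \subseteq 2^{\mathcal{X}}$ into the mDP inequality of Eq.~(\ref{eq:mDP}) (applied to $\mathcal{M}_{\mathrm{I}}$, with the roles of $x_n$ and $x_m$ exchanged): the left-hand side equals $\Pr[\mathcal{M}_{\mathrm{I}}(x_m) = \mathcal{A}^{\star}] / \Pr[\mathcal{M}_{\mathrm{I}}(x_n) = \mathcal{A}^{\star}]$, a positive number divided by zero, hence unbounded, while the right-hand side $e^{\epsilon d_{x_n,x_m}}$ is finite for every finite $\epsilon$ (as $d_{x_n,x_m} \in \mathbb{R}_{\geq 0}$). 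This contradicts $\mathcal{M}_{\mathrm{I}}(x_n) \stackrel{\epsilon}{\approx} \mathcal{M}_{\mathrm{I}}(x_m)$, completing the argument. I do not expect a real obstacle here; the only point requiring mild care is verifying that the chosen event $\mathcal{A}^{\star}$ carries strictly positive mass under $\mathcal{M}_{\mathrm{I}}(x_m)$, which is exactly where finiteness of $\mathcal{X}$ (so that a product of finitely many strictly positive factors remains strictly positive) is used.
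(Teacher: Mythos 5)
Your proof is correct and follows essentially the same route as the paper's: assume $w_{x_n,x}=1$ but $w_{x_m,x}<1$, exhibit an anchor set excluding $x$ that has positive probability under $\mathcal{M}_{\mathrm{I}}(x_m)$ but zero probability under $\mathcal{M}_{\mathrm{I}}(x_n)$, and conclude that the resulting zero/infinite likelihood ratio violates the two-sided $\epsilon$-mDP bound. The only difference is cosmetic: you explicitly construct the witness set $\mathcal{A}^{\star}=\{x' : w_{x_m,x'}=1\}$ and verify its positive mass via finiteness of $\mathcal{X}$, whereas the paper merely asserts that some such set exists.
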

\begin{proof}
We prove this by contradiction. Suppose there exists some anchor $x \in \mathcal{X}$ such that $w_{x_n,x} = 1$ but $w_{x_m,x} < 1$. This means that anchor $x$ is always included in the subset selected by $\mathcal{M}_{\mathrm{I}}(x_n)$, but may be excluded from $\mathcal{M}_{\mathrm{I}}(x_m)$ with non-zero probability.

Now consider a subset $\mathcal{A} \subseteq \mathcal{X}$ such that $x \notin \mathcal{A}$ and $\Pr[\mathcal{M}_{\mathrm{I}}(x_m) = \mathcal{A}] > 0$. Since $w_{x_n,x} = 1$, any subset output by $\mathcal{M}_{\mathrm{I}}(x_n)$ must include $x$, and hence:
\begin{equation}
\Pr[\mathcal{M}_{\mathrm{I}}(x_n) = \mathcal{A}] = 0.
\end{equation}
Therefore, we have:
\begin{equation}
\frac{\Pr[\mathcal{M}_{\mathrm{I}}(x_n) = \mathcal{A}]}{\Pr[\mathcal{M}_{\mathrm{I}}(x_m) = \mathcal{A}]} = 0 < e^{-\epsilon d_{x_n,x_m}},
\end{equation}
which violates the $\epsilon$-mDP constraint.

This contradiction implies that our initial assumption is false. Thus, to satisfy the mDP constraints, it is necessary that if $w_{x_n,x} = 1$, then it must also hold that $w_{x_m,x} = 1$.
\end{proof}

\subsection{Proof of Proposition \ref{prop:M2budget}}
\label{subsec:proof:prop:M2budget}
\begin{reproposition}
According to the sequential composition of PmDP (Theorem \ref{thm:composition}), when designing $\mathcal{M}_{\mathrm{II}}$, to ensure 
\begin{equation}
    \mathcal{M}_{\mathrm{\textsc{PAnDA}}}(x_n) \stackrel{(\epsilon,\delta)}{\sim} \mathcal{M}_{\mathrm{\textsc{PAnDA}}}(x_m), 
\end{equation}
it is sufficient to enforce 
\begin{eqnarray}
    \mathcal{M}_{\mathrm{II}}(x_n) \stackrel{(\epsilon - \overline{\epsilon}_{x_n,x_m},\delta)}{\sim}  \mathcal{M}_{\mathrm{II}}(x_m)
\end{eqnarray}
\end{reproposition}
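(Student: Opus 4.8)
The plan is to apply Theorem~\ref{thm:composition} directly, with the budget split $\epsilon_1 = \overline{\epsilon}_{x_n,x_m}$, $\epsilon_2 = \epsilon - \overline{\epsilon}_{x_n,x_m}$ and the failure-probability split $\delta_1 = 0$, $\delta_2 = \delta$. First I would invoke Lemma~\ref{lem:PhI_privacycost}, which already establishes that $\mathcal{M}_{\mathrm{I}}(x_n) \stackrel{(\overline{\epsilon}_{x_n,x_m},\,0)}{\sim} \mathcal{M}_{\mathrm{I}}(x_m)$ for every pair $x_n, x_m \in \mathcal{X}$; this gives the Phase~I ingredient with zero failure probability. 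Then, by the hypothesis of the proposition, we assume $\mathcal{M}_{\mathrm{II}}(x_n) \stackrel{(\epsilon - \overline{\epsilon}_{x_n,x_m},\,\delta)}{\sim} \mathcal{M}_{\mathrm{II}}(x_m)$ is enforced during Phase~II optimization.

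Next I would feed these two facts into the sequential composition theorem. By Theorem~\ref{thm:composition}, the composed mechanism satisfies
\begin{equation}
\mathcal{M}_{\mathrm{\textsc{PAnDA}}}(x_n) \stackrel{\left(\overline{\epsilon}_{x_n,x_m} + (\epsilon - \overline{\epsilon}_{x_n,x_m}),\; 0 + \delta\right)}{\sim} \mathcal{M}_{\mathrm{\textsc{PAnDA}}}(x_m),
\end{equation}
and since $\overline{\epsilon}_{x_n,x_m} + (\epsilon - \overline{\epsilon}_{x_n,x_m}) = \epsilon$, this is exactly $\mathcal{M}_{\mathrm{\textsc{PAnDA}}}(x_n) \stackrel{(\epsilon,\,\delta)}{\sim} \mathcal{M}_{\mathrm{\textsc{PAnDA}}}(x_m)$. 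Because the pair $(x_n, x_m)$ was arbitrary, the guarantee holds for all $x_n, x_m \in \mathcal{X}$, completing the argument.

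The only subtlety worth stating explicitly is that the per-pair budget split depends on the pair through $\overline{\epsilon}_{x_n,x_m}$, so one must check that applying the split pointwise is legitimate: Theorem~\ref{thm:composition} is quantified over each fixed pair, and Lemma~\ref{lem:PhI_privacycost} supplies the matching per-pair Phase~I bound, so there is no circularity — for each pair we use that pair's value of $\overline{\epsilon}_{x_n,x_m}$ consistently in both the Phase~I guarantee and the Phase~II requirement. I do not anticipate a genuine obstacle here; the proposition is essentially a bookkeeping corollary of Theorem~\ref{thm:composition} and Lemma~\ref{lem:PhI_privacycost}. The one place to be careful is ensuring $\epsilon - \overline{\epsilon}_{x_n,x_m} > 0$, i.e.\ that Phase~I does not already exhaust the budget; this is an implicit feasibility assumption on the anchor-selection parameters, and I would note it rather than prove it, since Phase~II optimization is only well-posed when a positive residual budget remains.
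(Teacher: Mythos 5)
Your proposal is correct and follows exactly the paper's own argument: invoke Lemma~\ref{lem:PhI_privacycost} for the Phase~I guarantee $(\overline{\epsilon}_{x_n,x_m},0)$, take the Phase~II guarantee $(\epsilon-\overline{\epsilon}_{x_n,x_m},\delta)$ as hypothesis, and apply Theorem~\ref{thm:composition} pointwise for each pair. Your extra remarks on the per-pair budget split and the implicit feasibility condition $\epsilon > \overline{\epsilon}_{x_n,x_m}$ are sensible additions but do not change the substance.
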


\begin{proof}
For any pair $x_n, x_m \in \mathcal{X}$, it follows from Lemma~\ref{lem:PhI_privacycost} that
\begin{equation}
\mathcal{M}_{\mathrm{I}}(x_n) \stackrel{(\overline{\epsilon}_{x_n,x_m},0)}{\sim} \mathcal{M}_{\mathrm{I}}(x_m).
\end{equation}
Now, assume the following holds:
\begin{equation}
\mathcal{M}_{\mathrm{II}}(x_n) \stackrel{(\epsilon - \overline{\epsilon}_{x_n,x_m},\delta)}{\sim} \mathcal{M}_{\mathrm{II}}(x_m).
\end{equation}
Then, by the sequential composition property of $(\epsilon,\delta)$-PmDP (Theorem~\ref{thm:composition}), the composed mechanism satisfies:
\begin{equation}
\mathcal{M}_{\mathrm{\textsc{PAnDA}}}(x_n) \stackrel{(\epsilon,\delta)}{\sim} \mathcal{M}_{\mathrm{\textsc{PAnDA}}}(x_m).
\end{equation}

This completes the proof.
\end{proof}

\DEL{
\subsection{Proof of Proposition \ref{prop:PL_\textsc{PAnDA}}} 
\begin{proof}
Given the definition of PL defined in Eq. (\ref{eq:PL\textsc{PAnDA}}), we can obtain that 
\begin{eqnarray}
&& e^{-\epsilon d_{n, m}} \leq \mathrm{PL}\left((x_n, x_m); \mathcal{M}_{\mathrm{I}}, \mathcal{M}_{\mathrm{II}}\right) 
\leq e^{\epsilon d_{n, m}} \\
&\Leftrightarrow& e^{-\epsilon d_{n, m}} \leq  \sup_{\mathcal{A}\in 2^{\mathcal{X}}}\frac{\frac{\Pr\left(X_n = x_n |\mathcal{M}_{\mathrm{I}}(X_n) = \mathcal{A}, \mathcal{M}_{\mathrm{II}}(X_n) = y_k\right)}{\Pr\left(X_n = x_m | \mathcal{M}_{\mathrm{I}}(X_n) = \mathcal{A}, \mathcal{M}_{\mathrm{II}}(X_n) = y_k\right)}}{\frac{\Pr\left(X_n = x_n\right)}{\Pr\left(X_n = x_m\right)}} \leq e^{\epsilon d_{n, m}} \\
&\Leftrightarrow& e^{-\epsilon d_{n, m}} \leq  \sup_{\mathcal{A}\in 2^{\mathcal{X}}}\frac{\frac{\Pr\left(\mathcal{M}_{\mathrm{I}}(X_n) = \mathcal{A}, \mathcal{M}_{\mathrm{II}}(X_n) = y_k|X_n = x_n \right)\Pr\left(X_n = x_n \right)}{\Pr\left(\mathcal{M}_{\mathrm{I}}(X_n) = \mathcal{A}, \mathcal{M}_{\mathrm{II}}(X_n) = y_k|X_n = x_m \right)\Pr\left(X_n = x_m \right)}}{\frac{\Pr\left(X_n = x_n\right)}{\Pr\left(X_n = x_m\right)}} \leq e^{\epsilon d_{n, m}}
\\
&\Leftrightarrow& e^{-\epsilon d_{n, m}} \leq  \sup_{\mathcal{A}\in 2^{\mathcal{X}}}\frac{\Pr\left(\mathcal{M}_{\mathrm{I}}(X_n) = \mathcal{A}, \mathcal{M}_{\mathrm{II}}(X_n) = y_k|X_n = x_n \right)}{\Pr\left(\mathcal{M}_{\mathrm{I}}(X_n) = \mathcal{A}, \mathcal{M}_{\mathrm{II}}(X_n) = y_k|X_n = x_m \right)} \leq e^{\epsilon d_{n, m}}\\
&\Leftrightarrow& e^{-\epsilon d_{n, m}} \leq  \frac{\Pr\left(\mathcal{M}_{\mathrm{I}}(X_n) = \mathcal{A}, \mathcal{M}_{\mathrm{II}}(X_n) = y_k|X_n = x_n \right)}{\Pr\left(\mathcal{M}_{\mathrm{I}}(X_n) = \mathcal{A}, \mathcal{M}_{\mathrm{II}}(X_n) = y_k|X_n = x_m \right)} \leq e^{\epsilon d_{n, m}},~\forall \mathcal{A} \in 2^{\mathcal{X}}
\end{eqnarray}
which is the mDP of \textsc{PAnDA} defined in Eq. (\ref{eq:mDP\textsc{PAnDA}}). The proof is completed. 
\end{proof}}
\DEL{
\begin{eqnarray}
&\Leftrightarrow&  \frac{\Pr\left(\mathcal{M}_{\mathrm{I}}(X_n) = \mathcal{A}, \mathcal{M}_{\mathrm{II}}(X_n) = y_k|X_n = x_n \right)+\delta}{\Pr\left(\mathcal{M}_{\mathrm{I}}(X_n) = \mathcal{A}, \mathcal{M}_{\mathrm{II}}(X_n) = y_k|X_n = x_m \right)} \leq e^{\epsilon d_{n, m}} \\
&\Leftrightarrow&  \frac{\frac{\left(\Pr\left(\mathcal{M}_{\mathrm{I}}(X_n) = \mathcal{A}, \mathcal{M}_{\mathrm{II}}(X_n) = y_k|X_n = x_n \right)+\delta\right)\Pr\left(X_n = x_n \right)}{\Pr\left(\mathcal{M}_{\mathrm{I}}(X_n) = \mathcal{A}, \mathcal{M}_{\mathrm{II}}(X_n) = y_k|X_n = x_m \right)\Pr\left(X_n = x_m \right)}}{\frac{\Pr\left(X_n = x_n\right)}{\Pr\left(X_n = x_m\right)}} \leq e^{\epsilon d_{n, m}} \\
&\Leftrightarrow&  \frac{\frac{\Pr\left(\mathcal{M}_{\mathrm{I}}(X_n) = \mathcal{A}, \mathcal{M}_{\mathrm{II}}(X_n) = y_k, X_n = x_n \right)+\delta\Pr\left(X_n = x_n \right)}{\Pr\left(\mathcal{M}_{\mathrm{I}}(X_n) = \mathcal{A}, \mathcal{M}_{\mathrm{II}}(X_n) = y_k,X_n = x_m \right)}}{\frac{\Pr\left(X_n = x_n\right)}{\Pr\left(X_n = x_m\right)}} \leq e^{\epsilon d_{n, m}} \\
&\Leftrightarrow&  \frac{\frac{\Pr\left(X_n = x_n|\mathcal{M}_{\mathrm{I}}(X_n) = \mathcal{A}, \mathcal{M}_{\mathrm{II}}(X_n) = y_k \right)+\delta\frac{\Pr\left(X_n = x_n \right)}{\Pr\left(\mathcal{M}_{\mathrm{I}}(X_n) = \mathcal{A}, \mathcal{M}_{\mathrm{II}}(X_n) = y_k \right)}}{\Pr\left(X_n = x_m|\mathcal{M}_{\mathrm{I}}(X_n) = \mathcal{A}, \mathcal{M}_{\mathrm{II}}(X_n) = y_k \right)}}{\frac{\Pr\left(X_n = x_n\right)}{\Pr\left(X_n = x_m\right)}} \leq e^{\epsilon d_{n, m}}
\end{eqnarray}

\subsection{Proof of Proposition \ref{prop:PL}}

\begin{proof}
First, we derive the posterior given the perturbed data $y_k$ and the anchor record set $\mathcal{A}$. Note that given $X_n = x_n$, $\mathcal{M}_{\mathrm{II}}(X_n) = y_k$ and $\mathcal{M}_{\mathrm{I}}(X_n) = \mathcal{A}$ are \emph{conditionally independent}, i.e., 
\begin{eqnarray}
&& \Pr\left(\mathcal{M}_{\mathrm{II}}(X_n) = y_k| X_n = x_n, \mathcal{M}_{\mathrm{I}}(X_n) = \mathcal{A}\right) \\
&=& \Pr\left(\mathcal{M}_{\mathrm{II}}(X_n) = y_k| X_n = x_n\right) \\
&=& z_{n,k}. 
\end{eqnarray}
The posterior of the real record $X_n$ is calculated by 
\begin{eqnarray}
&& \Pr\left(X_n = x_n | \mathcal{M}_{\mathrm{II}}(X_n) = y_k, \mathcal{M}_{\mathrm{I}}(X_n) = \mathcal{A}\right) \\ 
&=& \frac{\Pr\left(X_n = x_n, \mathcal{M}_{\mathrm{II}}(X_n) = y_k, \mathcal{M}_{\mathrm{I}}(X_n) = \mathcal{A}\right)}{\Pr\left(\mathcal{M}_{\mathrm{II}}(X_n) = y_k, \mathcal{M}_{\mathrm{I}}(X_n) = \mathcal{A}\right)} \\ 
&=& \frac{\Pr\left(\mathcal{M}_{\mathrm{II}}(X_n) = y_k| X_n = x_n, \mathcal{M}_{\mathrm{I}}(X_n) = \mathcal{A}\right)\Pr\left(\mathcal{M}_{\mathrm{I}}(X_n) = \mathcal{A}|X_n = x_n\right)\Pr\left(X_n = x_n\right)}{\Pr\left(\mathcal{M}_{\mathrm{II}}(X_n) = y_k, \mathcal{M}_{\mathrm{I}}(X_n) = \mathcal{A}\right)} \\ 
&=& \frac{z_{n,k}u_{i, \mathcal{A}}\Pr\left(X_n = x_n\right)}{\Pr\left(\mathcal{M}_{\mathrm{II}}(X_n) = y_k, \mathcal{M}_{\mathrm{I}}(X_n) = \mathcal{A}\right)}
\end{eqnarray}
\DEL{
Accordingly, the posterior ratio given $\mathcal{A}$ is derived by 
\begin{eqnarray}
\frac{\Pr\left(X_n = x_n | \mathcal{M}_{\mathrm{II}}(X_n) = y_k, \mathcal{M}_{\mathrm{I}}(X_n) = \mathcal{A}\right)}{\Pr\left(X_n = x_m | \mathcal{M}_{\mathrm{II}}(X_n) = y_k, \mathcal{M}_{\mathrm{I}}(X_n) = \mathcal{A}\right)} = \frac{z_{n,k}\Pr\left(\mathcal{M}_{\mathrm{I}}(X_n) = \mathcal{A}|X_n = x_n\right)\Pr\left(X_n = x_n\right)}{z_{m,k}\Pr\left(\mathcal{M}_{\mathrm{I}}(X_n) = \mathcal{A}|X_n = x_m\right)\Pr\left(X_n = x_m\right)} 
\end{eqnarray}}
from which we can derive the posterior leakage $\mathrm{PL}\left((x_n, x_m); \mathcal{M}_{\mathrm{I}}, \mathcal{M}_{\mathrm{II}}\right)$ as 
\begin{eqnarray}
&& \mathrm{PL}\left((x_n, x_m); \mathcal{M}_{\mathrm{I}}, \mathcal{M}_{\mathrm{II}}\right) \\
&=& \sup_{\mathcal{A}\in 2^{\mathcal{X}}, y_k \in \mathcal{Y}}
 \underbrace{\frac{\Pr\left(X_n = x_n | \mathcal{M}_{\mathrm{II}}(X_n) = y_k, \mathcal{M}_{\mathrm{I}}(X_n) = \mathcal{A}\right)}{\Pr\left(X_n = x_m | \mathcal{M}_{\mathrm{II}}(X_n) = y_k, \mathcal{M}_{\mathrm{I}}(X_n) = \mathcal{A}\right)}}_{\mbox{posterior ratio}}\left\slash\underbrace{\frac{\Pr\left(X_n = x_n\right)}{\Pr\left(X_n = x_m\right)}}_{\mbox{prior ratio}}\right. \\
&=& \sup_{\mathcal{A}\in 2^{\mathcal{X}}, y_k \in \mathcal{Y}}
 \underbrace{\frac{\frac{z_{n,k}u_{i, \mathcal{A}}\Pr\left(X_n = x_n\right)}{\Pr\left(\mathcal{M}_{\mathrm{II}}(X_n) = y_k, \mathcal{M}_{\mathrm{I}}(X_n) = \mathcal{A}\right)}}{\frac{z_{m,k}u_{j, \mathcal{A}}\Pr\left(X_n = x_m\right)}{\Pr\left(\mathcal{M}_{\mathrm{II}}(X_n) = y_k, \mathcal{M}_{\mathrm{I}}(X_n) = \mathcal{A}\right)}}}_{\mbox{posterior ratio}}\left\slash\underbrace{\frac{\Pr\left(X_n = x_n\right)}{\Pr\left(X_n = x_m\right)}}_{\mbox{prior ratio}}\right. \\
&=& \sup_{\mathcal{A} \in 2^{\mathcal{X}}} \frac{u_{i, \mathcal{A}}}{u_{j, \mathcal{A}}} \times \sup_{y_k \in \mathcal{Y}}\frac{z_{n,k}}{z_{m,k}}.
\end{eqnarray}
\DEL{
According to Definition \ref{def:PL\textsc{PAnDA}}, 
\begin{eqnarray}
&& \mathrm{PL}\left((x_n, x_m); \mathcal{M}_{\mathrm{I}}, \mathcal{M}_{\mathrm{II}}\right) \\
&=& \sup_{\mathcal{A}\ni x_n, \mathcal{A}' \ni x_m }\frac{\frac{\Pr\left(X_n = x_n |\mathcal{M}_{\mathrm{I}}(X_n) = \mathcal{A}, \mathcal{M}_{\mathrm{II}}(X_n) = y_k\right)}{\Pr\left(X_n = x_m | \mathcal{M}_{\mathrm{II}}(X_n) = \mathcal{A}', \mathcal{M}_{\mathrm{II}}(X_n) = y_k\right)}}{\frac{\Pr\left(X_n = x_n\right)}{\Pr\left(X_n = x_m\right)}}\\
\label{eq:PLupperbound1}
&=& \sup_{\mathcal{A}\ni x_n, \mathcal{A}' \ni x_m} \frac{u_{i, \mathcal{A}} \sum_{x_\ell \in \mathcal{X}}z_{l,k}u_{x_\ell, \mathcal{A}'}p_{x_\ell}}{u_{x_m, \mathcal{A}'}\sum_{x_\ell \in \mathcal{X}}z_{l,k}u_{x_\ell, \mathcal{A}}p_{x_\ell}} \times \frac{z_{n,k}}{z_{m,k}}
\end{eqnarray}}
The proof is completed. 
\end{proof}}

\subsection{Proof of Proposition \ref{prop:failuremDPprob}}
\label{subsec:proof:prop:failuremDPprob}
\begin{reproposition}
For each pair of real records $x_n$ and $x_m$, the probability of success is lower bounded by 
\begin{eqnarray}
h_{x_n,x_m}(\xi) = \sum_{\left(\hat{x}_{n},\hat{x}_{m}\right) \in \mathcal{H}_{x_n,x_m}(\xi)}v_{x_{n}, \hat{x}_{n}} v_{x_{m}, \hat{x}_{m}}
\end{eqnarray}
where each $v_{x_{q}, \hat{x}_{q}} = w_{x_q,\hat{x}_q} \prod_{x \in \mathcal{U}_{x_q,\hat{x}_q}}(1-w_{x_q,x}) (q = n, m)$ represents the probability that $\hat{x}_{q}$ is $x_{q}$'s surrogate and $$\mathcal{U}_{x_{q},\hat{x}_q} = \left\{x\in \mathcal{X}| d_{x_{q},x} < d_{x_{q},\hat{x}_q}\right\}$$ represents the set of records that is closer to $x_{q}$ compared to $\hat{x}_q$.
\end{reproposition}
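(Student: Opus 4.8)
\noindent The plan is to reduce the claim to a short probabilistic computation over the randomness of Phase~I, and then invoke the sufficiency of $\mathcal{H}_{x_n,x_m}(\xi)$ already established in Eq.~(\ref{eq:mDP_xi}). Throughout, ``success'' for the pair $(x_n,x_m)$ will mean that the composed mechanism satisfies $\mathcal{M}_{\mathrm{\textsc{PAnDA}}}(x_n) \stackrel{\epsilon}{\approx} \mathcal{M}_{\mathrm{\textsc{PAnDA}}}(x_m)$.

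First I would pin down the distribution of each user's surrogate. Fix $q \in \{n,m\}$. Since $\hat{x}_q = \arg\min_{x \in \mathcal{A}_q} d_{x_q,x}$, for a candidate record $\hat{x}$ the event that $x_q$'s surrogate equals $\hat{x}$ is precisely the event that $\hat{x} \in \mathcal{A}_q$ \emph{and} no record strictly closer to $x_q$ than $\hat{x}$ is included, i.e.\ $\mathcal{A}_q \cap \mathcal{U}_{x_q,\hat{x}} = \emptyset$. Because Eq.~(\ref{eq:M_I}) makes the inclusion of each record an independent Bernoulli trial with parameter $w_{x_q,\cdot}$, and because $\{\hat{x}\}$ and $\mathcal{U}_{x_q,\hat{x}}$ are disjoint index sets, these two sub-events are independent, which gives $\Pr[\,\hat{x}_q = \hat{x} \mid X_q = x_q\,] = w_{x_q,\hat{x}} \prod_{x \in \mathcal{U}_{x_q,\hat{x}}}(1 - w_{x_q,x}) = v_{x_q,\hat{x}}$. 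I would assume distinct pairwise distances, or fix a deterministic tie-break, so that $\mathcal{U}_{x_q,\hat{x}}$ is unambiguous; the formula is unaffected, and the measure-zero / empty-$\mathcal{A}_q$ cases only cost probability and thus preserve the desired lower bound. Next, since users $n$ and $m$ draw $\mathcal{A}_n$ and $\mathcal{A}_m$ with independent randomness, their surrogates are independent, so the probability that $x_n$'s surrogate is $\hat{x}_n$ and $x_m$'s surrogate is $\hat{x}_m$ equals $v_{x_n,\hat{x}_n}\, v_{x_m,\hat{x}_m}$.

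Then I would close the argument via $\mathcal{H}_{x_n,x_m}(\xi)$: by its definition in Eq.~(\ref{eq:H}) and the chain of inequalities in Eq.~(\ref{eq:mDP_xi}), whenever the realized surrogate pair lies in $\mathcal{H}_{x_n,x_m}(\xi)$, the tightened Phase~II constraint Eq.~(\ref{eq:mDPPh-II-1-revised}) together with the Phase~I bound $\overline{\epsilon}_{x_n,x_m}$ from Lemma~\ref{lem:PhI_privacycost} forces the total privacy cost between $x_n$ and $x_m$ to stay within $\epsilon\, d_{x_n,x_m}$---and since $\overline{\epsilon}_{x_n,x_m}$ merely upper-bounds the actual Phase~I cost, this can only tighten. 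Hence the event $\{(\hat{x}_n,\hat{x}_m) \in \mathcal{H}_{x_n,x_m}(\xi)\}$ is contained in the success event, and summing the disjoint surrogate-pair probabilities over $\mathcal{H}_{x_n,x_m}(\xi)$ yields $\Pr[\text{success}] \geq \sum_{(\hat{x}_n,\hat{x}_m) \in \mathcal{H}_{x_n,x_m}(\xi)} v_{x_n,\hat{x}_n}\, v_{x_m,\hat{x}_m} = h_{x_n,x_m}(\xi)$. I expect the first step to be the main obstacle: making the ``nearest-anchor'' surrogate-probability identity fully rigorous, in particular the independence/disjointness decomposition and the handling of distance ties and empty anchor sets; once that identity is in place, cross-user independence and the already-established sufficiency of $\mathcal{H}_{x_n,x_m}(\xi)$ reduce the remainder to a one-line summation.
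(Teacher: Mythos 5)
Your proposal is correct and follows essentially the same route as the paper's own proof: it derives the surrogate probability $v_{x_q,\hat{x}_q}$ from the two events ``$\hat{x}_q$ is selected'' and ``no closer record is selected,'' multiplies across the two independent users, and lower-bounds the success probability by summing the disjoint surrogate-pair events over $\mathcal{H}_{x_n,x_m}(\xi)$. The extra care you take with the independence/disjointness decomposition, distance ties, and empty anchor sets is a welcome tightening of details the paper leaves implicit, but it does not change the argument.
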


\begin{proof}
Given a real record $x_n$ and a candidate anchor $\hat{x}_n$, define $\mathcal{U}_{x_n,\hat{x}_n} = \left\{ x \in \mathcal{X} \mid d_{x_n,x} < d_{x_n,\hat{x}_n} \right\}$ as the set of records closer to $x_n$ than $\hat{x}_n$.

A record $\hat{x}_n$ becomes the surrogate of $x_n$ if and only if:
\begin{itemize}
    \item[(1)] $\hat{x}_n \in \mathcal{A}_{x_n}$, i.e., $\hat{x}_n$ is selected as one of $x_n$'s anchors, and
    \item[(2)] $\mathcal{U}_{x_n,\hat{x}_n} \cap \mathcal{A}_{x_n} = \emptyset$, i.e., no closer record is selected as an anchor.
\end{itemize}

The probability of this event is:
\begin{eqnarray}
&& \Pr[\text{$\hat{x}_n$ is the surrogate of $x_n$}] 
\\
&=& \Pr[\hat{x}_n \in \mathcal{A}_{x_n}]  \Pr[\mathcal{U}_{x_n,\hat{x}_n} \cap \mathcal{A}_{x_n} = \emptyset] \\
&=& w_{x_n, \hat{x}_n}  \prod_{x \in \mathcal{U}_{x_n, \hat{x}_n}} (1 - w_{x_n, x}).
\end{eqnarray}

Now consider the joint event that $x$ and $x'$ are the surrogates of $x_n$ and $x_m$, respectively, for some pair $(x, x') \in \mathcal{H}_{x_n,x_m}(\xi)$ that satisfies the mDP constraint. Then, the success probability of satisfying mDP is at least:
\begin{eqnarray}
\nonumber && \Pr\left[\mbox{$\mathcal{M}_{\mathrm{\textsc{PAnDA}}}(x_n) \stackrel{\epsilon}{\approx} \mathcal{M}_{\mathrm{\textsc{PAnDA}}}(x_m)$}\right]\\
\nonumber &\geq& \sum_{\left(x, x'\right)\in \mathcal{H}_{x_n,x_m}(\xi)} \Pr\left[\mbox{$x$ and $x'$ are surrogates of $x_n$ and $x_m$}\right]\\ \nonumber
&=& \underbrace{\sum_{\left(x, x'\right)\in \mathcal{H}_{x_n,x_m}(\xi)}\left(\begin{array}{l}
     w_{x_n,x} \prod_{\tilde{x} \in \mathcal{U}_{x_n,x}}(1-w_{x_n,\tilde{x}})\\
     \times w_{x_m,x'} \prod_{\tilde{x} \in \mathcal{U}_{x_m,x'}}(1-w_{x_m,\tilde{x}}) 
\end{array}\right)}_{h(\xi)}
\end{eqnarray}

This completes the proof.
\end{proof}

\subsection{Proof of Property \ref{prop:xi}}
\label{subsec:proof:prop:xi}
\begin{reproperty}
(1) $h_{x_n,x_m}(\xi)$ is \emph{monotonically non-decreasing} in $\xi$, and (2) by setting  $\xi_{x_n, x_m} = h_{x_n,x_m}^{-1}(1-\delta)$
we can guarantee $(\epsilon, \delta)$-PmDP for $(x_n, x_m)$. 
\end{reproperty}
\begin{proof}
(1) Given any $\xi_1 < \xi_2$, 
\begin{eqnarray}
&& (\epsilon-\overline{\epsilon}_{x,x'}) d_{x,x'}-\xi_1 \leq (\epsilon - \overline{\epsilon}_{x_n,x_m}) d_{x_n,x_m} \\
&\Rightarrow & (\epsilon-\overline{\epsilon}_{x,x'}) d_{x,x'}-\xi_2 \leq (\epsilon - \overline{\epsilon}_{x_n,x_m}) d_{x_n,x_m}
\end{eqnarray}
indicating that $\mathcal{H}_{x_n,x_m}(\xi_1) \subseteq \mathcal{H}_{x_n,x_m}(\xi_2)$. 
\normalsize
\begin{eqnarray}
\nonumber 
&& h_{n,m}(\xi_2) - h_{n,m}(\xi_1)\\ \nonumber 
&=& \sum_{
\left(\hat{x}_{n}, \hat{x}_{m}\right) \in
\mathcal{H}_{x_n,x_m}(\xi_2)\backslash \mathcal{H}_{x_n,x_m}(\xi_1)}
\left(w_{x_n,\hat{x}_n} \prod_{x_\ell \in \mathcal{U}_{x_n,\hat{x}_n}}(1-w_{x_n,x_\ell})\right. \\ \nonumber 
&&\times \left.  w_{x_m,\hat{x}_m} \prod_{x_\ell \in \mathcal{U}_{x_m,\hat{x}_m}}(1-w_{x_m,x_\ell})\right) \\ \nonumber 
&\geq& 0.
\end{eqnarray}
\normalsize
\end{proof}

\subsection{Proof of Proposition \ref{prop:samplebudget}}
\label{subsec:proof:prop:samplebudget}
\begin{reproposition}
By applying the privacy budget $(\epsilon-\overline{\epsilon}_{x,x'}) d_{x,x'}-\hat{\xi}_{x,x'}$, it can guarantee $(\epsilon, \delta)$-PmDP for $(x_n, x_m)$.
\end{reproposition}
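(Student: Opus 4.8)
The plan is to reduce the claim, through Proposition~\ref{prop:M2budget}, to showing that the Phase-II mechanism satisfies $\mathcal{M}_{\mathrm{II}}(x_n) \stackrel{(\epsilon - \overline{\epsilon}_{x_n,x_m},\,\delta)}{\sim} \mathcal{M}_{\mathrm{II}}(x_m)$ for every pair of true records $x_n,x_m$, when the anchor constraints of Eq.~(\ref{eq:PAnDA_mDP_LP}) are enforced with the server-computable margins $\hat{\xi}_{x,x'}$ rather than the ideal pair-specific margin $\xi_{x_n,x_m}=h_{x_n,x_m}^{-1}(1-\delta)$ from Property~\ref{prop:xi}. First I would make the success event explicit: rearranging the defining inequality of $\mathcal{H}_{x_n,x_m}(\xi)$ (Eq.~(\ref{eq:H})), exactly as in Eq.~(\ref{eq:mDP_xi}), shows that once the constraints of Eq.~(\ref{eq:PAnDA_mDP_LP}) hold, the $(\epsilon-\overline{\epsilon}_{x_n,x_m})$-mDP inequality between $x_n$ and $x_m$ in Phase~II holds whenever the random surrogate pair $(\hat{x}_n,\hat{x}_m)$ lies in $\mathcal{H}_{x_n,x_m}(\hat{\xi}_{\hat{x}_n,\hat{x}_m})$. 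Hence the quantity to lower-bound is $\Pr[(\hat{x}_n,\hat{x}_m)\in\mathcal{H}_{x_n,x_m}(\hat{\xi}_{\hat{x}_n,\hat{x}_m})]$, taken over the anchor-selection randomness of users $n$ and $m$.

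Next I would condition on the realized surrogate pair being a given $(x,x')$ and on the ``approximation-good'' event $G_{x,x'}=\{x_n\in\mathcal{S}_{x},\,x_m\in\mathcal{S}_{x'}\}$. On $G_{x,x'}$ the definition $\hat{\xi}_{x,x'}=\sup_{(\tilde{x},\tilde{x}')\in\mathcal{S}_{x}\times\mathcal{S}_{x'}}h_{\tilde{x},\tilde{x}'}^{-1}(1-\delta_{x,x'})$ immediately gives $\hat{\xi}_{x,x'}\ge h_{x_n,x_m}^{-1}(1-\delta_{x,x'})$, and the monotonicity of $h_{x_n,x_m}$ (part~(1) of Property~\ref{prop:xi}) together with part~(2) shows that the margin actually applied to the surrogate pair is at least one that alone guarantees a conditional success probability of $1-\delta_{x,x'}$ for $(x_n,x_m)$. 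Since, conditioned on the surrogate pair equalling $(x,x')$, the probability of $G_{x,x'}$ equals $\Pr[(x_n,x_m)\in\mathcal{S}_{x}\times\mathcal{S}_{x'}\mid x\in\mathcal{A}_n,\,x'\in\mathcal{A}_m]$ (up to the harmless strengthening discussed below), the law of total probability yields
\[
\Pr[\text{success}\mid (\hat{x}_n,\hat{x}_m)=(x,x')] \;\ge\; \Pr[(x_n,x_m)\in\mathcal{S}_{x}\times\mathcal{S}_{x'}\mid x\in\mathcal{A}_n,\,x'\in\mathcal{A}_m]\,(1-\delta_{x,x'}),
\]
and substituting $1-\delta_{x,x'}=(1-\delta)/\Pr[(x_n,x_m)\in\mathcal{S}_{x}\times\mathcal{S}_{x'}\mid x\in\mathcal{A}_n,\,x'\in\mathcal{A}_m]$ collapses the bound to $1-\delta$. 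Averaging over the distribution of the surrogate pair gives $\Pr[\text{success}]\ge 1-\delta$, i.e.\ $\mathcal{M}_{\mathrm{II}}(x_n) \stackrel{(\epsilon-\overline{\epsilon}_{x_n,x_m},\,\delta)}{\sim} \mathcal{M}_{\mathrm{II}}(x_m)$, and Proposition~\ref{prop:M2budget} then closes the argument. (The requirement that $\Gamma$ be large enough so that $\Pr[(x_n,x_m)\in\mathcal{S}_{x}\times\mathcal{S}_{x'}\mid\cdots]>1-\delta$ is exactly what keeps $\delta_{x,x'}\in[0,1]$ and makes this substitution legitimate.)

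I expect the main obstacle to lie in the two parenthetical points above. First, the surrogate event ``$\hat{x}_n=x$'' is strictly stronger than ``$x\in\mathcal{A}_n$''---it also requires that no record closer to $x_n$ than $x$ was selected---so I would need a monotone-coupling argument to show that this extra conditioning does not decrease $\Pr[x_n\in\mathcal{S}_{x}]$ (intuitively it increases it, since it forces $x$ to be the closest selected anchor to $x_n$). Second, I must verify that membership of the specific realized pair $(\hat{x}_n,\hat{x}_m)$ in $\mathcal{H}_{x_n,x_m}(\hat{\xi}_{\hat{x}_n,\hat{x}_m})$ genuinely follows from the mass inequality $\hat{\xi}_{\hat{x}_n,\hat{x}_m}\ge h_{x_n,x_m}^{-1}(1-\delta_{\hat{x}_n,\hat{x}_m})$: this uses the monotonicity and the sorted-threshold construction of $h_{x_n,x_m}^{-1}$ via the candidate-margin set $\Upsilon_{x_n,x_m}$, and the rare ``hard'' surrogate realizations must be charged to the $\delta_{x,x'}$ slack rather than to $G_{x,x'}$. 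A final routine check is that $\Pr[(x_n,x_m)\in\mathcal{S}_{x}\times\mathcal{S}_{x'}\mid x\in\mathcal{A}_n,\,x'\in\mathcal{A}_m]$ is indeed the quantity computed by the Bayes' formula of Eq.~(\ref{eq:PPP1})--Eq.~(\ref{eq:PPP2}) and coincides with the value used to calibrate $\delta_{x,x'}$.
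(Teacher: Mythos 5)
Your proposal follows essentially the same route as the paper's proof: the paper likewise uses the implication $(x_n,x_m)\in\mathcal{S}_x\times\mathcal{S}_{x'}\Rightarrow\hat{\xi}_{x,x'}\ge\xi_{x_n,x_m}$ to lower-bound $\Pr[\hat{\xi}_{x,x'}\ge\xi_{x_n,x_m}]$ by the Bayes-computed neighborhood probability, invokes Property~\ref{prop:xi} to get a conditional success probability of at least $1-\delta_{x,x'}$, and lets the calibration $1-\delta_{x,x'}=(1-\delta)/\Pr\left[(x_n,x_m)\in\mathcal{S}_x\times\mathcal{S}_{x'}\mid x\in\mathcal{A}_n, x'\in\mathcal{A}_m\right]$ collapse the product to $1-\delta$. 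The two subtleties you flag (conditioning on the surrogate event rather than mere anchor membership, and the pair-dependence of the applied margins) are real but are passed over silently in the paper's proof as well, so your treatment is, if anything, slightly more careful than the original.
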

\begin{proof}
Note that locations $(x_n,x_m) \in \mathcal{S}_{x} \times \mathcal{S}_{x'} \Rightarrow \hat{\xi}_{x,x'} \geq \xi_{x_n,x_m}$, hence 
\begin{eqnarray}
\nonumber 
\Pr\left[\hat{\xi}_{x,x'} \geq \xi_{x_n,x_m}\right] \geq \Pr\left[(x_n,x_m) \in \mathcal{S}_{x} \times \mathcal{S}_{x'}|x\in \mathcal{A}_n, x'\in \mathcal{A}_m\right] 
\end{eqnarray}
where 
\normalsize
{\rd \begin{eqnarray}
\label{eq:PPP1}
&& \Pr\left[(x_n,x_m) \in \mathcal{S}_{x} \times \mathcal{S}_{x'}|x\in \mathcal{A}_n, x'\in \mathcal{A}_m\right] \\ \nonumber 
&=& \sum_{(\tilde{x},\tilde{x}') \in \mathcal{S}_{x} \times \mathcal{S}_{x'}} \Pr\left[X_n = \tilde{x}| x \in \mathcal{A}_n \right] \times \Pr\left[X_m = \tilde{x}' | x' \in \mathcal{A}_m\right] \\\label{eq:PPP2}
&=& \sum_{(\tilde{x},\tilde{x}') \in \mathcal{S}_{x} \times \mathcal{S}_{x'}} \frac{ p_{\tilde{x}} w_{\tilde{x}, x}}{\sum_{u \in \mathcal{X}}p_u w_{u, x}} \times \frac{p_{\tilde{x}'} w_{\tilde{x}', x'}}{\sum_{u' \in \mathcal{X}}p_{u'} w_{u', x'}}. 
\end{eqnarray}}
\normalsize
Here, we let  
\begin{eqnarray}
1-\delta_{x, x'} = \frac{1-\delta}{\Pr\left[(x_n,x_m) \in \mathcal{S}_{x} \times \mathcal{S}_{x'}|x\in \mathcal{A}_n, x'\in \mathcal{A}_m\right] }.
\end{eqnarray}
If $\Pr\left[(x_n,x_m) \in \mathcal{S}_{x} \times \mathcal{S}_{x'}|x\in \mathcal{A}_n, x'\in \mathcal{A}_m\right] < 1-\delta$,  
\begin{eqnarray}
&& \Pr\left[\mbox{$\mathcal{M}_{\mathrm{\textsc{PAnDA}}}(x_n) \stackrel{\epsilon}{\approx} \mathcal{M}_{\mathrm{\textsc{PAnDA}}}(x_m)$}\right] \\
&=& \Pr\left[\hat{\xi}_{x,x'} \geq \xi_{x_n,x_m}, \mbox{$\mathcal{M}_{\mathrm{\textsc{PAnDA}}}(x_n) \stackrel{\epsilon}{\approx} \mathcal{M}_{\mathrm{\textsc{PAnDA}}}(x_m)$}\right]  \\ \nonumber
&+& \Pr\left[\hat{\xi}_{x,x'} < \xi_{x_n,x_m}, \mbox{$\mathcal{M}_{\mathrm{\textsc{PAnDA}}}(x_n) \stackrel{\epsilon}{\approx} \mathcal{M}_{\mathrm{\textsc{PAnDA}}}(x_m)$}\right] \\
&\geq & \Pr\left[\mbox{$\mathcal{M}_{\mathrm{\textsc{PAnDA}}}(x_n) \stackrel{\epsilon}{\approx} \mathcal{M}_{\mathrm{\textsc{PAnDA}}}(x_m)$}|\hat{\xi}_{x,x'} \geq \xi_{x_n,x_m}\right]\\ \nonumber
&\times& \Pr\left[\hat{\xi}_{x,x'} \geq \xi_{x_n,x_m}\right] \\
&>& (1-\delta_{x, x'}) \Pr\left[\hat{\xi}_{x,x'} \geq \xi_{x_n,x_m}\right]  \\ \nonumber
&=& \frac{1-\delta}{\Pr\left[(x_n,x_m) \in \mathcal{S}_{x} \times \mathcal{S}_{x'}|x\in \mathcal{A}_n, x'\in \mathcal{A}_m\right] } \\
&\times& \Pr\left[\hat{\xi}_{x,x'} \geq \xi_{x_n,x_m}\right] \\
&\geq & 1 - \delta. 
\end{eqnarray}
The proof is completed. 
\end{proof}

\subsection{Proof of Proposition \ref{prop:ULbound}}
\label{subsec:proof:prop:ULbound}
\begin{reproposition}
Let $\tilde{\mathbf{Z}}_{[1, N]}$ denote the optimal solution of the Relaxed \textsc{AnPO}. Let $\mathbf{Z}^*$ denote the optimal soution of the original perturbation optimization (as defined in Eq. (\ref{eq:PPO})) and let $\mathbf{Z}^*_{[1, N]}$ denote the submatrix of $\mathbf{Z}^*$ that covers $\mathcal{A}_{{[1, N]}}$. Then, $
\mathcal{L}\left(\tilde{\mathbf{Z}}_{[1, N]}\right)\leq \mathcal{L}\left(\mathbf{Z}^*_{[1, N]}\right)$. 
\end{reproposition}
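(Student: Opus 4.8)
The plan is to exhibit $\mathbf{Z}^*_{[1,N]}$ as a feasible point of the Relaxed \textsc{AnPO} linear program and then invoke the optimality of $\tilde{\mathbf{Z}}_{[1,N]}$. First I would write out the constraints defining Relaxed \textsc{AnPO} over the variables $\{z_{x,y}\}_{(x,y)\in\mathcal{A}_{[1,N]}\times\mathcal{Y}}$: the per-row unit-measure equalities $\sum_{y\in\mathcal{Y}} z_{x,y}=1$ for $x\in\mathcal{A}_{[1,N]}$ (Eq.~(\ref{eq:LPPOconstr})); the relaxed mDP inequalities $z_{x,y}-e^{\epsilon d_{x,x'}}z_{x',y}\le 0$ for all $(x,x')\in\mathcal{A}_n\times\mathcal{A}_m$ and $y\in\mathcal{Y}$ (Eq.~(\ref{eq:PAnDA_mDP_LP_relax})); and the box constraints $0\le z_{x,y}\le 1$.

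Next I would take the optimal solution $\mathbf{Z}^*$ of the full-domain problem in Eq.~(\ref{eq:PPO}), form its restriction $\mathbf{Z}^*_{[1,N]}$ to the rows indexed by $\mathcal{A}_{[1,N]}$, and check each constraint family. The unit-measure and box constraints are row-wise, so they hold because $\mathbf{Z}^*$ already satisfies Eq.~(\ref{eq:um}) and $0\le z^*_{x,y}\le 1$ for every $x\in\mathcal{X}\supseteq\mathcal{A}_{[1,N]}$. For the relaxed mDP inequalities, observe that $\mathbf{Z}^*$ satisfies Eq.~(\ref{eq:mDPdiscrete}) for \emph{all} pairs $x,x'\in\mathcal{X}$; since each $\mathcal{A}_n$ and $\mathcal{A}_m$ is a subset of $\mathcal{X}$, the constraints of Relaxed \textsc{AnPO} form a subcollection of those already met by $\mathbf{Z}^*$, hence hold a fortiori. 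So $\mathbf{Z}^*_{[1,N]}$ is feasible for Relaxed \textsc{AnPO}.

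Finally I would compare objective values. The Relaxed \textsc{AnPO} objective $\mathcal{L}(\mathbf{Z}_{\mathcal{A}_{[1,N]}})=\sum_{(x,y)\in\mathcal{A}_{[1,N]}\times\mathcal{Y}} p_x c_{x,y} z_{x,y}$ depends only on the anchor rows, so its value at the feasible point $\mathbf{Z}^*_{[1,N]}$ is exactly $\mathcal{L}(\mathbf{Z}^*_{[1,N]})$; since $\tilde{\mathbf{Z}}_{[1,N]}$ minimizes this same objective over a feasible set containing $\mathbf{Z}^*_{[1,N]}$, we conclude $\mathcal{L}(\tilde{\mathbf{Z}}_{[1,N]})\le\mathcal{L}(\mathbf{Z}^*_{[1,N]})$. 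This is a feasible-set-inclusion argument with no genuinely hard step; the only point requiring care is that the two LPs have different variable sets (one indexed by $\mathcal{X}\times\mathcal{Y}$, the other by $\mathcal{A}_{[1,N]}\times\mathcal{Y}$), so the comparison must be phrased through the restriction map, using that the relaxed objective reads off only the anchor rows — which is exactly what makes the restricted value of $\mathbf{Z}^*$ a legitimate upper bound on the relaxed optimum.
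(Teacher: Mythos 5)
Your proposal is correct and follows essentially the same route as the paper: both arguments show that the restriction $\mathbf{Z}^*_{[1,N]}$ is feasible for Relaxed \textsc{AnPO} (the paper phrases this as the feasible-region inclusion $\mathcal{F}_{\text{orig}} \subseteq \mathcal{F}_{\text{relaxed}}$) and then invoke optimality of $\tilde{\mathbf{Z}}_{[1,N]}$ over the larger set. Your version is merely more explicit about verifying each constraint family and about the restriction map between the two variable sets, which the paper leaves implicit.
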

\begin{proof}
Let $\mathcal{F}_{\text{orig}}$ denote the feasible region of $\mathbf{Z}_{[1, N]}$ in the original perturbation optimization problem. Let $\mathcal{F}_{\text{relaxed}}$ denote the feasible region of $\mathbf{Z}_{[1, N]}$ in the Relaxed \textsc{AnPO} problem. Clearly, a feasible solution in $\mathcal{F}_{\text{orig}}$, also satisfies all the constraints of $\mathcal{F}_{\text{relaxed}}$, indicating that 
\begin{equation}
\mathcal{F}_{\text{orig}} \subseteq \mathcal{F}_{\text{relaxed}}.
\end{equation}

By definition, $\tilde{\mathbf{Z}}_{[1, N]}$ is the solution that minimizes the utility loss $\mathcal{L}(\cdot)$ over $\mathcal{F}_{\text{relaxed}}$, while $\mathbf{Z}^*_{[1, N]}$ is feasible in $\mathcal{F}_{\text{orig}} \subseteq \mathcal{F}_{\text{relaxed}}$. Therefore,
\begin{equation}
\mathcal{L}\left(\tilde{\mathbf{Z}}_{[1, N]}\right) \leq \mathcal{L}\left(\mathbf{Z}^*_{[1, N]}\right).
\end{equation}

\noindent This concludes the proof.
\end{proof}

\section{Detailed Description of Algorithms}
\label{sec:algorithmdetail}
\subsection{Pseudo Code}
\vspace{-0.00in}
\begin{algorithm}[h]
\SetKwFunction{DFS}{DFS}
\SetKwInOut{Input}{Input}
\SetKwInOut{Output}{Output}
\SetKwProg{Fn}{Function}{}{}
\Input{Anchor selection method $\mathcal{M}_{\mathrm{I}}$}
\Output{$v_{x_{q}, \hat{x}_{q}}$ for each pair $(x_q, \hat{x}_q) \in \mathcal{X}^2$ and sorted $\Upsilon_{x_n,x_m}$ for each pair of users $n$ and $m$}
    \For{each $(x_q, \hat{x}_q) \in \mathcal{X}^2$}{
        $v_{x_{q}, \hat{x}_{q}} \leftarrow w_{x_q,\hat{x}_q} \prod_{x \in \mathcal{U}_{x_q,\hat{x}_q}}(1-w_{x_q,x})$\; 
    }
    \For{each pair of users $n$ and $m$}{
        $\Upsilon_{x_n,x_m} \leftarrow \emptyset$\;
        \For{each $(x, x') \in \mathcal{X}^2$}{
            $\Delta_{x,x'} \leftarrow (\epsilon-\overline{\epsilon}_{x, x'})d_{x, x'} - (\epsilon - \overline{\epsilon}_{x_n,x_m}) d_{x_n,x_m}$\;
            $\Upsilon_{x_n,x_m} \leftarrow \Upsilon_{x_n,x_m} \cup \Delta_{x,x'}$\;
        }
        Sort the elements in $\Upsilon_{x_n,x_m}$ increasing order\;
    }
    \Return $\left\{v_{x_{q}, \hat{x}_{q}}\right\}_{(x_q, \hat{x}_q) \in \mathcal{X}^2}$ and $\Upsilon_{x_n,x_m}$\; 
\normalsize
\caption{Pre-computation of candidate safety margin set $\Upsilon_{x_n,x_m}$ and probabilities $v_{x_{q}, \hat{x}_{q}}$. 
}
\label{al:recompute}
\end{algorithm}
\vspace{-0.00in}

Algorithm \ref{al:recompute} gives the pseudo code to 
calculate $v_{x_{q}, \hat{x}_{q}}$ (line 1--2), representing the probability that $\hat{x}_{q}$ is $x_{q}$'s surrogate, and the candidate safety margin set $\Upsilon_{x_n,x_m}$ (line 3--8). 

In particular, for each pair of users $n$ and $m$, we initiate $\Upsilon_{x_n,x_m}$ by empty (line 4), and add each $\Delta_{x,x'}$ one by one. Specifically, we assign 
$\Delta_{x,x'} = (\epsilon-\overline{\epsilon}_{x, x'})d_{x,x'} - (\epsilon - \overline{\epsilon}_{x_n,x_m}) d_{x_n,x_m}$ (line 6). After collecting all $\Delta_{x,x'},~x,x'\in \mathcal{X}^2$. We sort the elements in $\Upsilon_{x_n,x_m}$ increasing order (line 8). Without loss of generality, we represent the sorted elements by 
\begin{equation}
\Delta^1 \leq \Delta^2 \leq ... \leq \Delta^{|\mathcal{X}|^2}.
\end{equation}
We can obtain that $(x, x') \in \mathcal{H}_{x_n,x_m}(\xi)$ if and only if $\xi \geq \xi_{x, x'}$.

\vspace{-0.00in}
\begin{algorithm}[h]
\SetKwFunction{DFS}{DFS}
\SetKwInOut{Input}{Input}
\SetKwInOut{Output}{Output}
\SetKwProg{Fn}{Function}{}{}
\Input{Sorted $\Upsilon_{x_n,x_m}$ for each pair of users $n$ and $m$ and $\delta_{x, x'}$ for each $(x, x') \in \mathcal{A}_{n}\times \mathcal{A}_{m}$}
\Output{Safety margin $\hat{\xi}_{x, x'}$ for each $(x, x') \in \mathcal{A}_{n}\times \mathcal{A}_{m}$}
    \For{each $(x, x') \in \mathcal{A}_{n}\times \mathcal{A}_{m}$}{
        $\hat{\xi}_{x, x'} \leftarrow 0$\;
        \For{each $(\tilde{x},\tilde{x}') \in \mathcal{S}_{x} \times \mathcal{S}_{x'}$}{
            $\ell \leftarrow 1$\;
            \tcp{Search $\xi_{x, x'}$ within $\Upsilon_{x_n,x_m} = \{\Delta^1, ..., \Delta^{|\mathcal{X}|^2}\}$} 
            \While{$h_{\tilde{x},\tilde{x}'}(\Delta^{\ell}) < 1-\delta_{x, x'}$ and $\ell \leq |\mathcal{X}|^2$}{
                $\ell \leftarrow \ell +1$\;
                \tcp{$\mathcal{H}_{\tilde{x},\tilde{x}'} (\Delta^{\ell}) = \mathcal{H}_{\tilde{x},\tilde{x}'}(\Delta^{\ell-1})\cup (x_i, x_j)$}
                $h_{\tilde{x},\tilde{x}'}(\Delta^{\ell}) \leftarrow h_{\tilde{x},\tilde{x}'}(\Delta^{\ell-1}) + v_{\tilde{x}, x_i} v_{\tilde{x}', x_j}$\;
                
            }
            \If{$\Delta^{\ell} > \hat{\xi}_{x, x'}$}{
                $\hat{\xi}_{x, x'} \leftarrow \Delta^{\ell}$\;
            }
        }
        
    }
\normalsize
\caption{Calculate $\xi_{x, x'}$ by linear search. 
}
\label{al:xisearch}
\end{algorithm}
\vspace{-0.00in}

\DEL{
Therefore, by enforcing $\sum_{\left(\hat{x}_{n}, \hat{x}_{m}\right)\in \mathcal{H}_{x_n,x_m}}w_{x_n,\hat{x}_n} w_{x_m,\hat{x}_m} \geq 1-\delta$, we can guarantee the success ratio is not lower than $1-\delta$, or  
\begin{eqnarray}
\sum_{\left(\hat{x}_{n}, \hat{x}_{m}\right)\notin \mathcal{H}_{x_n,x_m}}\alpha h\left(d_{n,\hat{n}}\right) \alpha  h\left(d_{m,\hat{m}}\right) \leq \delta 
\end{eqnarray}
{\bl Remark: we need to pay attention as $\alpha$ might be lower bounded ... }. }

\DEL{
\begin{proposition}
\label{prop:adjmDP}
When determining the perturbation matrix $\mathbf{Z}_{\mathcal{X}}$, to achieve mDP constraints (in Eq. (\ref{eq:mDPdiscrete})), it is sufficient to achieve the following constraints: 
\begin{eqnarray}
\frac{z_{x,y}}{z_{m,k}} \leq e^{\epsilon d_{\mathcal{N}_i, \mathcal{N}_j} - \ln \beta_{\mathcal{N}_i, \mathcal{N}_j}}.  
\end{eqnarray}
\end{proposition}}

Algorithm \ref{al:xisearch} {\rev gives} the pseudo codes of the safety margin search. The goal {\rev is to find} the minimum $\Delta^{\ell} \in \Upsilon_{x_n,x_m}$ such that
\begin{equation}
    1-\delta_{x, x'} \leq h_{\tilde{x},\tilde{x}'}(\Delta^{\ell}), ~\forall (\tilde{x}, \tilde{x}') \in \mathcal{S}_{x} \times \mathcal{S}_{x'}
\end{equation}
{\rev Since $h_{\tilde{x},\tilde{x}'}(\Delta^1), h_{\tilde{x},\tilde{x}'}(\Delta^2), ..., h_{\tilde{x},\tilde{x}'}(\Delta^{\ell})$ are in increasing order, we can use linear search to find the $\Delta^{\ell}$ that satisfies the condition.} Specifically, we start from the first element of the sorted set $\Upsilon_{x_n,x_m}$ and sequentially compute the value of $h_{x_n,x_m}(\Delta^{\ell})$ until we find the $\Delta^{\ell}$ that meets the condition.

\noindent \textbf{Why linear search is preferred over binary search}. Although binary search is generally more efficient in terms of iteration count, we adopt linear search to calculate the safety margin $\hat{\xi}_{x, x'}$ due to the incremental structure of the success probability function $h_{\tilde{x},\tilde{x}'}(\Delta)$.

Recall that $h_{\tilde{x},\tilde{x}'}(\xi)$ is defined as a cumulative probability over the set of anchor pairs. Specifically, the set of qualified pairs $\mathcal{H}_{\tilde{x},\tilde{x}'}(\xi)$ is monotonically increasing with respect to $\xi$, i.e., 
\begin{equation}
\mathcal{H}_{\tilde{x},\tilde{x}'} (\Delta^{\ell}) = \mathcal{H}_{\tilde{x},\tilde{x}'}(\Delta^{\ell-1})\cup (x_i, x_j)
\end{equation}
As a result, the success probability can be computed incrementally:
\begin{equation}
h_{\tilde{x},\tilde{x}'}(\Delta^{\ell}) = h_{\tilde{x},\tilde{x}'}(\Delta^{\ell-1}) + v_{\tilde{x}, x_i} v_{\tilde{x}', x_j}
\end{equation}
This property enables linear search to efficiently update $h_{\tilde{x},\tilde{x}'}(\Delta^{\ell})$ from the previous step without recomputing the entire sum from scratch. 
In contrast, binary search jumps non-sequentially across the sorted list $\{\Delta^1, \Delta^2, \ldots, \Delta^L\}$, making it infeasible to reuse previous computations. Each evaluation of $h_{\tilde{x},\tilde{x}'}(\Delta^{\ell})$ would require a full recomputation, leading to significantly higher computational overhead.
\subsection{Time Complexity Analysis} 
To analyze the time complexity of Algorithm \ref{al:xisearch}, we evaluate each of its main computational steps: 
\begin{itemize}
    \item \emph{Precomputing $v_{x_q, \hat{x}_q}$ (line 1--2 in Algorithm \ref{al:recompute})}: The first loop iterates over all pairs $(x_q, \hat{x}_q) \in \mathcal{X}^2$, where $K = |\mathcal{X}|$. The computation inside the loop involves a product operation over a subset $\mathcal{U}_{x_q,\hat{x}_q}$, which in the worst case has size $O(K)$. Thus, the total complexity of this step is: $O(K^2 \times K) = O(K^3)$.
    \item \emph{Precomputing $\Upsilon_{x_n,x_m}$ (line 3--8 in Algorithm \ref{al:recompute})}: The outer loop iterates over all $(x_n, x_m) \in \mathcal{X}^2$, resulting in $O(K^2)$. The inner loop iterates over all $(\hat{x}_n, \hat{x}_m) \in \mathcal{X}^2$, giving an additional factor of $O(K^2)$. The sorting operation at the end of each outer loop iteration has a complexity of $O(K^2 \log K)$. Therefore, the total complexity of this step is:
    \begin{equation}
    O(K^2 \times K^2 + K^2 \times K^2 \log K) = O(K^4 \log K).
    \end{equation}
    \item \emph{Computing $\hat{\xi}_{x, x'}$ (Algorithm \ref{al:xisearch})}: 
    The outer loop iterates over all $(x, x') \in \mathcal{A}_{n} \times \mathcal{A}_{m}$, which has worst-case complexity $O(K^2)$. The second loop runs over $(\tilde{x}, \tilde{x}') \in \mathcal{S}_{x} \times \mathcal{S}_{x'}$, giving another factor of $O(\Gamma^2)$. The while loop runs at most $O(K^2)$ times in the worst case. Thus, the total complexity is:
    \begin{equation}
    O(K^2 \times \Gamma^2 \times K^2) = O(\Gamma^2K^4).
    \end{equation}
\end{itemize}
Combining all the steps, we get:
\begin{equation}
O(K^3) + O(K^4 \log K) + O(\Gamma^2K^4) = O(K^4 \log K) + O(\Gamma^2K^4). 
\end{equation}

\section{Additional Experimental Results}
\label{sec:addexp}

{\rev In this section, we present supplementary experimental results to further validate the effectiveness and scalability of \textsc{PAnDA}. Specifically, we include: 
\begin{itemize}
    \item Additional results of computation efficiency using other parameters (\textbf{Section \ref{subsec:efficiency_add}}); 
    \item Additional results of utility loss using other parameters (\textbf{Section \ref{subsec:UL_add}});
    \item Additional results of algorithm performance using the real-world location dataset (\textbf{Section \ref{subsec:real_add}});
    \item Additional results of mDP violation ratio (\textbf{Section \ref{subsec:falure_add}}); 
    \item Additional results of privacy budget allocation (\textbf{Section \ref{subsec:budgetloc_add}}); 
    \item Clustering structure analysis of secret records (\textbf{Section \ref{subsec:clustering_add}}); 
    \item Performance with diverse mDP acceptable violation ratios (\textbf{Section \ref{subsec:diversedelta_add}}); 
    \item Comparison between estimated safety margin and real safety margin (\textbf{Section \ref{subsec:safetymarginpara_add}}); 
    \item Estimated safety margin with different parameters (\textbf{Section \ref{subsec:safetymargin_add}}); and 
    \item Comparision between our methods and the lower bounds (\textbf{Section \ref{subsec:bound_add}}). 
\end{itemize}
These results provide a deeper understanding of \textsc{PAnDA}'s performance characteristics across multiple dimensions.}

\subsection{Computation Efficiency Using Other Parameters}
\label{subsec:efficiency_add}

\vspace{-0.00in}
\begin{table}[t]
\centering
\small 
\scriptsize 
\setlength{\tabcolsep}{3pt}  
\begin{tabular}{ c|c|c|c|c|c|c}
\toprule
\multicolumn{7}{ c  }{Rome}\\ 
\cline{1-7}
\multicolumn{1}{ c|  }{Method}  & $K = 500$ & $K = 1,000$  & $K = 2,000$ & $K = 3,000$& $K = 4,000$& $K = 5,000$\\
\hline
\hline
\multicolumn{1}{ c|  }{EM} & $\leq0.005$ & $\leq0.005$ & $\leq0.005$ & $\leq0.005$ & $\leq0.005$ & 
$\leq0.005$ \\ 
\multicolumn{1}{ c|  }{LP} & >1,800 & >1,800 & >1,800 & >1,800 & >1,800 & >1,800 \\ 
\multicolumn{1}{ c|  }{LP+CA} & 5.84±0.62 & 7.90±0.57 & 15.64±0.70 & 21.33±0.55 & 25.34±0.56 & 27.31±0.52 \\ 
\multicolumn{1}{ c|  }{LP+BD} & 6.54±3.92 & >1,800 & >1,800 & >1,800 & >1,800 & >1,800 \\ 
\multicolumn{1}{ c|  }{LP+EM} & 0.53±0.21 & >1,800 & >1,800 & >1,800 & >1,800 & >1,800 \\ 
\multicolumn{1}{ c|  }{EM+BR} & 0.04±0.01 & 0.11±0.02 & 0.18±0.00 & 0.37±0.00 & 0.55±0.00 & 0.70±0.02 \\ 
\hline
\multicolumn{1}{ c|  }{ \textbf{\textsc{PAnDA}-e}} & 0.24±0.04 & 0.32±0.01 & 4.52±0.10 & 6.22±2.69 & 15.95±7.71 & 23.27±8.44 \\ 
\multicolumn{1}{ c|  }{ \textbf{\textsc{PAnDA}-p}} & 0.26±0.08 & 0.51±0.14 & 4.38±3.13 & 9.29±7.48 & 17.96±10.59 & 30.07±17.64  \\ 
\multicolumn{1}{ c|  }{ \textbf{\textsc{PAnDA}-l}} & 0.36±0.13 & 0.68±0.05 & 7.64±1.76 & 12.01±4.41 & 19.89±13.14 & 29.14±12.25  \\ 
\hline
\toprule
\multicolumn{7}{ c  }{NYC}\\ 
\cline{1-7}
\multicolumn{1}{ c|  }{Method}  & $K = 500$ & $K = 1,000$  & $K = 2,000$ & $K = 3,000$& $K = 4,000$& $K = 5,000$\\
\hline
\hline
\multicolumn{1}{ c|  }{EM} & $\leq0.005$ & $\leq0.005$ & $\leq0.005$ & $\leq0.005$ & $\leq0.005$ & 
$\leq0.005$ \\ 
\multicolumn{1}{ c|  }{LP} & >1,800 & >1,800 & >1,800 & >1,800 & >1,800 & >1,800 \\ 
\multicolumn{1}{ c|  }{LP+CA} & 6.11±0.24 & 6.30±0.15 & 16.06±0.20 & 30.25±0.30 & 35.02±0.51 & 38.77±1.20  \\ 
\multicolumn{1}{ c|  }{LP+BD} & 5.66±1.32 & >1,800 & >1,800 & >1,800 & >1,800 & >1,800 \\ 
\multicolumn{1}{ c|  }{LP+EM} & 1.03±0.24 & >1,800 & >1,800 & >1,800 & >1,800 & >1,800 \\ 
\multicolumn{1}{ c|  }{EM+BR} & 0.03±0.00 & 0.11±0.03 & 0.21±0.02 & 0.33±0.04 & 0.53±0.03 & 0.54±0.09 \\ 
\hline
\multicolumn{1}{ c|  }{ \textbf{\textsc{PAnDA}-e}} & 0.29±0.01 & 0.30±0.06 & 6.86±0.41 & 9.07±1.91 & 10.81±9.40 & 30.36±5.82 \\ 
\multicolumn{1}{ c|  }{ \textbf{\textsc{PAnDA}-p}} & 0.27±0.01 & 0.47±0.07 & 6.45±2.29 & 13.35±4.93 & 17.22±11.09 & 23.32±17.33 \\ 
\multicolumn{1}{ c|  }{ \textbf{\textsc{PAnDA}-l}} & 0.31±0.05 & 0.47±0.06 & 8.17±3.50 & 22.68±11.83 & 27.14±9.53 & 42.69±16.56 \\ 
\hline
\toprule
\multicolumn{7}{ c  }{London}\\ 
\cline{1-7}
\multicolumn{1}{ c|  }{Method}  & $K = 500$ & $K = 1,000$  & $K = 2,000$ & $K = 3,000$& $K = 4,000$& $K = 5,000$\\
\hline
\hline
\multicolumn{1}{ c|  }{EM} & $\leq0.005$ & $\leq0.005$ & $\leq0.005$ & $\leq0.005$ & $\leq0.005$ & 
$\leq0.005$ \\ 
\multicolumn{1}{ c|  }{LP} & >1,800 & >1,800 & >1,800 & >1,800 & >1,800 & >1,800 \\ 
\multicolumn{1}{ c|  }{LP+CA} & 15.22±0.65 & 12.39±1.02 & 25.67±1.91 & 54.45±3.93 & 60.52±3.15 & 85.96±1.39 \\ 
\multicolumn{1}{ c|  }{LP+BD} & 6.59±1.24 & >1,800 & >1,800 & >1,800 & >1,800 & >1,800 \\ 
\multicolumn{1}{ c|  }{LP+EM} & 0.66±0.09 & >1,800 & >1,800 & >1,800 & >1,800 & >1,800 \\ 
\multicolumn{1}{ c|  }{EM+BR} & 0.04±0.01 & 0.06±0.00 & 0.11±0.03 & 0.28±0.10 & 0.40±0.04 & 0.56±0.13 \\ 
\hline
\multicolumn{1}{ c|  }{ \textbf{\textsc{PAnDA}-e}} & 0.16±0.03 & 0.62±0.37 & 5.85±0.79 & 24.46±5.74 & 30.17±10.69 & 36.28±15.23 \\ 
\multicolumn{1}{ c|  }{ \textbf{\textsc{PAnDA}-p}} & 0.19±0.03 & 0.25±0.08 & 8.35±0.41 & 17.71±9.47 & 46.78±18.69 & 54.40±28.67 \\ 
\multicolumn{1}{ c|  }{ \textbf{\textsc{PAnDA}-l}} & 0.30±0.12 & 0.32±0.14 & 6.79±1.28 & 16.98±6.17 & 44.41±24.03 & 77.43±36.16 \\ 
\hline
\end{tabular}
\vspace{0.05in}
\caption{Computation time of different algorithms (uniform user location distribution: default $\epsilon = 15$km$^{-1}$ and $\delta = 10^{-7}$). \\
Mean$\pm$1.96$\times$std. deviation. }
\label{Tb:exp:time_scalability15}
\vspace{-0.15in}
\end{table}

\vspace{-0.00in}
\begin{table}[t]
\centering
\small 
\scriptsize 
\setlength{\tabcolsep}{3pt}  
\begin{tabular}{ c|c|c|c|c|c|c}
\toprule
\multicolumn{7}{ c  }{Rome}\\ 
\cline{1-7}
\multicolumn{1}{ c|  }{Method}  & $K = 500$ & $K = 1,000$ & 
$K = 1,500$ & $K = 2,000$ & $K = 2,500$ & $K = 3,000$\\
\hline
\hline
\multicolumn{1}{ c|  }{EM} & $\leq0.005$ & $\leq0.005$ & $\leq0.005$ & $\leq0.005$ & $\leq0.005$ & 
$\leq0.005$ \\ 
\multicolumn{1}{ c|  }{LP} & >1,800 & >1,800 & >1,800 & >1,800 & >1,800 & >1,800 \\ 
\multicolumn{1}{ c|  }{LP+CA} & 7.11±0.72 & 9.36±0.41 & 12.48±0.66 & 14.39±0.47 & 17.22±0.64 & 
19.71±0.47 \\ 
\multicolumn{1}{ c|  }{LP+BD} & 5.96±12.64 & >1,800 & >1,800 & >1,800 & >1,800 & >1,800 \\ 
\multicolumn{1}{ c|  }{LP+EM} & 0.71±0.29 & >1,800 & >1,800 & >1,800 & >1,800 & >1,800 \\ 
\multicolumn{1}{ c|  }{EM+BR} & 0.017±0.00& 0.04±0.00& 0.09±0.00& 0.14±0.00& 0.25±0.02 & 
0.44±0.02  \\ 
\hline
\multicolumn{1}{ c|  }{ \textbf{\textsc{PAnDA}-e}} & 0.28±0.09 & 0.54±0.05 & 1.76±0.32 & 6.06±6.77 & 12.64±7.09 & 
24.97±13.72 \\
\multicolumn{1}{ c|  }{ \textbf{\textsc{PAnDA}-p}} & 0.29±0.04 & 0.85±0.13 & 8.09±9.44 & 8.42±7.15 & 11.68±5.17 & 
29.67±15.44 \\
\multicolumn{1}{ c|  }{ \textbf{\textsc{PAnDA}-l}} & 0.42±0.17 & 0.77±0.09 & 1.02±0.42 & 4.23±1.20 & 14.69±8.33 & 
22.71±16.09 \\
\hline
\toprule
\multicolumn{7}{ c  }{NYC}\\ 
\cline{1-7}
\multicolumn{1}{ c|  }{Method}  & $K = 500$ & $K = 1,000$ & 
$K = 1,500$ & $K = 2,000$ & $K = 2,500$ & $K = 3,000$\\
\hline
\hline
\multicolumn{1}{ c|  }{EM} & $\leq0.005$ & $\leq0.005$ & $\leq0.005$ & $\leq0.005$ & $\leq0.005$ & 
$\leq0.005$ \\ 
\multicolumn{1}{ c|  }{LP} & >1,800 & >1,800 & >1,800 & >1,800 & >1,800 & >1,800 \\ 
\multicolumn{1}{ c|  }{LP+CA} & 6.91±0.28 & 10.21±0.19 & 13.53±0.43 & 16.91±0.33 & 20.61±0.23 & 
24.02±0.11 \\ 
\multicolumn{1}{ c|  }{LP+BD} & 5.26±0.35 & >1,800 & >1,800 & >1,800 & >1,800 & >1,800 \\ 
\multicolumn{1}{ c|  }{LP+EM} & 0.81±0.28 & >1,800 & >1,800 & >1,800 & >1,800 & >1,800 \\ 
\multicolumn{1}{ c|  }{EM+BR} & 0.02±0.00& 0.04±0.00& 0.08±0.00& 0.14±0.00& 0.28±0.04 & 0.37±0.05  \\ 
\hline
\multicolumn{1}{ c|  }{ \textbf{\textsc{PAnDA}-e}} & 0.25±0.02 & 0.42±0.05 & 1.61±0.27 & 6.50±2.96 & 9.89±4.01 & 
14.95±6.85 \\
\multicolumn{1}{ c|  }{ \textbf{\textsc{PAnDA}-p}} & 0.28±0.03 & 0.46±0.06 & 3.25±2.78 & 6.78±3.77 & 14.95±10.57 & 22.59±14.69 \\
\multicolumn{1}{ c|  }{ \textbf{\textsc{PAnDA}-l}} & 0.27±0.01 & 0.47±0.04 & 1.72±0.41 & 15.01±14.68 & 27.05±16.34 & 48.98±23.58 \\
\hline
\toprule
\multicolumn{7}{ c  }{London}\\ 
\cline{1-7}
\multicolumn{1}{ c|  }{Method}  & $K = 500$ & $K = 1,000$ & 
$K = 1,500$ & $K = 2,000$ & $K = 2,500$ & $K = 3,000$\\
\hline
\hline
\multicolumn{1}{ c|  }{EM} & $\leq0.005$ & $\leq0.005$ & $\leq0.005$ & $\leq0.005$ & $\leq0.005$ & 
$\leq0.005$ \\ 
\multicolumn{1}{ c|  }{LP} & >1,800 & >1,800 & >1,800 & >1,800 & >1,800 & >1,800 \\ 
\multicolumn{1}{ c|  }{LP+CA} & 17.34±0.35 & 27.51±0.64 & 35.89±1.04 & 44.12±1.84 & 51.22±0.68 & 63.31±0.69 \\ 
\multicolumn{1}{ c|  }{LP+BD} & 4.97±0.52 & >1,800 & >1,800 & >1,800 & >1,800 & >1,800 \\ 
\multicolumn{1}{ c|  }{LP+EM} & 0.62±0.05 & >1,800 & >1,800 & >1,800 & >1,800 & >1,800 \\ 
\multicolumn{1}{ c|  }{EM+BR} & 0.02±0.00 & 0.04±0.00 & 0.09±0.00 & 0.14±0.00 & 0.47±0.04 & 0.47±0.08  \\ 
\hline
\multicolumn{1}{ c|  }{ \textbf{\textsc{PAnDA}-e}} & 0.20±0.06 & 0.54±0.47 & 1.59±0.70 & 11.64±9.41 & 20.95±12.57 & 42.36±27.63 \\
\multicolumn{1}{ c|  }{ \textbf{\textsc{PAnDA}-p}} & 0.17±0.03 & 0.35±0.03 & 1.31±0.62 & 12.74±10.05 & 39.62±26.44 & 48.21±30.67 \\
\multicolumn{1}{ c|  }{ \textbf{\textsc{PAnDA}-l}} & 0.19±0.03 & 0.36±0.06 & 1.44±0.76 & 23.41±16.11 & 46.99±29.74 & 75.95±46.35 \\
\hline
\end{tabular}
\vspace{0.05in}
\caption{Computation time of different algorithms (uniform user location distribution: $\epsilon = 15$km$^{-1}$ and $\delta = 0.15$). Mean$\pm$1.96$\times$std. deviation. }
\label{Tb:exp:time_scalability_backup}
\vspace{-0.25in}
\end{table}

{\rev In this section, we evaluate the computation time of various data perturbation methods across the three road network datasets under a uniform user location distribution, using alternative parameter settings. \emph{These results serve as supplementary findings to Table~\ref{Tb:exp:time_scalability} in the main paper}.

Table~\ref{Tb:exp:time_scalability15} reports results for $\delta = 10^{-7}$ and $\epsilon = 15$km$^{-1}$, with the secret data domain size $K$ varying from 500 to 5,000 and the number of users set to 50. Table~\ref{Tb:exp:time_scalability_backup} presents results for $\delta = 0.15$ under the same $\epsilon$, with $K$ ranging from 500 to 3,000 and 15 users. In both settings, \textsc{PAnDA} consistently achieves substantially lower computation times than the LP-based baselines (LP, LP+BD, and LP+EM).

Specifically, in Table~\ref{Tb:exp:time_scalability15}, the average runtimes of \textsc{PAnDA}-e, \textsc{PAnDA}-p, and \textsc{PAnDA}-l with $K = 5,000$ are {\bl 29.97s, 35.93s, and 49.75s}, respectively. Similarly, in Table~\ref{Tb:exp:time_scalability_backup}, their average runtimes with $K = 3,000$ are {\bl 27.43s, 33.49s, and 49.21s}. In contrast, LP-based methods exhibit substantial computational overhead, especially as the domain size increases.

\emph{These additional results in both Table~\ref{Tb:exp:time_scalability15} and Table~\ref{Tb:exp:time_scalability_backup} align closely with the findings reported in Table~\ref{Tb:exp:time_scalability} in the main section}, confirming \textsc{PAnDA}’s superior scalability across different parameter settings.}

\newpage 
\subsection{Utility Loss Using Other Parameters}
\label{subsec:UL_add}

\vspace{-0.00in}
\begin{table}[t]
\centering
\small 
\scriptsize 
\setlength{\tabcolsep}{1pt}  
\begin{tabular}{ c|c|c|c|c|c|c}
\toprule
\multicolumn{7}{ c  }{Rome}\\ 
\cline{1-7}
\multicolumn{1}{ c|  }{Method}  & $K = 500$ & $K = 1,000$  & $K = 2,000$  & $K = 3,000$& $K = 4,000$& $K = 5,000$\\
\hline
\hline
\multicolumn{1}{ c|  }{EM} & 1243.36±93.18 & 1254.24±80.78 & 1314.87±89.75 & 1271.48±68.17 & 1366.93±89.50 & 1275.06±67.30\\
\multicolumn{1}{ c|  }{LP} & - & -  & -  & 
- & - & -\\ 
\multicolumn{1}{ c|  }{LP+CA} & 1789.27±165.71 & 1909.70±81.61 & 1881.35±58.58 & 1876.18±109.09 & 1819.46±68.53 & 1939.17±84.69\\ 
\multicolumn{1}{ c|  }{LP+BD} & 159.74±41.09 & -  & - & - & - & -\\ 
\multicolumn{1}{ c|  }{LP+EM} & 736.24±67.17 & -  & - & 
-  & - & -\\ 
\multicolumn{1}{ c|  }{EM+BR} & 889.87±34.13 & 958.23±50.13 & 1117.31±48.71 & 1066.48±75.18 & 1054.65±49.34 & 1075.04±76.87 \\ 
\hline
\multicolumn{1}{ c|  }{ \textbf{\textsc{PAnDA}-e}} & 226.40±37.51 & 237.07±15.62 & 256.90±53.46 & 264.93±37.75 & 255.46±51.25 & 261.97±37.51\\
\multicolumn{1}{ c|  }{ \textbf{\textsc{PAnDA}-p}} & 228.80±43.01 & 241.96±32.25 & 255.59±45.65 & 256.34±22.86 & 258.21±53.67 & 259.60±38.94\\
\multicolumn{1}{ c|  }{ \textbf{\textsc{PAnDA}-l}} & 229.83±50.72 & 219.32±19.25 & 249.62±39.65 & 267.72±66.47 & 258.04±40.96 & 276.48±68.24\\
\toprule
\multicolumn{7}{ c  }{NYC}\\ 
\cline{1-7}
\multicolumn{1}{ c|  }{Method}  & $K = 500$ & $K = 1,000$  & $K = 2,000$  & $K = 3,000$& $K = 4,000$& $K = 5,000$\\
\hline
\hline
\multicolumn{1}{ c|  }{EM} & 1183.89±123.68 & 1362.32±155.35 & 1573.70±132.35 & 1543.41±145.85 & 1618.64±136.16 & 1634.43±133.24\\
\multicolumn{1}{ c|  }{LP} & - & -  & -  & 
- & - & -\\ 
\multicolumn{1}{ c|  }{LP+CA} & 1825.05±101.53 & 1812.66±74.24 & 1803.14±45.53 & 1842.84±50.41 & 1905.62±77.68 & 1858.03±71.53\\ 
\multicolumn{1}{ c|  }{LP+BD} & 167.85±50.16 & -  & - & - & - & -\\ 
\multicolumn{1}{ c|  }{LP+EM} & 746.49±81.35 & -  & - & 
-  & - & -\\ 
\multicolumn{1}{ c|  }{EM+BR} & 1045.32±87.76 & 1120.46±73.45 & 1124.20±62.55 & 1148.05±35.79 & 1163.40±91.12 & 1150.02±34.84 \\ 
\hline
\multicolumn{1}{ c|  }{ \textbf{\textsc{PAnDA}-e}} & 237.46±99.65 & 296.86±85.62 & 325.90±95.19 & 293.77±97.77 & 313.18±60.71 & 338.01±76.96\\
\multicolumn{1}{ c|  }{ \textbf{\textsc{PAnDA}-p}} & 183.55±81.03 & 223.24±53.60 & 247.30±77.36 & 291.52±102.20 & 310.98±98.06 & 317.06±101.26\\
\multicolumn{1}{ c|  }{ \textbf{\textsc{PAnDA}-l}} & 254.10±81.74 & 275.29±69.34 & 289.46±99.99 & 324.68±103.54 & 318.64±103.13 & 320.70±114.07\\
\hline
\toprule
\multicolumn{7}{ c  }{London}\\ 
\cline{1-7}
\multicolumn{1}{ c|  }{Method}  & $K = 500$ & $K = 1,000$  & $K = 2,000$ & $K = 3,000$& $K = 4,000$& $K = 5,000$\\
\hline
\hline
\multicolumn{1}{ c|  }{EM} & 1305.28±93.79 & 1372.86±106.90 & 1491.89±83.31 & 1601.15±77.46 & 1607.04±56.08 & 1628.07±77.53\\
\multicolumn{1}{ c|  }{LP} & - & -  & -  & 
- & - & -\\ 
\multicolumn{1}{ c|  }{LP+CA} & 1725.67±134.00 & 1687.76±62.29 & 1713.49±61.09 & 1842.68±56.50 & 1801.83±62.04 & 1913.89±64.36\\ 
\multicolumn{1}{ c|  }{LP+BD} & 158.47±35.19 & -  & - & - & - & -\\ 
\multicolumn{1}{ c|  }{LP+EM} & 729.40±52.78 & -  & - & 
-  & - & -\\ 
\multicolumn{1}{ c|  }{EM+BR} & 1068.69±46.84 & 1113.74±38.44 & 1224.20±59.34 & 1231.65±42.07 & 1281.35±65.06 & 1316.42±34.34 \\ 
\hline
\multicolumn{1}{ c|  }{ \textbf{\textsc{PAnDA}-e}} & 215.26±56.05 & 172.76±69.65 & 249.43±67.42 & 263.56±52.38 & 269.42±94.07 & 285.45±81.06\\
\multicolumn{1}{ c|  }{ \textbf{\textsc{PAnDA}-p}} & 209.61±51.14 & 183.11±71.41 & 219.42±42.83 & 247.88±43.43 & 262.83±105.27 & 266.29±45.68\\
\multicolumn{1}{ c|  }{ \textbf{\textsc{PAnDA}-l}} & 228.34±48.57 & 233.26±44.87 & 231.65±104.14 & 239.28±62.48 & 253.02±97.95 & 259.57±39.81\\
\hline
\end{tabular}
\vspace{0.05in}
\caption{Utility loss (in meters) of different algorithms (uniform user location distribution: $\epsilon = 15$km$^{-1}$  and $\delta = 10^{-7}$). Mean$\pm$1.96$\times$std. deviation.}
\label{Tb:exp:ULscalability15}
\vspace{-0.25in}
\end{table}

\vspace{-0.00in}
\begin{table}[t]
\centering
\small 
\scriptsize 
\setlength{\tabcolsep}{1pt}  
\begin{tabular}{ c|c|c|c|c|c|c}
\toprule
\multicolumn{7}{ c  }{Rome}\\ 
\cline{1-7}
\multicolumn{1}{ c|  }{Method}  & $K = 500$ & $K = 1,000$ & 
$K = 1,500$ & $K = 2,000$ & $K = 2,500$ & $K = 3,000$\\
\hline
\hline
\multicolumn{1}{ c|  }{EM} & 1055.55±109.66 & 1161.87±65.72 & 1209.44±133.68 & 1239.72±78.94 & 1269.41±80.50 & 
1237.77±90.54 \\
\multicolumn{1}{ c|  }{LP} & - & - & - & - & - & 
- \\ 
\multicolumn{1}{ c|  }{LP+CA} & 1676.94±77.17 & 1683.64±51.34 & 1732.04±32.61 & 1717.31±42.17 & 1745.38±44.66 & 
1742.94±57.22 \\ 
\multicolumn{1}{ c|  }{LP+BD} & \textbf{169.28±31.93} & - & - & - & - & - \\ 
\multicolumn{1}{ c|  }{LP+EM} & 704.12±184.11 & - & - & - & - & 
-  \\ 
\multicolumn{1}{ c|  }{EM+BR} & 901.07±91.51 & 920.34±38.03 & 1003.57±71.72 & 1025.77±100.78 & 1053.94±87.47 & 1046.81±55.04  \\ 
\hline
\multicolumn{1}{ c|  }{ \textbf{\textsc{PAnDA}-e}} & 194.47±40.53 & \textbf{261.88±67.73} & 259.64±51.32 & \textbf{265.72±46.91} & 278.65±57.19 & 268.42±50.37 \\
\multicolumn{1}{ c|  }{ \textbf{\textsc{PAnDA}-p}} & 223.19±39.86 & 265.17±99.78 & \textbf{243.68±38.53} & 271.33±51.69 & 275.81±64.39 & 281.73±66.15 \\
\multicolumn{1}{ c|  }{ \textbf{\textsc{PAnDA}-l}} & 255.61±93.87 & 285.95±111.85 & 267.11±59.78 & 277.63±45.18 & \textbf{270.98±61.29} & 
\textbf{267.49±49.77} \\
\toprule
\multicolumn{7}{ c  }{NYC}\\ 
\cline{1-7}
\multicolumn{1}{ c|  }{Method}  & $K = 500$ & $K = 1,000$ & 
$K = 1,500$ & $K = 2,000$ & $K = 2,500$ & $K = 3,000$\\
\hline
\hline
\multicolumn{1}{ c|  }{EM} & 1360.70±97.21 & 1428.15±181.82 & 1526.74±120.36 & 1485.85±179.28 & 1502.77±182.28 & 1521.35±199.48 \\
\multicolumn{1}{ c|  }{LP} & - & - & - & - & - & 
- \\ 
\multicolumn{1}{ c|  }{LP+CA} & 1795.17±124.38 & 1912.54±79.62 & 1907.64±123.86 & 1897.04±61.67 & 1952.91±76.70 & 
1924.17±76.85 \\ 
\multicolumn{1}{ c|  }{LP+BD} & \textbf{180.21±22.07} & - & - & - & - & - \\ 
\multicolumn{1}{ c|  }{LP+EM} & 725.28±93.22 & - & - & - & - & 
-  \\ 
\multicolumn{1}{ c|  }{EM+BR} & 1216.34±113.61 & 1238.21±92.65 & 1307.94±64.54 & 1291.08±78.91 & 1319.32±121.83 & 1328.74±46.15  \\ 
\hline
\multicolumn{1}{ c|  }{ \textbf{\textsc{PAnDA}-e}} & 198.27±12.02 & \textbf{292.01±43.74} & \textbf{300.74±58.10} & 319.84±24.31 & \textbf{309.94±23.43} & 321.54±29.84
 \\
\multicolumn{1}{ c|  }{ \textbf{\textsc{PAnDA}-p}} & 254.17±32.68 & 365.88±225.30 & 363.53±146.31 & 327.13±59.36 & 334.71±62.06 & 
\textbf{309.29±59.64} \\
\multicolumn{1}{ c|  }{ \textbf{\textsc{PAnDA}-l}} & 216.20±31.86 & 368.82±56.27 & 305.77±83.92 & \textbf{301.12±129.16} & 313.94±95.64 & 
345.21±65.44 \\
\hline
\toprule
\multicolumn{7}{ c  }{London}\\ 
\cline{1-7}
\multicolumn{1}{ c|  }{Method}  & $K = 500$ & $K = 1,000$ & 
$K = 1,500$ & $K = 2,000$ & $K = 2,500$ & $K = 3,000$\\
\hline
\hline
\multicolumn{1}{ c|  }{EM} & 1135.13±85.21 & 1283.85±107.04 & 1393.40±127.74 & 1460.35±101.78 & 1496.32±115.63 & 
1524.65±138.72 \\
\multicolumn{1}{ c|  }{LP} & - & - & - & - & - & 
- \\ 
\multicolumn{1}{ c|  }{LP+CA} & 1749.65±133.14 & 1753.21±64.42 & 1768.75±115.64 & 1822.94±80.38 & 1827.30±70.01 & 
1916.80±59.54 \\ 
\multicolumn{1}{ c|  }{LP+BD} & \textbf{165.32±38.96} & - & - & - & - & - \\ 
\multicolumn{1}{ c|  }{LP+EM} & 676.81±59.68 & - & - & - & - & 
-  \\ 
\multicolumn{1}{ c|  }{EM+BR} & 1027.64±49.65 & 1086.35±41.37 & 1137.47±51.64 & 1185.32±63.23 & 1190.23±58.82 & 
1213.68±45.85  \\ 
\hline
\multicolumn{1}{ c|  }{ \textbf{\textsc{PAnDA}-e}} & 239.81±86.59 & \textbf{185.30±50.05} & \textbf{177.71±18.17} & 252.58±26.26 & \textbf{226.24±41.71} & 
256.37±43.65 \\
\multicolumn{1}{ c|  }{ \textbf{\textsc{PAnDA}-p}} & 233.75±63.44 & 187.11±22.57 & 186.61±48.38 & \textbf{229.81±44.81} & 233.45±41.15 & 
\textbf{241.68±56.34} \\
\multicolumn{1}{ c|  }{ \textbf{\textsc{PAnDA}-l}} & 215.37±74.35 & 252.40±36.61 & 221.32±73.47 & 296.44±62.11 & 251.34±56.73 & 
267.59±56.08 \\
\hline
\end{tabular}
\vspace{0.05in}
\caption{Utility loss (in meters) of different algorithms  (uniform user location distribution: $\epsilon = 15$km$^{-1}$ and $\delta = 0.15$). Mean$\pm$1.96$\times$std. deviation.}
\label{Tb:exp:ULscalability_backup}
\vspace{-0.15in}
\end{table}

{\rev In this section, we evaluate the utility loss of various data perturbation methods across the three road network datasets under a uniform user location distribution, using alternative parameter settings. These results complement Table~\ref{Tb:exp:ULscalability} in the main paper.

Table~\ref{Tb:exp:ULscalability15} reports results for $\delta = 10^{-7}$ and $\epsilon = 15$km$^{-1}$, with the secret data domain size $K$ ranging from 500 to 5,000 and 50 users. Table~\ref{Tb:exp:ULscalability_backup} presents results for $\delta = 0.15$ under the same $\epsilon$, with $K$ ranging from 500 to 3,000 and 15 users. In both settings, \textsc{PAnDA} consistently delivers competitive utility performance across all baselines.

In Table~\ref{Tb:exp:ULscalability15}, \textsc{PAnDA}-e achieves the highest utility gains, reducing utility loss by {\bl 81.43\%} over EM, {\bl 76.36\%} over EM+BR, and {\bl 69.30\%} over LP+EM on average. \textsc{PAnDA}-p yields comparable improvements of {\bl 82.60\%}, {\bl 77.85\%}, and {\bl 71.88\%}, while \textsc{PAnDA}-l provides substantial reductions of {\bl 81.56\%}, {\bl 76.53\%}, and {\bl 67.80\%}, respectively. Similarly, in Table~\ref{Tb:exp:ULscalability_backup}, \textsc{PAnDA}-e achieves the largest gains, reducing utility loss by 75.5\% over EM, 73.7\% over EM+BR, and 58.5\% over LP+EM. \textsc{PAnDA}-p follows closely, with reductions of 72.9\%, 71.0\%, and 55.1\%, while \textsc{PAnDA}-l achieves improvements of 70.2\%, 68.1\%, and 52.6\%, respectively.

Note that although LP+BD attains the lowest utility loss when $K=500$, it fails to scale to larger domain sizes due to its high computational overhead. \emph{These additional results are consistent with the main findings in Table~\ref{Tb:exp:ULscalability}}.}

\vspace{-0.00in}
\begin{table}[t]
\centering
\small 
\scriptsize 
\setlength{\tabcolsep}{3pt}  
\begin{tabular}{ c|c|c|c|c|c|c}
\toprule
\multicolumn{7}{ c  }{Rome}\\ 
\cline{1-7}
\multicolumn{1}{ c|  }{Method}  & $K = 500$ & $K = 1,000$  & $K = 2,000$ & $K = 3,000$& $K = 4,000$& $K = 5,000$\\
\hline
\hline
\multicolumn{1}{ c|  }{EM} & $\leq0.005$ & $\leq0.005$ & $\leq0.005$ & $\leq0.005$ & $\leq0.005$ & 
$\leq0.005$ \\ 
\multicolumn{1}{ c|  }{LP} & >1,800 & >1,800 & >1,800 & >1,800 & >1,800 & >1,800 \\ 
\multicolumn{1}{ c|  }{LP+CA} &6.84±0.55 &7.96±0.52 &14.66±0.73 &20.12±0.42 &22.96±0.42 &29.73±0.51 \\ 
\multicolumn{1}{ c|  }{LP+BD} &7.03±3.68 & >1,800 & >1,800 & >1,800 & >1,800 & >1,800 \\ 
\multicolumn{1}{ c|  }{LP+EM} &0.61±0.21 & >1,800 & >1,800 & >1,800 & >1,800 & >1,800 \\ 
\multicolumn{1}{ c|  }{EM+BR} &0.05±0.00 &0.11±0.00 &0.15±0.00 &0.46±0.00 &0.47±0.01 &0.66±0.01 \\ 
\hline
\multicolumn{1}{ c|  }{ \textbf{\textsc{PAnDA}-e}} & 0.21±0.05 & 0.42±0.02 & 4.65±0.13 & 8.86±6.14 & 18.96±9.36 & 26.42±5.32 \\ 
\multicolumn{1}{ c|  }{ \textbf{\textsc{PAnDA}-p}} & 0.26±0.05 & 0.60±0.09 & 6.31±3.40 & 8.48±4.17 & 20.44±13.52 & 25.22±11.89 \\ 
\multicolumn{1}{ c|  }{ \textbf{\textsc{PAnDA}-l}} & 0.25±0.10 & 0.69±0.04 & 8.67±2.84 & 13.57±4.81 & 25.25±15.28 & 29.40±14.43 \\ 
\hline
\end{tabular}
\vspace{0.05in}
\caption{Computation time of different algorithms (Rome taxicab location dataset: $\epsilon = 15$km$^{-1}$ and $\delta = 10^{-7}$). \\ 
Mean$\pm$1.96$\times$std. deviation.}
\label{Tb:exp:time_scalability_realdata15}
\vspace{-0.15in}
\end{table}

\vspace{-0.00in}
\begin{table}[t]
\centering
\small 
\scriptsize 
\setlength{\tabcolsep}{1pt}  
\begin{tabular}{ c|c|c|c|c|c|c}
\toprule
\multicolumn{7}{ c  }{Rome}\\ 
\cline{1-7}
\multicolumn{1}{ c|  }{Method}  & $K = 500$ & $K = 1,000$  & $K = 2,000$ & $K = 3,000$& $K = 4,000$& $K = 5,000$\\
\hline
\hline
\multicolumn{1}{ c|  }{EM} & 1213.38±75.82 & 1267.30±96.79 & 1334.35±76.33 & 1269.70±62.70 & 1385.42±80.03 & 1282.36±71.40\\
\multicolumn{1}{ c|  }{LP} & - & -  & -  & 
- & - & -\\ 
\multicolumn{1}{ c|  }{LP+CA} & 1780.85±138.36 & 1927.02±75.98 & 1906.99±63.44 & 1858.61±135.95 & 1818.22±68.41 & 1921.62±75.97\\ 
\multicolumn{1}{ c|  }{LP+BD} & 152.73±39.73 & -  & - & - & - & -\\ 
\multicolumn{1}{ c|  }{LP+EM} & 731.94±67.10 & -  & - & 
-  & - & -\\ 
\multicolumn{1}{ c|  }{EM+BR} & 876.52±31.81 & 941.68±44.15 & 1133.21±49.53 & 1061.10±75.07 & 1072.56±48.37 & 1095.34±68.22 \\ 
\hline
\multicolumn{1}{ c|  }{ \textbf{\textsc{PAnDA}-e}} & 229.76±42.22 & 230.39±15.39 & 248.60±43.64 & 250.61±43.52 & 273.12±31.34 & 259.35±35.13\\
\multicolumn{1}{ c|  }{ \textbf{\textsc{PAnDA}-p}} & 225.95±36.87 & 237.85±40.13 & 241.49±37.46 & 248.69±24.08 & 251.68±52.14 & 268.18±46.04\\
\multicolumn{1}{ c|  }{ \textbf{\textsc{PAnDA}-l}} & 222.27±49.21 & 224.63±18.51 & 243.81±29.42 & 272.09±70.55 & 258.14±43.13 & 258.82±48.70\\ 
\hline
\end{tabular}
\vspace{0.05in}
\caption{Utility loss (in meters) of different algorithms (Rome taxicab location dataset: $\epsilon = 15$km$^{-1}$ and $\delta = 10^{-7}$). Mean$\pm$1.96$\times$std. deviation. }
\label{Tb:exp:ULscalability_realdata15}
\vspace{-0.15in}
\end{table}

\newpage
\subsection{Computation Efficiency and Utility Loss Using Real-World Dataset}
\label{subsec:real_add}
{\rev In this section, we evaluate the computation efficiency and utility loss of various data perturbation methods across the three road network datasets in Tables~\ref{Tb:exp:time_scalability_realdata15} and \ref{Tb:exp:ULscalability_realdata15}, using the real-world Rome taxicab dataset, with alternative parameter settings. These results complement Tables~\ref{Tb:exp:time_scalability_realdata} and \ref{Tb:exp:ULscalability_realdata} in the main paper. Particularly, we set $\delta = 10^{-7}$ and $\epsilon = 15$km$^{-1}$, with the secret data domain size $K$ varying from 500 to 5,000 and the number of users set to 50. 

In Table~\ref{Tb:exp:time_scalability_realdata15}, the average runtimes of \textsc{PAnDA}-e, \textsc{PAnDA}-p, and \textsc{PAnDA}-l are {\bl 26.42s, 25.22s, and 29.40s}. In contrast, the LP-based methods—LP, LP+BD, and LP+EM—exhibit substantial computational overhead, particularly when handling larger domains. In Table~\ref{Tb:exp:ULscalability_realdata15}, \textsc{PAnDA}-e, \textsc{PAnDA}-p, and \textsc{PAnDA}-l achieve average reductions in utility loss of {\bl 80.76\%}, {\bl 80.99\%}, and {\bl 80.91\%} over EM, {\bl 75.86\%}, {\bl 76.15\%}, and {\bl 76.06\%} over EM+BR, and {\bl 68.61\%}, {\bl 69.13\%}, and {\bl 69.63\%} over LP+EM, respectively. \emph{These additional results are consistent with the main findings in Tables~\ref{Tb:exp:time_scalability_realdata} and \ref{Tb:exp:ULscalability_realdata}}.}

\subsection{mDP Violation Ratio}
\label{subsec:falure_add}

\begin{figure*}[t]
\centering
\hspace{0.00in}
\begin{minipage}{1.00\textwidth}
\centering
  \subfigure[Rome]{
\includegraphics[width=0.24\textwidth, height = 0.15\textheight]{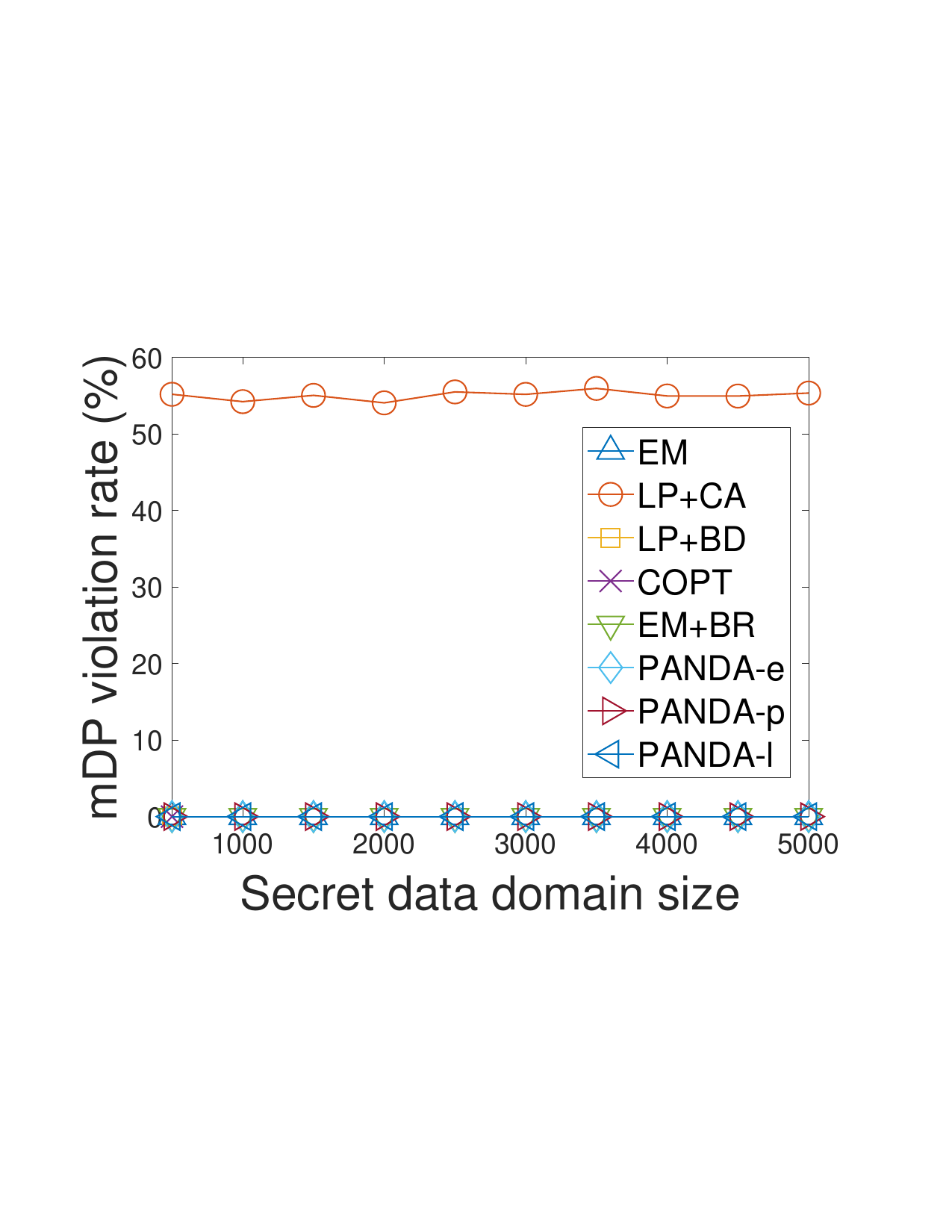}}
  \subfigure[NYC]{
\includegraphics[width=0.24\textwidth, height = 0.15\textheight]{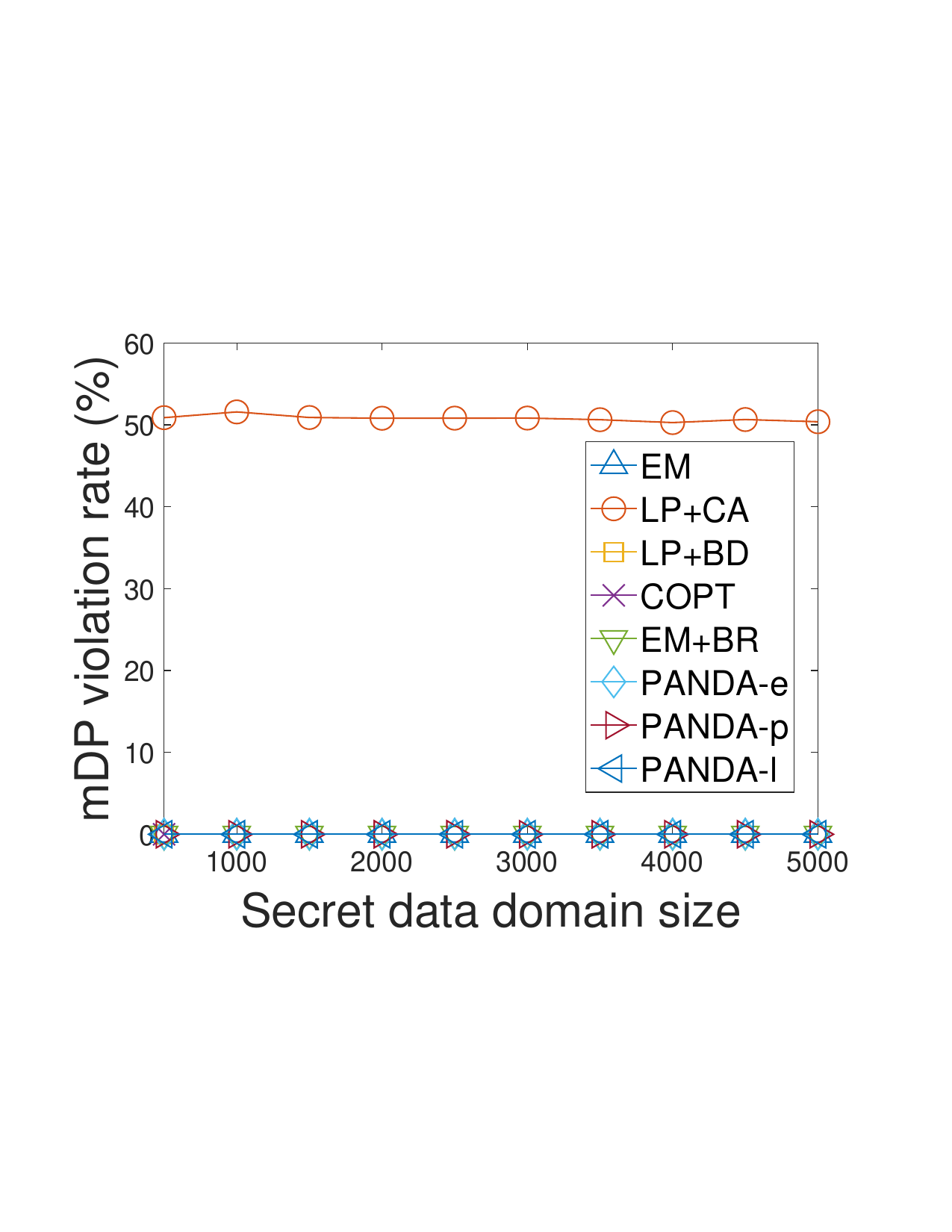}}
  \subfigure[London]{
\includegraphics[width=0.24\textwidth, height = 0.15\textheight]{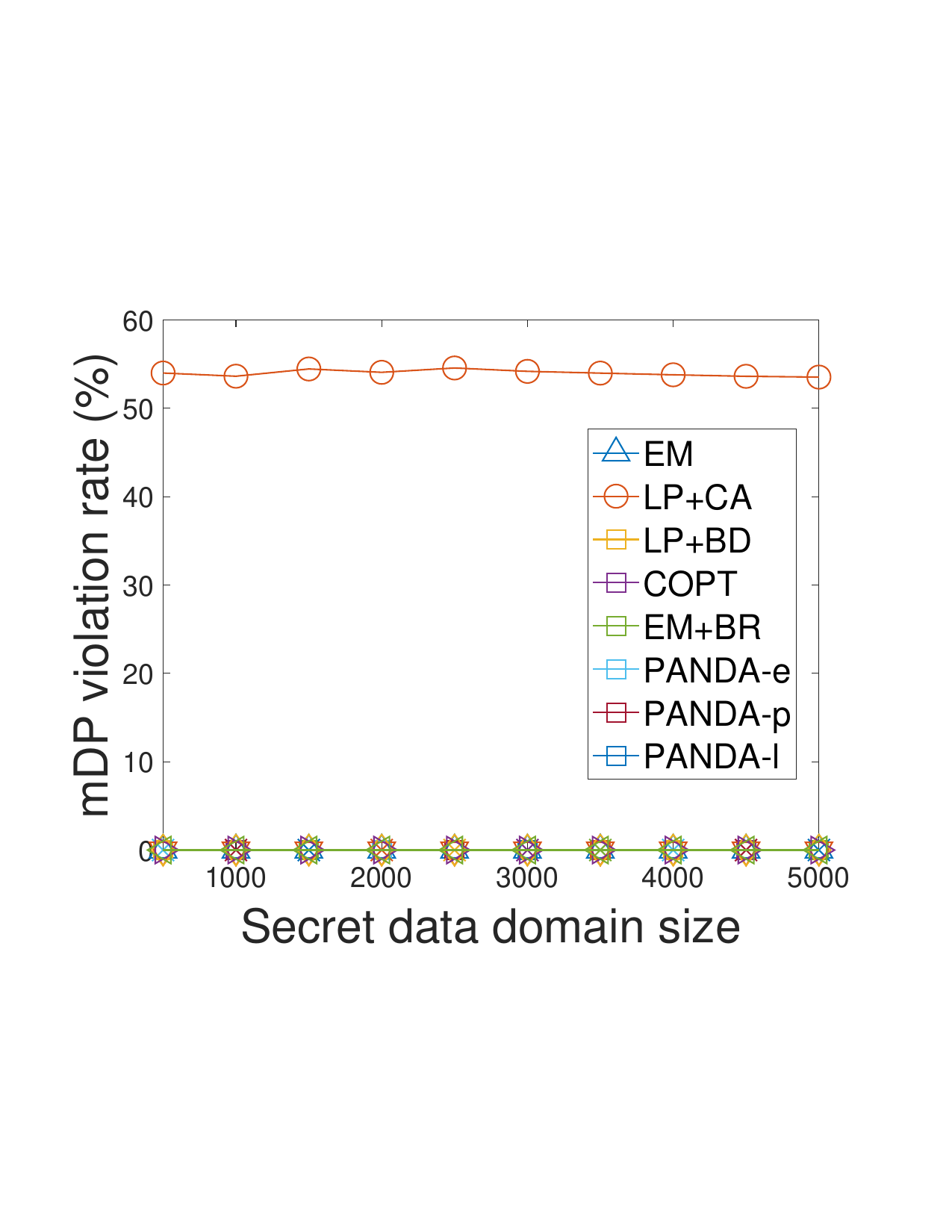}}
\vspace{-0.15in}
\end{minipage}
\caption{mDP failture rate.}
\label{fig:mDPfailture15}
\vspace{-0.00in}
\end{figure*}

{\rev In this section, we evaluate the mDP violation ratio of various data perturbation methods across the three road network datasets, as shown in Figure~\ref{fig:mDPfailture15}(a)(b)(c), under an alternative parameter setting with $\epsilon = 15$km$^{-1}$. These results complement Figure~\ref{fig:mDPfailture}(a)(b)(c) in the main paper, which used $\epsilon = 5$km$^{-1}$. The figure shows that \textsc{PAnDA} consistently maintains a violation rate below $10^{-5}$—well below the predefined threshold of $\delta = 0.15$—demonstrating strong reliability in satisfying probabilistic mDP guarantees. In contrast, LP+CA exhibits significantly higher violation rates, averaging {\bl 53.26\%}, indicating its limited robustness in meeting mDP constraints. Overall, the trends observed in Figure~\ref{fig:mDPfailture15}(a)(b)(c) are consistent with those in Figure~\ref{fig:mDPfailture}(a)(b)(c).} 

\subsection{Privacy Budget Allocation Across the Two Phases}
\label{subsec:budgetloc_add}

\begin{figure*}[t]
\centering
\hspace{0.00in}
\begin{minipage}{1.00\textwidth}
\centering
  \subfigure[Rome]{
\includegraphics[width=0.24\textwidth, height = 0.15\textheight]{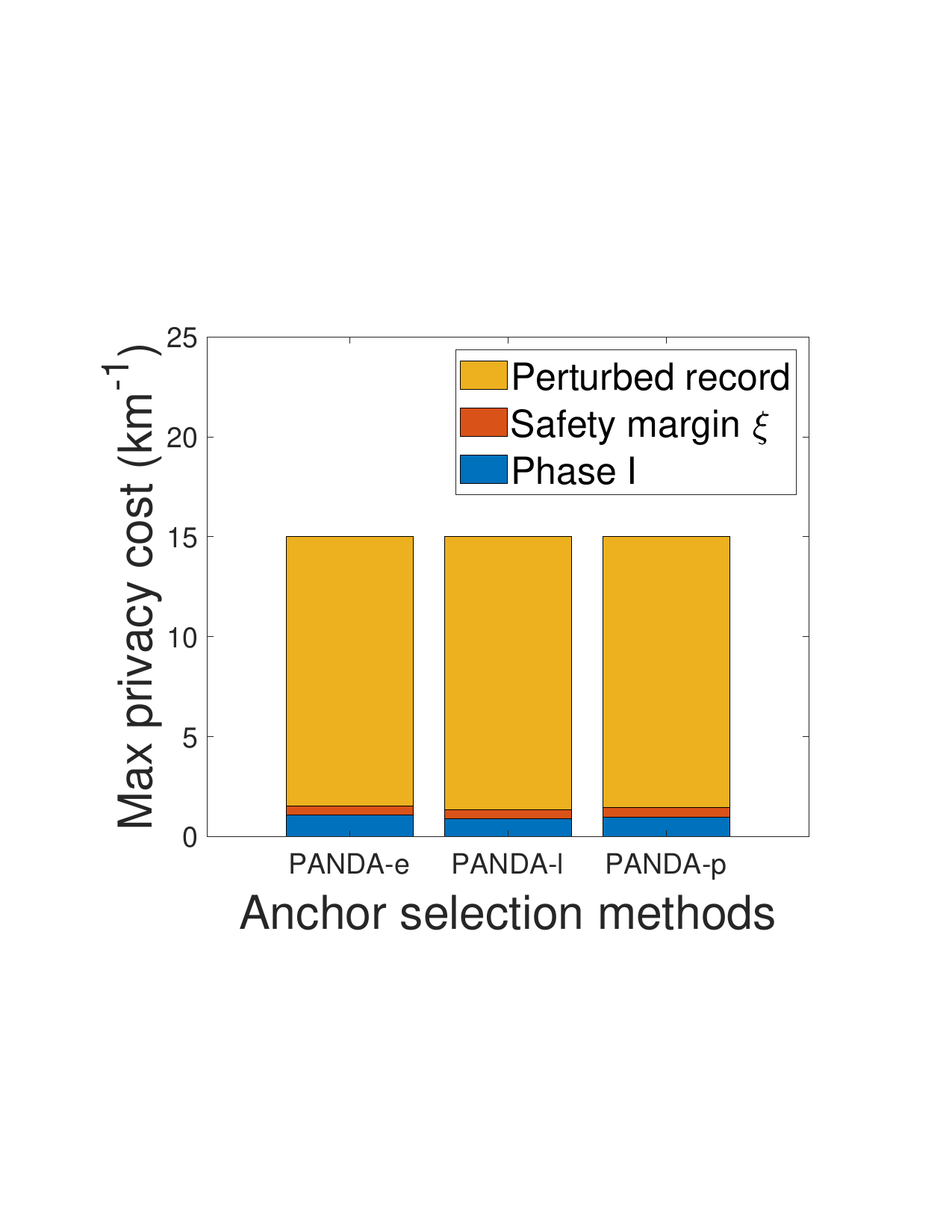}}
  \subfigure[NYC]{
\includegraphics[width=0.24\textwidth, height = 0.15\textheight]{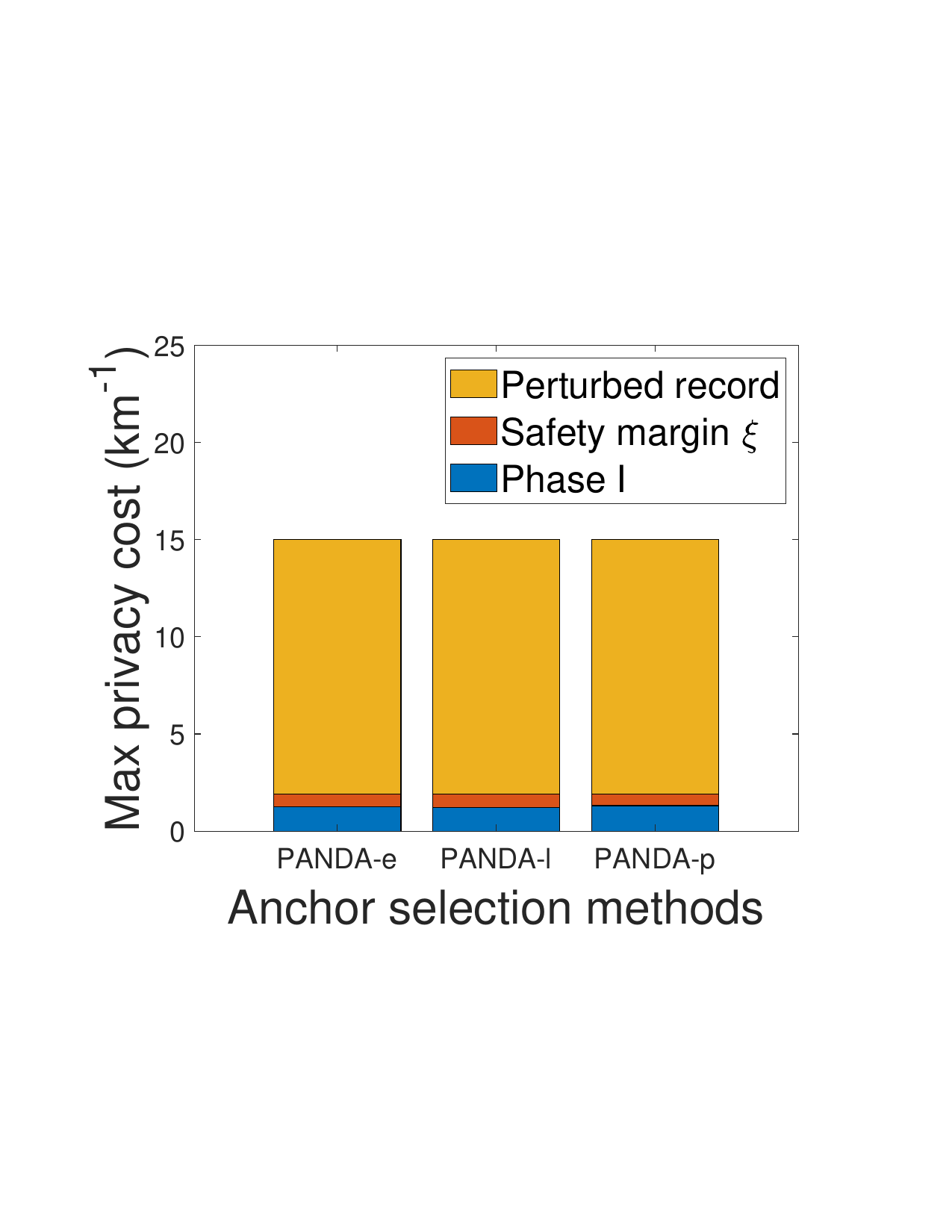}}
  \subfigure[London]{
\includegraphics[width=0.24\textwidth, height = 0.15\textheight]{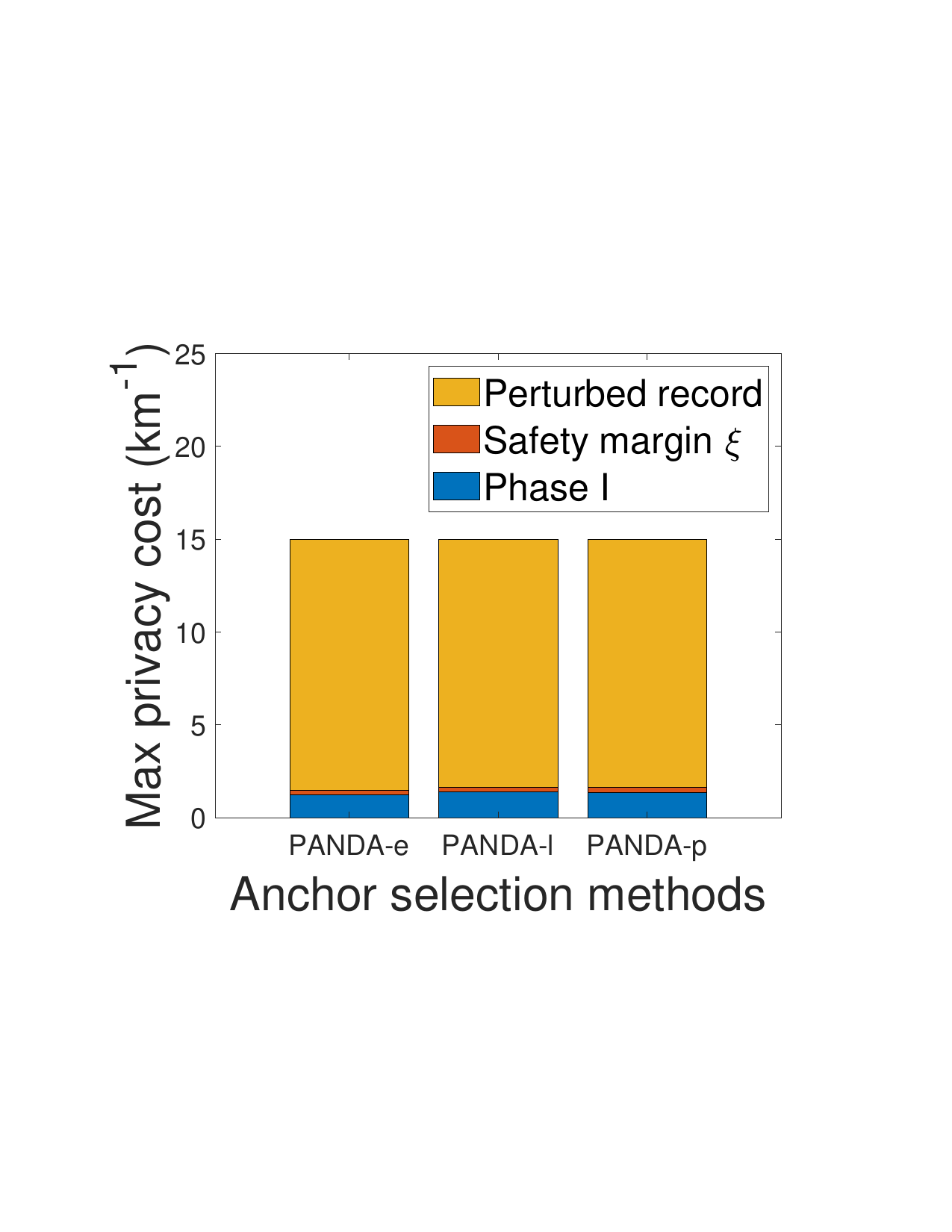}}
\vspace{-0.15in}
\end{minipage}
\caption{Privacy budget allocation across the two phases.}
\label{fig:budgetallocation15}
\vspace{-0.00in}
\end{figure*}

{\rev In this section, we analyze the distribution of the privacy budget across the three components of \textsc{PAnDA} in Figure~\ref{fig:budgetallocation15}(a)(b)(c): anchor selection in Phase I, and the reserved safety margin $\xi$ and perturbation vector optimization in Phase II, under an alternative setting with $\epsilon = 15$km$^{-1}$. These results complement Figure~\ref{fig:budgetallocation}(a)(b)(c) in the main paper, which uses $\epsilon = 5$km$^{-1}$. The results show that a majority of the privacy budget—on average {\bl 89.03\%}—is reserved for Phase II, providing sufficient flexibility to optimize utility under mDP constraints. In comparison, anchor selection consumes {\bl 7.91\%}, while the safety margin accounts for {\bl 3.06\%}. This allocation reflects a balanced design: although anchor selection and the safety margin are essential for ensuring probabilistic mDP guarantees, excessive budget allocation to these components would hinder utility optimization. These findings confirm that \textsc{PAnDA} effectively limits privacy leakage during anchor selection, bounds the safety margin judiciously, and preserves ample budget for utility-aware perturbation, consistent with the observations in Figure~\ref{fig:budgetallocation}(a)(b)(c).}

\subsection{Clustering Coefficients of Secret Records}
\label{subsec:clustering_add}
{\bl 

This experiment evaluates the clustering characteristics of anchor sets selected by \textsc{PAnDA}, compared to those in the original secret dataset. The goal is to assess how well the selected anchors support Benders decomposition, which benefits from more localized (i.e., clustered) data distributions.

The clustering coefficient is computed by converting the distance matrix into a binary adjacency matrix $A$ (setting $A_{ij} = 1$ if the distance between nodes $i$ and $j$ is below a threshold, e.g., the distance threshold $\gamma$, with diagonal entries zeroed). For each node $i$ with $k$ neighbors, we count the number of edges $m$ among those neighbors. The local clustering coefficient is given by $C(i) = \frac{2m}{k(k-1)}$,  which measures how many of the $\frac{k(k-1)}{2}$ possible neighbor-to-neighbor connections actually exist, and finally the average clustering coefficient is simply the arithmetic mean of all $C(i)$ values across the network.
}

{\rev Table~\ref{Tb:exp:cluster} reports the average clustering coefficients of the privacy-preserving graphs induced by each method. Across all datasets, the \textsc{PAnDA} variants consistently achieve higher coefficients than the baseline methods. On average, \textsc{PAnDA}-e, \textsc{PAnDA}-p, and \textsc{PAnDA}-l each attain a clustering coefficient of {\bl 0.6160, 0.5877, and 0.6142}, representing an improvement of {\bl 33.02\%}, {\bl 26.91\%}, and {\bl 32.63\%} over the original (non-selected) domains $\mathcal{X}$, respectively.

These results demonstrate that the localized selection strategies in \textsc{PAnDA} lead to tightly clustered anchor sets, making the resulting optimization problems more amenable to Benders decomposition. Specifically, the sparsity and locality of the anchor sets reduce coupling across subproblems, allowing \textsc{PAnDA} to decompose and solve large-scale LPs more efficiently than traditional full-domain approaches.}

\vspace{-0.00in}
\begin{table}[t]
\centering
\small 
\scriptsize 
\setlength{\tabcolsep}{1pt}  
\begin{tabular}{ c|c|c|c|c|c|c}
\toprule
\multicolumn{7}{ c  }{Rome}\\ 
\cline{1-7}
\multicolumn{1}{ c|  }{Method}  & $K = 500$ & $K = 1,000$  & $K = 2,000$ & $K = 3,000$& $K = 4,000$& $K = 5,000$\\
\hline
\hline
\multicolumn{1}{ c|  }{Origin} &0.1477±0.0029 &0.3328±0.0034 &0.5287±0.0049 &0.5920±0.0045 &0.6206±0.0087 &0.6352±0.0081 \\ 
\hline
\multicolumn{1}{ c|  }{ \textbf{\textsc{PAnDA}-e}} &0.2547±0.0026 &0.5134±0.0027 &0.7144±0.0066 &0.7843±0.0058 &0.8113±0.0089 &0.8269±0.0073 \\
\multicolumn{1}{ c|  }{ \textbf{\textsc{PAnDA}-p}} &0.1833±0.0038 &0.4261±0.0035 &0.6792±0.0071 &0.7528±0.0064 &0.7910±0.0081 &0.8192±0.0063 \\
\multicolumn{1}{ c|  }{ \textbf{\textsc{PAnDA}-l}} &0.2246±0.0042 &0.4818±0.0033 &0.7252±0.0064 &0.7892±0.0081 &0.8151±0.0037 &0.8288±0.0067 \\
\hline
\toprule
\multicolumn{7}{ c  }{NYC}\\ 
\cline{1-7}
\multicolumn{1}{ c|  }{Method}  & $K = 500$ & $K = 1,000$  & $K = 2,000$ & $K = 3,000$& $K = 4,000$& $K = 5,000$\\
\hline
\hline
\multicolumn{1}{ c|  }{Origin} &0.1738±0.0023 &0.3759±0.0031 &0.5617±0.0059 &0.6107±0.0047 &0.6283±0.0061 &0.6353±0.0080 \\ 
\hline
\multicolumn{1}{ c|  }{ \textbf{\textsc{PAnDA}-e}} &0.2493±0.0036 &0.4981±0.0033 &0.7253±0.0041 &0.7529±0.0069 &0.7923±0.0045 &0.7974±0.0088 \\
\multicolumn{1}{ c|  }{ \textbf{\textsc{PAnDA}-p}} &0.2286±0.0046 &0.4673±0.0018 &0.6989±0.0070 &0.7590±0.0063 &0.7900±0.0056 &0.7937±0.0094 \\
\multicolumn{1}{ c|  }{ \textbf{\textsc{PAnDA}-l}} &0.2353±0.0029 &0.4971±0.0034 &0.7381±0.0037 &0.7707±0.0065 &0.7925±0.0084 &0.7956±0.0075 \\
\hline
\toprule
\multicolumn{7}{ c  }{London}\\ 
\cline{1-7}
\multicolumn{1}{ c|  }{Method}  & $K = 500$ & $K = 1,000$  & $K = 2,000$ & $K = 3,000$& $K = 4,000$& $K = 5,000$\\
\hline
\hline
\multicolumn{1}{ c|  }{Origin} &0.0831±0.0019 &0.2193±0.0042 &0.4355±0.0052 &0.5338±0.0034 &0.5927±0.0085 &0.6290±0.0076 \\ 
\hline
\multicolumn{1}{ c|  }{ \textbf{\textsc{PAnDA}-e}} &0.1472±0.0045 &0.3202±0.0038 &0.5940±0.0064 &0.7027±0.0072 &0.7798±0.0039 &0.8231±0.0090 \\
\multicolumn{1}{ c|  }{ \textbf{\textsc{PAnDA}-p}} &0.1155±0.0048 &0.2888±0.0027 &0.5727±0.0037 &0.6806±0.0085 &0.7484±0.0051 &0.7832±0.0058 \\
\multicolumn{1}{ c|  }{ \textbf{\textsc{PAnDA}-l}} &0.1353±0.0054 &0.3430±0.0043 &0.5972±0.0059 &0.7153±0.0077 &0.7510±0.0060 &0.8207±0.0073 \\
\hline
\end{tabular}
\vspace{0.05in}
\caption{Clustering coefficients. Mean$\pm$1.96$\times$std. deviation. }
\label{Tb:exp:cluster}
\vspace{-0.05in}
\end{table}

\subsection{Performance with Diverse Acceptable Violation Ratios $\delta$}
\label{subsec:diversedelta_add}

{\rev In this section, we evaluate the performance of \textsc{PAnDA} under a range of acceptable mDP violation ratios:
\[
\delta \in \{10^{-7}, 10^{-6}, 10^{-5}, 10^{-4}, 10^{-3}, 10^{-2}, 10^{-1}, 0.15\}.
\]
On average, the computation times for \textsc{PAnDA}-e, \textsc{PAnDA}-p, and \textsc{PAnDA}-l are 6.28s, 6.13s, and 6.75s, respectively.

Figure~\ref{fig:UL_delta}(a)(b)(c) shows the utility loss of our methods across different $\delta$ values. \textsc{PAnDA} consistently delivers strong utility performance regardless of the allowed violation threshold. Compared to Table~\ref{Tb:exp:ULscalability}, \textsc{PAnDA}-e, \textsc{PAnDA}-p, and \textsc{PAnDA}-l achieve at least {\bl 82.11\%}, {\bl 77.93\%}, and {\bl 64.00\%} lower utility loss, respectively, relative to EM, EM+BR, and LP+EM.

Figure~\ref{fig:mDPfailure_delta}(a)(b)(c) further confirms the robustness of our methods by reporting the empirical mDP violation rates under varying $\delta$. Across all settings, \textsc{PAnDA}-e, \textsc{PAnDA}-p, and \textsc{PAnDA}-l consistently exhibit zero observed violations, even at the most lenient threshold $\delta = 0.15$. This substantial gap between the theoretical bound and the actual violation rates is primarily due to two factors: (1) the use of the function $h(\xi)$, which enforces more conservative safety margins, and (2) the estimation of $\xi$ in Eq.~(\ref{eq:estxi}), which is derived from worst-case analysis over high-probability location pairs.}

\begin{figure*}[t]
\begin{minipage}{1.00\textwidth}
\centering
  \subfigure[Rome]{
\includegraphics[width=0.24\textwidth, height = 0.15\textheight]{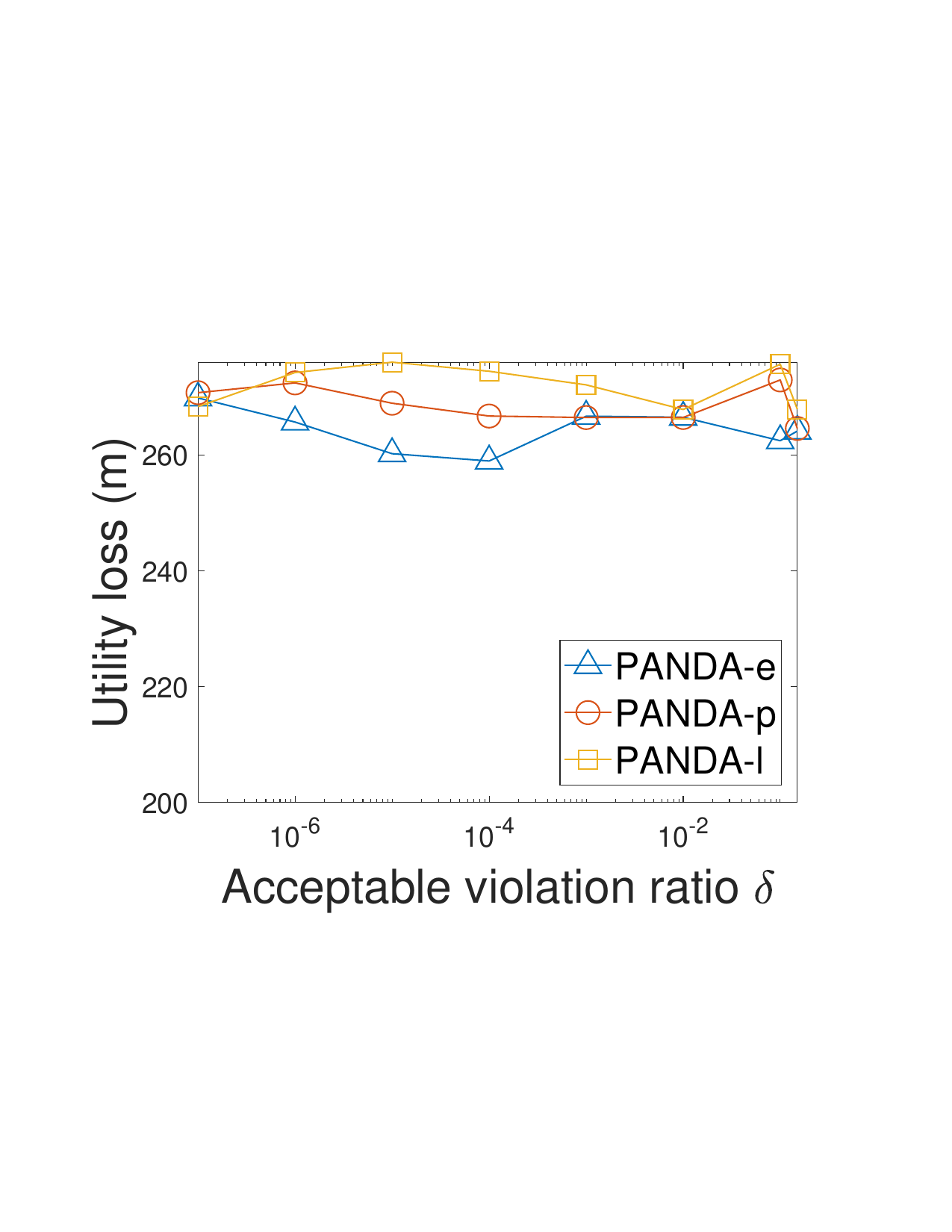}}
  \subfigure[NYC]{
\includegraphics[width=0.24\textwidth, height = 0.15\textheight]{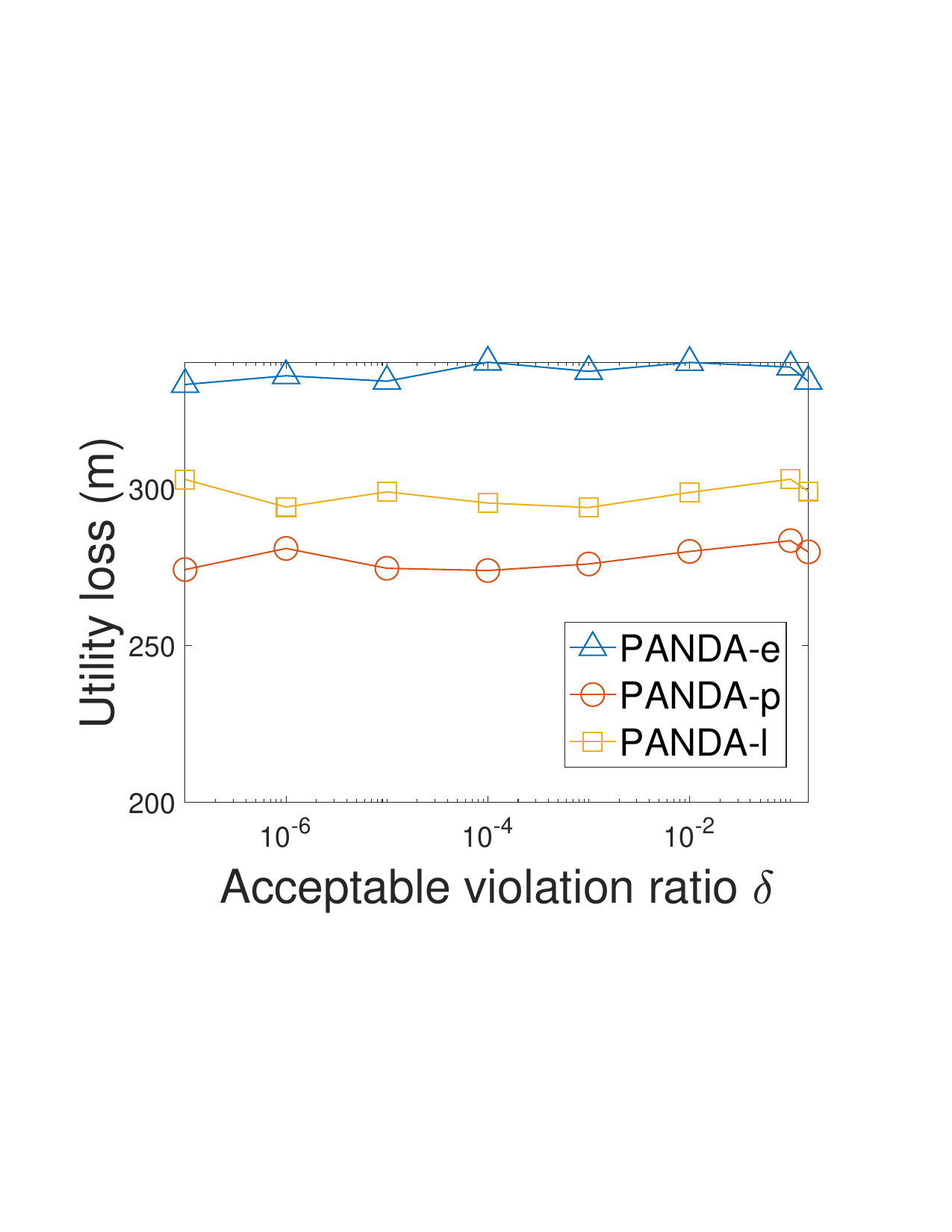}}
  \subfigure[London]{
\includegraphics[width=0.24\textwidth, height = 0.15\textheight]{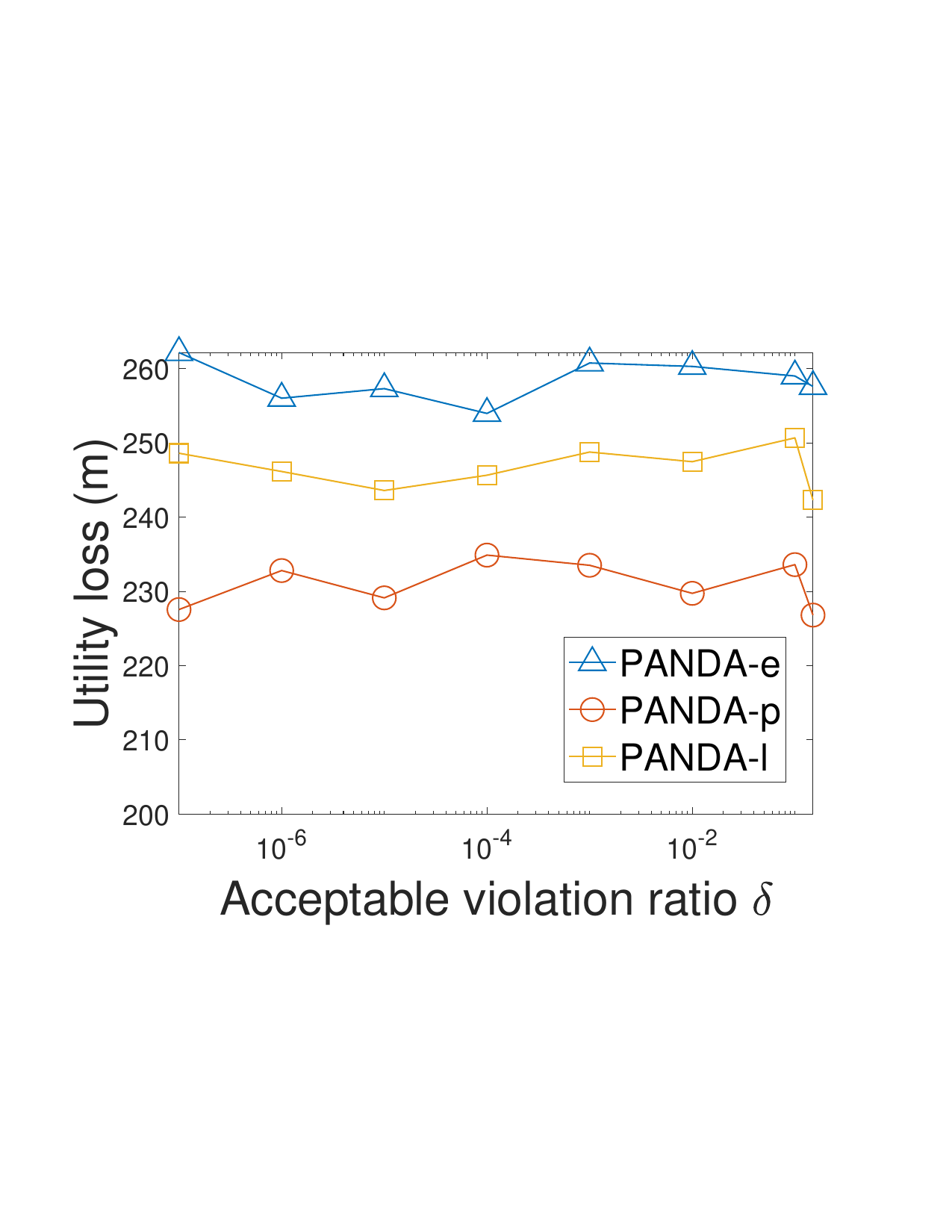}}
\vspace{-0.15in}
\end{minipage}
\caption{Utility loss of \textsc{PAnDA} under varying acceptable mDP violation ratios $\delta$.}
\label{fig:UL_delta}
\vspace{-0.05in}
\end{figure*}

\begin{figure*}[t]
\begin{minipage}{1.00\textwidth}
\centering
  \subfigure[Rome]{
\includegraphics[width=0.24\textwidth, height = 0.15\textheight]{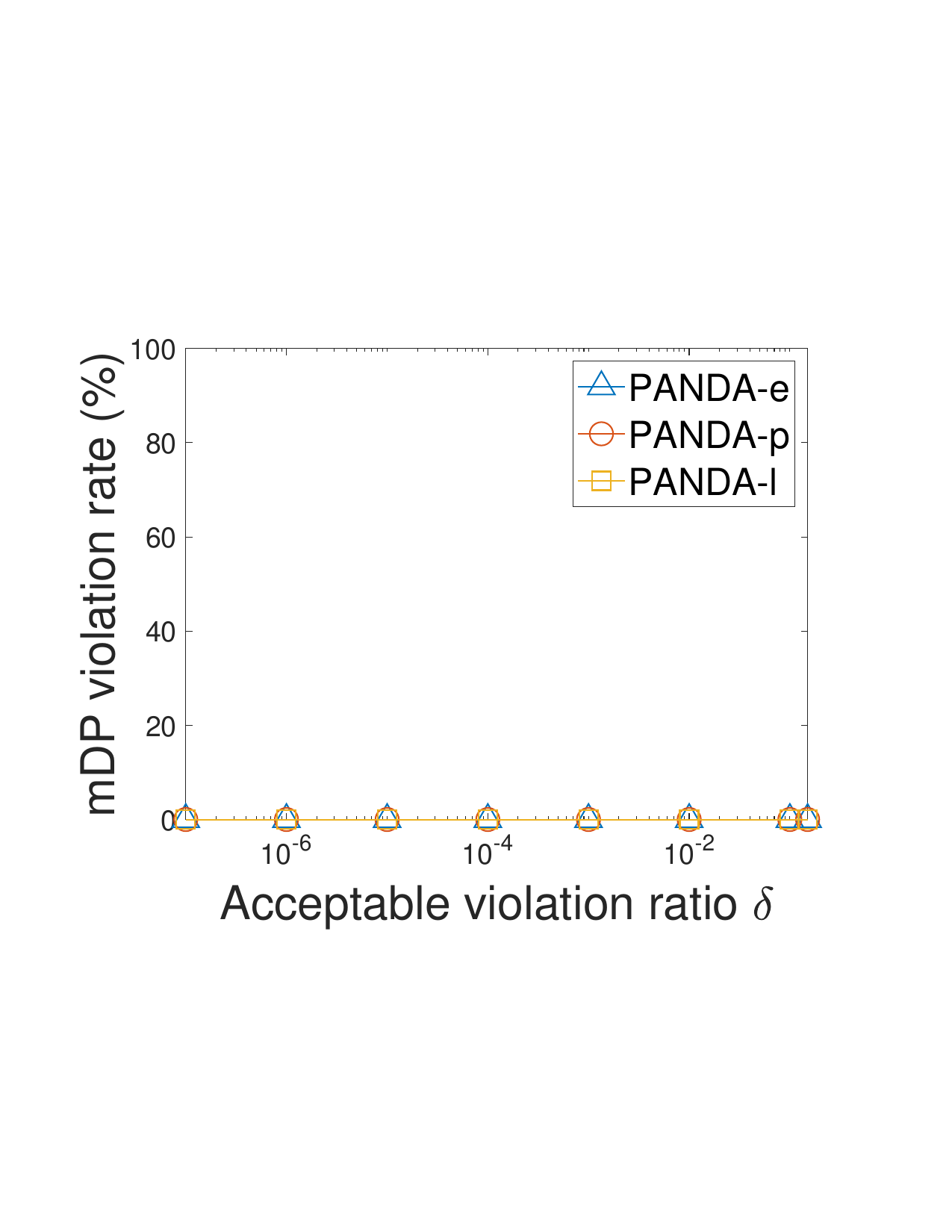}}
  \subfigure[NYC]{
\includegraphics[width=0.24\textwidth, height = 0.15\textheight]{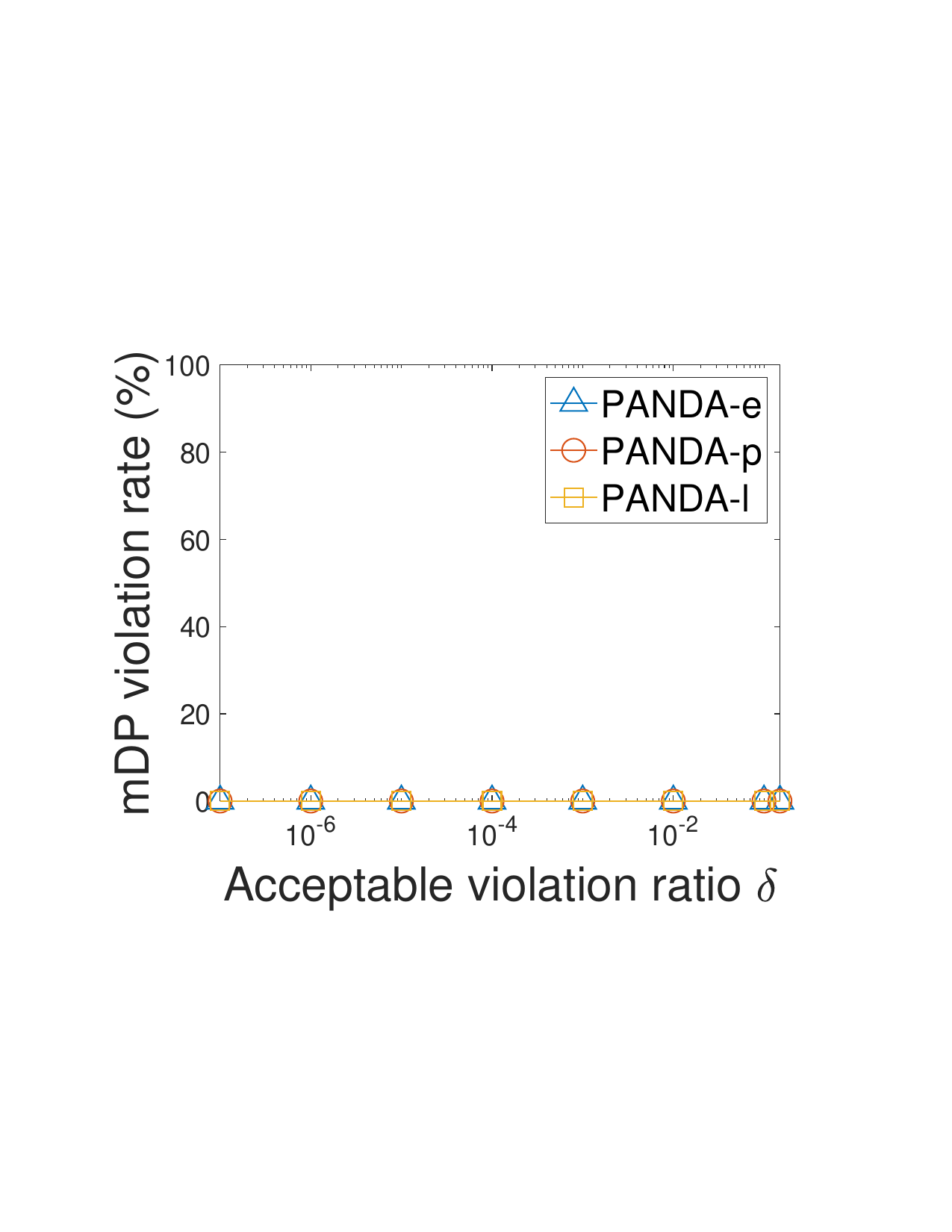}}
  \subfigure[London]{
\includegraphics[width=0.24\textwidth, height = 0.15\textheight]{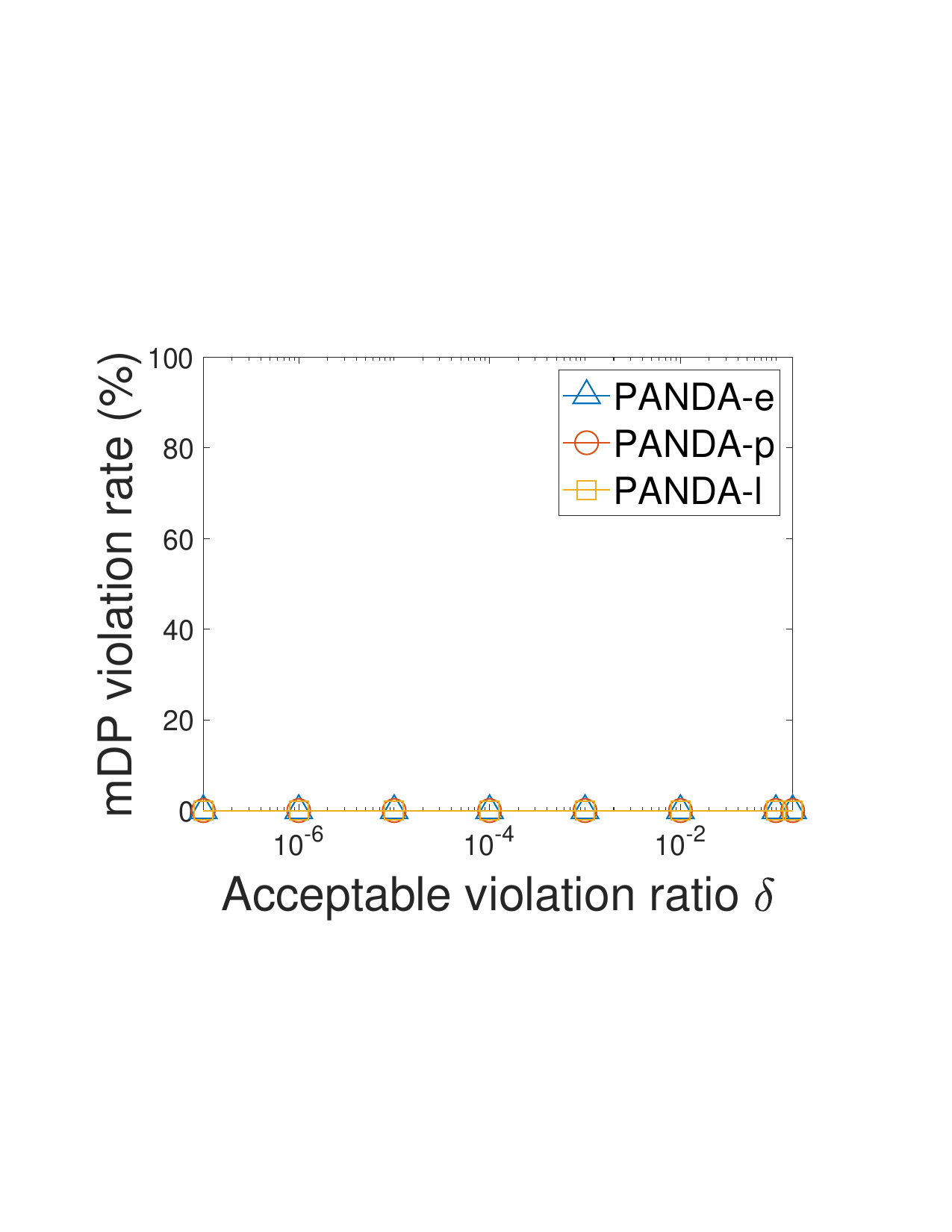}}
\vspace{-0.15in}
\end{minipage}
\caption{mDP violation rate of \textsc{PAnDA} under varying acceptable mDP violation ratios $\delta$.}
\label{fig:mDPfailure_delta}
\vspace{-0.05in}
\end{figure*}

\subsection{Comparison Between Estimated Safety Margin and Real Safety Margin}
\label{subsec:safetymarginpara_add}

\noindent\textbf{Figures~\ref{fig:xi_delta_rome}(a)(b)(c),~\ref{fig:xi_delta_NYC}(a)(b)(c), and~\ref{fig:xi_delta_london}(a)(b)(c)} compare the \emph{actual safety margin}, defined as $\xi_{x_n, x_m} = h_{x_n,x_m}^{-1}(1 - \delta)$, with the \emph{estimated safety margin} $\hat{\xi}_{x,x'}$ computed using Algorithm~2, under varying values of the violation tolerance parameter $\delta$. The comparisons are presented for the Rome, NYC, and London datasets, respectively.

Each figure includes three subplots corresponding to different anchor selection strategies: \textsc{PAnDA}-e, \textsc{PAnDA}-p, and \textsc{PAnDA}-l. Across all datasets and decay models, the estimated safety margins closely follow the actual margins, demonstrating the accuracy of the estimation procedure. On average, the estimated safety margins for \textsc{PAnDA}-e, \textsc{PAnDA}-p, and \textsc{PAnDA}-l exceed the actual safety margins by only {\bl 13.3\%, 13.8\%, and 13.2\%}, respectively.

As expected, the required safety margin increases as $\delta$ decreases, reflecting the need for stricter guarantees under tighter privacy constraints.
\begin{figure*}[t]
\centering
\hspace{0.00in}
\begin{minipage}{1.00\textwidth}
\centering
  \subfigure[\textsc{PAnDA}-e]{
\includegraphics[width=0.24\textwidth, height = 0.15\textheight]{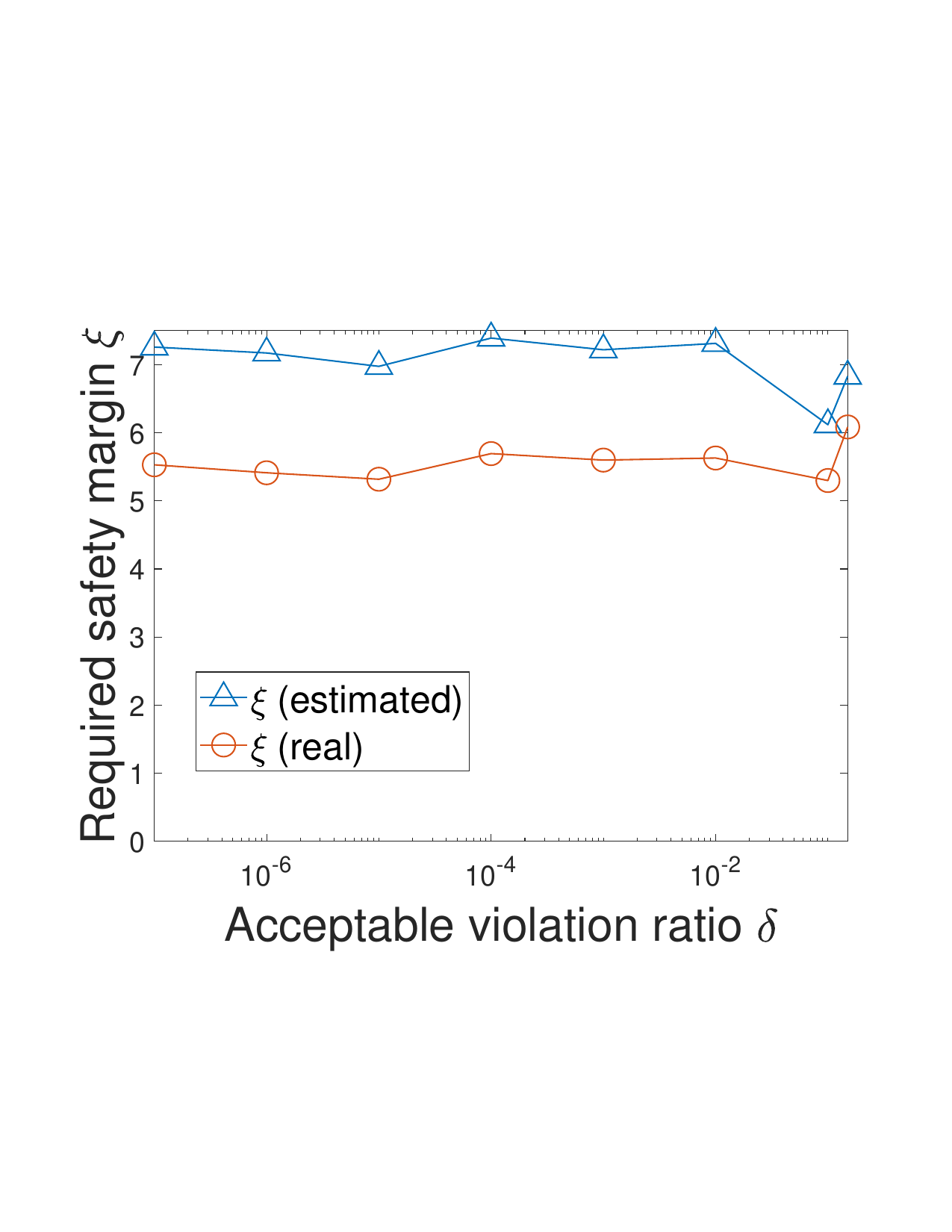}}
  \subfigure[\textsc{PAnDA}-p]{
\includegraphics[width=0.24\textwidth, height = 0.15\textheight]{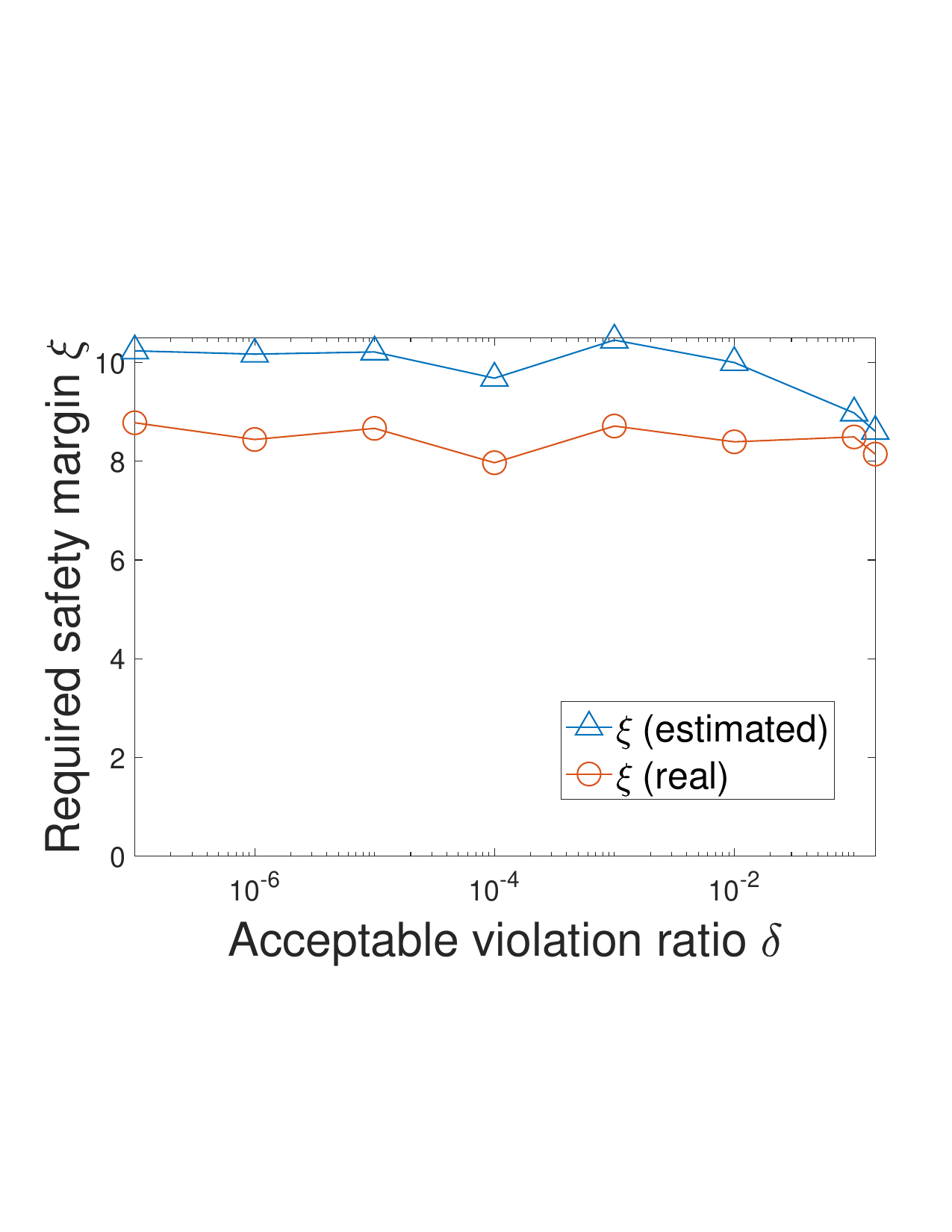}}
  \subfigure[\textsc{PAnDA}-l]{
\includegraphics[width=0.24\textwidth, height = 0.15\textheight]{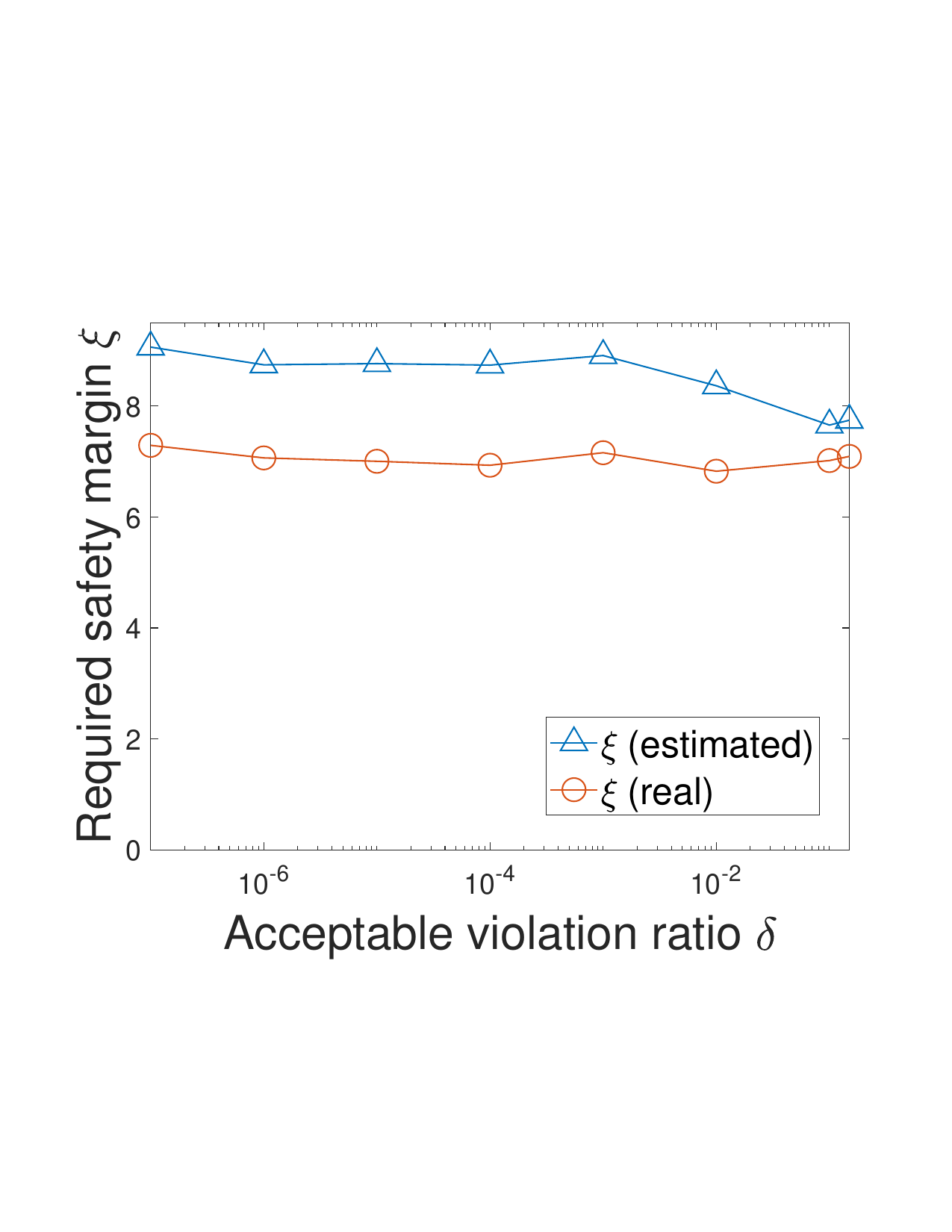}}
\vspace{-0.15in}
\end{minipage}
\caption{Estimated safety margin vs. real safety margin (Rome).}
\label{fig:xi_delta_rome}
\end{figure*}

\begin{figure*}[t]
\begin{minipage}{1.00\textwidth}
\centering
  \subfigure[\textsc{PAnDA}-e]{
\includegraphics[width=0.24\textwidth, height = 0.15\textheight]{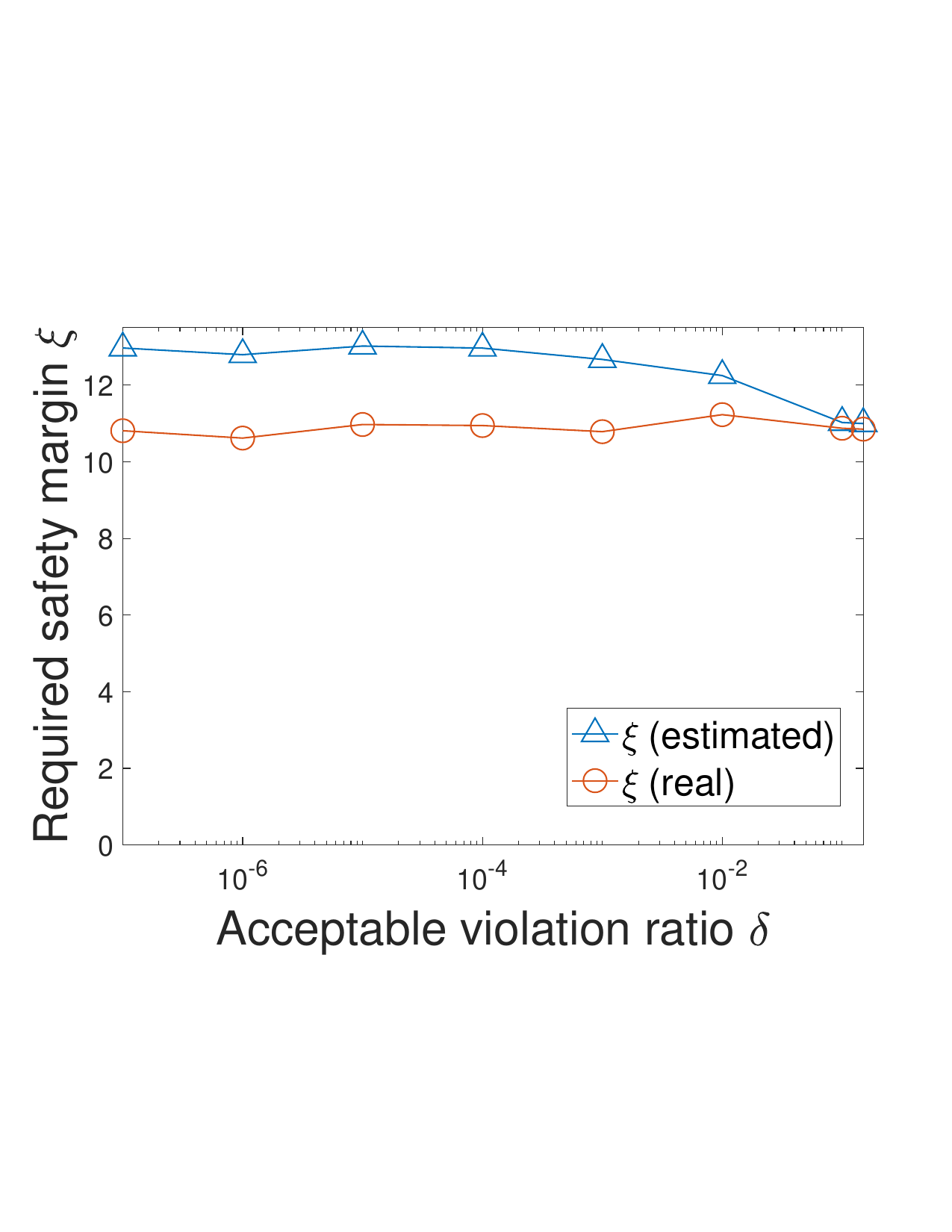}}
  \subfigure[\textsc{PAnDA}-p]{
\includegraphics[width=0.24\textwidth, height = 0.15\textheight]{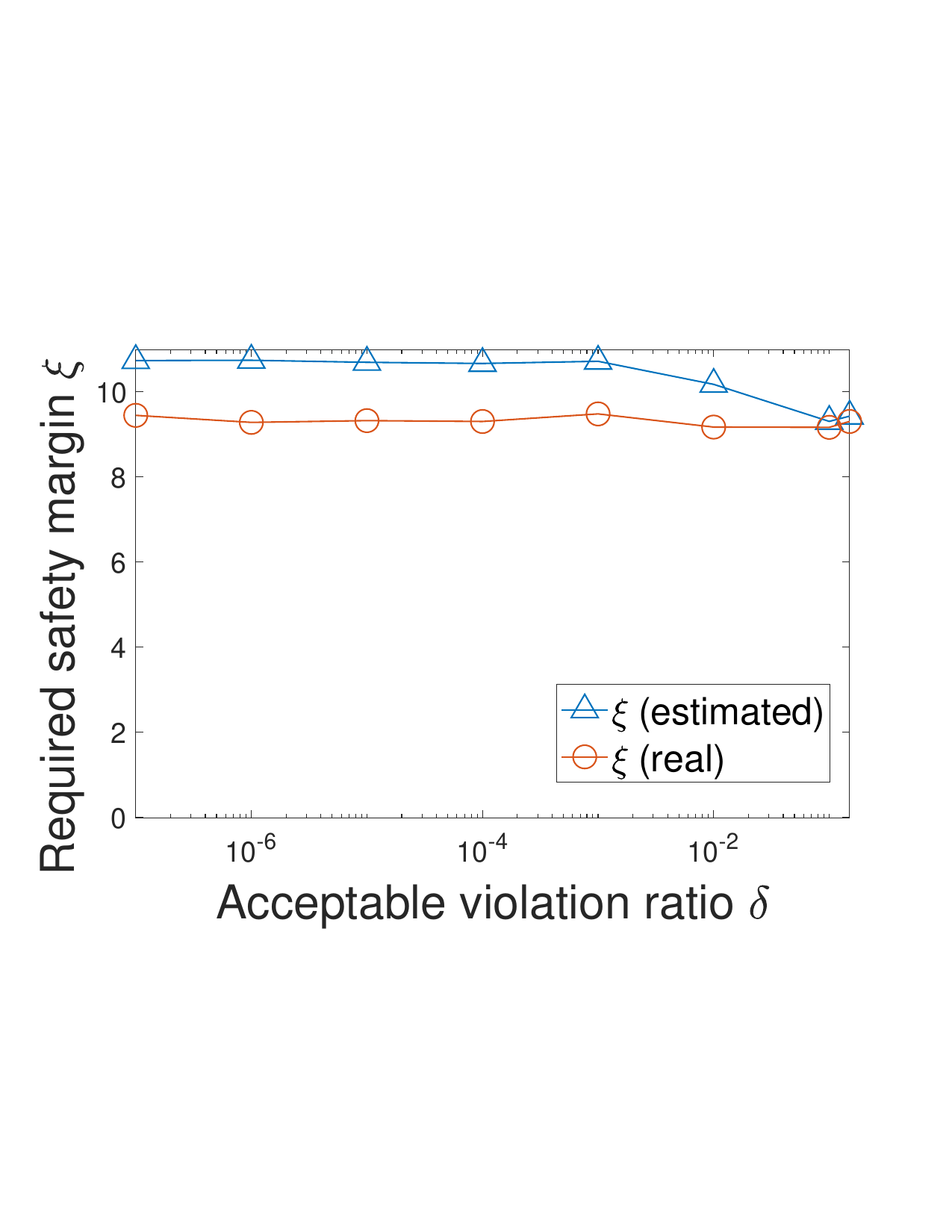}}
  \subfigure[\textsc{PAnDA}-l]{
\includegraphics[width=0.24\textwidth, height = 0.15\textheight]{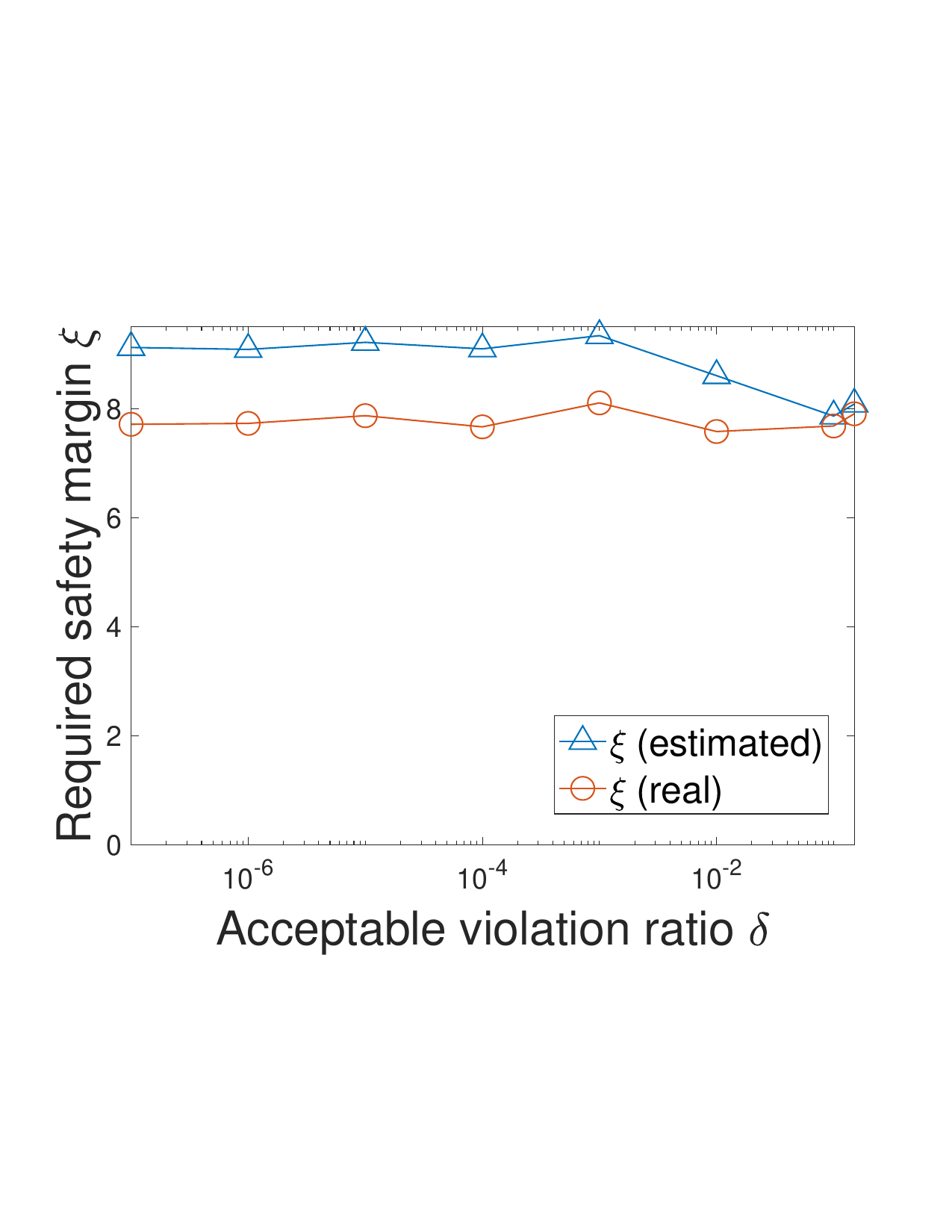}}
\vspace{-0.15in}
\end{minipage}
\caption{Estimated safety margin vs. real safety margin (NYC).}
\label{fig:xi_delta_NYC}
\end{figure*}

\begin{figure*}[t]
\begin{minipage}{1.00\textwidth}
\centering
  \subfigure[\textsc{PAnDA}-e]{
\includegraphics[width=0.24\textwidth, height = 0.15\textheight]{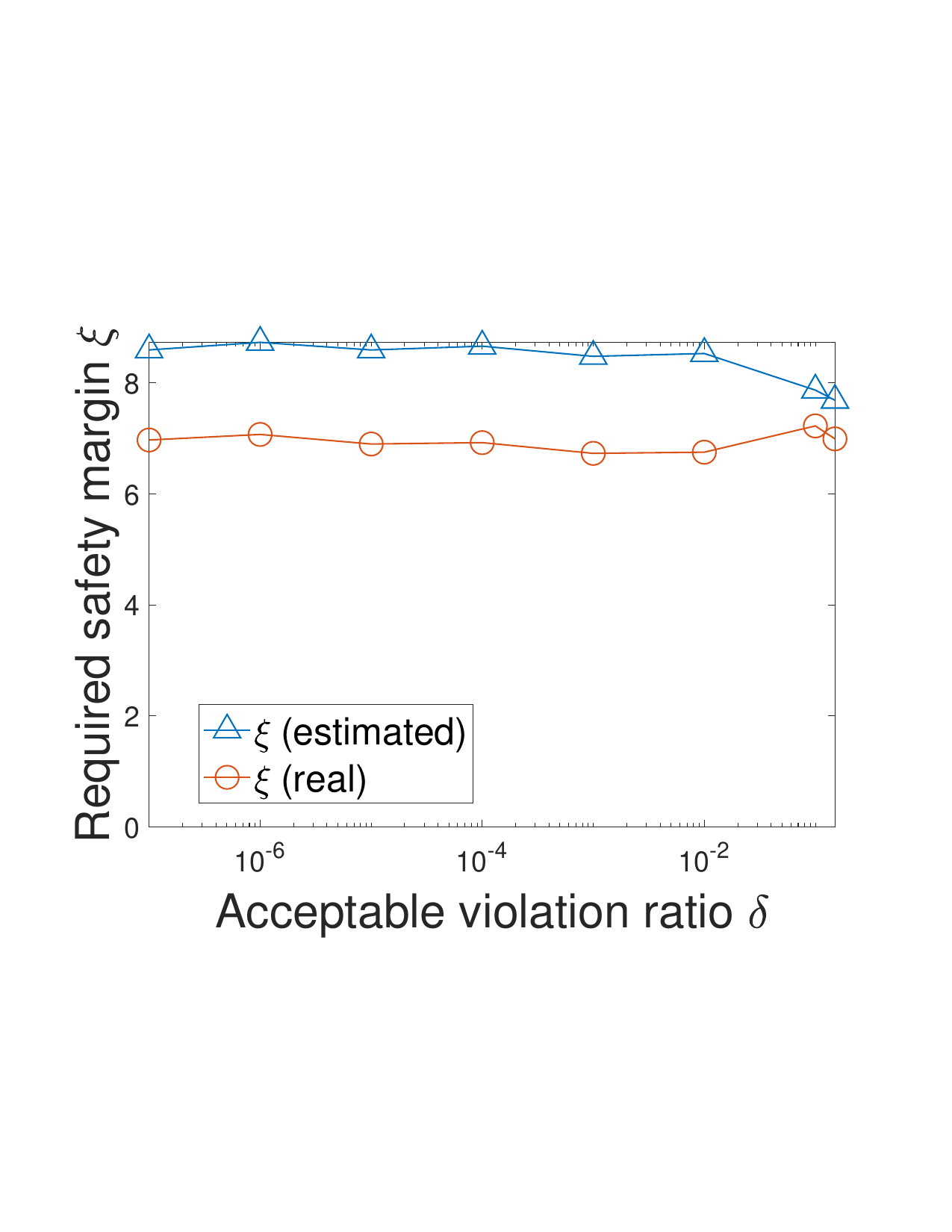}}
  \subfigure[\textsc{PAnDA}-p]{
\includegraphics[width=0.24\textwidth, height = 0.15\textheight]{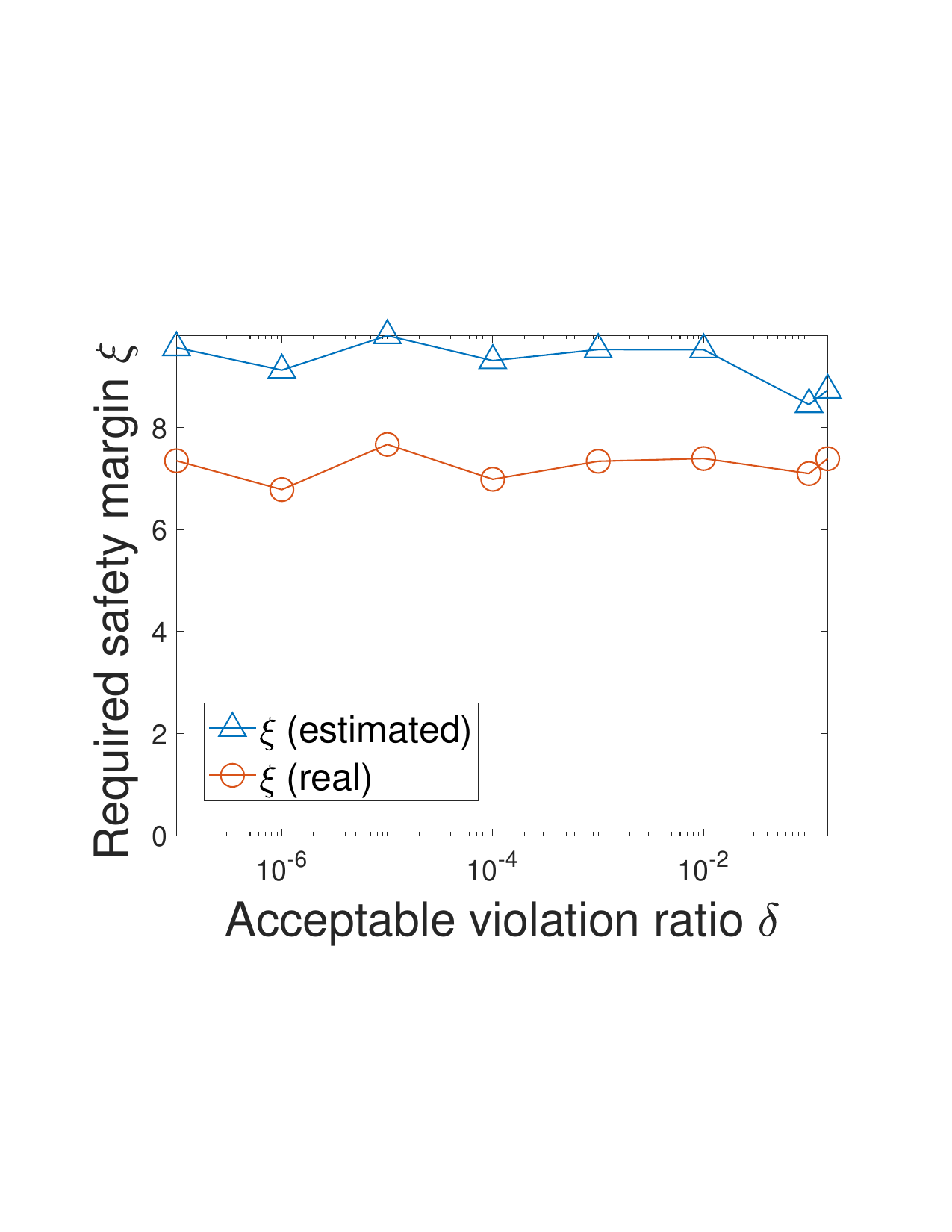}}
  \subfigure[\textsc{PAnDA}-l]{
\includegraphics[width=0.24\textwidth, height = 0.15\textheight]{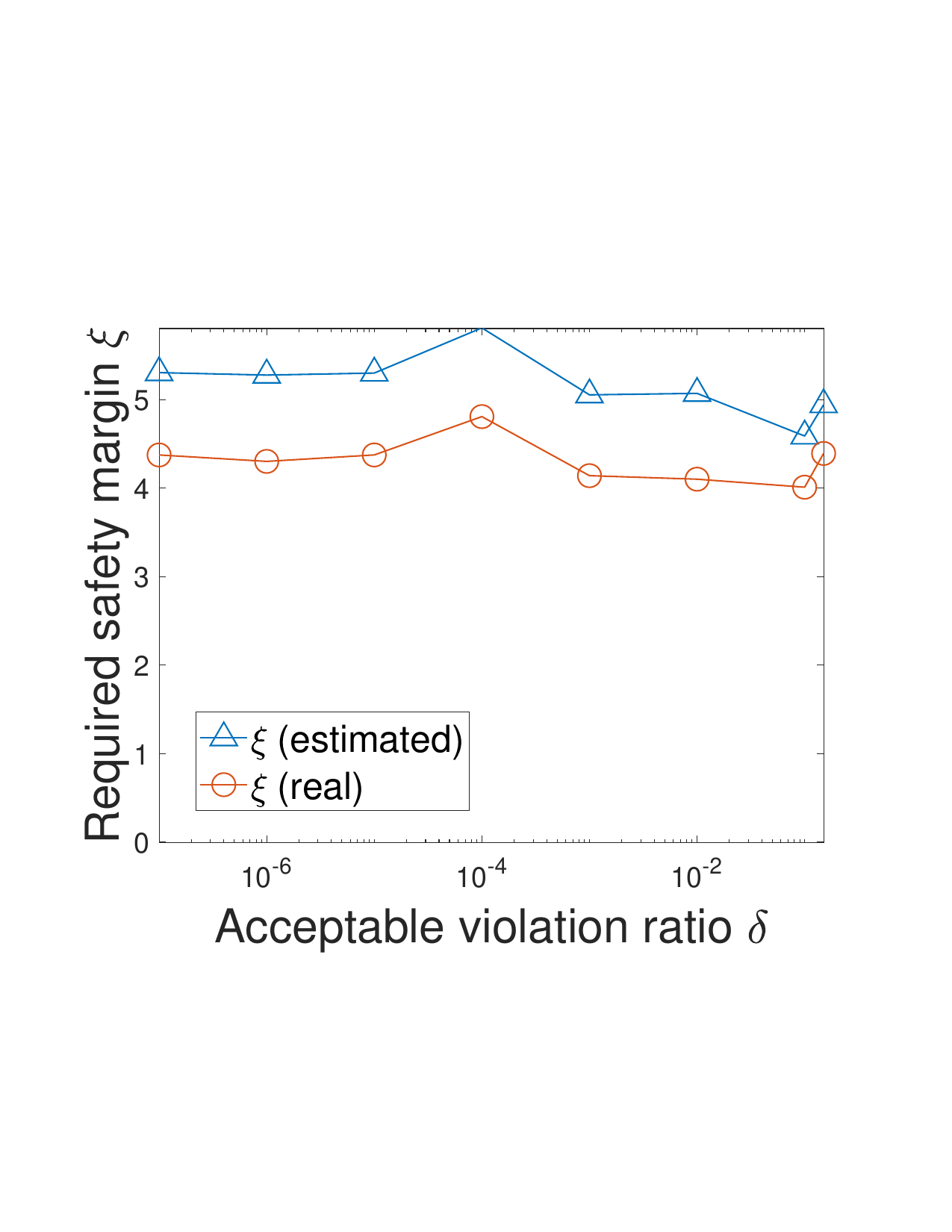}}
\vspace{-0.15in}
\end{minipage}
\caption{Estimated safety margin vs. real safety margin (London).}
\label{fig:xi_delta_london}
\vspace{-0.15in}
\end{figure*}

\subsection{Expected Safety Margin With Different Parameters}
\label{subsec:safetymargin_add}
\begin{figure*}[t]
\centering
\hspace{0.00in}
\begin{minipage}{1.00\textwidth}
\centering
  \subfigure[Exponential decay]{
\includegraphics[width=0.24\textwidth, height = 0.15\textheight]{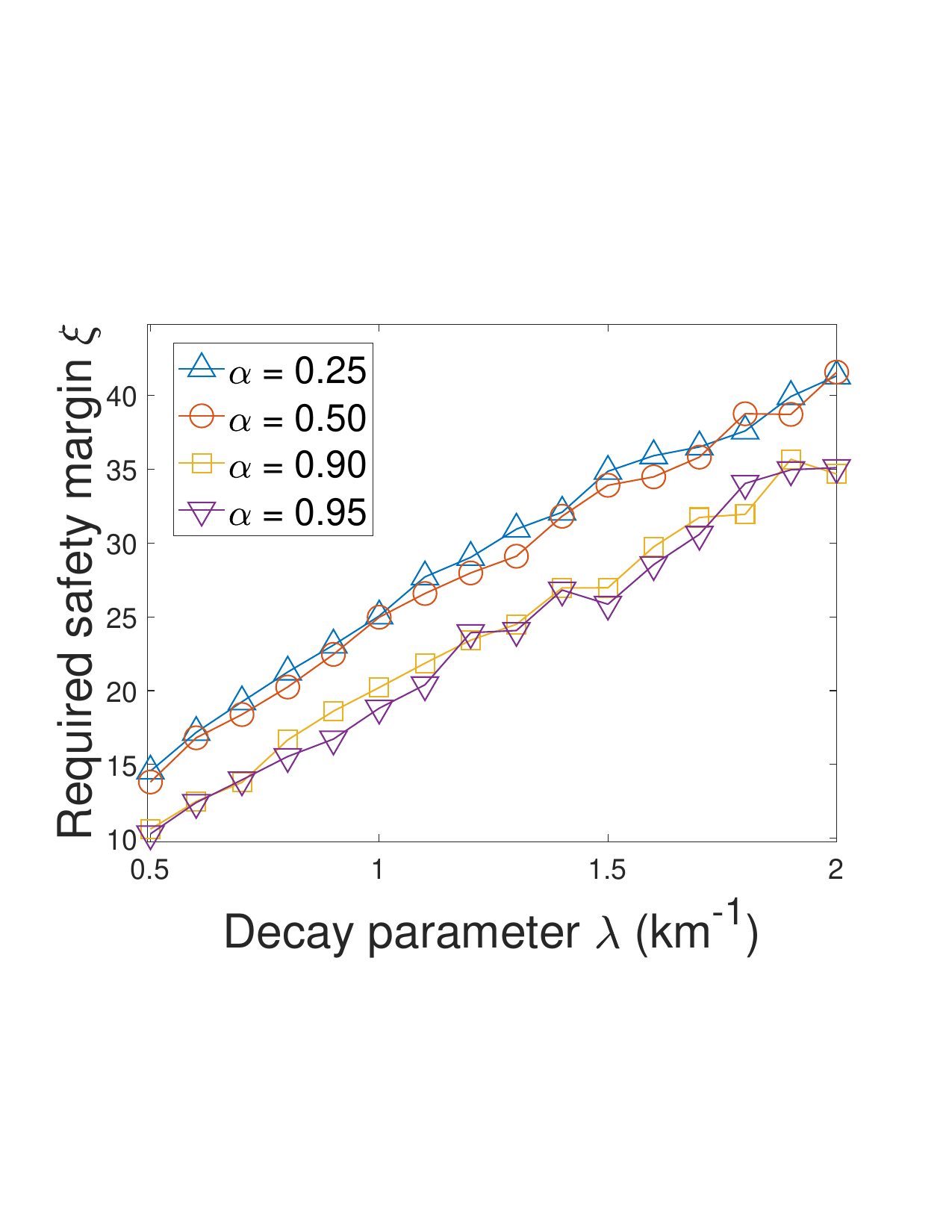}}
  \subfigure[Power law decay]{
\includegraphics[width=0.24\textwidth, height = 0.15\textheight]{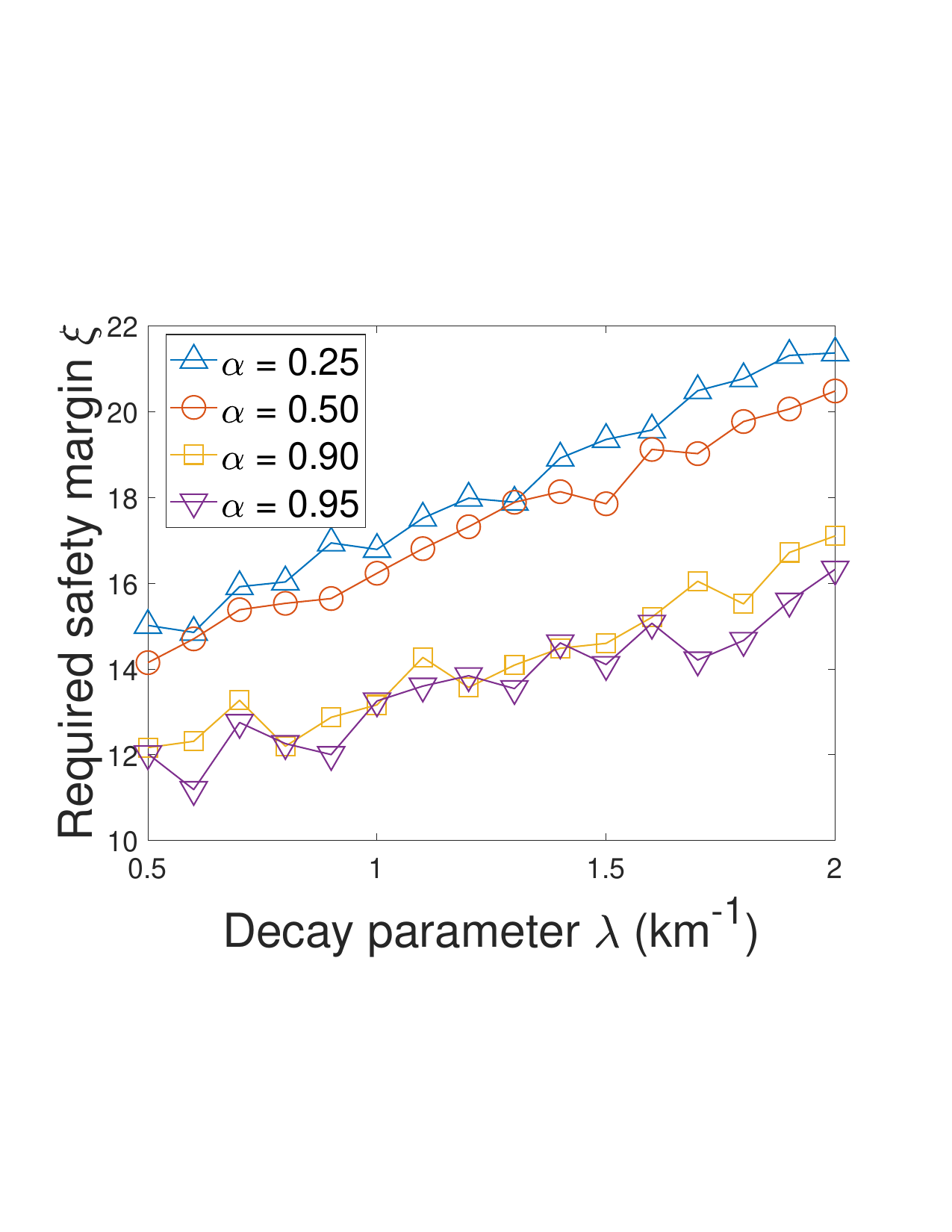}}
  \subfigure[Logistic]{
\includegraphics[width=0.24\textwidth, height = 0.15\textheight]{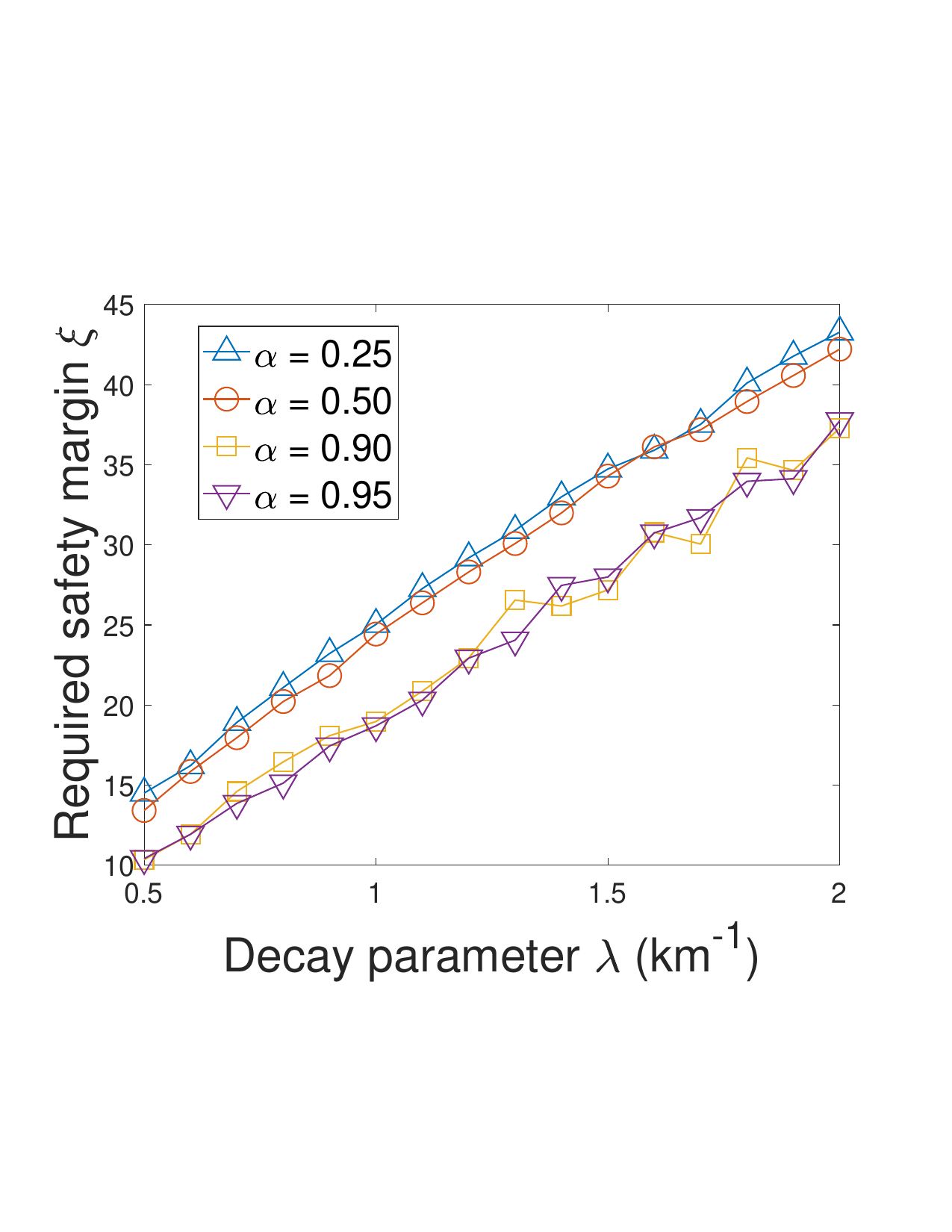}}
\vspace{-0.15in}
\end{minipage}
\caption{Expected safety margin $\xi$ with different decay parameter $\lambda$ and scaling parameter $\alpha$ (NYC).}
\label{fig:xi_lambda_nyc}
\vspace{-0.00in}
\end{figure*}

\begin{figure*}[t]
\centering
\hspace{0.00in}
\begin{minipage}{1.00\textwidth}
\centering
  \subfigure[Exponential decay]{
\includegraphics[width=0.24\textwidth, height = 0.15\textheight]{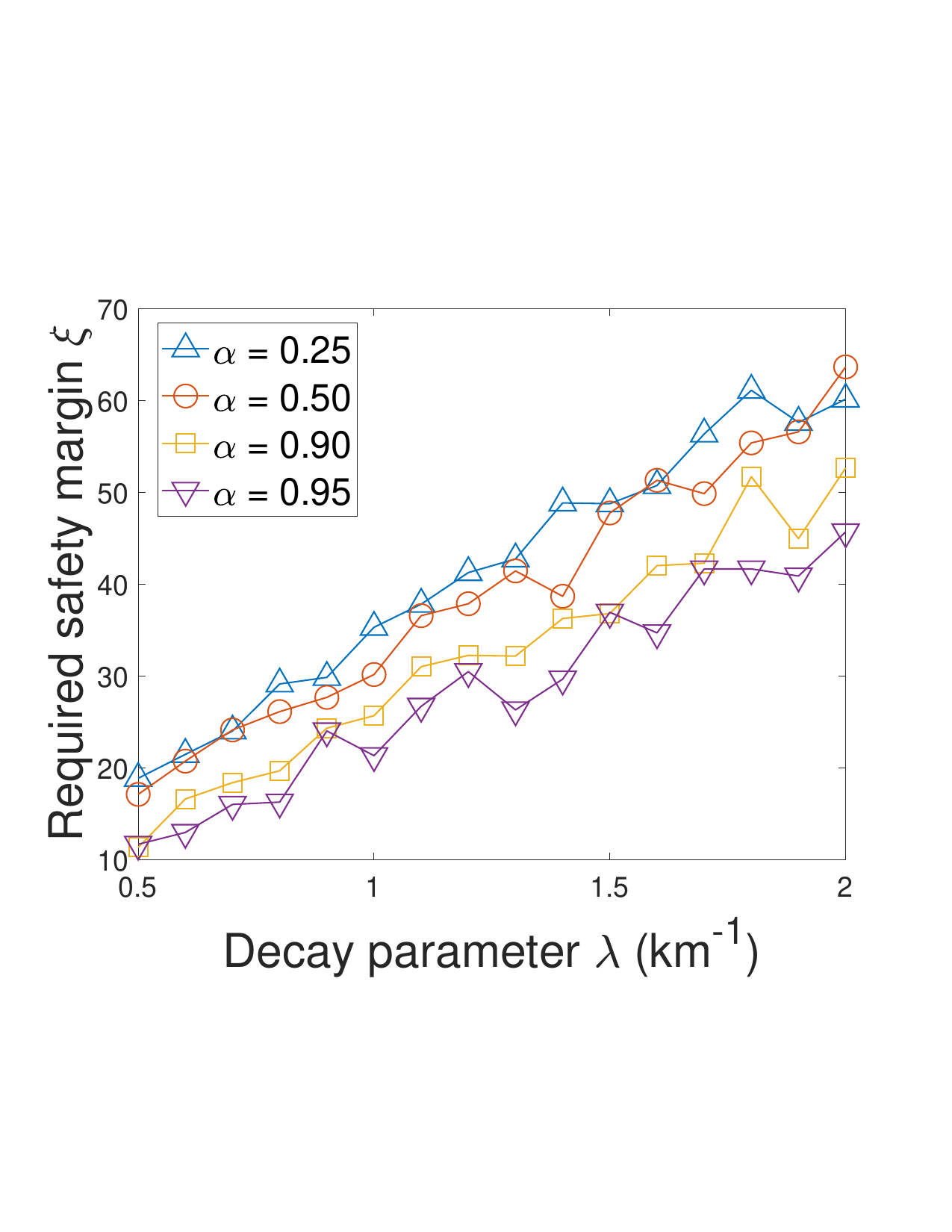}}
  \subfigure[Power law decay]{
\includegraphics[width=0.24\textwidth, height = 0.15\textheight]{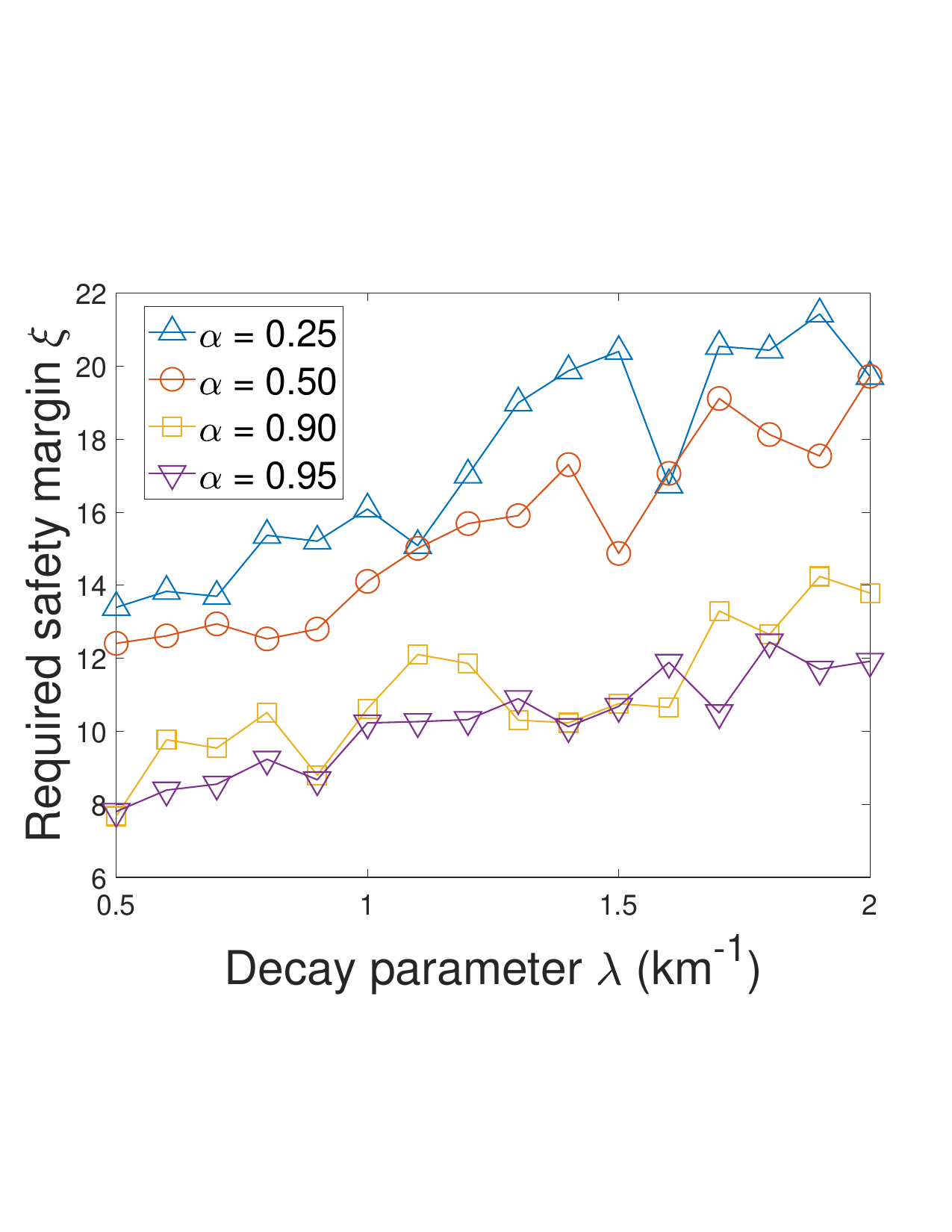}}
  \subfigure[Logistic]{
\includegraphics[width=0.24\textwidth, height = 0.15\textheight]{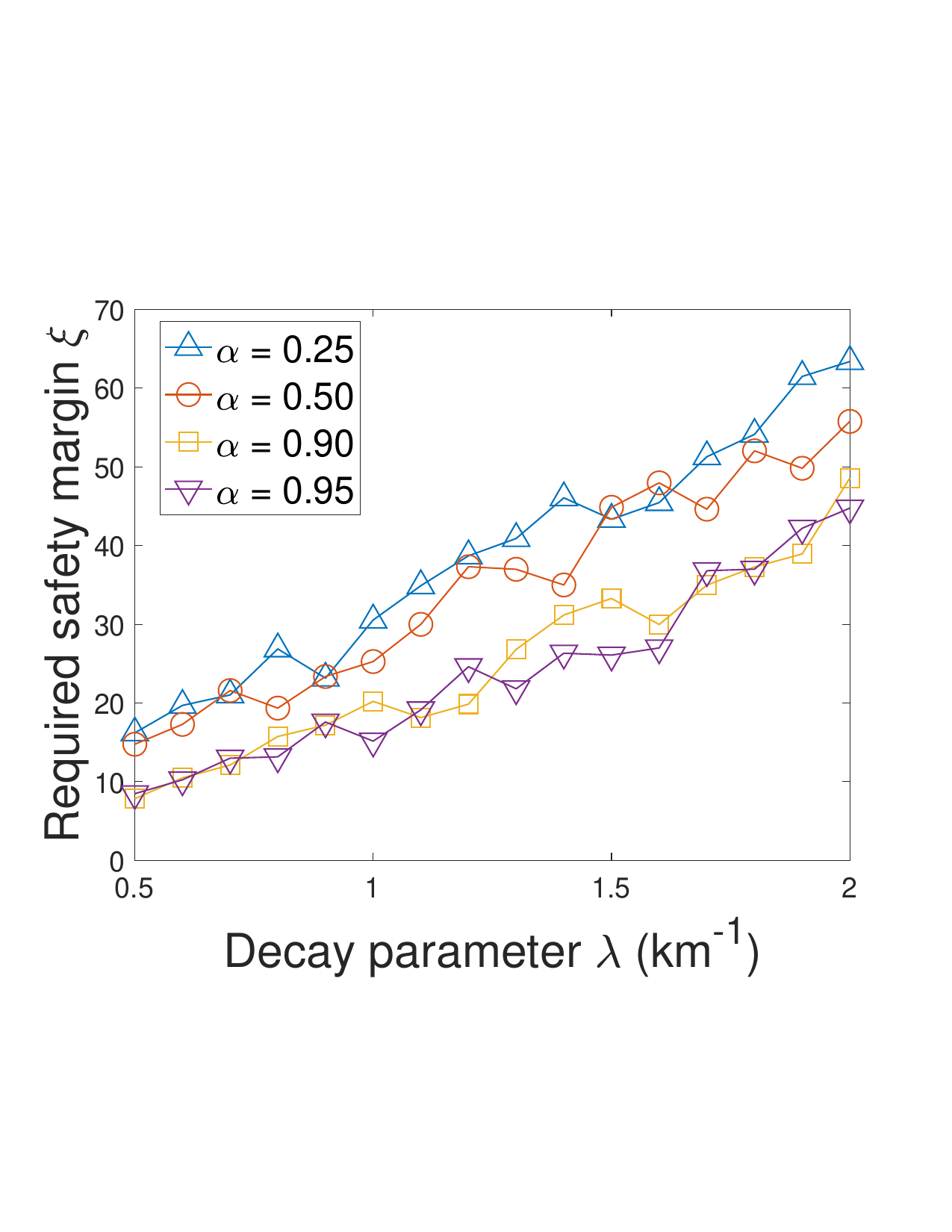}}
\vspace{-0.15in}
\end{minipage}
\caption{Expected safety margin $\xi$ with different decay parameter $\lambda$ and scaling parameter $\alpha$ (London).}
\label{fig:xi_lambda_london}
\vspace{-0.00in}
\end{figure*}

\noindent\textbf{Figures~\ref{fig:xi_lambda_nyc}(a)(b)(c) and~\ref{fig:xi_lambda_london}(a)(b)(c)} illustrate the required safety margin $\xi$ for achieving $(\epsilon, \delta)$-PmDP under different anchor selection strategies across the NYC and London datasets, respectively. Each figure includes three subplots corresponding to exponential decay, power-law decay, and logistic anchor selection methods.

Similar to the results observed on the Rome dataset in Fig.~\ref{fig:xi_lambda_rome}, these figures show how the required safety margin $\xi$ varies as a function of the decay parameter $\lambda$, under different values of the scaling parameter $\alpha \in {0.25, 0.50, 0.90, 0.95}$. The results indicate that as $\lambda$ increases (i.e., the anchor selection probability decays more sharply with distance), the required safety margin $\xi$ consistently decreases. This is because a larger $\lambda$ leads to more localized anchor sets, reducing uncertainty in surrogate selection and thus requiring a smaller $\xi$ to satisfy the PmDP constraint. In contrast, smaller $\lambda$ values produce more dispersed anchor sets, increasing variability in surrogate selection and necessitating a larger safety margin to maintain the privacy guarantee.

Across all selection strategies, the required safety margin $\xi$ also increases as the scaling parameter $\alpha$ decreases. A smaller $\alpha$ leads to fewer anchors being selected, increasing approximation error and requiring a larger $\xi$ to maintain the desired privacy guarantee.

\subsection{Comparison with Lower Bounds}
\label{subsec:bound_add}
\begin{figure*}[t]
\centering
\hspace{0.00in}
\begin{minipage}{1.00\textwidth}
\centering
  \subfigure[\textsc{PAnDA}-e]{
\includegraphics[width=0.24\textwidth, height = 0.15\textheight]{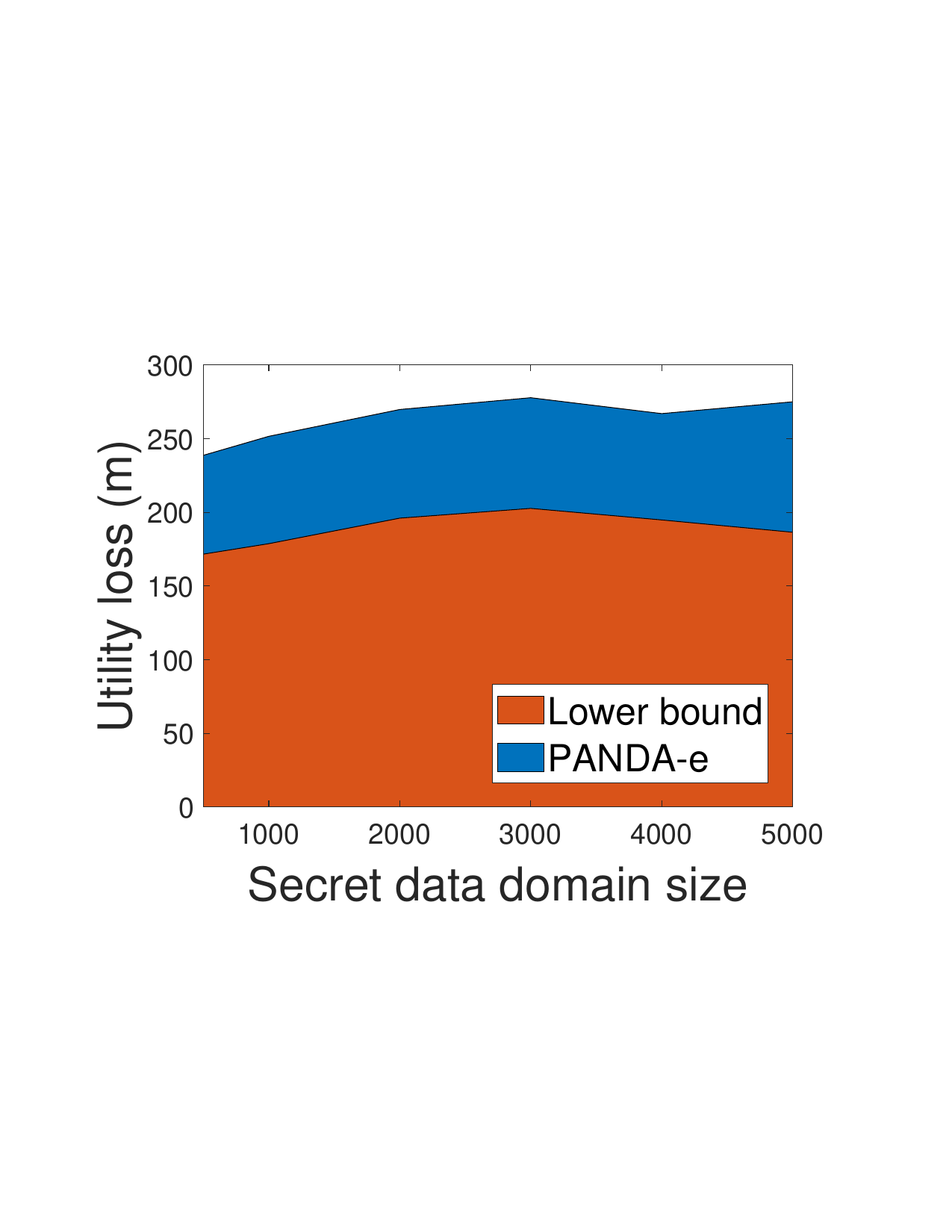}}
  \subfigure[\textsc{PAnDA}-p]{
\includegraphics[width=0.24\textwidth, height = 0.15\textheight]{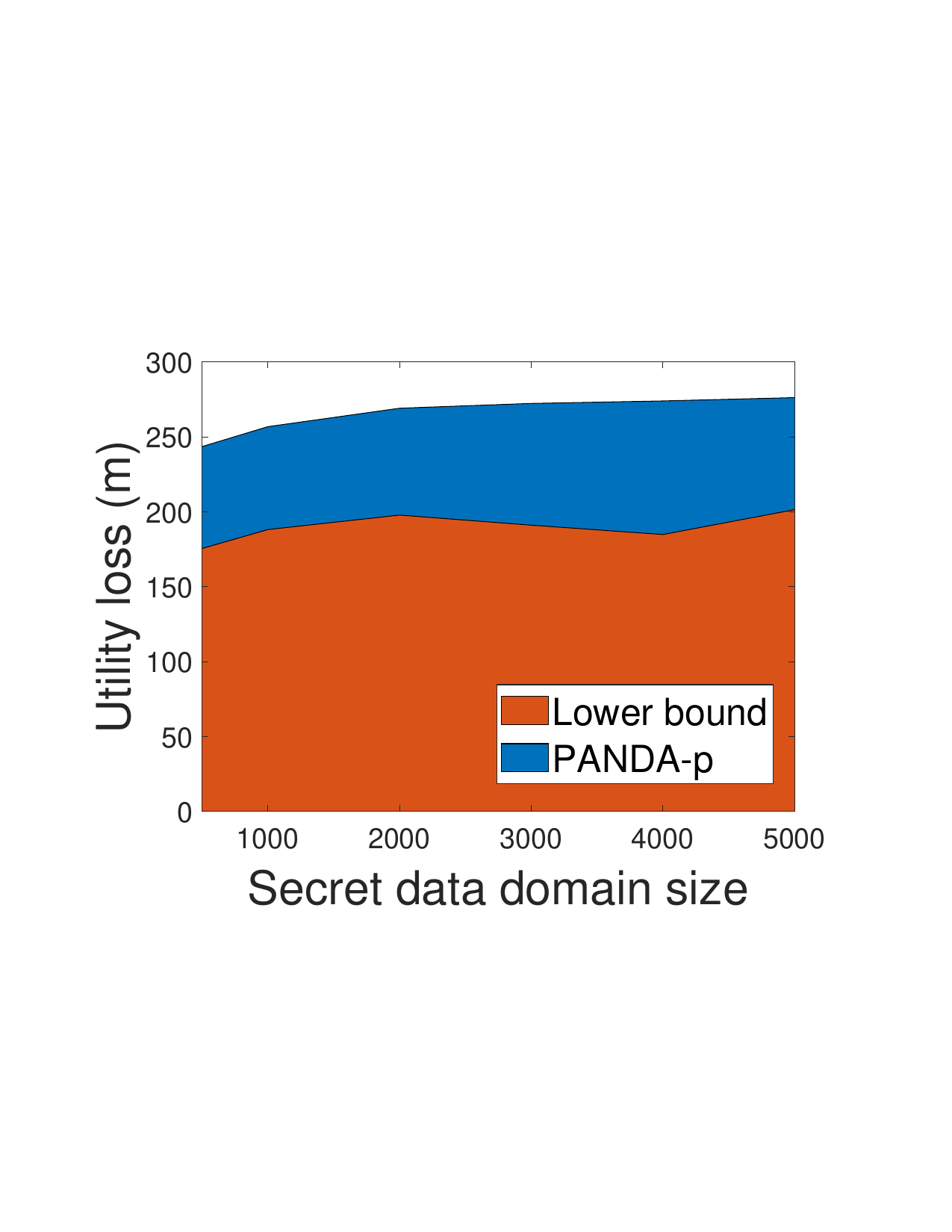}}
  \subfigure[\textsc{PAnDA}-l]{
\includegraphics[width=0.24\textwidth, height = 0.15\textheight]{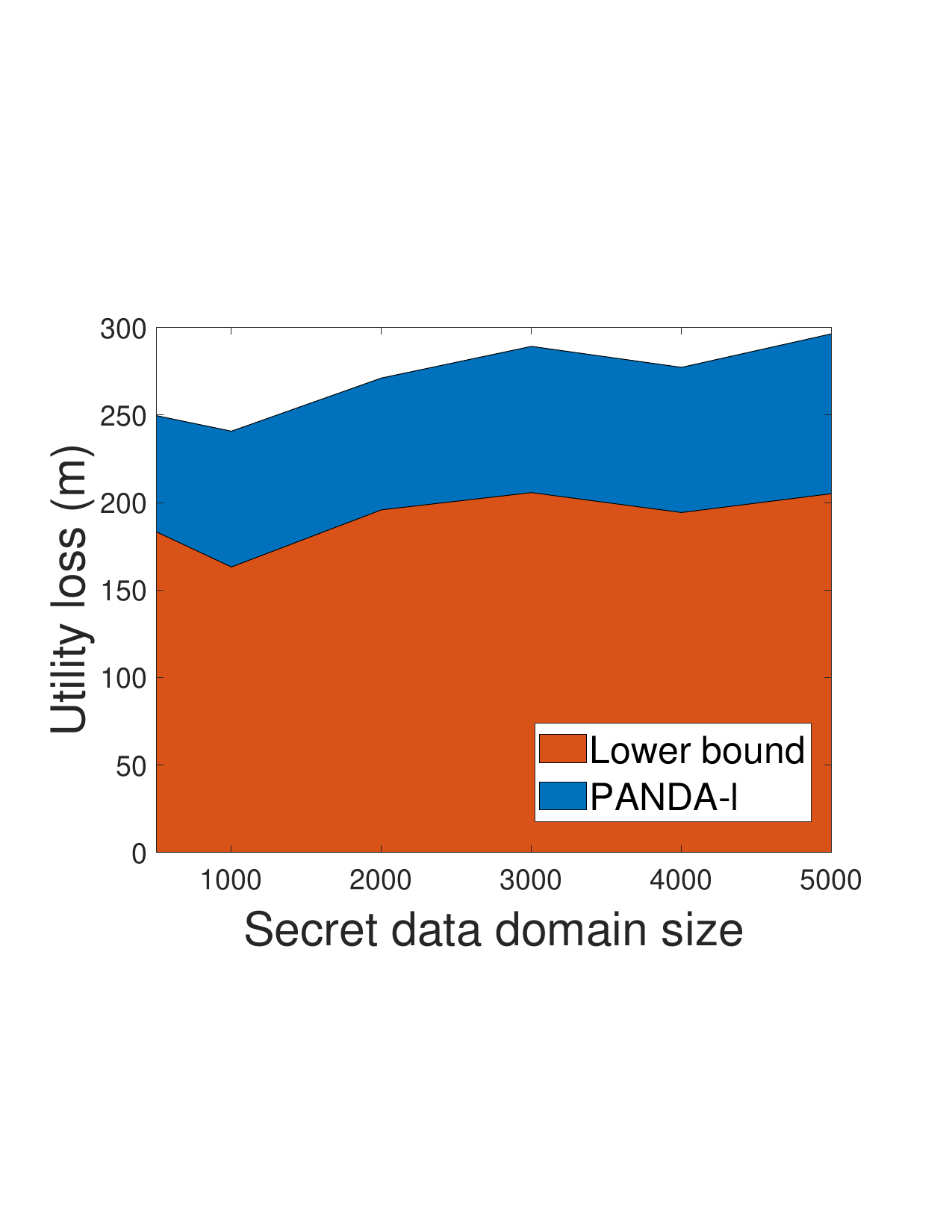}}
\vspace{-0.15in}
\end{minipage}
\caption{\textsc{PAnDA} vs. lower bound (Rome).}
\label{fig:boundPAnDA_rome}
\vspace{-0.00in}
\end{figure*}

\begin{figure*}[t]
\begin{minipage}{1.00\textwidth}
\centering
  \subfigure[\textsc{PAnDA}-e]{
\includegraphics[width=0.24\textwidth, height = 0.15\textheight]{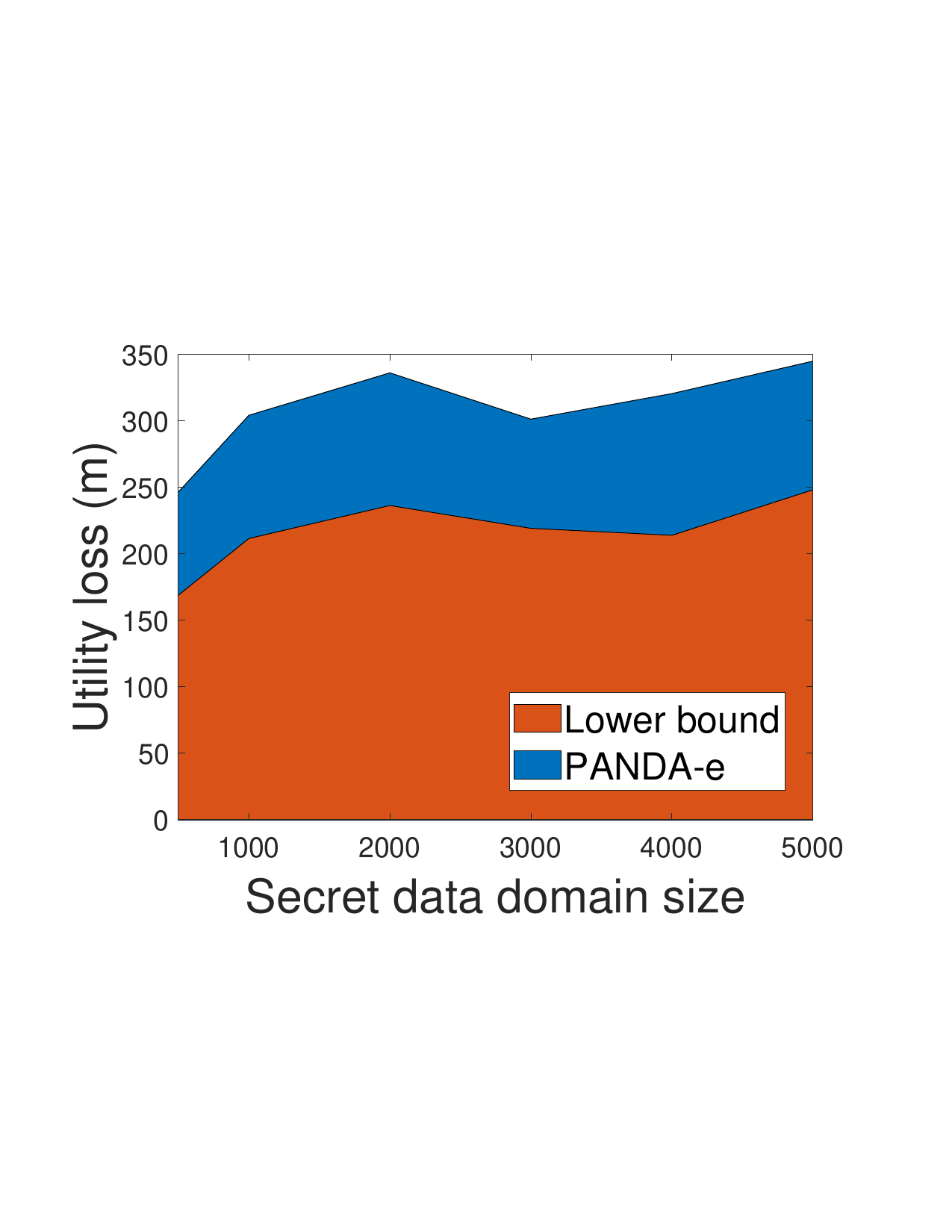}}
  \subfigure[\textsc{PAnDA}-p]{
\includegraphics[width=0.24\textwidth, height = 0.15\textheight]{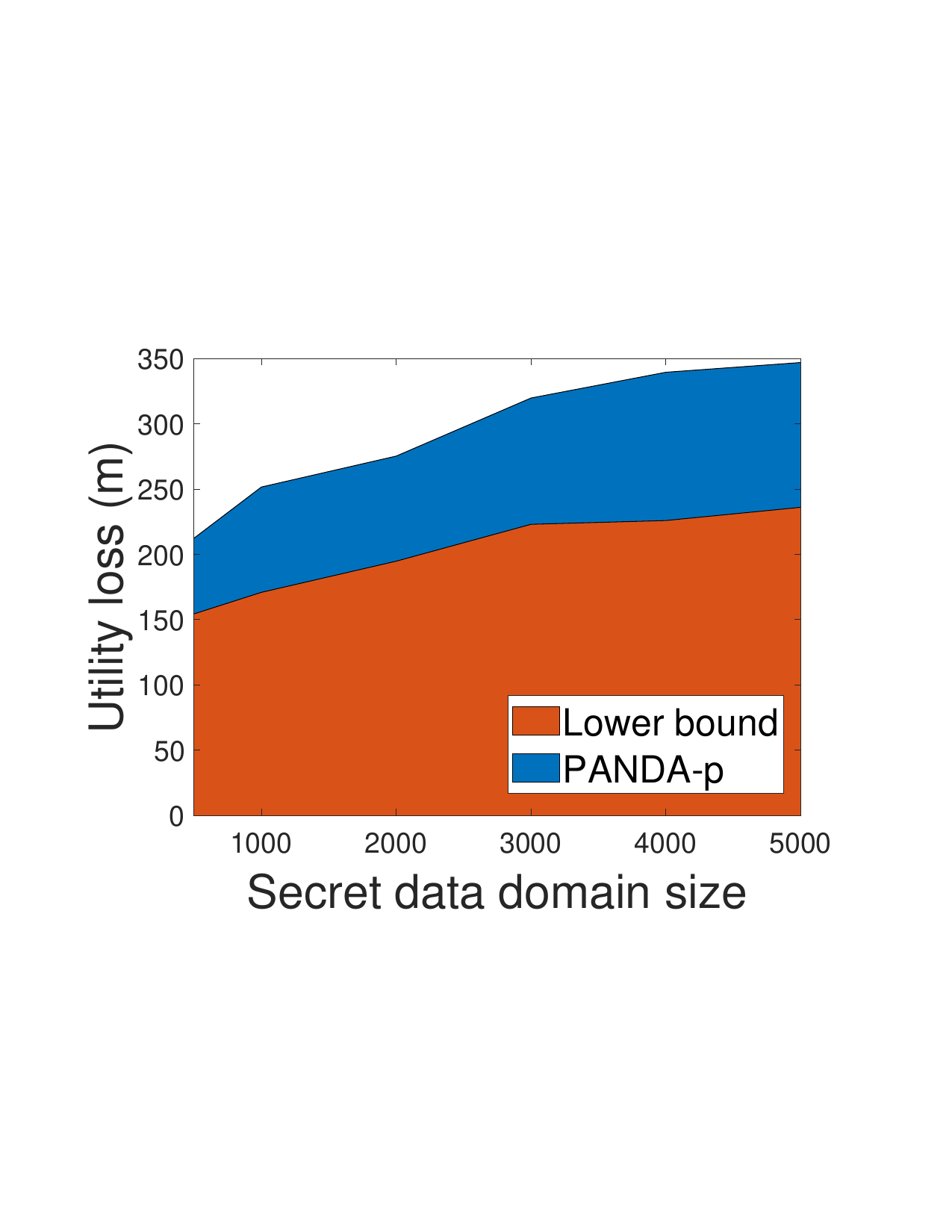}}
  \subfigure[\textsc{PAnDA}-l]{
\includegraphics[width=0.24\textwidth, height = 0.15\textheight]{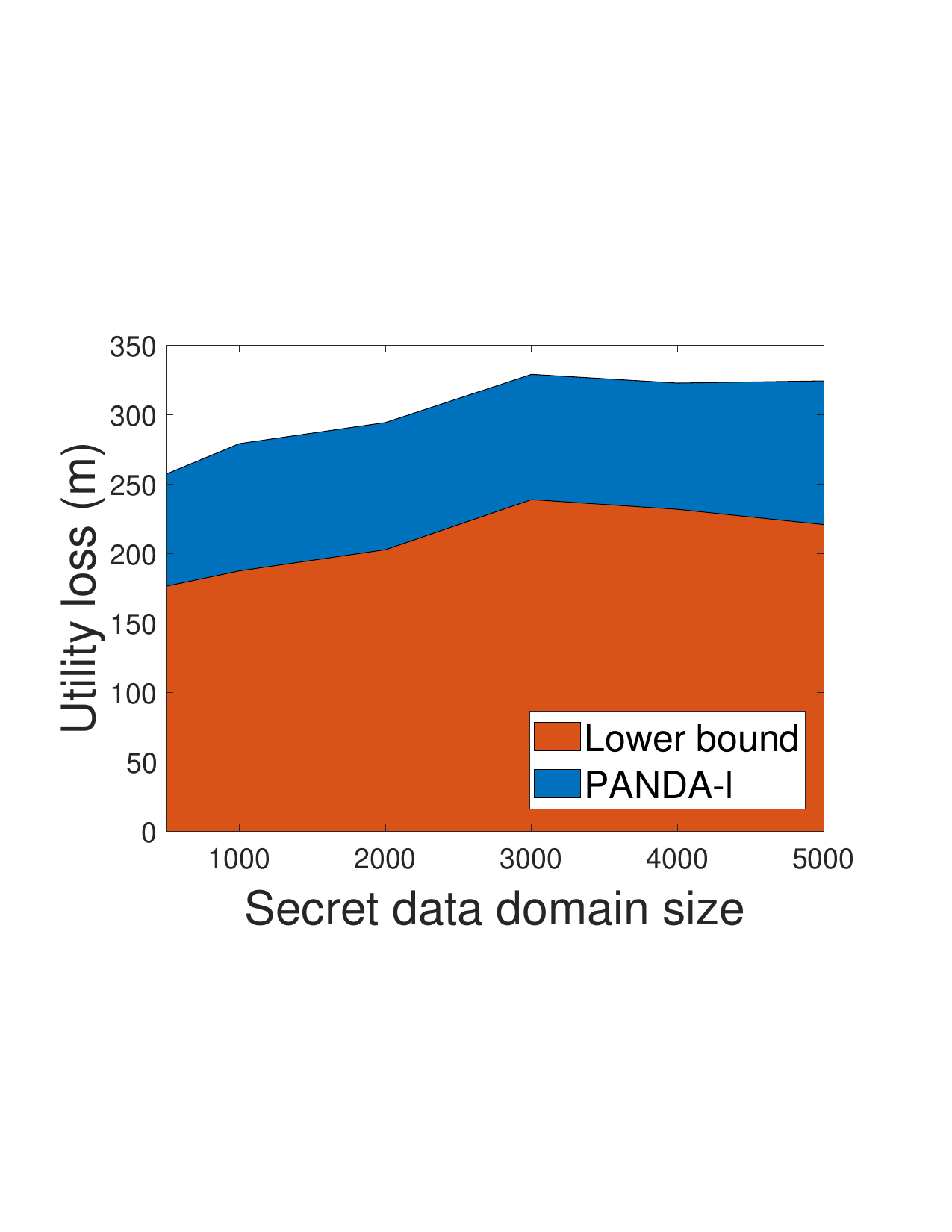}}
\vspace{-0.15in}
\end{minipage}
\caption{\textsc{PAnDA} vs. lower bound (NYC).}
\label{fig:boundPAnDA_nyc}
\vspace{-0.00in}
\end{figure*}

\begin{figure*}[t]
\begin{minipage}{1.00\textwidth}
\centering
  \subfigure[\textsc{PAnDA}-e]{
\includegraphics[width=0.24\textwidth, height = 0.15\textheight]{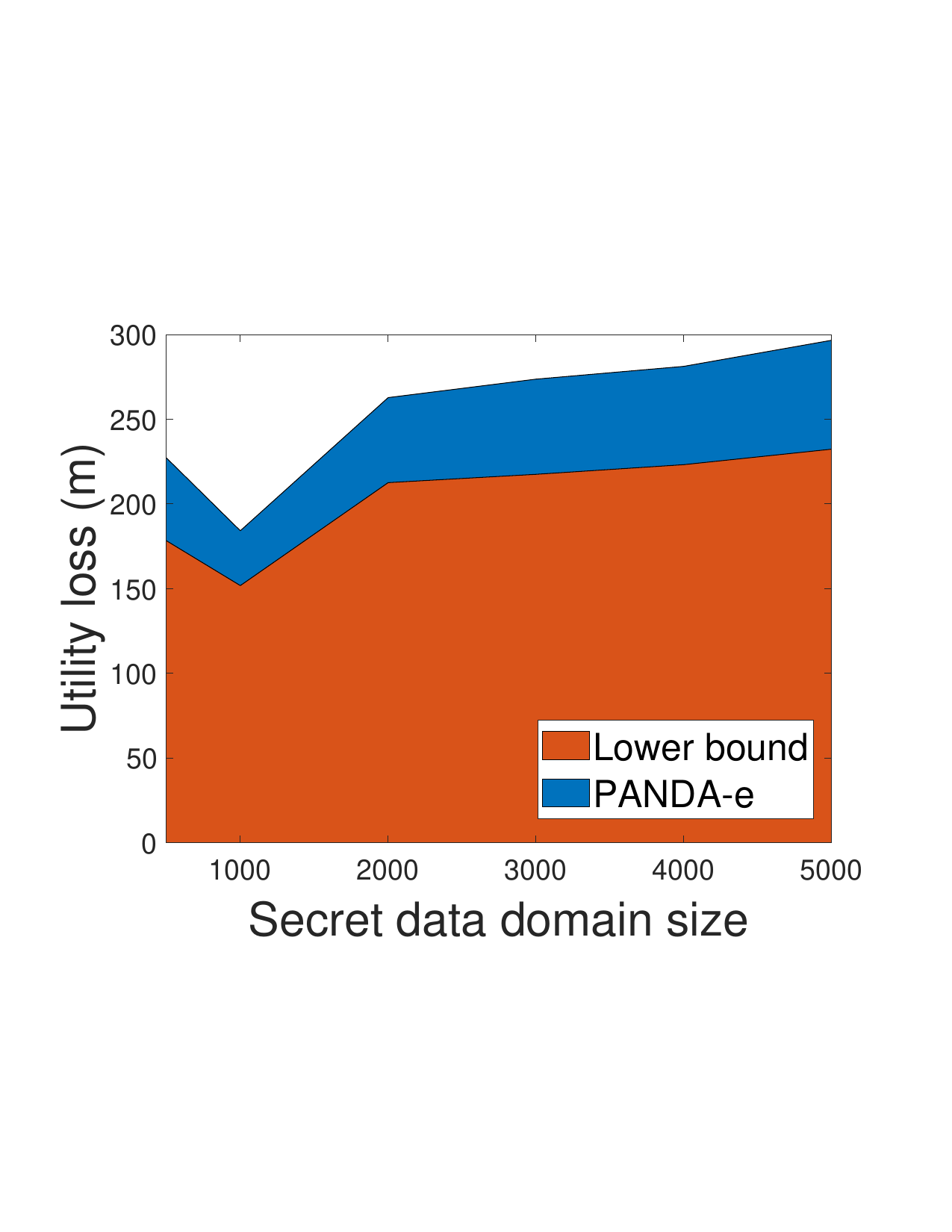}}
  \subfigure[\textsc{PAnDA}-p]{
\includegraphics[width=0.24\textwidth, height = 0.15\textheight]{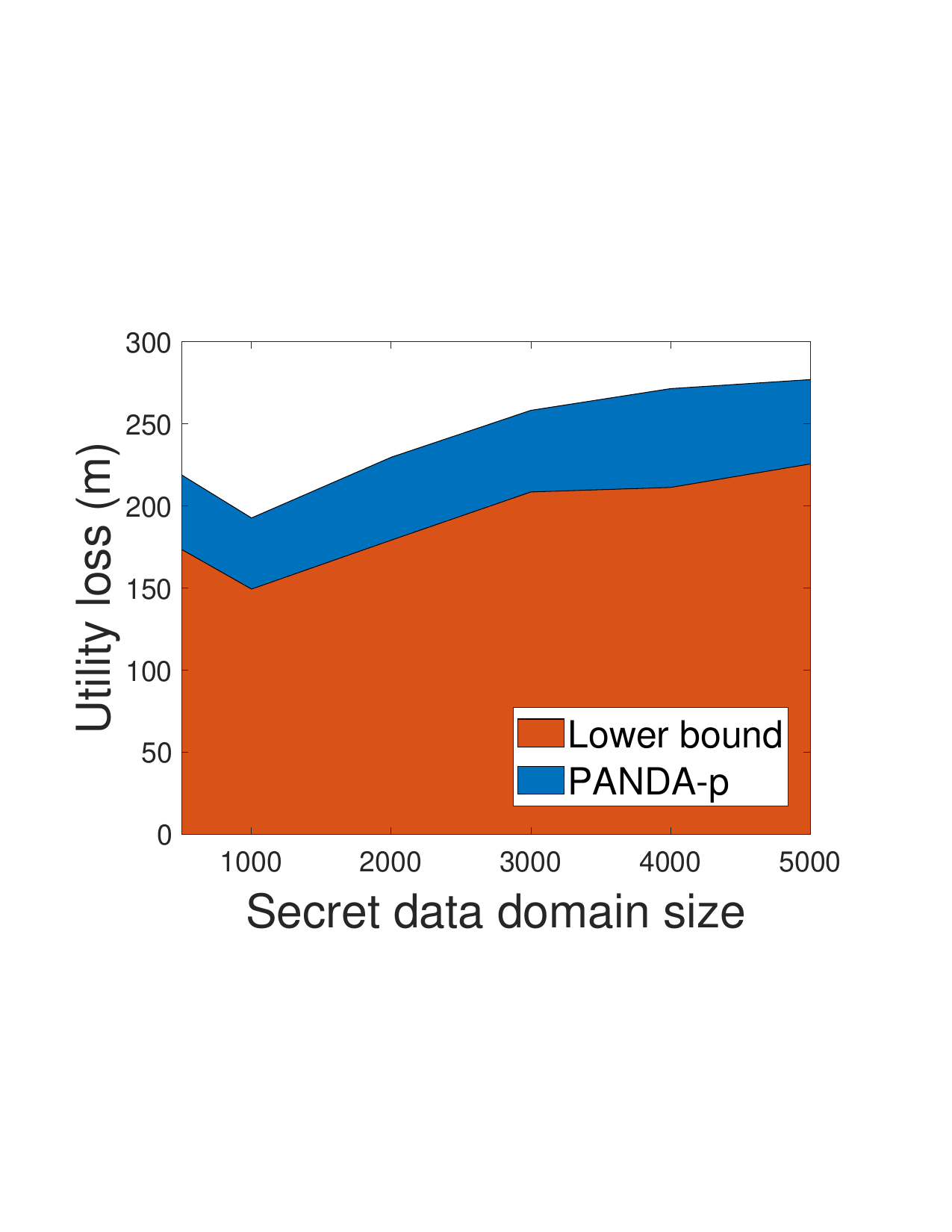}}
  \subfigure[\textsc{PAnDA}-l]{
\includegraphics[width=0.24\textwidth, height = 0.15\textheight]{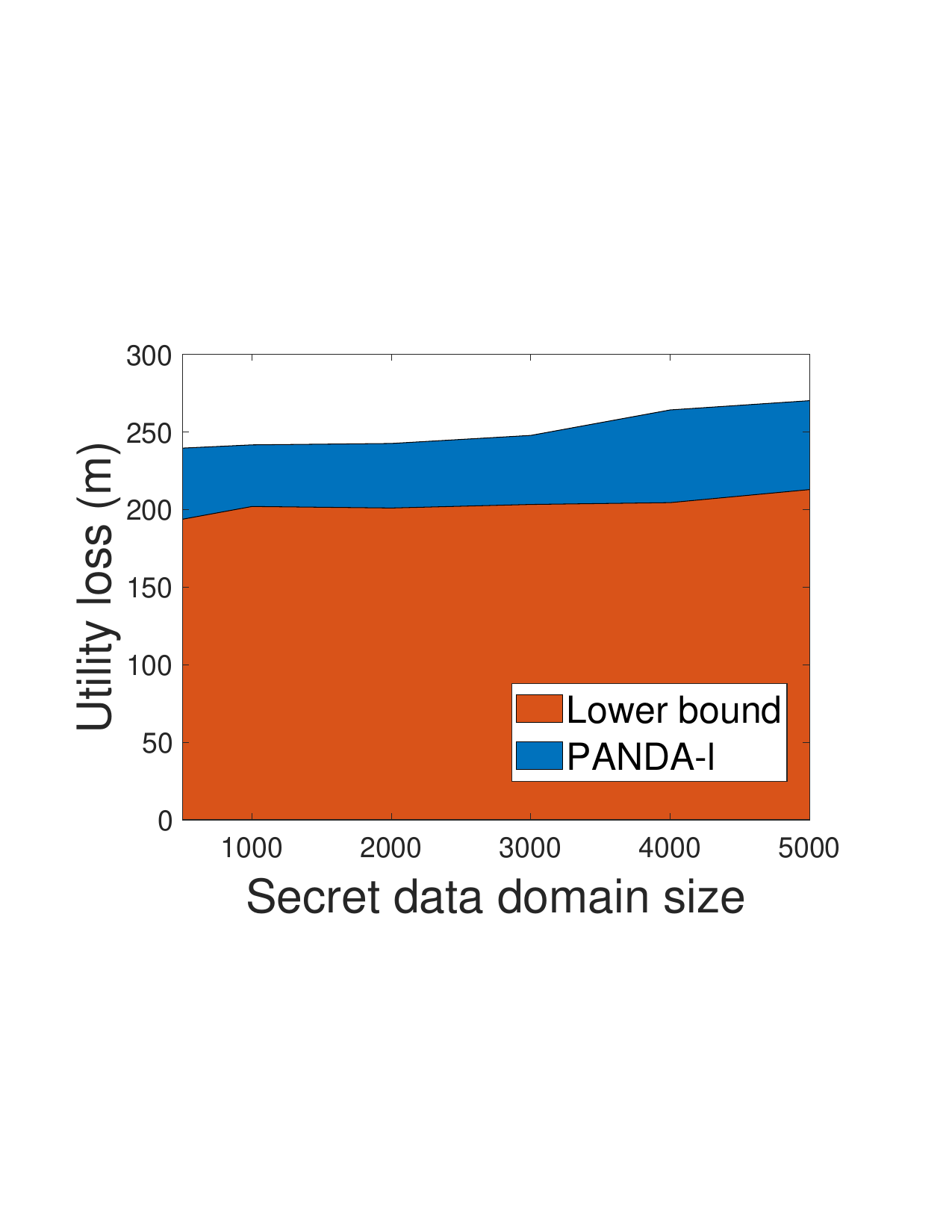}}
\vspace{-0.05in}
\end{minipage}
\caption{\textsc{PAnDA} vs. lower bound (London).}
\label{fig:boundPAnDA_london}
\vspace{-0.00in}
\end{figure*}

\textbf{Figures~\ref{fig:boundPAnDA_rome}(a)(b)(c),~\ref{fig:boundPAnDA_nyc}(a)(b)(c), and~\ref{fig:boundPAnDA_london}(a)(b)(c)} show the utility loss of the three \textsc{PAnDA} variants—\textsc{PAnDA}-e, \textsc{PAnDA}-p, and \textsc{PAnDA}-l—compared to the theoretical lower bound derived from Relaxed \textsc{AnPO} (see \textbf{Proposition~\ref{prop:ULbound}}) on the \textbf{Rome, NYC, and London} datasets, respectively.

Each subplot reports the utility loss as a function of the secret data domain size (from {\bl 500 to 5,000}), to examine how closely each method approaches the optimal solution. Across all datasets, all three \textsc{PAnDA} variants exhibit \emph{bounded approximation ratios}, indicating that the utility loss achieved is close to the theoretical minimum. On average, the approximation ratios for \textsc{PAnDA}-e, \textsc{PAnDA}-p, and \textsc{PAnDA}-l are {\bl 1.3608, 1.3708, and 1.3630}, respectively. These results validate the effectiveness of \textsc{PAnDA} in achieving near-optimal utility.

\section{Artifact Appendix}


\subsection{Abstract}
This artifact provides the MATLAB (R2024b) implementation of \textsc{PAnDA}, an anchor-based framework for enforcing metric differential privacy, together with baseline scripts, configuration parameters, and small road-network datasets (Rome, NYC, London, plus Rome with real user location distributions). 

The package "\texttt{CCS-2025-\textsc{PAnDA}-main}" majorly contains the following files and folders 
\begin{itemize}
    \item \texttt{main.m}, 
    \item a \texttt{functions/} directory
    \item experiment drivers (\texttt{run\_artifact\_{rome, nyc, london, real\_distribution}.m}, \texttt{run\_LPBD.m}, \texttt{run\_LPEM.m}), and 
    \item a \texttt{dataset/} directory.
\end{itemize} 
When executed, the drivers load the selected dataset, run \textsc{PAnDA} and baselines, and output summary tables/figures and console logs that reproduce the reported trends for utility loss and runtime. The repository’s README documents usage and expected outcomes; as of this release, no explicit open-source license is declared in the repository (authors may add one in the future).

\subsection{Description \& Requirements}

The artifact is organized as a self-contained MATLAB project. At the repository root, \texttt{README.md} provides top-level instructions and quickstart usage. Core source code lives in \texttt{main.m} (entry point) and the \texttt{functions/} directory. Dataset-specific drivers live alongside the entry point: \texttt{run\_artifact\_rome.m}, \texttt{run\_artifact\_nyc.m}, \\ \texttt{run\_artifact\_london.m}, and  \texttt{run\_artifact\_real\_distribution.m} (for Rome with real distributions).

Configuration is centralized in \texttt{parameters.m} (e.g., number of repeats, seeds, and other run options). Input data and/or download instructions are under \texttt{dataset/}. A paper copy (\texttt{paper.pdf}) is included for reference. To run experiments, open MATLAB in the repo root, adjust \texttt{parameters.m} as needed, then use \texttt{main.m} to select a dataset by uncommenting the relevant line (e.g., \texttt{run\_artifact\_rome.m}; \texttt{run\_artifact\_nyc.m}; \texttt{run\_artifact\_london.m}; or \\ \texttt{run\_artifact\_real\_distribution.m}). This structure lets evaluators quickly locate the README, source code, datasets, and run scripts without navigating extraneous files.


\subsubsection{Security, privacy, and ethical concerns}
There should be no special security, privacy, or ethical concerns. The artifacts are local, compute-only MATLAB scripts that neither require elevated privileges nor access the network, and they perform no destructive operations. They run on public research datasets (Rome Taxi dataset).


\subsubsection{How to access}
The artifact is deposited on Zenodo at \\
\texttt{https://zenodo.org/records/17032461}, where the DOI is \\
10.5281/zenodo.17032461.

\emph{Alternative (development repository).} 
A maintained GitHub repository is available at \\ 
\texttt{https://github.com/paopao128/CCS-2025-\textsc{PAnDA}} 
\\ for browsing source code and cloning. 
This artifact corresponds to commit:  \texttt{a3dbf85686153550249e75af486038654c763af3}.


\subsubsection{Hardware dependencies}

\begin{itemize}
    \item \emph{CPU.} Commodity x86\_64 machine; multi-core processor recommended (4+ cores). 
    \item \emph{Memory.} At least \textbf{64,GB RAM} to run \texttt{run\_artifact\_london.m}; at least \textbf{32,GB RAM} for the other workflows \\ (\texttt{run\_artifact\_rome.m}, \texttt{run\_artifact\_nyc.m}, and \\ \texttt{run\_artifact\_real\_distribution.m}). More memory can reduce runtime by avoiding swapping.
    \item \emph{Accelerators.} Not required (no GPU/FPGA needed).
    \item \emph{Network.} Not required for normal use.
\end{itemize}


\subsubsection{Software dependencies}

\begin{itemize}
\item \emph{OS.} Windows 10/11, macOS, or Linux (any recent 64-bit distro).
\item \emph{MATLAB.} MATLAB (R2022b or newer recommended).
\item \emph{Toolboxes.} MATLAB Optimization Toolbox (required for linear programming routines in the baselines) and the Statistics and Machine Learning Toolbox (used for \texttt{randsample} in random sampling). 
\item \emph{Others.} No external solvers, GPUs, or compilers are required for the provided workflows. Network access is not needed for normal use.
\end{itemize}


\subsubsection{Benchmarks}
\emph{Included datasets.} The artifact bundles all data needed to run the experiments entirely offline. The \texttt{dataset/} directory contains preprocessed road-network graphs and demand distributions for: (i) \textbf{Rome} (uniform vehicle-location distribution), (ii) \textbf{New York City} (uniform), (iii) \textbf{London} (uniform), and (iv) \textbf{Rome–Real} (empirical/real distribution).

\emph{Contents and format.} Each dataset provides a road graph (nodes/edges) within the target region. Files are stored in MATLAB-friendly formats (\texttt{.mat}) and are loaded automatically by the supplied drivers (\texttt{run\_artifact\_{rome, nyc, london, real\_distribution}.m}).

\emph{Compared methods included.} Implementations of the paper’s compared methods are provided alongside \textsc{PAnDA} (e.g., exponential-mechanism and LP-based baselines), exposed through the run scripts and \texttt{functions/}. No external repositories are required.

\emph{No external downloads.} All datasets are included with the package; internet access is not needed to reproduce results.

\emph{Data use and privacy.} The datasets are public/aggregated research data and contain no personally identifiable information.


\subsection{Set Up}

\subsubsection{Installation}
\emph{Dependencies.} Install MATLAB (R2022b or newer recommended). Enable the Optimization Toolbox (required for LP-based baselines) and the Statistics and Machine Learning Toolbox (required for random sampling).  

\emph{Download the artifact.}
\begin{compactenum}
\item \textbf{Preferred (archival):} Download the DOI-backed snapshot from Zenodo: \texttt{https://zenodo.org/records/17032461}
. Extract the archive to a local folder (e.g., \texttt{~/\textsc{PAnDA}}).
\item \textbf{Alternative (development mirror):} Clone or download ZIP from GitHub: \\ 
\texttt{https://github.com/paopao128/CCS-2025-\textsc{PAnDA}} \\ 
with the commit: \\
\texttt{a3dbf85686153550249e75af486038654c763af3}
\end{compactenum}

\emph{Set up MATLAB path.}
\begin{compactenum}
\item Start MATLAB and \texttt{cd} to the repository root (where \texttt{main.m} resides).
\item Verify the Optimization Toolbox is available:
\begin{verbatim}
license('test','Optimization_Toolbox')
\end{verbatim}
This should return \texttt{1} (true).
\end{compactenum}

\emph{Configure runs.}
\begin{compactenum}
\item Open \texttt{parameters.m} and (optionally) reduce the number of repeats for a quick smoke test.
\item Open \texttt{main.m} and \emph{uncomment} exactly one dataset selector line, e.g.:
\begin{verbatim}
% run_artifact_rome; 
% run_artifact_nyc; 
% run_artifact_london; 
% run_artifact_real_distribution; 
\end{verbatim}
\item \textbf{Memory note:} Choose \texttt{run\_artifact\_rome.m}, \texttt{nyc}, or \\ \texttt{real\_distribution} if the machine has 
$\geq$32,GB RAM. The London workflow requires 
$\geq$64,GB RAM.
\end{compactenum}


\subsubsection{Basic test}
\begin{compactenum}
\item With \texttt{run\_artifact\_rome.m} selected and a small repeat count in \texttt{parameters.m}, run:
\begin{verbatim}
main
\end{verbatim}
\item Expected outcome: MATLAB prints progress messages and a short summary (utility/runtime) without errors. Output figures/tables and any \texttt{.mat} results (if enabled by the scripts) are written to the working directory or a subfolder specified in \texttt{parameters.m}.
\end{compactenum}


\subsection{Evaluation Workflow}
\subsubsection{Major claims}
\begin{compactitem}
\item[(C1):] \textsc{PAnDA}-e, \textsc{PAnDA}-p, and \textsc{PAnDA}-l have higher computational time compared to Exponential Mechanism (EM) and Bayesian Remapping (EM+BR), it outperforms optimization-based methods including Linear Programming (LP), Coarse Approximation of LP (LP+CA), Benders Decomposition (LP+BD), and ConstOPTMech (LP+EM) in terms of computation efficiency (described in Section 4.3.1).
\item[(C2):]
\textsc{PAnDA}-e, \textsc{PAnDA}-p, and \textsc{PAnDA}-l achieve lower utility loss compared to EM, LP+CA, and EM+BR (described in the first paragraph of Section 4.3.2).
\end{compactitem}


\subsubsection{Experiments}
\emph{Assumed environment:} MATLAB R2022b+ with Optimization Toolbox; x86\_64 CPU (4–16 cores); 
$\geq$32\,GB RAM for \texttt{rome}, \texttt{nyc}, and \texttt{real\_distribution}; 
$\geq$64\,GB RAM for \texttt{london}. No GPU required.

\begin{compactitem}
\item[(E1):] \textbf{Computation efficiency vs. baselines} (\emph{validates C1}) 
[for each repeat, approximately 1 compute-hour per dataset]
  \begin{asparadesc}
    \item[Preparation:] Set the number of repeats to 1–3 in \texttt{main.m}. 
    When running \texttt{run\_artifact\_rome.m}, \texttt{run\_artifact\_nyc.m}, 
    \\ \texttt{run\_artifact\_london.m}, or \\ \texttt{run\_artifact\_real\_distribution.m}, 
    the methods \textsc{PAnDA}-e, \textsc{PAnDA}-p, \textsc{PAnDA}-l, EM, EM+BR, LP+CA, LP are executed automatically. 
    
    Considering that LP+BD and LP+EM incur much higher computation time and fail to return results when 
    the number of records is $\geq 1{,}000$, they must be run separately using 
    \texttt{run\_LPEM} and \texttt{run\_LPBD}.
    
    \item[Execution:] Uncomment one dataset driver in \texttt{main.m} (e.g., \texttt{run\_artifact\_rome.m}) 
    and run \texttt{main}. Repeat for \texttt{nyc}, \\ \texttt{real\_distribution}, and 
    \texttt{london}.
    
    \item[Results:] Collect wall-clock times per method. 
    Expected outcome: the \textsc{PAnDA} variants are slower than EM/EM+BR, but significantly  faster than optimization-based methods (LP, LP+CA, LP+BD, LP+EM), demonstrating their favorable efficiency position.

    Exact runtimes are difficult to reproduce since they depend on hardware, system load, scheduling, library implementations, and algorithmic randomness. Thus, while relative trends (e.g., scalability across datasets and methods) are reproducible, absolute values may vary across environments.
  \end{asparadesc}

\item[(E2):] \textbf{Utility loss comparison vs. baselines} (\emph{validates C2}) 
[for each repeat, approximately 1 compute-hour per dataset]
  \begin{asparadesc}
  \item[Preparation:] Set the number of repeats to 1–3 in \texttt{main.m}. 
    When running \texttt{run\_artifact\_rome.m}, \texttt{run\_artifact\_nyc.m}, 
    \\ \texttt{run\_artifact\_london.m}, or \\\texttt{run\_artifact\_real\_distribution.m}, 
    the methods \textsc{PAnDA}-e, \textsc{PAnDA}-p, \textsc{PAnDA}-l, EM, EM+BR, LP+CA, LP are still executed automatically. 
    Because LP+BD and LP+EM incur much higher computation time and fail to return results when 
    the number of records is $\geq 1{,}000$, they must be run separately using 
    \texttt{run\_LPEM} and \texttt{run\_LPBD}.
  \item[Execution:] Uncomment one dataset driver in \texttt{main.m} (e.g., \texttt{run\_artifact\_rome.m}) 
    and run \texttt{main}. Repeat for \texttt{nyc}, \\ \texttt{real\_distribution}, and 
    \texttt{london}.
  \item[Results:] Tables and figures report expected utility loss per method. 
  Expected outcome: \textsc{PAnDA}-e/p/l achieve lower utility loss than EM, LP+CA, and EM+BR across datasets.
  
  The exact utility loss values are hard to reproduce, since both location set and users are randomly distributed, leading to variation across runs. However, the overall trend remains consistent: PAnDA-e, PAnDA-p, and PAnDA-l achieve lower utility loss compared to EM, LP+CA, and EM+BR, as reported in the paper.
  \end{asparadesc}
\end{compactitem}


\subsection{Notes on Reusability}

The artifact can be easily reused and extended. New datasets can be added by formatting 
them like the existing files in \texttt{dataset/} and creating a corresponding run script. 
Key parameters (e.g., $\epsilon$, repeats, grid size, anchor density) are centralized in \texttt{main.m} and 
\texttt{parameters.m}, allowing experiments to be scaled up or down depending on hardware. 
Researchers can also add new perturbation or optimization methods under \texttt{functions/} 
to benchmark against PAnDA and the provided baselines. This modular design makes the 
artifact suitable for reproducing results, adapting to new inputs, and educational use.


\subsection{Version}
Based on the LaTeX template for Artifact Evaluation V20220926.

\end{document}